\setlist{noitemsep}
\newcommand{\arrowangle}{142}
\newcommand{\arrowscaling}{1}
\tikzstyle{graphlabel}=[font={\scriptsize\boldmath}, inner sep=1mm, outer sep=-1.8mm, scale=0.8]
\tikzstyle{graphlines}=[dashed,color=gray]
\tikzstyle{wn}=[font={\scriptsize\boldmath}, inner sep=1mm, outer sep=-1.8mm, scale=0.8, tikzit shape=circle, draw=black, fill=black!01, tikzit fill=white, tikzit draw=black, shape=circle, tikzit category=GLA]
\tikzstyle{bn}=[font={\scriptsize\boldmath}, inner sep=1mm, outer sep=-1.8mm, scale=0.8, tikzit shape=circle, draw=black, fill={rgb,255: red,100; green,100; blue,100}, tikzit draw=black, shape=circle, tikzit category=GLA]
\tikzstyle{gn}=[style=bn, tikzit fill=gray]
\tikzstyle{rn}=[style=wn, tikzit fill=white]
\tikzstyle{grn}=[style=gwn, tikzit fill=white]
\tikzstyle{ggn}=[style=gbn, tikzit fill={zx_grey}]
\tikzstyle{gwn}=[shading=whiteballshading, line width=1pt, inner sep=1mm, outer sep=-1.8mm, scale=0.8, tikzit shape=circle, draw=black, fill=white, tikzit fill=white, tikzit draw=black, shape=circle, tikzit category=GLA]
\tikzstyle{gbn}=[shading=blackballshading, font={\scriptsize\boldmath}, line width=1pt, inner sep=1mm, outer sep=-1.8mm, scale=0.8, tikzit shape=circle, draw=black, fill={rgb,255: red,100; green,100; blue,100}, tikzit draw=black, shape=circle, tikzit category=GLA]
\tikzstyle{had}=[fill={zx_grey}, draw=black, shape=rectangle, tikzit category=ZX, tikzit draw=black, minimum size=5pt, inner sep=1.5pt, scale=1, font={\scriptsize\boldmath}]
\tikzstyle{ghad}=[shading=hadballshading, fill={zx_grey}, draw=black, shape=rectangle, tikzit category=ZX, tikzit draw=black, thick, minimum size=5pt, inner sep=1.5pt, font={\scriptsize\boldmath}]
\tikzstyle{wphase}=[
\tikzstyle{bphase}=[
\tikzstyle{mphase}=[
\tikzstyle{gphase}=[style=bphase,tikzit category=ZX, tikzit fill=black, tikzit draw=white]
\tikzstyle{rphase}=[style=wphase,tikzit category=ZX, tikzit fill=white, tikzit draw=white]
\tikzstyle{scalar}=[mphase]
\tikzstyle{lmat}=[shape=signal, signal to=west, signal from=east, fill={zx_grey}, draw=black, minimum height=6pt, inner sep=1pt, font={\scriptsize\boldmath}, tikzit fill=gray, tikzit category=GLA, anchor=center, outer sep=-.1cm, signal pointer angle=\arrowangle,scale=\arrowscaling]
\tikzstyle{rmat}=[shape=signal, signal to=east, signal from=west, fill={zx_grey}, draw=black, minimum height=6pt, inner sep=1pt, font={\scriptsize\boldmath}, tikzit fill=gray, tikzit category=GLA, anchor=center, outer sep=-.1cm, signal pointer angle=\arrowangle,scale=\arrowscaling]
\tikzstyle{dmat}=[shape=signal, signal to=east, signal from=west, fill={zx_grey}, draw=black, minimum height=6pt, inner sep=1pt, font={\scriptsize\boldmath}, tikzit fill=gray, tikzit category=GLA, rotate=270, anchor=center, outer sep=-.1cm, signal pointer angle=\arrowangle,scale=\arrowscaling]
\tikzstyle{umat}=[shape=signal, signal to=east, signal from=west, fill={zx_grey}, draw=black, minimum height=6pt, inner sep=1pt, font={\scriptsize\boldmath}, tikzit fill=gray, tikzit category=GLA, rotate=90, anchor=center, outer sep=-.1cm, signal pointer angle=\arrowangle,scale=\arrowscaling]
\tikzstyle{lmatt}=[shading=matballshading,shape=signal, signal to=west, signal from=east, fill={zx_grey}, draw=black, minimum height=6pt, inner sep=1pt, font={\scriptsize\boldmath}, tikzit fill=gray, tikzit category=GLA, anchor=center, outer sep=-.1cm, thick, signal pointer angle=\arrowangle,scale=\arrowscaling]
\tikzstyle{rmatt}=[shading=matballshading,shape=signal, signal to=east, signal from=west, fill={zx_grey}, draw=black, minimum height=6pt, inner sep=1pt, font={\scriptsize\boldmath}, tikzit fill=gray, tikzit category=GLA, anchor=center, outer sep=-.1cm, thick, signal pointer angle=\arrowangle,scale=\arrowscaling]
\tikzstyle{dmatt}=[shading=matballshading,shape=signal, signal to=east, signal from=west, fill={zx_grey}, draw=black, minimum height=6pt, inner sep=1pt, font={\scriptsize\boldmath}, tikzit fill=gray, tikzit category=GLA, rotate=270, anchor=center, outer sep=-.1cm, thick, signal pointer angle=\arrowangle,scale=\arrowscaling]
\tikzstyle{umatt}=[shading=matballshading, outer color={zx_grey_thick}, inner color={zx_grey},shape=signal, signal to=east, signal from=west, fill={zx_grey}, draw=black, minimum height=6pt, inner sep=1pt, font={\scriptsize\boldmath}, tikzit fill=gray, tikzit category=GLA, rotate=90, anchor=center, outer sep=-.1cm, thick, signal pointer angle=\arrowangle,scale=\arrowscaling]
\tikzstyle{graph_vertex}=[fill=black, draw=black, shape=circle, tikzit category=mbqc, minimum size=2.4mm, inner sep=.8mm]
\tikzstyle{graph_weight}=[fill=white, draw=none, shape=rectangle, tikzit category=mbqc, inner sep=2pt, scale=.8]
\tikzstyle{graph_state}=[fill=yellow, draw=none, shape=rectangle, tikzit category=ZX, rounded corners=1.3mm, minimum height=1.7cm, minimum width=1.3cm, opacity=.7, text opacity=1]
\tikzstyle{box}=[fill=white, draw=black, shape=rectangle, inner sep=2.5pt]
\tikzstyle{wirelable}=[node on layer= labeltextlayer, font={\scriptsize},scale=.9,text=dark_grey, fill=none, inner sep=1pt]
\tikzstyle{tightwirelable}=[node on layer= labeltextlayer, text=dark_grey, fill=white, inner sep=0pt]
\tikzstyle{gather}=[shading=gatherballshading, outer color={zx_grey_thick}, inner color={zx_grey}, fill={zx_grey}, draw=black, tikzit category=scal, rounded corners=0.8mm, regular polygon, regular polygon sides=3, shape border rotate=-90, inner sep=1.6pt, anchor=center, outer sep=-.1cm, line width=0.75 pt]
\tikzstyle{divide}=[shading=divideballshading, outer color={zx_grey_thick}, inner color={zx_grey}, regular polygon, regular polygon sides=3, shape border rotate=90, draw=black, fill={zx_grey}, inner sep=1.6pt, tikzit category=scal, rounded corners=0.8mm, anchor=center, outer sep=-.1cm,line width=0.75 pt]
\tikzstyle{plus}=[inner sep=2.5pt, draw, circle, path picture={ \draw[black](path picture bounding box.east) -- (path picture bounding box.west) (path picture bounding box.south) -- (path picture bounding box.north);}, fill=white]
\tikzstyle{dot}=[thick, fill=black, circle, scale=1, inner sep=.05cm]
\tikzstyle{origin}=[thick, fill=black, circle, scale=.5, inner sep=.05cm]
\tikzstyle{dash_edge}=[-, dashed]
\tikzstyle{hadamard_edge}=[-, dashed, dash pattern=on 2pt off 1.5pt, thick, draw=blue]
\tikzstyle{brace edge}=[-, tikzit draw=blue, decorate, decoration={brace,amplitude=1mm,raise=-1mm}]
\tikzstyle{ultra thin}=[-, line width=0.03 pt]
\tikzstyle{thin}=[-, line width=0.5 pt]
\tikzstyle{thick}=[-, line width=.8pt, tikzit draw=red]
\tikzstyle{very thick}=[-, line width=1.05pt, tikzit draw=red]
\tikzstyle{multiplexer}=[-, fill={rgb,255: red,179; green,179; blue,179}]
\tikzstyle{gmultiplexer}=[-, line width=1pt, tikzit draw=red, fill={rgb,255: red,179; green,179; blue,179}]
\tikzstyle{background}=[-, fill={rgb,255: red,250; green,250; blue,250}, draw=none, tikzit draw={rgb,255: red,128; green,128; blue,128}, on layer= backlayer]
\tikzstyle{white}=[-, fill=white, draw=none, tikzit fill=white]
\tikzstyle{blue_line}=[-, draw=blue]
\tikzstyle{bbox}=[-, fill=white]
\tikzstyle{shadedpolygon}=[rounded corners=0.2mm, opacity=1, fill=gray, fill opacity=0.5]
\tikzstyle{arrow}=[->]
\tikzstyle{diamant}=[diamond, fill=couleurdefond, draw=black]
\tikzstyle{newe}=[rectangle, fill={gray!15}, draw=black, tikzit shape=rectangle, inner sep=0.2em]
\tikzstyle{cercle}=[circle, fill=couleurdefond, draw=black]
\tikzstyle{scercle}=[circle, fill=couleurdefond, draw=black, tikzit fill=white, inner sep=0.1em]
\tikzstyle{cartouche}=[rounded rectangle, fill=couleurdefond, draw=black]
\tikzstyle{neg}=[rounded rectangle, fill=couleurdefond, draw=black, execute at end node={$\neg$}]
\tikzstyle{sneg}=[rounded rectangle, fill=couleurdefond, draw=black, execute at end node={$\neg$}, scale=0.8]
\tikzstyle{negserie}=[rounded rectangle, fill=couleurdefond, draw=black, execute at end node={\footnotesize$\star\star$}]
\tikzstyle{diagrammevide}=[rectangle, fill=couleurdefond, draw=black, inner sep=1.25em, borddiagrammevide, tikzit shape=rectangle]
\tikzstyle{mdiagrammevide}=[rectangle, fill=couleurdefond, draw=black, inner sep=0.75em, sborddiagrammevide, tikzit shape=rectangle]
\tikzstyle{msdiagrammevide}=[rectangle, fill=couleurdefond, draw=black, inner sep=0.7em, msborddiagrammevide, tikzit shape=rectangle]
\tikzstyle{sdiagrammevide}=[rectangle, fill=couleurdefond, draw=black, inner sep=0.5em, sborddiagrammevide, tikzit shape=rectangle]
\tikzstyle{xsdiagrammevide}=[rectangle, fill=couleurdefond, draw=black, inner sep=0.4em, xsborddiagrammevide, tikzit shape=rectangle]
\tikzstyle{bs}=[shape=beam, fill=couleurdefond, draw, inner sep=0.25em, thick, tikzit fill=white]
\tikzstyle{sbs}=[shape=beam, fill=couleurdefond, draw, inner sep=0.2em, thick, tikzit fill=white]
\tikzstyle{npbs}=[shape=beam, horizontal fill={{npbsmoitiebasse}{npbsmoitiehaute}}, draw, inner sep=0.25em, thick, tikzit fill={rgb,255: red,128; green,128; blue,128}]
\tikzstyle{npbsalenvers}=[shape=beam, horizontal fill={{npbsmoitiehaute}{npbsmoitiebasse}}, draw, inner sep=0.25em, thick, tikzit fill={rgb,255: red,128; green,128; blue,128}]
\tikzstyle{snpbs}=[shape=beam, horizontal fill={{npbsmoitiebasse}{npbsmoitiehaute}}, draw, inner sep=0.2em, thick, tikzit fill={rgb,255: red,128; green,128; blue,128}]
\tikzstyle{snpbsalenvers}=[shape=beam, horizontal fill={{npbsmoitiehaute}{npbsmoitiebasse}}, draw, inner sep=0.2em, thick, tikzit fill={rgb,255: red,128; green,128; blue,128}]
\tikzstyle{cnot}=[shape=circle, draw, path picture={ 
\tikzstyle{thickcnot}=[shape=circle, draw, thick, path picture={ 
\tikzstyle{boite22}=[fill=white, draw=black, shape=rectangle, minimum height=1cm, minimum width=0.5cm]
\tikzstyle{boite15}=[fill=white, draw=black, shape=rectangle, minimum height=0.7cm, minimum width=0.5cm]
\tikzstyle{boite2}=[fill=white, draw=black, shape=rectangle, minimum height=0cm, minimum width=0cm]
\tikzstyle{snegpotentiel}=[fill=couleurdefond, draw=black, shape=rounded rectangle, inner sep=0.25em, tikzit fill={rgb,255: red,191; green,191; blue,191}, execute at end node={\footnotesize$\star$}]
\tikzstyle{negpotentiel}=[fill=couleurdefond, draw=black, shape=rounded rectangle, tikzit fill={rgb,255: red,191; green,191; blue,191}, execute at end node={$\star$}]
\tikzstyle{token}=[fill=black, draw=black, shape=circle, inner sep=0.1em]
\tikzstyle{whitetoken}=[fill=white, draw=black, shape=circle, inner sep=0.1em]
\tikzstyle{boitePBS}=[fill=white, draw=gray, thick, shape=rectangle, rounded corners=3pt, minimum height=0.6cm, inner sep=0.1em, minimum width=0.5cm]
\tikzstyle{boitePBS2}=[fill=white, draw=gray, thick, shape=rectangle, rounded corners=3pt, minimum height=0.55cm, inner sep=0.1em, minimum width=0.5cm]
\tikzstyle{sgene}=[fill={gray!30}, draw=black, shape=rounded rectangle, rounded rectangle east arc=0pt, minimum height=0.5cm, inner sep=0em, minimum width=0cm, scale=0.8]
\tikzstyle{sdetector}=[fill={gray!30}, draw=black, shape=rounded rectangle, rounded rectangle west arc=0pt, minimum height=0.5cm, inner sep=0em, minimum width=0cm, scale=0.8]
\tikzstyle{xsgene}=[fill={gray!30}, draw=black, shape=rounded rectangle, rounded rectangle east arc=0pt, minimum height=0.5cm, inner sep=0em, minimum width=0cm, scale=0.67]
\tikzstyle{xsdetector}=[fill={gray!30}, draw=black, shape=rounded rectangle, rounded rectangle west arc=0pt, minimum height=0.5cm, inner sep=0em, minimum width=0cm, scale=0.67]
\tikzstyle{PolRot}=[fill=white, draw=black, shape=rectangle, minimum height=0.5cm, inner sep=0.1em, minimum width=0.1cm]
\tikzstyle{PolRotrouge}=[fill={red!3}, draw=red, shape=rectangle, minimum height=0.5cm, inner sep=0.1em, minimum width=0.1cm, tikzit fill=red, execute at begin node={\textcolor{red}\bgroup}, execute at end node={\egroup}]
\tikzstyle{PhS}=[fill=white, draw=black, shape=rectangle, minimum height=0.5cm, inner sep=0.1em, minimum width=0.1cm]
\tikzstyle{gene}=[fill=white, draw=black, shape=rounded rectangle, rounded rectangle east arc=0pt, minimum height=0.5cm, inner sep=0em, minimum width=0cm]
\tikzstyle{detector}=[fill=white, draw=black, shape=rounded rectangle, rounded rectangle west arc=0pt, minimum height=0.5cm, inner sep=0em, minimum width=0cm]
\tikzstyle{cartoucherouge}=[rounded rectangle, fill={red!55!white}, draw=black, tikzit fill=red]
\tikzstyle{cartouchebleu}=[rounded rectangle, fill={blue!33!white}, draw=black, tikzit fill=blue]
\tikzstyle{diamantrouge}=[diamond, fill={rgb,255: red,255; green,115; blue,115}, draw=black]
\tikzstyle{diamantbleu}=[diamond, fill={rgb,255: red,171; green,171; blue,255}, draw=black]
\tikzstyle{new}=[-]
\tikzstyle{tirets}=[-, draw=black, dashed]
\tikzstyle{noire}=[-, draw=black]
\tikzstyle{ep}=[-, draw=black]
\tikzstyle{longdashed}=[-, dash pattern=on 5pt off 5pt]
\tikzstyle{pointilles}=[-, draw=black, dotted]
\tikzstyle{grise}=[-, draw={rgb,255: red,191; green,191; blue,191}]
\tikzstyle{rouge}=[-, draw=red]
\tikzstyle{bleue}=[-, draw=bleu, tikzit draw=blue]
\tikzstyle{verte}=[-, draw={rgb,255: red,0; green,230; blue,0}]
\tikzstyle{borddiagrammevide}=[-, dash pattern=on 0.5em off 0.5em on 0.5em off 0.5em on 0.5em off 0em]
\tikzstyle{msborddiagrammevide}=[-, dash pattern=on 0.28em off 0.28em on 0.28em off 0.28em on 0.28em off 0em]
\tikzstyle{sborddiagrammevide}=[-, dash pattern=on 0.2em off 0.2em on 0.2em off 0.2em on 0.2em off 0em]
\tikzstyle{xsborddiagrammevide}=[-, dash pattern=on 0.1em off 0.1em on 0.15em off 0.1em on 0.1em off 0em]
\tikzstyle{mediumdash}=[-, dash pattern=on 2pt off 2pt]
\tikzstyle{rougefonce}=[-, draw={red!50!black}, tikzit draw={rgb,255: red,136; green,0; blue,0}]
\tikzstyle{background}=[-, fill={rgb,255: red,229; green,229; blue,229}, draw=none, tikzit draw={rgb,255: red,128; green,128; blue,128}]
  \newcommand{\robert}[1]{%
    \noindent{\color{purple} \textsf{[RIB: #1]}}
  }
  \newcommand{\cole}[1]{%
    \noindent{\color{blue} \textsf{[CC: #1]}}
  }
  \newcommand{\titouan}[1]{%
    \noindent{\color{red} \textsf{[TC: #1]}}
  }
  \newcommand{\robert}[1]{}
  \newcommand{\cole}[1]{}
  \newcommand{\titouan}[1]{}
\newcommand{\Aff}{%
  \mathsf{Aff}
}
\newcommand{\Lin}{%
  \mathsf{Lin}
}
\newcommand{\Lag}{%
  \mathsf{Lag}
}
\newcommand{\Co}{%
  \mathsf{Co}
}
\newcommand{\Isot}{%
  \mathsf{Isot}
}
\newcommand{\zx}{%
  \mathsf{GSA}
}
\newcommand{\gla}{%
  \mathsf{GLA}
}
\newcommand{\gaa}{%
  \mathsf{GAA}
}
\newcommand{\stab}{%
  \mathsf{Stab}
}
\newcommand{\im}{%
  \operatorname{im}
}
\newcommand{\interp}[1]{%
  \left\llbracket #1 \right\rrbracket
}
\newcommand{\trans}{%
  \mathsf{T}
}
\DeclareRobustCommand{\disc}{%
  {\scalebox{.5}{\tikzfig{../figures/AffCoisoRel/discard_small}}}
}
\newcommand{\N}{%
  \mathbb{N}
}
\newcommand{\Zp}{%
  {\mathbb{F}_p}
}
\newcommand{\K}{%
  \mathbb{K}
}
\newcommand{\Q}{%
  \mathbb{Q}
}
\newcommand{\R}{%
  \mathbb{R}
}
\newcommand{\C}{%
  \mathbb{C}
}
\newcommand{\LOv}{%
  \mathsf{LOv}
}
\newcommand{\ECirc}{%
  \mathsf{ECirc}
}
\NewDocumentCommand{\Rel}{O{X}}{%
  \mathsf{Rel}_{#1}
}
\NewDocumentCommand{\Symp}{O{\K}}{%
  \mathsf{Symp}_{#1}
}
\NewDocumentCommand{\ASymp}{O{\K}}{%
  {\Aff}\Symp[#1]
}
\NewDocumentCommand{\AR}{O{\K}}{%
  {\Aff}\Rel[#1]
}
\NewDocumentCommand{\lR}{O{\K}}{%
  {\Lin}\Rel[#1]
}
\NewDocumentCommand{\LR}{O{\K}}{%
  {\Lag}\Rel[#1]
}
\NewDocumentCommand{\IR}{O{\K}}{%
  {\Isot}\Rel[#1]
}
\NewDocumentCommand{\CR}{O{\K}}{%
  {\Co}\IR[#1]
}
\NewDocumentCommand{\ALR}{O{\K}}{%
  {\Aff}\LR[#1]
}
\NewDocumentCommand{\AIR}{O{\K}}{%
  {\Aff}\IR[#1]
}
\NewDocumentCommand{\ACR}{O{\K}}{%
  {\Aff}\CR[#1]
}
\NewDocumentCommand{\ZX}{O{\K}}{%
  \zx_{#1}
}
\NewDocumentCommand{\ZXdisc}{O{\K}}{%
  \ZX[#1]^\disc
}
\NewDocumentCommand{\GLA}{O{\K}}{%
  \gla_{#1}  
}
\NewDocumentCommand{\GAA}{O{\K}}{%
  \gaa_{#1}
}
\NewDocumentCommand{\Stab}{O{p}}{%
  \stab_{#1}
}
\newcommand{\RX}{%
  \Rel[X]
}
\newcommand{\RK}{%
  \Rel[\K]
}
\newcommand{\abbrar}{%
  \mathsf{AR}
}
\newcommand{\abbralagr}{%
  \mathsf{ALR}
}
\NewDocumentCommand{\Matrices}{ O{m} O{n} O{\K} }{%
  \operatorname{Mat}_{#3}(#2,#1)
}
\NewDocumentCommand{\Sym}{O{n} O{\K} }{%
  \operatorname{Sym}_{#1}(#2)
}
\newcommand{\bvdots}{%
  \tikz[baseline, every node/.style={inner sep=0}]{ \node at (0,0){.}; \node at (0,-6pt){.}; \node at (0,6pt){.}; }
}
\newcommand{\diagram}{%
  \(\ZX\)-diagram\xspace
}
\newlength\oversetwidth
\newlength\underwidth
\newcommand\alignedoverset[2]{
  \settowidth\oversetwidth{$\overset{#1}{#2}$}
  \settowidth\underwidth{$#2$}
  \setlength\oversetwidth{\oversetwidth-\underwidth}
  \hspace{.5\oversetwidth}
  &
  \settowidth\oversetwidth{$\overset{#1}{#2}$}
  \settowidth\underwidth{$#2$}
  \setlength\oversetwidth{\oversetwidth-\underwidth}
  \hspace{-.5\oversetwidth}
  \overset{#1}{#2}
}
\newcommand{\stackedrefs}[1]{%
  \begingroup\renewcommand*{\arraystretch}{.5}\begin{matrix}#1\end{matrix}\endgroup
}
\newcommand{\stackeqmid}[1]{%
  \stackrel{\stackedrefs{#1}}{=}
}
\newcommand{\stackeq}[1]{%
  \alignedoverset{\stackedrefs{#1}}{=}
}
\renewcommand{\phi}{%
  \varphi
}
\renewcommand{\epsilon}{%
  \varepsilon
}
\renewcommand{\leq}{%
  \leqslant
}
\renewcommand{\geq}{%
  \geqslant
}
\newtheorem{theorem}{Theorem}
\newtheorem{proposition}[theorem]{Proposition}
\newtheorem{lemma}[theorem]{Lemma}
\newtheorem{corollary}[theorem]{Corollary}
\theoremstyle{definition}
\newtheorem{example}[theorem]{Example}
\newtheorem{remark}[theorem]{Remark}
\newtheorem{definition}[theorem]{Definition}
\newcommand{\minipropref}[1]{\text{\tikzpropref{#1}}}
\newcommand{\minilemref}[1]{\text{\tikzlemref{#1}}}
\newcommand{\minieqref}[1]{\text{\tikzeqref{#1}}}
\newcommand{\mini}[1]{\tikzrefsize{#1}}
\declaretheoremstyle[
headfont=\normalfont\bfseries\color{gray},
bodyfont=\normalfont,
notefont=\normalfont\bfseries\color{gray},
notebraces={}{},
headpunct={.},
qed=\qedsymbol,
mdframed={
  linewidth=1.5,
  linecolor=gray,
  hidealllines=true,
  leftline=true,
  skipabove=0,
  innerleftmargin=2mm,
  innerrightmargin=0,
  innertopmargin=0,
  innerbottommargin=.7mm
}
]{line_proof}
\begin{document}

\title{\bfseries Graphical Symplectic Algebra}
\date{}

\author[1,2]{Robert I. Booth}
\author[3]{Titouan Carette}
\author[4,5]{Cole Comfort}
\affil[1]{
  University of Edinburgh, United Kingdom
}
\affil[2]{
  University of Bristol, United Kingdom
}
\affil[3]{
  LIX, CNRS, École polytechnique, Institut Polytechnique de Paris, 91120 Palaiseau, France
}
\affil[4]{
  Department of Computer Science, University of Oxford, United Kingdom
}
\affil[5]{
  Université de Lorraine, CNRS, Inria, LORIA, F 54000 Nancy, France
}

\maketitle    
 
\begin{abstract}
We give complete presentations for the \dag-compact props of affine Lagrangian and coisotropic relations over an arbitrary field. This provides a unified family of graphical languages for both affinely constrained classical mechanical systems, as well as odd-prime-dimensional stabiliser quantum circuits. To this end, we present affine Lagrangian relations by a particular class of undirected coloured graphs. In order to reason about composite systems, \emph{we introduce a powerful scalable notation} where the vertices of these graphs are themselves coloured by graphs. In the setting of stabiliser quantum mechanics, this scalable notation gives an extremely concise description of graph states, which can be composed via ``phased spider fusion.''  Likewise, in the classical mechanical setting of electrical circuits, we show that impedance matrices for reciprocal networks are presented in essentially the same way.
  \vspace{1mm}
  \newline
  \textbf{Keywords:}
  {\it Graphical algebra, symplectic geometry, string diagrams, category theory, categorical quantum mechanics, quantum information, quantum computing, classical mechanics, quantum mechanics, quantum optics, stabiliser quantum mechanics, stabilizer codes, graph theory}
\end{abstract}

\section{Introduction}
\label{sec:intro}
Symmetric monoidal categories formalize  process theories with well-behaved notions of sequential and parallel composition. Just as groups can often be presented in terms of generators and equations, so can symmetric monoidal categories. However, in the symmetric monoidal setting, the generators are interpreted as {\em boxes} with wires dangling out of them, and the equations are instead between {\em string diagrams} which are built by wiring these boxes together.  Therefore,  unlike for a group, a presentation for a symmetric monoidal category allows one to reason about the underlying process theory using only diagrammatic rewrite rules: appealing to human beings' innate topological intuitions.

\paragraph{Graphical algebra} \  \newline
The research programme broadly construed as ``graphical algebra'' has produced presentations for categories of relations equipped with algebraic structure, where the monoidal product is given by the disjoint union/direct sum.  For example, the symmetric monoidal  categories of linear relations \cite{gla}, affine relations \cite{gaa}, piecewise linear relations \cite{dpla}, and polyhedral relations \cite{dpa} have all been given presentations. These aforementioned presentations have notably been used as a syntax for signal flow diagrams in control theory \cite{Bonchi2019,control,Bonchi2017,Bonchi2014, coya,erbele}, systems with bounded resources \cite{dpa}, and electrical circuits \cite{gaa,Boisseau2022,network,passive,dpla,coya}.

\paragraph{Categorical quantum mechanics} \  \newline
On the other hand, consider  the process theory of quantum circuits.
Formally, quantum circuits are string diagrams in the symmetric monoidal category of Hilbert spaces with the bilinear tensor as the monoidal product.
The research programme of ``categorical quantum mechanics'' has produced numerous presentations for fragments of quantum circuits. Examples of which include the ZX-calculus \cite{zx,zxcompletea,zxcompleteb,backens_zx-calculus_2014,qutrit}, ZW-calculus \cite{zw}, ZH-calculus \cite{zh}, ZX\&-calculus \cite{zxa} and their various incarnations.   Such presentations have been used for for the design of new quantum circuit optimization techniques \cite{Kissinger2020,deBeaudrap2020,duncan_graph-theoretic_2020,Cowtan2020} as well as quantum circuit verification \cite{Kissinger2020,Lemonnier2021}.  There are many more applications, which are oulined in the review article of van de Wetering \cite{wetering}.

\paragraph{Graphical algebra meets categorical quantum mechanics} \  \newline
Structural similarities between the research programmes of graphical algebra and categorical quantum mechanics have been noticed for quite some time. Specifically, the symmetric monoidal category of linear relations over \(\mathbb{F}_2\) was observed by many people to be isomorphic to the phase-free fragment of the qubit ZX-calculus modulo scalars. This is written down explicitly in the PhD thesis of  \textcite[page~64]{ih} as well as in the work of \textcite[page~12]{control}. This connection has been exploited to reason about \(\mathbb{F}_2\)-linear subspaces in the qubit ZX-calculus, for example in the work of \textcite{carette_szx-calculus_2019}.  A deeper connection between both research programmes was conjectured in the PhD thesis of \textcite[page~146]{ih}. However, some structures of the ZX-calculus, such as the Hadamard gate/Fourier transform and phases were believed by some to be too inherently quantum to be captured in this way.

Zanasi's conjecture was later shown to be true by \textcite{comfort_graphical_2021} who proved that the symmetric monoidal category of affine Lagrangian relations over \(\Zp\) is isomorphic to the odd-prime-dimensional qudit stabiliser ZX-calculus, modulo scalars. Affine Lagrangian relations can be seen to generalize linear relations, and it can accommodate for more notions in quantum circuits such as the Fourier transform and Clifford phases.   Later Comfort showed that one can take this correspondence even further by extending the relational semantics to affine coisotropic relations: in the quantum semantics, this corresponds to adding quantum discarding \cite{comfort_algebra_2023}. However, neither paper provides presentations for these symmetric monoidal categories.

\paragraph{Quantum mechanics meets classical mechanics} \  \newline
Before these developments of Comfort et al. in the relational semantics of quantum circuits, Baez et al., had already interpreted electrical circuits are in the symmetric monoidal category of affine/linear Lagrangian relations over the fields of real and real rational functions \cite{passive,network}.  This is also discussed in the PhD thesis of \textcite{coya} who was also a coauthor in \cite{network}. In this research programme they refer to this interpretation as a {\em``black-box''} which determines the behaviour of the  electrical circuit, but which one can not look inside of: imparting an air of mystery on the compositional structure of affine Lagrangian relations.

In fact, categories of Lagrangian relations have a rich history in the context of the geometric quantisation of classical mechanics 
\cite{guillemin_problems_1979, weinstein_symplectic_2009, benenti_linear_1983, lawruk_special_1975, tulczyjew_category_1984, dold_symplectic_1982, benenti_category_1982, urbanski_structure_1985, kijowski_symplectic_1979}. 
So in retrospect, this connection between quantum and classical mechanics should have not been surprising. However, the rich research programme of geometric quantisation is generally not concerned with string diagrams, therefore, they also do not provide graphical presentations for these symmetric monoidal categories.

\paragraph{Our contribution} \  \newline
In this article  we give presentations for the symmetric monoidal categories of linear Lagrangian relations, affine Lagrangian relations, linear coisotropic relations and affine coisotropic relations over arbitrary fields: the prop of affine coisotropic relations being the most general one. These are symmetric monoidal subcategories of linear and affine relations; however they are quite special. They can be constructed by first doubling the dimension of the objects in linear and affine relations and then adding  ``shearing operations'' in both dimensions. As mentioned earlier, these symplectic categories of relations give semantics for the phase-space picture of both quantum and classical mechanics. Therefore, the presentations we give here yield a unifying treatment for classical mechanical circuits as well as a fragment of quantum circuits.

Therefore, from a conceptual point of view, in this paper we {\em ``open up the black box''} of Baez et al. which was hitherto closed.  By giving presentations for these symplectic categories of relations, we reveal their underlying compositional structure: exposing some deep  connections between quantum and classical mechanical circuits. Moreover, by exploring the structure of the strictification of these categories, we obtain powerful scalable notation for composite systems. This allows us to give elegant and concise descriptions of the normal forms for stabiliser circuits, as well succinct descriptions of notions like graph states and impedance matrices for reciprocal networks.

\paragraph{Structure of paper} \  \newline
In section~\ref{sec:strings} we review string diagrams for symmetric monoidal categories and their close relatives. 

In section~\ref{sec:GAA} we give a presentation for the symmetric monoidal category of affine relations over an arbitrary field.  The presentation which we gives differs slightly from that of \textcite{gaa}.  This is because we use a different set of generators inspired by the ZX-calculus which allows us to represent circuits in terms of undirected coloured graphs.  Using this presentation, we review how some basic notions in linear and affine algebra can be stated graphically.

In section~\ref{sec:AffLagRel} we prove the core result of the paper: providing a presentation for the symmetric monoidal category of affine Lagrangian relations over an arbitrary field.  We begin this journey in subsection~\ref{ssec:symplectic_linear_algebra} by defining the category of Lagrangian relations and giving a brief review of linear/affine symplectic geometry. Following this, in subsection~\ref{ssec:symplectic_ZX}, we give generators and equations for Lagrangian relations, without yet proving that this gives a complete presentation.  In subsection~\ref{ssec:scalable_symplectic}, we introduce scalable notation for affine Lagrangian relations where our generators are generalized to composite systems so that the graphs themselves become coloured by graphs.  In subsection~\ref{ssec:completeness} we use this scalable notation to establish a normal form for affine Lagrangian relations, from which it follows that the generators and equations form a complete presentation. 

In section~\ref{sec:AffCoisoRel} we extend the results of section~\ref{sec:AffLagRel} to the more general setting of affine coisotropic relations. Semantically, this is given by adding the relation which discards the plane; syntactically, this is given by adding a generator which discards isometries.  This connects coisotropic reduction in symplectic geometry to the essential uniqueness of purification in quantum mechanics.

In section~\ref{sec:Applications} we conclude the paper by giving  various applications of our calculus to concrete physical systems, both in the classical and quantum realm.

In appendix~\ref{app:axioms}, we give axiom tables summarizing all of the presentations which we develop in this paper.

We put the gory technical details of the proofs, as well as technical lemmas in appendix~\ref{app:proofs}.

\newpage
\tableofcontents
\newpage

\section{Graphical languages}
\label{sec:strings}
As mentioned in the previous section, symmetric monoidal categories give mathematical semantics for processes with a certain notion of sequential and parallel composition.  A reference for symmetric monoidal categories can be found in Mac Lane   \cite[\S XI]{maclane}.  
Throughout this section, let \((\mathsf{C},\otimes,I)\) denote a symmetric monoidal category with tensor product $(-)\otimes(=):\mathsf{C}\times\mathsf{C}\to\mathsf{C}$ and tensor unit $I \in \operatorname{Ob}(\mathsf{C})$.  Denote the structure maps of the symmetric monoidal category  as follows:
\begin{itemize}
  \item \textbf{associators}: \hfil \(\alpha_{X,Y,Z} : X \otimes (Y \otimes Z) \cong (X \otimes Y) \otimes Z\)
  \item \textbf{unitors}: \hfil \(\lambda_X : I \otimes X \cong X\) \quad\quad {\it and} \quad\quad \(\rho_X: X \otimes I \cong X\)
  \item \textbf{symmetries}: \hfil \(\sigma_{X,Y} : X \otimes Y \cong Y \otimes X\)
\end{itemize}

\subsection{String diagrams}
A symmetric monoidal category is {\em strict} when the unitors and associators are identities.  Strict symmetric monoidal categories are very convenient to work with because they enjoy a graphical calculus in terms of \emph{string
diagrams}.

An arrow \(f : X_1 \otimes \cdots \otimes X_m \to Y_1 \otimes
\dots \otimes Y_n\) is drawn as a box with \(m\) input wires on the left,
and \(n\) output wires on the right, where each wire is labeled by the corresponding object:
\begin{equation}
  \tikzfig{figures/strings/morphism}
\end{equation}
 We will often drop the
labels on wires when the objects can be inferred from context. The identity on the monoidal
unit is drawn as an empty diagram, the identity on all other objects as a plain wire, and the components of the symmetry are drawn as wires crossing:
\begin{equation}
  \tikzfig{figures/strings/monoidal_unit}
  \quad\quad\quad
  \tikzfig{figures/strings/identity}
  \quad\quad\quad
  \tikzfig{figures/strings/symmetry}
\end{equation}
The composition of  arrows is drawn by ``plugging wires'' and the monoidal product of
arrows as ``vertical juxtaposition'':
\begin{equation}
  \tikzfig{figures/strings/composition}
  \quad\quad\quad
  \tikzfig{figures/strings/monoidal_product}
\end{equation}
String diagrams for symmetric monoidal categories can be manipulated as one might expect:
\begin{equation}
  \tikzfig{figures/strings/symmetry_equations}
\end{equation}

Arrows of type \(I\to X\) are called {\em states}, arrows of type \(X\to I\) are called {\em effects}, and endomorphisms \(I\to I\) are called {\em scalars}.

A \emph{prop} is a strict symmetric monoidal category which is generated by a single object under the tensor product.
In other words, the monoid of objects under the tensor product is isomorphic the the natural numbers \(\N\).
Exploiting this isomorphism, we can always drop the wire labels because the types of arrows are uniquely identified by the number of input and output wires. For example,
an arrow \(f : 3 \to 2\) in a prop is represented as:
\begin{equation}
  \tikzfig{figures/strings/prop_morphism}
\end{equation}

\subsection{Strictification and scalable notation}

Without loss of generality, using the following well-known result, it is always possible to use string diagrams to
reason about (not necessarily strict) symmetric monoidal categories:
\begin{theorem}
  Every symmetric monoidal category is symmetric monoidally equivalent to a strict symmetric monoidal category.
\end{theorem}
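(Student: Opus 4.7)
The plan is to construct explicitly a strict symmetric monoidal category $\mathsf{C}^{\mathrm{str}}$ together with a symmetric monoidal equivalence $\mathsf{C}^{\mathrm{str}} \simeq \mathsf{C}$. This is the classical Mac Lane strictification argument, and everything ultimately rests on Mac Lane's coherence theorem, which we take as given.

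First, I would define $\mathsf{C}^{\mathrm{str}}$ as follows. Objects are finite (possibly empty) lists $\underline{X} = (X_1, \ldots, X_n)$ of objects of $\mathsf{C}$, and for each such list let $L(\underline{X})$ denote the left-associated tensor $(\cdots((X_1 \otimes X_2) \otimes X_3) \cdots) \otimes X_n$ computed in $\mathsf{C}$, with $L() := I$ on the empty list. Hom-sets are then defined by $\mathsf{C}^{\mathrm{str}}(\underline{X}, \underline{Y}) := \mathsf{C}(L(\underline{X}), L(\underline{Y}))$, with identities and composition inherited verbatim from $\mathsf{C}$. Concatenation of lists, $\underline{X} \cdot \underline{X}'$, is then a strictly associative operation on objects with strict unit the empty list. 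To extend concatenation to a functor on morphisms, given $f : \underline{X} \to \underline{Y}$ and $g : \underline{X}' \to \underline{Y}'$, one defines $f \otimes g$ by pre- and post-composing $f \otimes_{\mathsf{C}} g$ with the unique coherence isomorphisms $L(\underline{X} \cdot \underline{X}') \cong L(\underline{X}) \otimes L(\underline{X}')$ and similarly for the codomains.

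The second step is to verify that this tensor product is strictly associative, strictly unital, and that the symmetries $\sigma_{\underline{X}, \underline{X}'}$ --- defined as the conjugate by the same coherence isomorphisms of the iterated symmetry on $L(\underline{X}) \otimes L(\underline{X}')$ --- satisfy the usual axioms on the nose. This is exactly where coherence does the real work: all diagrams of associators, unitors, and (identity-object-free) symmetries in $\mathsf{C}$ commute, so the coherence isomorphisms between any two parenthesisations of a fixed list are uniquely determined, and the functoriality of $\otimes$, strict associativity, strict unitality, and the hexagons in $\mathsf{C}^{\mathrm{str}}$ all reduce to instances of coherence in $\mathsf{C}$.

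The third step is to exhibit the equivalence. The functor $L : \mathsf{C}^{\mathrm{str}} \to \mathsf{C}$ sending $\underline{X} \mapsto L(\underline{X})$ and acting as the identity on hom-sets is fully faithful by construction, and essentially surjective because every $X \in \mathsf{C}$ equals $L((X))$. Equipping $L$ with the coherence isomorphisms $L(\underline{X} \cdot \underline{X}') \cong L(\underline{X}) \otimes L(\underline{X}')$ and $L() = I$ makes it a strong symmetric monoidal functor; a quasi-inverse is given by $X \mapsto (X)$, with the natural isomorphisms again supplied by coherence.

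The main obstacle is bookkeeping in the second step: one must be careful to state precisely which coherence isomorphisms appear in the definition of $\otimes$, $\sigma$, and the monoidal structure on $L$, and to check that the various commuting diagrams one needs genuinely reduce to diagrams that coherence handles. Once coherence is invoked cleanly, the remaining work is formal.
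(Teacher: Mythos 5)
Your argument is the classical Mac Lane strictification and is essentially correct; the only thing I would tighten is the invocation of coherence. For a \emph{symmetric} monoidal category the coherence theorem does not say that all diagrams of structural isomorphisms commute, but that two composites of associators, unitors and symmetries agree precisely when they induce the same permutation of the tensor factors. The way you actually use it --- a unique coherence isomorphism between any two parenthesisations of a \emph{fixed} list, and hexagon/naturality checks that reduce to diagrams with matching underlying permutations --- is the correct invocation, so the proof goes through; just state the hypothesis of the coherence theorem precisely when you write it up. Note, however, that the paper does not prove this theorem at all: it is cited as a well-known result (with pointers to explicit strictification procedures in the literature), and what the paper actually constructs in Definition~\ref{def:strict} is a \emph{presentation} of the strictification by generators and relations, where the coherence isomorphisms you conjugate by are reified as explicit generators (the dividers and gatherers, i.e.\ the components of the monoidal laxators of the strictification functor) subject to equations~\eqref{eq:functor} and~\eqref{eq:strings_tensor_equations}. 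Your semantic construction (hom-sets between left-associated tensors) establishes that the equivalence exists; the paper's syntactic construction is what it needs downstream, since the dividers and gatherers become first-class string-diagrammatic citizens in the scalable notation of sections~\ref{ssec:scalable_gaa} and~\ref{ssec:scalable_symplectic}. The two constructions produce equivalent categories, so there is no conflict --- but if you wanted your write-up to serve the paper's purposes you would still need to verify that your $\mathsf{C}^{\mathrm{str}}$ satisfies the universal property implicit in the paper's presentation, or directly exhibit the dividers and gatherers inside it (they are just the coherence isomorphisms $L(\underline{X}\cdot\underline{X}')\cong L(\underline{X})\otimes L(\underline{X}')$ viewed as morphisms of $\mathsf{C}^{\mathrm{str}}$).
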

This process is called \emph{strictification}. There is an explicit strictification
procedure for \emph{monoidal} categories described by \textcite{wilson} presented in terms of generators and relations.
Similar results exist in the context of multiplicative linear logic: for example by \textcite{robin}. Although the
specific symmetric monoidal strictification procedure we use here has not explicitly appeared in the literature,
it follows straightforwardly from the aforementioned results:
\begin{definition}
  \label{def:strict}
  Given a symmetric monoidal category \((\mathsf{C},\otimes,I)\), its \emph{strictification} \(([\mathsf{C}],+,[])\)  has:
  \begin{itemize}
    \item \textbf{Objects:}  in \(\operatorname{List}(\mathrm{Ob}(\mathsf{C}))\).
    \item \textbf{Arrows:} For every arrow \(f:X\to Y\) in \(\mathsf{C}\) there is an induced arrow \([f]:[X]\to[Y]\) in \([\mathsf{C}]\).  Moreover, for all objects \( X,Y\) in \(\mathsf C\) we have the following arrows in  \([\mathsf{C}]\):

      \begin{tabular}{rlrl}
        \emph{Dividers:} &   \tikzfig{figures/strings/tensor_dividers} &
        \emph{Gatherers:}  & \tikzfig{figures/strings/tensor_gatherers}
      \end{tabular}\\~

      We impose the following equations between arrows:
      \begin{equation}
        \label{eq:functor}
         [f]\circ [g] = [f\circ g], \ \ [1_X]=1_{[X]}
      \end{equation}
      \begin{equation}
        \label{eq:strings_tensor_equations}
        \tikzfig{figures/strings/tensor_equations}
      \end{equation}

    \item \textbf{Strict monoidal structure:} The tensor unit is the empty list \([\ ]\).  The tensor product \(+\) is list concatenation.
    \item \textbf{Symmetries:} given by permutations of list elements.
  \end{itemize}

  The \emph{strictification functor}  \([-] : (\mathsf{C},\otimes,I) \simeq ([\mathsf{C}],+,[\ ])\) takes arrows \(f:X\to Y\) to \([f]:[X]\to [Y]\).  For this reason we overload notation and take \([f]=f\). Equations \eqref{eq:functor} witness the functoriality and  equations  \eqref{eq:strings_tensor_equations} witness the preservation of the symmetric monoidal structure. The dividers and gatherers are the components of the monoidal laxators for the strictification functor.
\end{definition}

\begin{remark}
  The components of the associator, unitors and symmetry in \(\mathsf{C}\) are in are internalized to the following string diagrams in \([\mathsf{C}]\):
  \begin{equation}
    \tikzfig{figures/strings/tensor_structure_maps}
  \end{equation}
  where their inverses are given by flipping the string diagrams around horizontally.
\end{remark}

\paragraph{Scalable notation} \  \newline
In this paper, we only work with props.  Even though props are already strict, it will still be useful to work with their strictifications.
This allows for inductive arguments to be made using only string diagrams.  To this end, first remark that in the strictification of a strict symmetric monoidal category, the laxators and oplaxators obey the following equations:
\begin{equation}
  \label{eq:scalablemonoid}
  \tikzfig{figures/strings/monoid}
    \quad
  \tikzfig{figures/strings/comonoid}
\end{equation}
In order to make inductive arguments for props using only pictures, following Carette et al. \cite{carette_szx-calculus_2019, carette_note_2020, carette2022large-scale_2022}, as a matter of convention denote the object \([1]\) by a thin wire, and the objects \([n]\) for \(n \in \N\) by thick wires (often labelled by their type \(n\)).
For example, we use this notation to describe block matrices string digrammatically in subsection~\ref{ssec:scalable_gaa}.

Call this collection of graphical notation {\em scalable notation}.

\subsection{Duality and flexsymmetry}
In this subsection, we extend symmetric monoidal categories with a notion of feedback:
\begin{definition}
  A symmetric monoidal category is {\em compact-closed} when  for every object \(X\), there exists an object \(X^*\) together with a map \(\eta_X : I \to X \otimes X^*\) called the {\em cup} and a map  \(\epsilon_X : X^* \otimes X \to I\) called the {\em cap}, drawn as follows:
  \begin{equation}
    \tikzfig{figures/strings/cup}
    \quad\qand\quad
    \tikzfig{figures/strings/cap}
  \end{equation}
  We require that \((X\otimes Y)^*=X^*\otimes Y^*\) and \(I^*=I\).  We also require the cups and caps to satisfy the following equations:
  \begin{equation}
    \tikzfig{figures/strings/string_pulling}
  \end{equation}

  A {\em compact prop} is a compact-closed prop for which \(1^*=1\) and where cup and cap are symmetric, meaning that they satisfy $\epsilon_1 \circ \sigma_{1,1} = \epsilon_1 $ and $\sigma_{1,1} \circ \eta_1  = \eta_1 $.
\end{definition}

\begin{definition}
 \label{def:flexsym}
  Given a compact prop \((\mathsf{C},+ ,0)\), an arrow \(f : m \to n\) is \emph{flexsymmetric} if for any permutation \(\varsigma\) of \(m+n\) wires:
  \begin{equation}
    \tikzfig{figures/strings/flexsymmetry}
  \end{equation}
\end{definition}

Notice that the cup and the cap are flexsymmetric. In some cases, compact props admit flexsymmetric presentations, that is, a presentation where all generators are flexsymmetric. In other words, the string diagrams can be treated as labeled graphs, and we no longer need to worry about the orientation of edges and arrows: only the connectivity of the diagram matters! This is very desirable from a computer science perspective, because it greatly simplifies the combinatorics of representing string diagrams with a data structure. More details on flexsymmetric presentations can be found in the PhD thesis of Carette \cite{carettethesis}.

\paragraph{Dagger structure} \  \newline
In many settings, a process \(X\to Y\) is canonically associated to a process of the ``reversed type'' \(Y\to X \).  However, this map need not be the inverse.  This idea can be captured as follows:

\begin{definition}
  \label{def:dag}
  A {\em \dag-compact prop} is a compact prop \((\mathsf{C}, \otimes, I)\) equipped with a functor of props \( (-)^\dag : \mathsf{C}^\mathsf{op}\to  \mathsf{C} \) such that:
  \begin{itemize}
    \item For all objects \( X \) in \(\mathsf{C}\), \(X^\dag = X\).
    \item For all maps \(f:X\to Y\) in \(\mathsf{C}\),  \( (f^\dag)^\dag=f  \).
    \item \(\sigma_{1,1}^\dag= \sigma_{1,1}^{-1}\) and \(\eta_1^\dag=\epsilon_1\).
  \end{itemize}
  In such a setting, an arrow \(f:X\to Y\) is an:
  \begin{itemize}
    \item {\em isometry} when \(f^\dag  f = 1_X\).
    \item {\em coisometry} when \(f f^\dag = 1_Y\).
    \item {\em unitary} when it is both an isometry and coisometry (so that \(f^\dag = f^{-1}\)).
  \end{itemize}
\end{definition}

\section{Graphical linear and affine algebra}
\label{sec:GAA}
In this section we review string diagrammatic presentations of the props of affine and linear subspaces by Bonchi et al. \cite{gla,gaa}, which we will respectively refer to as Graphical Linear Algebra ({\sf GLA})  and Graphical Affine Algebra ({\sf GAA}). {\sf GAA} can be seen as a strict extension of GLA, thus we will first give its presentation and then restrict it to {\sf GLA}. Note that the presentations which we give follow directly but differ slightly from those contained in the work of Bonchi et al. \cite{gla,gaa}.

\subsection{Props of linear and affine relations}
\label{subsec:linrelaffrel}
Given some fixed set \(X\), an \(X\)-relation \(R:n\to m\) is a subset of the product \(R\subseteq X^n \times X^m\).  These form a compact prop:

\begin{definition}
  \label{def:xrel}
  Given a set \(X\), the compact prop of {\em\(X\)-relations}, \(\RX\), has:
  \begin{itemize}
    \item \textbf{Arrows} are relations between \(X^n\) and \(X^m\).
    \item \textbf{Composition} of relations \(R:n\to m\) and \(S:m\to k\) is defined as:
    \[ S\circ R \coloneqq  \{ (x,z) \in X^{n}\times X^k \ | \ \exists y \in X^m: (x,y) \in R \wedge (y,z) \in S  \}:n\to k \]
    \item \textbf{Identity} on \(n\) is given by the diagonal relation \(\{(x,x) \in  X^n\times X^n \}:n\to n \)
    \item \textbf{Monoidal structure} is given by the object \(0\) and the Cartesian product:
      \[ S\times T \coloneqq \left\{\left(\begin{bmatrix}x\\y \end{bmatrix},\begin{bmatrix}z\\w\end{bmatrix}\right)  \,\middle|\,   (x,z) \in S\wedge  (y,w)\in T \right\} \]
    \item \textbf{Symmetric structure} is generated by the relation: 
      \[\sigma_{1,1}\coloneqq \left\{\left(\begin{bmatrix}x\\y \end{bmatrix},\begin{bmatrix}y\\x\end{bmatrix}\right)  \,\middle|\,  x,y\in X\right\}:2\to 2 \]
    \item \textbf{Compact structure} is generated by the relations: \[\eta_1\coloneqq \left\{\left(\bullet,\begin{bmatrix}x\\x\end{bmatrix}\right)  \,\middle|\,  x\in X\right\}:0\to 2\quad\text{and}\quad\epsilon_1\coloneqq \left\{\left(\begin{bmatrix}x\\x\end{bmatrix}, \bullet\right)  \,\middle|\,  x\in X\right\}:2\to 0\]
  \end{itemize}
\end{definition}

Given a function \(f:X^n\to X^m\) then its graph is a relation \(\Gamma_f\coloneqq \{(x,f(x))\ |\ x\in X \}:n\to m\). This assignment is functorial, so that the composition of functions corresponds to the relational composition of their graphs.

We restrict this prop to the linear and affine settings. To this end, first recall some basic linear and affine algebra:

\begin{definition}
  Given a field \(\K\) and some \(n\in\N\), a {\em \(\K\)-affine combination} of vectors \(\vec v, \vec w \in \K^m\) is a vector \(a \cdot \vec v + (1-a)\cdot \vec w \) for some \(a\in\K\). More generally, {\em \(\K\)-linear combination} of vectors \(\vec v, \vec w \in \K^m\) is a vector \(a \cdot \vec v + b\cdot \vec w \) for some \(a,b\in\K\).

  A {\em \(\K\)-affine subspace} of \(\K^n\) is subset \(V\subseteq \K^n\) which is closed under affine combinations. Similarly, a {\em \(\K\)-linear subspace} of \(\K^n\) is nonempty subset \(V\subseteq \K^n\) which is closed under linear combinations.

\end{definition}

Given any nonempty affine subspace \(V\subseteq \K^n\), there exists a  vector \(\vec a\in\K^n\)  such that the coset \( V+ \vec a \coloneqq \{\vec v+\vec a \ | \ \forall \vec v \in V \}\subseteq \K^n \) is a linear subspace.  Call this vector, the {\em affine shift} associated to \(V+\vec a\).
Conversely, every linear subspace is an affine subspace, as affine combinations are particular linear combinations.  Conceptually, affine subspaces are linear subspaces without a fixed origin. This justifies why we don't require affine subspaces to be inhabited: since there is no fixed origin, it need not contain the zero vector.

Both of these notions of subspaces induce a compact prop:

\begin{definition}
  Given a field \(\K\), the category, \(\AR\) (resp. \(\lR\)), of {\em \(\K\)-affine (resp. linear) relations} is the sub-prop of  \(\RK\) whose arrows \(n\to m\) are \(\K\)-affine (resp. linear) subspaces of the vector space \(\K^n\times\K^m \cong \K^{n}\oplus \K^m\).
\end{definition}

This is a category because affine (resp. linear) subspaces are closed under relational composition, and that the diagonal subset is a linear subspace. This is a compact prop because the cup, the cap and symmetry are also linear subspaces.

By construction we have \(\lR \hookrightarrow \AR \hookrightarrow \RK\). In \(\lR\) and \(\AR\), the arrows \(0 \to n\) are exactly linear and affine subspaces of \(\K^n\).  In \(\lR\) there is a unique scalar \(1_0:0\to 0\), corresponding to the linear subspace \(\K^0 \subset \K^0\). But in \(\AR\) we have two scalars, the identity and the empty affine subspace \(\emptyset\subset \K^1\).

\subsection{Graphical affine algebra and its interpretation}
In this subsection, we review the graphical presentations for the props of affine and linear relations of Bonchi et al \cite{gaa,gla}.

Let \(\K\) be an arbitrary field, and  \(\K^*\) is the group of invertible elements of \(\K\) under multiplication.

We start with affine relations, and then restrict  to linear relations:
\begin{definition}
\label{def:gaa}
  Let \(\GAA\) be the compact prop where the arrows are generated by  {\em grey and white spiders}, as well as {\em scalar multiplication} for every \(a \in \K\):

  \[
    \tikzfig{figures/AffRel/b-spider}\quad
    \tikzfig{figures/AffRel/w-spider}\quad
    \tikzfig{figures/AffRel/scalar}\quad
  \]
  Modulo the equations making the grey and white spiders flexsymmetric as well as the equations, for all \(a,b\in \K\) and \(c \in \K^*\):
  \[\tikzfig{figures/AffRel/axioms}\]

  With the derived generators, for all \(a\in\K\):
  \[
    \tikzfig{figures/AffRel/scalar-daggered}
    \coloneqq 
    \tikzfig{figures/AffRel/scalar-twisted}
  \]

\end{definition}

\begin{theorem}
  There is an isomorphism of compact props  \(\interp{-}^{\gaa}_{\abbrar}:\GAA \cong \AR\)  defined on generators, for all \(n,m\in\N\) and \(a\in \K\):
  \begin{align*}
    \interp{\tikzfig{figures/AffRel/b-spider}}^{\gaa}_{\abbrar}
      &\coloneqq
         \left\{
           \left(
             \begingroup
               \renewcommand*{\arraystretch}{.7}
               \begin{bmatrix}
                 a\vspace*{-.2cm}\\ \vdots \\ a
               \end{bmatrix}
               ,
               \begin{bmatrix}
                 a\vspace*{-.2cm}\\ \vdots \\ a
               \end{bmatrix}
             \endgroup
           \right) \in \K^{n}\oplus \K^m
           \ \middle|\
           a \in \K 
         \right\} \\
    \interp{ \tikzfig{figures/AffRel/w-spider}}^{\gaa}_{\abbrar}
      &\coloneqq 
        \left\{ 
          (\vec b, \vec c)
          \in \K^n \oplus \K^m
          \ \middle|\
          \sum_{j=0}^{n-1} b_j + \sum_{k=0}^{m-1} c_k  = a 
        \right\} \\
    \interp{ \tikzfig{figures/AffRel/mult}}^{\gaa}_{\abbrar}
    &\coloneqq \left\{ \left(b,ab \right) \ \middle|\ b \in \K \right\}
  \end{align*}
\end{theorem}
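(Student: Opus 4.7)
The plan is to decompose the proof into the three standard parts: well-definedness (soundness), surjectivity on objects/arrows (fullness), and injectivity (completeness). Since the result is essentially a repackaging of the presentation of \textcite{gaa} with different chosen generators, a natural strategy is to prove soundness and fullness directly, and then reduce completeness to their theorem by exhibiting a translation between the two presentations.

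First, I would check soundness: it suffices to verify, for each generator, that the candidate image is indeed an affine subspace of the appropriate $\K^n \oplus \K^m$, and then check that the interpretation carries each axiom of $\GAA$ to a valid equation in $\AR$. The flexsymmetry of both spiders is immediate from the unordered nature of the defining sets of tuples, and the remaining axioms (copy, bialgebra, add, identity/zero, phase, multiplication) translate into elementary assertions about sums and scalar multiples of vectors in $\K^n$. Compatibility with the compact structure (cups, caps, symmetry) is likewise routine from the set-theoretic definition of $\AR$.

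For fullness, the key fact is that every affine subspace $V \subseteq \K^{n+m}$ admits a finite parametric presentation $V = \vec{b} + \operatorname{Im}(A)$ for some $A \in \Matrices[n+m][r]$ and $\vec{b} \in \K^{n+m}$, or dually an equational presentation as the solution set of an affine system. Translating such a presentation diagrammatically is standard: each scalar entry of $A$ becomes a scalar-multiplication generator, each linear combination is built from a white spider that sums its legs, each use of a parameter on multiple entries becomes a grey spider that copies it, and the affine shift $\vec{b}$ is encoded by zero-legged white spiders labelled by the components of $\vec{b}$. Carrying this out yields, for each affine relation, an explicit diagram that interprets to it.

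The main obstacle is completeness. My plan is to avoid reproving it from scratch by constructing an invertible translation $\Phi : \GAA \to \zx_{\mathsf{aff}}$, where $\zx_{\mathsf{aff}}$ denotes the presentation of \textcite{gaa} (whose completeness is established there). Concretely, I would realise each generator of $\GAA$ as a derived diagram in their calculus and, symmetrically, realise each of their generators in $\GAA$, then show that under this translation every axiom of one side is derivable in the other. Given such a pair of translations, completeness of $\GAA$ for $\AR$ follows from theirs. The delicate point will be the scalar-multiplication generator, where the case $a = 0$ behaves differently from $a \in \K^*$, and this needs careful bookkeeping using \TextGAAMult{} and \TextGAAZero{} so that the translation remains functorial. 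Once the translation is in place, injectivity of $\interp{-}^{\gaa}_{\abbrar}$ follows by a standard chase: if two diagrams have equal interpretations in $\AR$, their images under $\Phi$ are also equal, hence provably equal in $\zx_{\mathsf{aff}}$, and transporting the derivation back yields equality in $\GAA$.
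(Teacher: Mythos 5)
Your proposal is correct and matches the paper's (largely implicit) argument: the paper also treats this theorem as a change of generators relative to the presentation of \textcite{gaa}, giving an explicit two-way translation between \(\GAA\) and the original calculus and inheriting completeness from their result, with soundness and fullness being routine checks. The one delicate point you flag, the \(a=0\) versus \(a\in\K^*\) behaviour of the scalar-multiplication generator, is indeed where the bookkeeping lives, and the paper handles it the same way via the stated translation.
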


The notions of white and grey spiders labelled by a phase-group will occur throughout the rest of this paper in different incarnations:

\begin{definition}
\label{def:phasegroup}
  Given a group \(\mathcal{G}=(G,+,0)\), the compact prop of {\em \(\mathcal{G}\)-labelled white spiders} is generated by, for each \(a \in G\) and \(n,m \in \N \):
  \[\tikzfig{figures/AffRel/labelled-w-spider}\]
  modulo flexsymmetry (see definition~\ref{def:flexsym}) and the following equations for all \(a,b\in G \):
  \[\tikzfig{figures/AffRel/labelled-w-spider-axioms}\]

  Similarly, the compact prop of {\em \(\mathcal{G}\)-labelled grey spiders} is generated by, for each \(a \in G\):
  \[\tikzfig{figures/AffRel/labelled-b-spider}\]
  modulo flexsymmetry and the following equations for all \(a,b\in G \) and \(n,m \in \N \):
  \[\tikzfig{figures/AffRel/labelled-b-spider-axioms}\]
  In both cases, say that \(\mathcal{G}\) is the {\em phase group} of the white and grey spiders.
\end{definition}

Therefore, we see that the white phase group of \(\GAA\) is the additive group of \(\K\); whereas the grey phase group is the trivial group.  As opposed to \(\AR\), we find that \(\lR\) has a more symmetric presentation:

\begin{corollary}
  By restricting the phase group of the  grey spider in \(\GAA\) to be the trivial group, and removing the rules involving non-zero phases, we obtain a presentation \(\GLA\) for \(\lR\).
\end{corollary}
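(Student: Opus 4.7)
The plan is to leverage the isomorphism $\interp{-}^{\gaa}_{\abbrar} : \GAA \cong \AR$ established in the preceding theorem and show that it restricts to an isomorphism between $\GLA$ and $\lR$. The first observation is that among the generators of $\GAA$, only the white spider with a nonzero phase interprets as a non-linear affine subspace; the grey spider, the zero-phase white spider, and the scalar multiplications all yield linear subspaces of $\K^n \oplus \K^m$ (respectively, the diagonal subspace, the kernel of $\vec v \mapsto \sum_i v_i$, and graphs of linear maps). Consequently, the composite $\GLA \hookrightarrow \GAA \xrightarrow{\interp{-}^{\gaa}_{\abbrar}} \AR$ factors through the inclusion $\lR \hookrightarrow \AR$, giving a well-defined interpretation functor $\interp{-}^{\gla}_{\abbrlinr} : \GLA \to \lR$.

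It then remains to show this restricted functor is full and faithful. For fullness, given a linear subspace $V \subseteq \K^n \oplus \K^m$, I would build a $\GLA$-representative directly from a spanning set, using the grey spider to duplicate a variable, the zero-phase white spider to sum variables, and the scalar multiplications to form linear combinations. Equivalently, one may take any $\GAA$-representative $D$ of $V$ furnished by the preceding theorem and rearrange it using the $\GAA$ axioms to isolate its nonzero white-spider phases into a single affine-shift subdiagram; since $V$ is linear, this shift must vanish, so the subdiagram collapses, yielding a $\GLA$-diagram for $V$.

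For faithfulness, suppose $D_1, D_2$ are $\GLA$-diagrams with $\interp{D_1}^{\gla}_{\abbrlinr} = \interp{D_2}^{\gla}_{\abbrlinr}$; then by the preceding theorem they are equal in $\GAA$, and one must show they are equal using only $\GLA$-axioms. The main obstacle is that a $\GAA$-derivation witnessing $D_1 = D_2$ might pass through intermediate diagrams carrying nonzero white-spider phases, so one cannot merely delete the omitted rules and expect the derivation to survive. I would address this by establishing a $\GLA$-normal form—essentially a Gaussian-reduced matrix encoded diagrammatically via copies, sums, and scalar multiplications—and verifying that any two $\GLA$-diagrams with the same linear-subspace interpretation reduce to the same normal form using only $\GLA$-rewrites. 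This mirrors the original completeness argument of Bonchi et al.\ for linear relations and sidesteps the need to track phases through the derivation altogether.
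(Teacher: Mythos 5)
Your strategy is sound and your observations are correct (in particular, the only generator of \(\GAA\) whose interpretation fails to be a linear subspace is the white spider with nonzero phase, so the restricted interpretation does land in \(\lR\), and you rightly flag that faithfulness cannot be inherited from the \(\GAA\) completeness because a derivation between two phase-free diagrams may pass through intermediate diagrams with nonzero phases). However, your route is genuinely different from the paper's. The paper offers no direct proof of this corollary at all: it is justified by the remark immediately following it, namely that the presentation \(\GLA\) translates, via the explicit dictionary between the flexsymmetric generators used here and the original generators of Bonchi et al., into the original presentation of linear relations (the interacting-Hopf-algebras calculus), whose completeness for \(\lR\) is already established in the cited work \cite{gla}. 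Your proposal instead re-proves completeness from scratch: soundness by restriction of the interpretation functor, fullness by building spanning-set representatives, and faithfulness by a normal-form argument carried out entirely inside \(\GLA\). This is more self-contained, but the faithfulness step---which you leave as a sketch---is precisely the content of Bonchi et al.'s completeness theorem and is by far the hardest part; as written, your argument defers the entire difficulty to an unproven claim that every linear subspace has a unique \(\GLA\)-normal form reachable by \(\GLA\)-rewrites alone. The paper's approach buys this for free by exhibiting the translation of generators and axioms and citing the known result; if you intend your version to stand alone, you would need to actually carry out the normal-form construction (e.g.\ the span/cospan factorisation of \cite{gla}), which is a substantial piece of work rather than a corollary-sized argument. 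One small further point: the corollary as stated speaks of restricting the phase group of the \emph{grey} spider, but in \(\GAA\) the grey phase group is already trivial; both you and the intended reading take it to be the \emph{white} spider whose phases are being killed, which is the correct interpretation.
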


Note that our presentations  \(\GAA\) of \(\AR\) and \(\GLA\) of \(\lR\) differ from their original presentations by Bonchi et al. \cite{gla,gaa}. The original presentation of affine relations is recovered from ours via the translation:
\[ \tikzfig{figures/AffRel/traduction} \]

Where the original presentation of  \(\lR\) can be recoverd by restricting \(a=0\).

\begin{remark}
\label{rem:notflexsym}
In our presentations \(\GAA\) and \(\GLA\) both grey and white spiders are flexsymmetric generators. In particular, this means that affine relations over \(\Q\) and \(\Zp\) for prime \(p\) admit flexsymmetric presentations because all scalars are generated by the scalars \(1\) and \(0\) under addition, multiplication and division.
\end{remark}

This flexsymmetric presentation makes typesetting diagrams much easier, as one doesn't have to worry about the direction which wires are connected to generators; however, it will be useful to keep both presentations in mind, as the original non-flexsymmetric presentations are more natural from an algebraic perspective. For example, the concrete semantics of the constructions in the following subsection are more easily interpreted in the non-flexsymmetric presentation.

\subsection{Scalable notation for graphical linear and affine algebra}
\label{ssec:scalable_gaa}
In this subsection we give notations for the strictification of \(\GAA\). This allows for many concepts
in linear and affine algebra to be expressed concisely using pictures.

\begin{definition}
  Scalable spiders are defined by induction on the colour of wire \(k\in\N\).
  The base case is given by the identity on the tensor unit:
  \begin{equation}
    \tikzfig{figures/AffRel/scaledspidef_basecase}
  \end{equation}
  The inductive step is given by:
  \begin{equation}
    \label{eq:GAAthickspiders}
    \tikzfig{figures/AffRel/scaledspidef}
  \end{equation}
\end{definition}

These scalable spiders satisfy analagous axioms as before, except now the white \([k]\)-coloured spiders are indexed by the additive group of \(\K^k\).

As a matter of notation, for any \(n,m\in\N\) take \(\Matrices\) to be the set of \(m \times n\) matrices over \(\K\).
\begin{definition}
  \label{def:matrix_arrow}
Block matrices are defined by induction on the domain and codomain.
  The base cases are given by the unique matrices \(0_{0,0}\in \Matrices[0][0]\), \(0_{0,1}:\in \Matrices[0][1]\) and \(0_{1,0}\in \Matrices[1][0]\):
  \begin{equation}
    \tikzfig{figures/AffRel/matrixarrow_basecase}
  \end{equation}
  For the first inductive step given a vector \(\vec v \in \K^m\), and an element \(a\in\K\) we can build the matrix arrow  for \( [a  \ {\vec v}^\trans]^\trans \in \K^{m+1}\):
  \begin{equation}
    \label{eq:blockmatrixinductivestep0}
    \tikzfig{figures/AffRel/matarrowdef}
  \end{equation}

  For the second inductive step, given a matrix \(A\in \Matrices[m][n]\) and a vector  \(\vec v\in \K^m \), we can build the matrix arrow for  \([\vec v \ A ] \in \Matrices[m][n+1]\):
  \begin{equation}
    \label{eq:blockmatrixinductivestep1}
    \tikzfig{figures/AffRel/matarrowdef1}
  \end{equation}
\end{definition}

Unrolling this inductive definition, a block matrix divided once both horizontally and vertically is represented by:
\begin{equation}
  \label{eq:arrow_decomposition}
  \tikzfig{figures/AffRel/blockmatex}
\end{equation}

So that any affine transformation \((A,\vec v);\ x\mapsto Ax + \vec v \) is represented by the following diagram:

\begin{equation}
  \label{eq:affinematrix}
  \tikzfig{figures/AffRel/affinematex}
\end{equation}

Note that the extra white nodes on the right hand sides of equations~\eqref{eq:blockmatrixinductivestep1}, \eqref{eq:arrow_decomposition} and \eqref{eq:affinematrix} are artifacts of our flexsymmetric presentation. From this representation the following rules directly follow:

\begin{equation}
	\label{matrules}\tikzfig{figures/AffRel/matrules}
\end{equation}

As a matter of notation, just like the converse of scalar multiplication, the converse of a matrix \(A\) is given by:

\begin{equation}
  \tikzfig{figures/AffRel/matrix-daggered}
  \coloneqq 
  \tikzfig{figures/AffRel/matrix-twisted}
\end{equation}

Using this notation, we can depict the image and the kernel of a matrix \(A\) using pictures:

\begin{equation}
  \interp{\tikzfig{figures/AffRel/matarrowspace1}}^{\gaa}_{\abbrar}=\ker(A)\quad\interp{\tikzfig{figures/AffRel/matarrowspace2}}^{\gaa}_{\abbrar}=\im(A)
\end{equation}

And more generally the affine subspace of solutions for \(\vec v\) in the  equation \(A\vec v=\vec w\):

\begin{equation}
  \interp{\tikzfig{figures/AffRel/matarrowspace3}}^{\gaa}_{\abbrar}=\left\{\vec w \,\middle|\,  A \vec w=\vec v\right\}
\end{equation}

Note that in this relational setting, it is perfectly fine to ask for the the image or kernel of an affine {\em relation} by replacing the matrix arrows with arbitrary maps.

We end this section by recalling diagrammatic reformulation of injectivity and surjectivity:

\begin{proposition}\label{prop:matprop}
  For any matrix $A$ we have the following two separate columns of equivalent propositions:
  
  \begin{center}
    \begin{tabular}{rlcrl}
      \((1)\) & \(A\) is injective & \quad\quad\quad & \((1)\) & \(A\) is surjective \\ \\
      \((2)\) & \tikzfig{figures/AffRel/inj3} & & \((2)\) & \tikzfig{figures/AffRel/surj3}\\ \\
      \((3)\) & \tikzfig{figures/AffRel/inj2} & & \((3)\) & \tikzfig{figures/AffRel/surj2}\\ \\
      \((4)\) & \tikzfig{figures/AffRel/inj1}& & \((4)\) & \tikzfig{figures/AffRel/surj1}
    \end{tabular}
  \end{center}

\end{proposition}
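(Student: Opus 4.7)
The plan is to prove both columns simultaneously by observing that surjectivity is the dagger-dual of injectivity, and then reducing each chain of equivalences to the semantic interpretation together with straightforward diagrammatic manipulation in $\GAA$. I will first focus on the injectivity column; the surjectivity column then follows by flipping every diagram horizontally (which implements the functor $(-)^\dag$) and using that $A^\dag$ is surjective as a relation precisely when $A$ is injective as a function.

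For the injectivity column, the equivalence (1) $\Leftrightarrow$ (2) I would establish semantically via the isomorphism $\interp{-}^{\gaa}_{\abbrar}$. The diagram on the right of (2) postcomposes the matrix arrow $A$ with the zero-state white spider and converts the result into a subspace of the domain; by the image/kernel formulas displayed above proposition \ref{prop:matprop}, this subspace is exactly $\ker(A)$. Equating it diagrammatically with the zero subspace (the empty-output spider representation of $\{0\}$) is thus equivalent to the textbook statement $\ker(A)=\{0\}$, which over a field is equivalent to injectivity.

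For (2) $\Leftrightarrow$ (3) $\Leftrightarrow$ (4), I would argue purely diagrammatically using the axioms of $\GAA$. Starting from (2), I expect one direction of (2) $\Leftrightarrow$ (3) to follow by bending a wire through the cup $\eta$ (available since $\GAA$ is compact closed), which converts the kernel-style equation into one involving $A^\dag$ composed with $A$; the converse direction is obtained by reversing this bend. The step (3) $\Leftrightarrow$ (4) should then fall out of spider fusion \TextGAAFusionW/\TextGAAFusionB together with the matrix rules \eqref{matrules} and the bigebra axiom \TextGAABigebra, which lets one push grey spiders past white spiders to rearrange the composition $A^\dag A$ into the appropriate form. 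At each step the converse $A^\dag$ is the definitional abbreviation via cup and cap, so every rewrite is applied inside an ambient flexsymmetric context.

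The main obstacle I anticipate is purely bookkeeping rather than conceptual. The flexsymmetric presentation means one need not worry about wire orientations, but the matrix-arrow inductive definition in \eqref{eq:blockmatrixinductivestep0}--\eqref{eq:blockmatrixinductivestep1} introduces extra white nodes that need to be absorbed correctly, and the rules \eqref{matrules} must be applied in the right order so that scalar-multiplication generators do not get stranded. Once this bookkeeping is handled, every equivalence is either a direct application of spider fusion or a direct consequence of the semantic reading via $\interp{-}^{\gaa}_{\abbrar}$. The surjectivity column requires no new work: applying $(-)^\dag$ to the injectivity chain yields exactly the displayed surjectivity diagrams, using that $\interp{A^\dag}^{\gaa}_{\abbrar}$ is the relational converse of $\interp{A}^{\gaa}_{\abbrar}$ and that $A^\dag$ is injective iff $A$ is surjective.
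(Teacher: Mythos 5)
Your overall architecture for the first column --- anchoring one link of the chain semantically via $\interp{-}^{\gaa}_{\abbrar}$ and discharging the remaining links by diagrammatic rewriting --- is essentially what the paper does, except that the paper proves a one-directional cycle $(1)\Rightarrow(2)\Rightarrow(3)\Rightarrow(4)\Rightarrow(1)$ with \emph{two} semantic endpoints: $(1)\Rightarrow(2)$ is checked relationally (item $(2)$ is the statement that $A$ commutes with the grey ``copying'' structure, not the kernel condition), and $(4)\Rightarrow(1)$ is the step that reads off $\ker(A)=\{0\}$; only the two middle steps are diagram manipulations. Your identification of item $(2)$ with the kernel equation is therefore off by a relabelling --- the kernel equation is item $(4)$ --- which does not break the logic of an equivalence cycle, but it does mean the concrete rewrites you sketch for $(2)\Leftrightarrow(3)\Leftrightarrow(4)$ are aimed at the wrong diagrams.

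The genuine gap is the treatment of the surjectivity column. Flipping the injectivity diagrams horizontally implements the relational converse, so it yields characterisations of when the \emph{converse} relation $A^\dag$ (still decorated with grey spiders) satisfies certain equations; it does not yield the displayed surjectivity diagrams, which involve the matrix arrow for $A$ itself together with \emph{white} spiders --- semantically, item $(2)$ of that column asserts that $A$ commutes with addition in a way that requires preimages to exist. The duality that actually exchanges the two columns is the colour-swap/transpose one (using that $A$ is surjective iff $A^\trans$ is injective, together with the fact that the arrow for $A^\trans$ is the colour-swapped converse of the arrow for $A$, as in lemma~\ref{lem:arrow_transpose}), not the bare dagger. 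Moreover the justification you offer --- that ``$A^\dag$ is surjective as a relation precisely when $A$ is injective'' --- is false: the converse relation $\left\{(A\vec v,\vec v)\right\}$ always has full image whatever $A$ is; injectivity of $A$ controls whether $A^\dag$ is single-valued, while surjectivity controls whether it is total. The paper sidesteps all of this by proving the surjectivity column directly, with its own semantic bookends (the preimage argument for $(1)\Rightarrow(2)$ and $\im(A)=\K^m$ for $(4)\Rightarrow(1)$) and its own two diagrammatic steps. Your reduction could be repaired by routing through the transpose and the colour-change rules, but as written it does not go through.
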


\section{Graphical linear and affine Lagrangian algebra}
\label{sec:AffLagRel}
In this section, we refine graphical treatments of linear and affine algebra to the symplectic setting. To this end, we first review some of the basic notions in linear symplectic geometry:

\subsection{Symplectic linear and affine algebra}
\label{ssec:symplectic_linear_algebra}

\begin{definition}
  A \emph{symplectic \(\K\)-vector space} is a \(\K\)-vector space \(V\) endowed
  with a bilinear map \(\omega_V: V \oplus V \to \K\) called the  \emph{symplectic form} which is:
  \begin{itemize}
    \item \emph{Alternating:} For all \(\vec v \in V\), \(\omega_V(\vec v,\vec v)=0\).
    \item \emph{Nondegenerate:} Given some \(\vec v \in V\): if for all \(\vec u\in V\) \(\omega_V(\vec v,\vec u)=0\), then \(\vec v=\vec 0\).
  \end{itemize}
  A \emph{symplectic \(\K\)-linear map}
  between symplectic spaces \((V,\omega_V) \to (W, \omega_W)\) is a \(\K\)-linear
  map \(S : V \to W\) that preserves the symplectic form, i.e. \(\omega_W(S\vec v,S\vec u)
  = \omega_V(\vec v,\vec u)\) for all \(\vec v, \vec u\in V\). A
  \emph{symplectomorphism} is a symplectic isomorphism.
  Similarly, a \emph{symplectic \(\K\)-affine map} is a \(\K\)-affine map whose linear component is symplectic; and an affine symplectomorphism is an  symplectic \(\K\)-affine isomorphism.
\end{definition}

The vector space \(\K^0\) has trivial symplectic structure.
The next most simple example of a symplectic vector space is
\(\K^2\) equipped with the symplectic form 

\[
\omega_{1}\!\left(\begin{bmatrix}z_0\\ x_0\end{bmatrix}, \begin{bmatrix}z_1\\ x_1\end{bmatrix}\right)
  \coloneqq  z_0x_1-x_0z_1
\]

Call the first component of \(\K^2\cong \K\oplus\K\) the \(Z\)-grading, and the second the \(X\)-grading.

The vector space \(\K^{2n}\) is also a symplectic vector space with the symplectic form:

\[\omega_n\!\left( \bigoplus_{j=0}^{n-1} \begin{bmatrix}z_j\\x_j\end{bmatrix}, \bigoplus_{j=0}^{n-1} \begin{bmatrix}z_j'\\x_j'\end{bmatrix} \right) \coloneqq  \sum_{j=0}^{n-1} \omega_1\!\left(\begin{bmatrix}z_j\\x_j\end{bmatrix} ,\begin{bmatrix}z_j'\\x_j'\end{bmatrix} \right)\]

\begin{theorem}[Linear Darboux theorem]
  \label{thm:darboux}
  Every finite-dimensional symplectic vector space \((V,\omega_V)\) is symplectomorphic to the symplectic vector space \((\K^{2n} , \omega_n)\).
\end{theorem}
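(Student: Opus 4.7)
The plan is to construct a symplectic basis for $(V,\omega_V)$ by induction on the dimension, which will directly produce the required symplectomorphism $V \cong \K^{2n}$. The base case is trivial when $V = \K^0$. For the inductive step, I would pick any nonzero vector $\vec e_1 \in V$; by nondegeneracy there exists some $\vec u \in V$ with $\omega_V(\vec e_1, \vec u) \neq 0$, and rescaling yields $\vec f_1$ such that $\omega_V(\vec e_1, \vec f_1) = 1$. Note that $\vec e_1$ and $\vec f_1$ are linearly independent, because if $\vec f_1 = \lambda \vec e_1$ then $\omega_V(\vec e_1, \vec f_1) = \lambda \omega_V(\vec e_1, \vec e_1) = 0$ by the alternating property, contradicting $\omega_V(\vec e_1, \vec f_1) = 1$.

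Next, set $W = \operatorname{span}(\vec e_1, \vec f_1)$ and define its symplectic orthogonal $W^{\perp} = \{\vec v \in V \mid \omega_V(\vec v, \vec w) = 0 \text{ for all } \vec w \in W\}$. The key technical step is to establish the direct sum decomposition $V = W \oplus W^{\perp}$. For intersection triviality, any $\vec v = a\vec e_1 + b\vec f_1 \in W \cap W^\perp$ must satisfy $\omega_V(\vec v, \vec e_1) = -b = 0$ and $\omega_V(\vec v, \vec f_1) = a = 0$. For spanning, given an arbitrary $\vec v \in V$, the vector $\vec w = \omega_V(\vec v, \vec f_1)\vec e_1 - \omega_V(\vec v, \vec e_1)\vec f_1$ lies in $W$, and one checks that $\vec v - \vec w \in W^\perp$ by a direct computation using the alternating property and $\omega_V(\vec e_1,\vec f_1) = 1$. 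Restricting $\omega_V$ to $W^\perp$ remains nondegenerate: if $\vec v \in W^\perp$ is such that $\omega_V(\vec v,\vec u) = 0$ for all $\vec u \in W^\perp$, then combined with $\vec v \in W^\perp$ this forces $\omega_V(\vec v, -) = 0$ on all of $V$, so $\vec v = \vec 0$.

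Since $\dim W^\perp = \dim V - 2$, the induction hypothesis produces a symplectic basis $\{\vec e_2, \vec f_2, \dots, \vec e_n, \vec f_n\}$ of $W^\perp$ with $\omega_V(\vec e_i, \vec f_j) = \delta_{ij}$ and $\omega_V(\vec e_i, \vec e_j) = \omega_V(\vec f_i, \vec f_j) = 0$. Appending $\vec e_1, \vec f_1$ gives a symplectic basis for $V$, and the induced linear isomorphism sending this basis to the standard basis of $\K^{2n}$ (ordered as $\vec e_1, \vec f_1, \ldots, \vec e_n, \vec f_n$) is by construction a symplectomorphism onto $(\K^{2n}, \omega_n)$.

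The main subtlety to be careful about is that the proof must work over an arbitrary field, including characteristic $2$. The alternating axiom $\omega_V(\vec v, \vec v) = 0$ is strictly stronger than skew-symmetry in characteristic $2$, and it is precisely this axiom that is used in verifying linear independence of $\vec e_1, \vec f_1$ and in decomposing $\vec v - \vec w \in W^\perp$. Everything else in the argument is pure linear algebra, so the alternating hypothesis is what makes the entire construction go through uniformly across all fields.
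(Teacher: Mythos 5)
Your proof is correct. Note that the paper does not actually prove this statement: it is quoted as the classical linear Darboux theorem, a standard background fact from symplectic linear algebra, so there is no in-paper argument to compare against. What you have written is the standard inductive construction of a symplectic basis, and all the steps check out: the hyperbolic pair $(\vec e_1,\vec f_1)$ exists by nondegeneracy, the decomposition $V = W \oplus W^{\perp}$ is verified correctly (your projection $\vec w = \omega_V(\vec v,\vec f_1)\vec e_1 - \omega_V(\vec v,\vec e_1)\vec f_1$ does land $\vec v - \vec w$ in $W^{\perp}$, using skew-symmetry, which follows from the alternating axiom by polarising $\omega_V(\vec u+\vec v,\vec u+\vec v)=0$), and nondegeneracy restricts to $W^{\perp}$ exactly as you say. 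Your closing remark about characteristic $2$ is well taken and is precisely why the paper's definition imposes the alternating condition rather than mere skew-symmetry; the argument as written is valid over an arbitrary field, and the interleaved ordering $\vec e_1,\vec f_1,\dots,\vec e_n,\vec f_n$ matches the paper's block-diagonal convention for $\omega_n$.
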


There is an intuitive geometric interpretation of \(\omega_n\): it refines the measure of oriented volumes in \(\K^{2n}\), as \(\bigwedge_{j=1}^n \omega_1\) is a volume form. For \(n=1\), \(\omega_1\) captures the area of a parallelogram given by points \((z_0,x_0),(z_1,x_1)\in\K^2\). This is visualized as follows, where the vertical axis denotes the \(X\)-component, and the horizontal axis denotes the \(Z\)-component:

\begin{equation}
  \label{eq:volume}
  \tikzfig{figures/AffLagRel/symplectic_volume/volume}
\end{equation}

Therefore the  linear and affine symplectomorphisms are the linear and affine transformations which preserve oriented volume between the \(Z\) and \(X\)-gradings:

\begin{example}
  \label{ex:volume}
  Using affine symplectomorphisms, the parallelogram from equation~\eqref{eq:volume} can be:
  \begin{align*} 
    &\bullet\ \text{\em Squeezed:} 
      \quad
      \xmapsto{
        \begin{bmatrix}
          a^{-1} & 0\\
          0    & a
        \end{bmatrix}
      }
      \tikzfig{figures/AffLagRel/symplectic_volume/squeezing}
      \hfil
    &\bullet\ \text{\em Sheared:} 
      \quad
      \xmapsto{
        \begin{bmatrix}
          1 & a\\
          0 & 1
        \end{bmatrix}
      }
      \tikzfig{figures/AffLagRel/symplectic_volume/sheared}
      \\
    &\bullet\ \text{\em Rotated:} 
      \quad
      \xmapsto{
        \begin{bmatrix}
          0  & 1\\
          -1 & 0
        \end{bmatrix}
      }
      \tikzfig{figures/AffLagRel/symplectic_volume/rotating}
      \hfil
    &\bullet\ \text{\em Shifted:} 
      \quad
      \xmapsto{
        \left(
        \begin{bmatrix}
          1 & 0\\
          0 & 1
        \end{bmatrix}
        ,
        \begin{bmatrix}
           a\\
           b
        \end{bmatrix}
        \right)
      }
      \tikzfig{figures/AffLagRel/symplectic_volume/shifting}
  \end{align*}
\end{example}

Notice that for $n=1$ the symplectic form is a volume form and hence the correspondence is perfect, however, for $n\geq 2$, symplectomorphisms are still volume preserving but there are volume preserving maps that are not symplectomorphisms.

Given a linear subspace \(S\) of a symplectic vector space \((V,\omega)\), its \emph{symplectic
complement} is the linear subspace: 
\[S^\omega \coloneqq \left\{\vec v \in V \,\middle|\,  \forall \vec s \in S: \omega(\vec v,\vec s)
= 0\right\}\subseteq V\]

A linear subspace \(S\) of a symplectic vector space \((V,\omega)\) is:
\begin{itemize}
  \item  \emph{isotropic} if \(S \subseteq S^\omega\) (so that for all \(\vec s, \vec t \in S\), \(\omega(\vec{t}, \vec{s})= 0\));
  \item  \emph{coisotropic} if \(S^\omega \subseteq S\);
  \item  \emph{Lagrangian} if it is both isotropic and coisotropic (so that \(S = S^\omega\)).
\end{itemize}

Similarly, an affine subspace \((S,\vec a)\) of  a symplectic vector space \((V,\omega)\) is isotropic / coisotropic / Lagrangian when either \(S\) is isotropic / coisotropic / Lagrangian, or \(S\) is empty.

\begin{example}
  Any line in the affine plane is given by an affine Lagrangian subspace of \((\K^2,\omega_1)\).  These take of one of the two following forms, for some \(a,b\in \K\):
  \[
    \left\{\begin{bmatrix}z\\x\end{bmatrix} \in \K^2\ \middle| \ x=bz+a \right\} 
    \quad\text{visualized as}\quad
    \tikzfig{figures/AffLagRel/symplectic_volume/line0}
  \]
  
  \[
    \left\{ \begin{bmatrix}z\\x\end{bmatrix} \in \K^2\ \middle| \ z=bx+a \right\} 
    \quad\text{visualized as}\quad
    \tikzfig{figures/AffLagRel/symplectic_volume/line1}
  \]

  Notice that the intersection of both classes of lines are the lines which pass through the origin and which are not completely vertical or horizontal.

  The entire plane, however, is the coisotropic subspace 
  \(\K^2 \subseteq \K^2\)
  whereas its orthogonal complement, the origin, is an isotropic  subspace
  \(\K^0 \subset \K^2\).  An arbitrary single point on the plane is an affine coisotropic subspace.

  From a geometric point of view, we see that the points and lines are affine isotropic subspaces of  \(\K^2\) because they have no volume; the line is affine Lagrangian subspace of \(\K^2\) because it has the largest possible dimension without having any volume; and the entire plane is a coisotropic subspace because it is dual to the isotropic subspace given by the origin.
\end{example}

An \emph{isotropic} / \emph{coisotropic} / \emph{Lagrangian}, \emph{linear} / \emph{affine relation} \(m\to n\)
is an isotropic / coisotropic / Lagrangian, linear / affine subspace of:
\[(\K^{2m} \oplus \K^{2n},  (\vec v, \vec w) \mapsto \omega_{m}(\vec v)-\omega_{n}(\vec w))\]

We must introduce this twisted symplectic form, as opposed to \(\omega_{n+m}\), so that these notions of ``relations'' are really stable under relational composition as described in definition~\ref{def:xrel}:

\begin{definition}
  The \(\dagger\)-compact props of isotropic / coisotropic / Lagrangian, linear / affine relations, \(\IR\), \(\CR\),  \(\LR\),  \(\AIR\), \(\ACR\), \(\ALR\) have:
  \begin{itemize}
    \item \textbf{Arrows} are given by isotropic / coisotropic / Lagrangian, linear / affine relations.
    \item \textbf{Composition} is given by relational composition.
  
    \item \textbf{Identity} on \(1\) is given by the diagonal on \(\K^{1}\).
    \item \textbf{The monoidal structure} is given by the direct sum and \(0\). 
    \item \textbf{The symmetry} on \(1\) is given by:
    \begin{equation*}
      \left\{\left( \begin{bmatrix} \vec v \\ \vec w \end{bmatrix}, \begin{bmatrix} \vec w \\ \vec v \end{bmatrix} \right) \,\middle|\,  \vec v, \vec w \in \K^2 \right\}:2\to 2
    \end{equation*}
    \item \textbf{The compact structure} on \(1\) is given by\\
  
    \hspace*{-.5cm} \mbox{\(
    \eta_1 \coloneqq \left\{\left(\bullet, \begin{bmatrix} z \\ x \\ z \\ -x \end{bmatrix}\right) \,\middle|\,  z,x\in \K\right\}: 0\to 2 \quad\quad
    \epsilon_1 \coloneqq \left\{\left(\begin{bmatrix} z \\ x \\ z \\ -x \end{bmatrix}, \bullet\right) \,\middle|\,  z,x\in \K\right\}: 2\to 0
  \)}\\
  
    \item \textbf{The dagger} of an arrow \(R:n\to m\) is given by:
    \begin{equation*}
       R^\dag \coloneqq 
      \left\{
        \left(\bigoplus_{j=0}^{m-1} \begin{bmatrix} z_j \\ -x_j \end{bmatrix}, \bigoplus_{k=0}^{n-1} \begin{bmatrix} z_k' \\ -x_k' \end{bmatrix} \right)
       \,\middle|\, 
        \left(\bigoplus_{k=0}^{n-1}\begin{bmatrix} z_k' \\ x_k'\end{bmatrix}, \bigoplus_{j=0}^{m-1} \begin{bmatrix} z_j \\ x_j \end{bmatrix}  \right)\in R
      \right\} :m\to n
    \end{equation*}
  \end{itemize}
\end{definition}

Our definition differs slightly from the way it is often stated in the literature, for example in the work of \textcite{weinstein_symplectic_2009} and \textcite{network}.  This is because we have decided to work in the skeletal case, where all isomorphic objects are equal. However, both definitions are equivalent, so this difference is inconsequential.


We must be careful as the relationship between the props \(\lR\) / \(\AR\) and their symplectic counterparts is subtle. First, for example, the symmetric monoidal functor \(\interp{-}_{\abbralagr}^{\abbrar}:\ALR \to \AR \) which forgets the symplectic structure is not a prop morphism, as it sends \(n\mapsto 2n\). Furthermore the compact structure of \(\ALR\) is different from the compact structure we defined on \(\AR\) because the latter is not Lagrangian.

Remark that any (affine) symplectic map \(R: (\K^{2n},\omega_n) \to (\K^{2m},\omega_m)\) induces an (affine) Lagrangian relation via its graph
 \[ \Gamma_R\coloneqq \{(\vec v, R \vec v) \ | \ \forall \vec  v \in \K^{2n} \}:n\to m\]
which is functorial, so these categories of symplectic relations strictly generalize (affine) symplectic maps. 

It is helpful to get a geometric intuition for composition in these symplectic categories of relations:
\begin{example}
  Consider the affine Lagrangian relation \(R:0\to 1\) relating the point to the line \(x=bz+a\); and consider another affine Lagrangian relation \(S:1\to 0\) relating the line \(x=b'z+a'\) to the point.

  \begin{itemize}
    \item
    If \(b=b'\), so that both lines are parallel, but \(a\neq a'\) so that they never meet, then \(S\circ R\) is the empty set.

    \item
    On the other hand if \(b=b'\) and \(a=a'\), so that these are both the same lines then  \(S\circ R\)  is the identity.

    \item
    Even if these two lines intersect only once, then \(S\circ R\) is still the identity!
  \end{itemize}
\end{example}

\subsection{Generators and equations for affine Lagrangian relations}
\label{ssec:symplectic_ZX}

In order to give a graphical axiomatisation of \(\ALR\), we extend the phase group of \(\GAA\) (recalling definition~\ref{def:phasegroup}):

\begin{definition}
  \label{def:GSA}
  Let  \(\ZX\) be the \dag-compact prop given by extending the phase groups of \(\GAA\); the grey spiders phase group extended along the group homorphism \(\K^0\hookrightarrow \K^2\) and the white spider along the group homorphism \(\K^1\hookrightarrow \K^2\):
  \begin{equation}
    \label{eq:symplecticspiders}
    \tikzfig{figures/AffLagRel/lhsgrey} \longmapsto \tikzfig{figures/AffLagRel/rhsgrey}\quad\quad
    \tikzfig{figures/AffLagRel/lhswhite} \longmapsto \tikzfig{figures/AffLagRel/rhswhite}
  \end{equation}
  Modulo the following equations for all \(a,b,c \in\K\) and \(d\in\K^*\):
  \begin{equation}
    \label{eq:ALR_axioms}
    \tikzfig{figures/AffLagRel/axioms}
  \end{equation}
  Where there are derived generators, for all \(a \in\K \):
  \begin{equation}
    \label{eq:box-def}
    \tikzfig{figures/AffLagRel/box-def}
  \end{equation}
  The \dag-structure is given by:
  \begin{equation*}
      \tikzfig{figures/AffLagRel/b-spider} \mapsto \tikzfig{figures/AffLagRel/b-spider-daggered}  \quad
      \tikzfig{figures/AffLagRel/w-spider} \mapsto \tikzfig{figures/AffLagRel/w-spider-daggered}  \quad
      \tikzfig{figures/AffLagRel/scalar} \mapsto \tikzfig{figures/AffLagRel/scalar_daggered} \quad
      \tikzfig{figures/AffLagRel/box_undaggered} \mapsto \tikzfig{figures/AffLagRel/box_daggered} 
    \end{equation*}
  Call the first component of the phase groups for the grey and white  spiders in equation~\eqref{eq:symplecticspiders} the  \emph{affine} phase, and the second component the
  \emph{symplectic} phase.
  
  The derived generators in~\ref{eq:box-def} are called {\em boxes}.  The disinguished boxes with labels \(1\) and \(-1\) are respectively the {\em symplectic Fourier transform} and {\em inverse symplectic Fourier transform}.
\end{definition}

Recalling the geometric intuition from example~\ref{ex:volume}, lifting scalar multiplication along the embedding \(\GAA\hookrightarrow\ZX\) corresponds to squeezing.  Similarly, the affine phase for the white spider corresponds to shifting the \(X\)-gradient.  On the other hand, the affine phase for the grey spider corresponds to shifting the \(Z\)-gradient.  The symplectic phases for the white and grey spiders corresponding to shearing in the \(X\) and \(Z\) direction.  The symplectic Fourier transform corresponds to a \(\pi/2\)-rotation in the affine plane.

Following equation~\eqref{eq:ALR_axioms}, we can actually define multipliers
from boxes:
\[\tikzfig{figures/AffLagRel/multiplier_from_box}\]
From now on,
we will take the boxes as primary, rather than multiplication by scalars.
This is because:
\begin{proposition}
  The presentation  \(\ZX\)  of \(\ALR\) with generators
  \begin{equation*}
    \tikzfig{figures/AffLagRel/b-spider} : m \to n  \quad\quad\quad
    \tikzfig{figures/AffLagRel/w-spider} : m \to n  \quad\quad\quad
    \tikzfig{figures/AffLagRel/box} : 1 \to 1 
  \end{equation*}
  is flexsymmetric.
\end{proposition}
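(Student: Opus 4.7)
The plan is to check flexsymmetry generator by generator. The presentation $\ZX$ has three families of generators — the grey spider, the white spider, and the box — so it suffices to show each of these is flexsymmetric (the cups, caps, and symmetries of a compact prop are automatically flexsymmetric by definition~\ref{def:flexsym}).

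For the spiders, flexsymmetry comes essentially for free from the work already done in Section~\ref{sec:GAA}. By Definition~\ref{def:GSA}, the grey and white spiders of $\ZX$ are obtained from those of $\GAA$ by extending the phase groups along the embeddings $\K^0 \hookrightarrow \K^2$ and $\K^1 \hookrightarrow \K^2$. This relabelling does not interact at all with the topology of the spider or with its relationship to the compact structure; the same rewrites that witness flexsymmetry of the spiders in $\GAA$ (Remark~\ref{rem:notflexsym}) therefore carry over verbatim to $\ZX$. Consequently the spider case reduces to citation.

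The real work lies in showing that the box is flexsymmetric. Since the box is a $1 \to 1$ generator, its flexsymmetry reduces to a single self-duality condition: bending the input wire around to become an output must produce the same $0 \to 2$ state (up to swap of the outputs) as bending the output wire to become an input. My approach is to unfold equation~\eqref{eq:box-def}, which expresses the box with label $a$ in terms of (already flexsymmetric) phased spiders together with the derived multiplier, and then use the $\ZX$ axioms in~\eqref{eq:ALR_axioms} — in particular the colour-exchange rules relating grey and white spiders through the box, combined with spider fusion — to transform the bent box back into its original form. Because the spiders on either side are flexsymmetric, each bending operation can be pushed through them until it meets the box itself, at which point the colour-change axiom can be invoked.

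The main obstacle I anticipate is finding the correct short chain of rewrites for box self-duality. The spider case is essentially automatic, but the box case requires simultaneously tracking the affine phase, the symplectic phase, and the colour change induced by the box. Once the right sequence of axiom applications is identified, however, no deep new idea is required: the verification is a finite, purely symbolic calculation using only the axioms \eqref{eq:ALR_axioms} and the already-established spider flexsymmetry, and it does not require any appeal to the (as yet unproven) semantic isomorphism $\ZX \cong \ALR$.
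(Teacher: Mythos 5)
Your overall strategy is the right one, and indeed it is essentially the only reasoning available: the paper states this proposition without an explicit proof, because the spiders of \(\ZX\) are flexsymmetric by construction (they are phase-group-labelled spiders in the sense of definition~\ref{def:phasegroup}, which are presented \emph{modulo} flexsymmetry, and extending the phase groups along \(\K^0\hookrightarrow\K^2\) and \(\K^1\hookrightarrow\K^2\) does not disturb this), so the entire content of the proposition is the self-transposition of the box. You have correctly isolated that obligation, correctly reduced it to a single equation (for a \(1\to 1\) generator the only non-trivial permutation of the two legs is the swap), and correctly observed that it must be discharged syntactically rather than by appeal to the semantics, since completeness of \(\ZX\) has not been established at this point in the paper.

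The gap is that you stop exactly at the one step that carries all the mathematical weight: you describe the derivation of box self-duality as an anticipated finite calculation but do not exhibit it or reduce it to a stated lemma. This step is not entirely automatic, because the analogous claim is \emph{false} for the multiplier generator it replaces --- the transpose of scalar multiplication is the converse multiplier, which is precisely why the paper switches from multipliers to boxes, and why remark~\ref{rem:notflexsym} only claims flexsymmetric presentations of \(\GAA\) for special fields. The self-duality of the box genuinely depends on the sign twist built into the symplectic compact structure, so ``pushing the bend through the spiders until it meets the box'' must use the specific fact that the cup and cap of \(\ZX\) are the two-legged zero-phase grey spiders. With that observation, the transpose of the box is the box pre- and post-composed with grey cups and caps, and the four colour-change variants of \TextHBox recorded in lemma~\ref{lem:box_swapped}, together with spider fusion, supply exactly the identities needed to slide the box through the bend and recover the original box. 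Until that short chain of rewrites (or a citation to those lemmas) is written down, the proposal is a proof plan rather than a proof.
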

See appendix~\ref{ssec:zx-axioms} for an explicit presentation of \(\ZX\).

Note that the symplectic setting is quite special: as discussed in remark~\ref{rem:notflexsym} \(\AR\) and \(\lR\) do not obviously admit flexsymmetric presentations. 
\begin{definition}
  \label{def:gsa_to_alr}
  There is an interpretation \(\interp{-}_\abbralagr^\zx: \ZX \to \ALR\) given by:
  \begin{align*}
  &\interp{\tikzfig{figures/AffLagRel/b-spider}}_\abbralagr^\zx
    \coloneqq \left\{ \left(
    \bigoplus_{j=0}^{m-1} \begin{bmatrix}
      z_j \\ x
    \end{bmatrix} ,
    \bigoplus_{k=0}^{n-1} \begin{bmatrix}
      z_k' \\ x
    \end{bmatrix}
    \right) \,\middle|\, \begin{aligned} &\vec z\in\K^m, \vec{z'} ; \in \K^n, x \in \K \qq{such that} \\ &\sum_{j=0}^{m-1} z_j - \sum_{k=0}^{n-1} z_k' + bx = a \end{aligned} \right\} \\
  &\interp{\tikzfig{figures/AffLagRel/w-spider}}_\abbralagr^\zx
    \coloneqq
    \left\{ \left(
      \bigoplus_{j=0}^{m-1}
      \begin{bmatrix}
        z \\ x_j
      \end{bmatrix} ,
      \bigoplus_{k=0}^{n-1}
      \begin{bmatrix}
        -z \\ x_k'
      \end{bmatrix}
    \right) \,\middle|\, \begin{aligned} & \vec x \in \K^m, \vec{x'}\in \K^n, z\in \K \qq{such that} \\ &\sum_{j=0}^{m-1} x_j + \sum_{k=0}^{n-1} x_k' - bz = a \end{aligned} \right\}
    \end{align*}
    \begin{align*}
      \interp{\tikzfig{figures/AffLagRel/box_zero}}_\abbralagr^\zx
      \coloneqq \left\{ \left(
        \begin{bmatrix} 0 \\ x \end{bmatrix}, \begin{bmatrix} 0 \\ x' \end{bmatrix}
      \right) \,\middle|\, x,x' \in \K \right\} & &
      \interp{\tikzfig{figures/AffLagRel/box_invertible}}_\abbralagr^\zx
      \coloneqq \left\{ \left(
        \begin{bmatrix}
          z\\x
        \end{bmatrix},
        \begin{bmatrix}
          -cx\\
          z/c
        \end{bmatrix}
      \right) \,\middle|\, z,x\in\K \right\}
  \end{align*}
  where \(m,n \in \N\), \(a,b \in \K\) and \(c \in \K^*\).
\end{definition}

Translating this to string diagrams, following \textcite{comfort_graphical_2021}, we have a factorization of \(\interp{-}_\abbralagr^\zx: \ZX \to \ALR\) through $\GAA$ :

\begin{definition}
  There is an interpretation
  \(\interp{-}_\gaa^\zx : \ZX \to \GAA\), defined on objects
  by \(\interp{m}_\mathsf{GAA}^\zx= 2m\) and on morphisms by:
  \[
    \interp{\tikzfig{figures/AffLagRel/b-spider-ind}}_\gaa^\zx 
    \hspace*{-.25cm}=
    \tikzfig{figures/AffLagRel/GAA_b-spider}
    \quad
    \interp{\tikzfig{figures/AffLagRel/w-spider-ind}}_\gaa^\zx
    \hspace*{-.25cm}=
    \tikzfig{figures/AffLagRel/GAA_w-spider}
  \]
  \[
    \interp{\tikzfig{figures/AffLagRel/box_invertible_whydidyouusethesamediagramwithalocalvariableinsideofit}}_\gaa^\zx =\tikzfig{figures/AffLagRel/GAA_box_invertible}
  \]
\end{definition}

\begin{proposition}
  \label{thm:soundness}
  The interpretations \(\interp{-}_\abbralagr^\zx
  : \ZX \to \ALR\) and \(\interp{-}_{\gaa}^\zx : \ZX \to
  \GAA\)  are \dag-symmetric monoidal functors, making the following diagram of \dag-compact closed categories commute:

  \[
    \xymatrix{%
      \ZX \ar[rr]^{\interp{-}_{\gaa}^\zx} \ar[d]_{\interp{-}_\abbralagr^\zx}
        && \GAA \ar[d]^{\interp{-}_{\abbrar}^{\gaa}}\\
      \ALR \ar[rr]_{\interp{-}_{\abbrar}^{\abbralagr}}
        && \AR
    }
  \]

Where the vertical arrows are morphisms of \dag-compact props.
\end{proposition}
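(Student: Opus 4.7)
The strategy is to reduce everything to computations in $\GAA$, exploiting that $\interp{-}_{\abbrar}^{\gaa} : \GAA \cong \AR$ is an isomorphism of compact props and that the forgetful map $\interp{-}_{\abbrar}^{\abbralagr} : \ALR \to \AR$ is faithful (subspaces embed faithfully into larger ambient subspace lattices). In this way the work of checking equations in the symplectic setting can be pushed down to comparing affine subspaces of $\K^{2(n+m)}$.

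First, I would verify that $\interp{-}_{\gaa}^{\zx}$ is a well-defined $\dag$-symmetric monoidal functor. Well-definedness amounts to checking, for each axiom in \eqref{eq:ALR_axioms}, that the two sides unfold to equal $\GAA$-diagrams. Since $\GAA \cong \AR$, each such equality can be checked semantically: substitute the definitions of the grey and white symplectic spiders and of the box generator in terms of $\GAA$-generators, and verify that both sides present the same affine subspace of $\K^{2(n+m)}$. The manipulations will typically go through proposition~\ref{prop:matprop} and the matrix-arrow rules \eqref{matrules}. Preservation of the symmetric monoidal structure is built into the definition on generators; preservation of the compact and $\dag$-structures is a direct unfolding, using in particular that the cup in $\ZX$ maps to a $\GAA$-diagram whose image in $\AR$ is precisely the Lagrangian cup composed with the sign twist on $X$-components.

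Second, commutativity of the square reduces to a verification on the three generators of $\ZX$. For the grey spider, the white spider and the box, I would unfold $\interp{-}_{\gaa}^{\zx}$ to a $\GAA$-diagram, apply $\interp{-}_{\abbrar}^{\gaa}$ to obtain an explicit affine subspace, and compare with the subspace given by definition~\ref{def:gsa_to_alr} followed by the inclusion into $\AR$. Each case is a straightforward rewriting of the defining constraints. Once commutativity is established, and because $\interp{-}_{\abbrar}^{\abbralagr}$ is faithful, the fact that $\interp{-}_{\abbralagr}^{\zx}$ is a well-defined functor into $\ALR$ reduces to checking that every image lies in $\ALR$: on the three generators this is visible directly from the closed forms of definition~\ref{def:gsa_to_alr}, and closure under composition, tensor, cups, caps and $\dag$ then follows because $\ALR$ is closed under these operations as a sub-$\dag$-compact prop of $\AR$.

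\textbf{Main obstacle.} The most laborious part is the axiom-by-axiom check in \eqref{eq:ALR_axioms}, especially for those equations that mix the box with both grey and white symplectic phases: the $X$- and $Z$-gradings must be tracked carefully through each $\GAA$-diagram. However, once each equation is translated into a claim about equality of two explicit affine subspaces of $\K^{2(n+m)}$, its verification becomes a mechanical, if tedious, linear-algebra exercise, with no conceptual difficulty beyond the bookkeeping imposed by the flexsymmetric presentation.
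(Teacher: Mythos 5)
Your proposal is sound, but it takes a genuinely different route from the paper for the one step the paper treats as non-trivial. The paper's proof concentrates entirely on the functoriality of \(\interp{-}_{\gaa}^{\zx}\), and establishes it \emph{syntactically}: for each axiom of \(\ZX\) in \eqref{eq:ALR_axioms} it unfolds both sides into \(\GAA\)-diagrams and rewrites one into the other using only the equational theory of \(\GAA\) (together with a handful of auxiliary lemmas such as lemma~\ref{lem:antipode}, lemma~\ref{lem:affinediscard}, lemma~\ref{lem:inversetranspose} and lemma~\ref{lem:cnotswap}); the commutativity of the square and the well-definedness of \(\interp{-}_{\abbralagr}^{\zx}\) are then dispatched as immediate. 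You instead propose to check each axiom \emph{semantically}, comparing the affine subspaces of \(\K^{2(m+n)}\) presented by the two sides, and to import the conclusion back into \(\GAA\) via the isomorphism \(\interp{-}_{\abbrar}^{\gaa}:\GAA\cong\AR\). This is logically valid, but note that it silently uses the \emph{completeness} of \(\GAA\) (faithfulness of \(\GAA\to\AR\)), whereas the paper's derivations need only the soundness of \(\GAA\)'s axioms and stay entirely inside the graphical calculus. What your route buys is a reduction to mechanical linear algebra with no diagram surgery; what the paper's route buys is a collection of explicit, reusable \(\GAA\)-derivations (several of which are indeed reused later, e.g.\ in the completeness arguments) and independence from the completeness theorem for \(\GAA\). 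Your treatment of the remaining parts — checking commutativity on the three generators, and deducing well-definedness of \(\interp{-}_{\abbralagr}^{\zx}\) from commutativity, faithfulness of \(\ALR\hookrightarrow\AR\), and closure of \(\ALR\) under the \(\dag\)-compact operations — matches what the paper leaves implicit and is correct.
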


\subsection{Scalable notation for graphical symplectic algebra}
\label{ssec:scalable_symplectic}
In this subsection we extend the scalable notation for affine relations introduced in
section~\ref{ssec:scalable_gaa} to the setting of affine Lagrangian relations.
First, note that the the matrix arrows in  \(\GAA\) lift along the embedding
 \(\GAA\hookrightarrow\ZX\) given in definition~\ref{def:GSA}.  However, now the  \([n]\)-coloured grey and white spiders both have phase groups \(\K^n \times
\Sym[n]\) (recalling definition~\ref{def:phasegroup}), where  \(\Sym\) denotes the Abelian group of symmetric \(n \times n\) matrices over \(\K\) under addition.

\begin{definition}
  \label{def:scalable_spider}
  Define scalable grey and white spiders in \(\ZX\) by induction on the colour of wires \(k\in\N\).
  The base case is trivial.  For the inductive step, take \(n,m\in \N\), \(a,b\in \K\), \(\vec v, \vec w \in \K^k\) and \(A\in\Sym[k]\), we define  \([k+1]\)-coloured spiders as:
  \begin{equation}
    \label{eq:thick_spider_def}
    \tikzfig{figures/AffLagRel/scalable/b-spider-def}
    \quad\quad\quad\quad
    \tikzfig{figures/AffLagRel/scalable/w-spider-def}
  \end{equation}
  where the scaled box is also inductively defined.  The base case is trivial, and the inductive step is given by:
  \begin{equation}
    \label{eq:scalable_box}
    \tikzfig{figures/AffLagRel/scalable/box-def}
  \end{equation}
\end{definition}

A \(k\)-coloured grey spider with \(n\) inputs and \(m\) outputs parametrizes an undirected open graph with edges coloured by \(\K\), vertices coloured by \(\K^2\), with \(n\) distinguished inputs and \(m\) distinguished outputs.

\begin{example}
  \label{ex:scalable_graph_example}
  On the left hand side we have a graph {\em state} with \(n = 0\), \(m=1\) and \(k=3\).  On the right hand side, we have a spider with  \(m = 3\),  \(n = 2\) and \(k =
  3\):
  \begin{equation}
    \tikzfig{figures/AffLagRel/scalable/b-spider-explicit-simple}
    \quad\quad\quad
    \tikzfig{figures/AffLagRel/scalable/b-spider-explicit}
  \end{equation}
\end{example}

The axioms of \(\ZX\) admit very natural scalable generalizations which we have included in appendix~\ref{ssec:axioms_scalable}.  This notation is very powerful, and makes the proof of completeness much more understandable and elegant.

\subsection{Completeness and normal form}
\label{ssec:completeness}
In this subsection, we show that the interpretation  \(\interp{-}^\zx_\abbralagr:\ZX \cong \ALR\)  is an isomorphism of \dag-compact props.  We have already proven that it is a \dag-compact functor in proposition~\ref{thm:soundness}.  Fullness follows immediately from the result of Comfort and Kissinger  \cite[Lemma 4.6]{comfort_graphical_2021}.
Therefore it remains to show that it is faithful.  To prove this, we establish a normal form.

If a diagram  has empty semantics, then a simple induction argument yields the following unique normal form:

\begin{proposition}
  \label{prop:zero_normal_forms}
  For any \(m,n \in \N\) and \(D \in \ZX(m,n)\) and \(a \in \K^*\): 
  \begin{equation*}
    \tikzfig{figures/AffLagRel/zero_normal_forms}
  \end{equation*}
\end{proposition}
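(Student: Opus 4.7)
The statement asserts an absorption identity: once a suitable ``zero'' element (together with a nonzero scalar coefficient $a \in \K^*$) is placed into a diagram, an arbitrary attached sub-diagram $D$ is swallowed, leaving the $m \to n$ zero normal form. Crucially this is \emph{not} a claim about diagrams with empty semantics; it is a syntactic identity parametric in $D$. My plan is therefore to prove it by structural induction on $D$ written in the presentation of $\ZX$ from Definition~\ref{def:GSA}, using the axioms listed in \eqref{eq:ALR_axioms} (in particular \GAAZero together with the copy/fusion rules) to push the zero through each generator.

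\textbf{Base generators.} I would first check the identity when $D$ is an identity wire, a symmetry, a cup or a cap: these follow directly from flexsymmetry together with the $(m,n)$-independence of the right-hand normal form (one can bend wires freely, so a cup/cap just reroutes between input and output side). Next, when $D$ is a grey or white spider, I would fuse the zero element into that spider via \TextFusion; the fused spider has a zero phase component $a \in \K^*$ sitting next to the adjacent structure, which is then collapsed using \GAAZero to the spider-free zero diagram. When $D$ is a box, I would unfold it via \eqref{eq:box-def} into a combination of multipliers and spiders and reduce to the spider case; here the invertibility $c \in \K^*$ in the box definition is what preserves the freedom of the prefactor $a \in \K^*$ on the right-hand side.

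\textbf{Inductive step.} For $D = D_2 \circ D_1$, I would apply the induction hypothesis to $D_1$ first: the zero element absorbs $D_1$ and produces a zero normal form with a new coefficient $a' \in \K^*$; then I apply the hypothesis again with $D_2$, using that the normal form's connectivity on its output side is exactly what the hypothesis requires. The tensor case $D = D_1 \otimes D_2$ is analogous, carried out componentwise. The freedom to choose any $a \in \K^*$ on the right-hand side is preserved throughout because, by \GAAMult and the invertibility clause of \GAACopyi, any nonzero scalar acting on a collapsed zero component can be rewritten as any other nonzero scalar — concretely, $a$ and $a'$ differ by a unit which can be freely inserted or removed on a zero loop.

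\textbf{Main obstacle.} The delicate step is the grey-spider (respectively white-spider) case when the spider already carries a nontrivial symplectic phase: fusing the zero into such a spider creates a phase-mixing situation, and one must check that the symplectic shearing part of the phase does not obstruct the subsequent application of \GAAZero. I expect to handle this by first using \TextColour and the box rules to swap the phase onto the opposite colour where the zero-absorption axiom applies cleanly, then transporting back; the scalable-notation version of the axioms collected in appendix~\ref{ssec:axioms_scalable} makes this bookkeeping tractable since the phase group is just $\K^n \times \Sym[n]$, and the zero axiom extends componentwise.
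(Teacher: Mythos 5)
The paper never writes this proof down: the proposition is introduced with the single remark that ``a simple induction argument yields the following unique normal form,'' and no argument for it appears in the appendix. Your proposal --- reading the statement as a syntactic absorption identity and running a structural induction on $D$ over the generators of the presentation --- is exactly the kind of argument the authors are gesturing at, so at the level of strategy you and the paper agree.

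However, two of your steps would not survive being made precise. First, in the spider base case you propose to ``fuse the zero element into that spider via \TextFusion.'' Spider fusion only applies to two spiders sharing at least one wire, and the zero element here is a \emph{disconnected} scalar component, so this rewrite is simply not available. The case has to be routed the other way: the identity-wire base case (which is essentially the content of \TextZero, i.e.\ \TextGAAEmpty lifted along $\GAA \hookrightarrow \ZX$) lets you cut every leg of the spider in the presence of the zero scalar, leaving a legless spider; one then needs two small auxiliary facts, namely that a legless spider with trivial phase is the empty diagram and that any two white scalars with nonzero affine phase are equal. Your appeal to \GAAMult covers the second, but the first appears nowhere in your argument, and it is needed again in the inductive step: after absorbing $D_1$ and then $D_2$, the $k$ middle wires leave residual ``dot-plugged-into-dot'' scalars that must be shown to vanish. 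Second, in the composition case the diagram obtained after absorbing $D_1$ is $D_2$ precomposed with terminated wires, which is \emph{not} a structural subterm of $D = D_2 \circ D_1$; your induction still closes, but only because the zero scalar is a floating tensor factor that can be re-paired with the bare $D_2$ by the interchange law before the residual dots are cleaned up. That re-association is the actual content of the step and needs to be said explicitly; as written, ``apply the hypothesis again with $D_2$'' is being applied to a diagram to which the hypothesis does not literally refer. Neither issue is fatal to the strategy, but the induction does not quite close in the form you have given it.
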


The normal form for diagrams with nonempty semantics takes more work.
To this end, we further expose the graph theoretical properties of \(\ZX\).
Consider a subclass of diagrams, which is inspired by that of \textcite{duncan_graph-theoretic_2020} in the context of the qubit ZX-calculus:

\begin{definition}
  A \(\ZX\)-diagram is \emph{graph-like} when:
  \begin{itemize}
    \item All spiders are grey spiders.
    \item Grey spiders are only connected via edges containing a box.
    \item There are no self-loops.
    \item Every input or output is connected to a grey spider.
    \item Every grey spider is connected to at most one input or output.
  \end{itemize}

Graph like diagrams induce undirected graph with edges coloured by \(\K\),  vertices coloured by \(\K^2\), with distinguished vertices corresponding to spiders which are connected to specific inputs and outputs.  Given a graph-like diagram:
\begin{itemize}
  \item To each \((a,b)\)-phased grey spider, associate an \((a,b)\)-labelled vertex.
  \item To each wire between two spiders mediated by an \(a\)-labelled box, associate an \(a\)-labelled edge between the corresponding internal vertices.
  \item To each spider connected connected to an input/output, distinguish the corresponding edge as being associated to the specific input/output.  Call such a vertex a {\em boundary vertex}.  Call vertices not associated to inputs/outpus {\em internal}.
\end{itemize}
\end{definition}

\begin{example}
The following \(\ZX\)-diagram is graph-like:
\begin{equation}
  \label{eq:graph-like-example}
  \tikzfig{figures/AffLagRel/graph-like-example}\ \ \
  \begin{tikzpicture}\draw [->, decorate, decoration=snake] (0,0) to (3.5,0);\end{tikzpicture}
  \tikzfig{figures/AffLagRel/graph-like-example-graph}
\end{equation}
where the top three vertices are boundary vertices, and the bottom one is
internal.
\end{example}

\begin{proposition}
  \label{prop:graph_form}
  Every \(\ZX\)-diagram can be rewritten to graph-like form.
\end{proposition}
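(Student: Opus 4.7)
The plan is to reduce an arbitrary \(\ZX\)-diagram to graph-like form by successive rewriting, following the strategy used by \textcite{duncan_graph-theoretic_2020} in the qubit ZX-calculus but adapted to our symplectic setting. I would proceed in five stages, each of which preserves semantic equivalence by appealing only to the axioms of \(\ZX\) listed in \eqref{eq:ALR_axioms} (and their scalable versions in appendix~\ref{ssec:axioms_scalable}).

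\textbf{Stage 1 (colour change).} First I would eliminate all white spiders in favour of grey ones. Using the \TextColour-style rule, a white spider is equal to a grey spider with a symplectic Fourier box on each of its legs and its phase conjugated by the Fourier transform. After this step every spider in the diagram is grey, at the cost of introducing additional boxes on edges. Similarly, any isolated box not between two spiders can be pushed to lie next to a grey spider (or absorbed into the boundary via stage 4 below).

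\textbf{Stage 2 (spider fusion and edge normalisation).} Next I would apply the \TextFusion rule repeatedly: any two grey spiders connected by a plain wire (no box) are fused into one, adding their phases. After this, the only edges between two grey spiders that remain are edges which already contain a box. For each such box-edge of label \(a\), the edge now has exactly one intermediate box; this matches the requirement that grey spiders be connected only via edges containing a box. Edges with an \(a=0\) box that collapse to empty semantics can be handled separately using proposition~\ref{prop:zero_normal_forms}, so from here on we may assume all box-edges have invertible label.

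\textbf{Stage 3 (self-loop removal).} The fusion step may create self-loops on grey spiders. Using the cup/cap axioms together with the \TextCopy and \TextBigebra rules, a self-loop on a grey spider can be contracted, modifying at most the phase of that spider. After this stage the diagram contains no self-loops.

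\textbf{Stage 4 (boundary adjustment).} Finally I would make the boundary conditions satisfied. For every input or output wire that is not already attached to a grey spider — that is, one that ends in a box, or goes directly to another boundary wire — I would insert a new \((0,0)\)-phased two-legged grey spider (an identity spider) on the wire; the \TextFusion axiom makes this sound. Conversely, if a grey spider carries more than one input or output (or an input together with an output), I would ``pull out'' each excess boundary by expanding the spider as the fusion of two grey spiders joined by a plain wire, and then inserting a Fourier/inverse-Fourier pair on that wire (equal to the identity by \eqref{eq:ALR_axioms}) to turn the new internal wire into a legal box-edge. At the end of this stage every input/output is connected to a grey spider, and every grey spider carries at most one boundary.

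\textbf{Main obstacle.} The routine obstacle is stage 2 together with stage 4: one must check that after inserting boxes to legalise edges in stage 4, no new opportunity for fusion arises that would violate stage 2 — and conversely, that stages 3 and 4 do not reintroduce self-loops. This is handled by performing the stages strictly in the order above and noting that stage 4 only introduces length-one chains between distinct new spiders, which never connects two previously fused spiders by a plain wire. A careful but straightforward induction on the size of the diagram (counting pairs of same-colour adjacent spiders, self-loops, and badly-attached boundaries, in lexicographic order) shows that this procedure terminates in a graph-like diagram.
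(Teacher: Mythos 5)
Your overall strategy is the same as the paper's (which in turn follows Duncan et al.): colour-change every white spider to grey, fuse along plain wires, remove loops, then repair the boundary. However, three concrete steps in your proposal either fail or are missing. First, your handling of the label-\(0\) box is wrong: the box labelled \(0\) is interpreted as \(\bigl\{\bigl((0,x),(0,x')\bigr) \mid x,x'\in\K\bigr\}\), a nonempty relation that merely \emph{disconnects} its two ends, not the empty relation, so proposition~\ref{prop:zero_normal_forms} does not apply; the paper instead uses the derivable ``cut'' rule for the zero box (lemma~\ref{lem:box_zero}) to split such an edge into two disconnected pieces and delete it. Second, after fusion two grey spiders may be joined by \emph{several parallel} box-edges; for the diagram to induce a well-defined \(\K\)-edge-coloured graph these must be merged into a single box whose label is the sum of the labels, which is exactly the role of the \TextPlus axiom in the paper's proof (followed by lemma~\ref{lem:box_zero} when the sum vanishes). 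Your stage 2 silently assumes each pair of spiders is joined by at most one box-edge and never performs this merge.

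Third, you do not address chains of two or more consecutive boxes on a single wire. The composite of two boxes is \emph{not} a box --- it is an antipode/multiplier (lemma~\ref{lem:box_product}) --- so such a chain is not a legal box-edge and cannot be collapsed; the paper's step 2 inserts an identity grey spider between any two consecutive boxes (via \TextId) precisely to rule this out before fusion begins. The same issue resurfaces in your stage 4: inserting a Fourier/inverse-Fourier pair on a plain wire between the two halves of a split spider yields a wire carrying \emph{two} boxes, which is again not graph-like unless you also insert a grey spider between them (this is what the paper's spider-splitting identity, built from \TextId and lemma~\ref{lem:box_inverse}, accomplishes). None of these gaps is fatal to the approach, but as written the procedure does not terminate in a graph-like diagram.
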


Because \(\ZX\) is compact closed, we can bend the input wires of diagrams into output
wires. As long as we remember which outputs originally came from bending
inputs, we only need to work with {\em states} \(0
\to n\). Graph-like states take the following simple form in the scalable notation, for some \(m,n \in \N\), \(\vec x \in \K^{m+n}\) and \(X \in \Sym[m+n][\K]\):
\begin{equation}
  \label{eq:scalable_graph-like}
  \tikzfig{figures/AffLagRel/scalable/graph-like}
\end{equation}
This graph-like state has \(m\) internal vertices, and \(n\) outputs. As
before, \(X\)  encodes the adjacency matrix of the underlying
weighted graph with \(m+n\) vertices (compare with example~\ref{ex:scalable_graph_example} and equations~\eqref{eq:graph-like-example} and \eqref{eq:scalable_graph-like}).  It is
possible to simplify graph-like diagrams by reducing the number of internal
vertices using the following result.

\begin{proposition}
  \label{prop:symplectic_elimination}
  For any \emph{invertible} \(X \in \Sym[m][\K]\), \(Y \in
  \Sym[n][\K]\), \(E \in \Matrices[m][n]\), \(\vec x \in \K^m\), and \(\vec y \in \K^n\):
  \begin{equation*}
    \tikzfig{figures/AffLagRel/scalable/symplectic_elimination}
  \end{equation*}
\end{proposition}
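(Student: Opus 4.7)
The semantic content of this equation is a symplectic analogue of Gaussian elimination: the internal bus of $m$ scalable grey spiders, with invertible symmetric symplectic phase $X$ and affine shift $\vec x$, imposes a linear system on the internal variables, and invertibility of $X$ lets us solve for those variables and substitute back into the boundary constraints. The surviving boundary bus then inherits the expected corrections, a symmetric contribution of the form $E^\trans X^{-1} E$ to the symplectic phase and an affine contribution of the form $E^\trans X^{-1} \vec x$ to the boundary shift. My overall plan is to mirror this ``completing the square'' procedure diagrammatically using the scalable spider and box constructions from definition~\ref{def:scalable_spider} together with the scalable axioms catalogued in appendix~\ref{ssec:axioms_scalable}.

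The intended derivation would proceed in three stages. First, apply the scalable box rules (the T-level analogues of \eqref{eq:box-def}) to peel the symmetric symplectic phase $X$ off the internal scalable grey spider and realise it as a scalable box on the internal bus; because $X$ is invertible, this box is invertible as well, so it can be freely moved around. Second, commute this box through the edge matrix $E$ using the scalable matrix rules \eqref{matrules} and the scalable colour/bigebra axioms, turning the label on the internal side of the edge into $X^{-1} E$ and simultaneously transporting the affine phase $\vec x$ along the edge. Third, the internal bus now carries only trivial symplectic phase, so it can be absorbed into the boundary scalable grey spider by scalable fusion, at which point the corrections $Y \mapsto Y + E^\trans X^{-1} E$ and $\vec y \mapsto \vec y + E^\trans X^{-1} \vec x$ appear automatically as the composite boundary phase data; the resulting picture is exactly the right-hand side.

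The main obstacle will be bookkeeping. Matrix composition is non-commutative and carries a sign twist from the compact structure on $\ALR$, whereas scalable spider fusion is inherently symmetric in its arguments; reconciling these requires the symmetry of $X$ and $Y$ to be used essentially at each step, since it is precisely what makes the two ``ends'' of every diagrammatic move match up. A clean way to organise the argument is by induction on $m$: the base case $m=1$ reduces to eliminating a single internal vertex with nonzero scalar symplectic phase and is handled by a short sequence of scalable moves, while the inductive step Schur-complements out one row/column of $X$ and re-applies the reduction. Throughout, \ref{thm:soundness} serves as a sanity check, so that each candidate rewrite is known to preserve the affine Lagrangian subspace and only the axiom that realises this particular step needs to be identified.
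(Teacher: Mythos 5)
Your three-stage plan --- realise the invertible symplectic phase $X$ as an invertible scalable box, push it through the edge matrix $E$, and fuse the now-trivial internal bus into the boundary spider --- is essentially how the paper argues. The paper's proof is a single chain of scalable rewrites with no induction at the level of this proposition: all the inductive content is already packaged into the scalable lemmas (scalable fusion, lemma~\ref{lem:scalable_fusion}; the invertible case of the scalable box decomposition, lemma~\ref{lem:box_scalable}, together with lemma~\ref{lem:scalable_symplectic_states}, which is where invertibility of $X$ actually enters; and lemma~\ref{lem:scalable_arrow_phase}, which produces the $E^\trans X^{-1}E$ and $E^\trans X^{-1}\vec x$ corrections by conjugating a phase through a matrix arrow). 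So the first two paragraphs of your proposal are on target, if vague about which lemmas carry the load.

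The genuine problem is your fallback organisation by induction on $m$. Schur-complementing out one row/column of $X$ at a time requires the corresponding $1\times 1$ diagonal block to be invertible, i.e.\ $X_{1,1}\neq 0$, and this can fail for an invertible symmetric matrix: $X=\left[\begin{smallmatrix}0&1\\1&0\end{smallmatrix}\right]$ is invertible with zero diagonal (and in characteristic $2$ the entire diagonal of an invertible symmetric matrix can vanish). This is exactly the dichotomy that the paper's corollaries encode --- local complementation (proposition~\ref{prop:local_complementation}) eliminates a single internal vertex only when its symplectic phase is nonzero, while pivoting (proposition~\ref{prop:pivot}) is needed for pairs of connected phase-zero vertices --- and it is the reason the proposition is stated for a whole invertible block $X$ rather than vertex by vertex. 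The induction could be repaired by eliminating $1\times 1$ or $2\times 2$ invertible principal blocks as available, but as written the inductive step does not go through; the block-at-once scalable computation avoids the issue entirely.
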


This proposition encompasses both \emph{local complementation} and \emph{pivoting}
of graphs in \textcite{bouchet_circle_1994}, generalised to coloured graphs by \textcite{kante_rank-width_2013}.  This has been used in the context of the stabiliser fragment of the ZX-calculus, for example by 
\textcite{backens_zx-calculus_2014, duncan_graph-theoretic_2020, booth_complete_2022, poor_qupit_2023}:
\begin{proposition}[Local complementation]
  \label{prop:local_complementation}
  For any \(z \in \K^*\), \(a \in \K\), \(\vec{z}
  \in \K^k\), \(E \in \Matrices[k][1]\), \(Z \in \Sym[k]\) where
  \begin{equation*}
    \tikzfig{figures/AffLagRel/local_complementation}
  \end{equation*}
  Where\hfil \(\gamma_i \coloneqq z_i + E_i a z^{\minu 1},\quad \delta_i \coloneqq Z_{i,i} - z^{\minu 1} E_i^2,\quad{\it and}\quad g_{i,j} \coloneqq Z_{i, j} - z^{\minu 1} E_i E_j.\)
\end{proposition}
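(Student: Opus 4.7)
The plan is to derive local complementation as a direct specialisation of the general symplectic elimination result in Proposition~\ref{prop:symplectic_elimination}. The key observation is that the left-hand side of the local complementation statement depicts exactly one internal vertex (the $(a,z)$-phased grey spider) connected to $k$ boundary vertices via boxes labelled by the components of $E$. Eliminating this single internal vertex is precisely the $m=1$ case of symplectic elimination.

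First, I would rewrite the left-hand side in the scalable notation of Definition~\ref{def:scalable_spider}. The internal vertex corresponds to a $1$-coloured grey spider with affine phase $\vec{x}=[a]$ and symplectic phase $X=[z]\in\Sym[1]$. Since $z\in\K^*$, the $1\times 1$ matrix $X$ is invertible, satisfying the hypothesis of Proposition~\ref{prop:symplectic_elimination}. The remaining $k$ boundary vertices are gathered into a $k$-coloured block with affine phase $\vec{y}=\vec{z}$ and symplectic phase $Y=Z$, while the matrix of edges between the internal block and the boundary block is $E\in\Matrices[k][1]$.

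Applying Proposition~\ref{prop:symplectic_elimination} yields a diagram on the $k$ boundary vertices only, with updated phases computed from $X$, $\vec{x}$, $Y$, $\vec{y}$ and $E$. A direct reading of the elimination formula gives: the new affine phase vector is $\vec{z} + E\,X^{-1}\vec{x} = \vec{z} + z^{-1} a E$, whose $i$-th component is $z_i + E_i a z^{\minu 1}=\gamma_i$; and the new symmetric phase matrix is $Z - E\,X^{-1}E^\trans = Z - z^{\minu 1} E E^\trans$, whose diagonal entries are $Z_{i,i}-z^{\minu 1}E_i^2=\delta_i$ and whose off-diagonal entries are $Z_{i,j}-z^{\minu 1}E_iE_j=g_{i,j}$. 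Unfolding back from scalable notation to the ungathered graph-like presentation shown in the statement then gives precisely the right-hand side.

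The main obstacle is bookkeeping rather than mathematics: one must confirm that the sign conventions for the grey spider's $Z$- and $X$-phases (see Definition~\ref{def:gsa_to_alr}) and the direction of the $E$-labelled boxes entering the internal vertex agree with how symplectic elimination decomposes its inputs. The nontrivial content has already been absorbed into Proposition~\ref{prop:symplectic_elimination}; the remaining task is to verify that the matrix $EE^\trans$ really does assemble into the claimed diagonal and off-diagonal corrections after the internal block is removed, which is a straightforward expansion in scalable notation using the laxator equations \eqref{eq:scalablemonoid}.
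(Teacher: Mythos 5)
Your proposal is correct and the formulas you extract ($\gamma_i = z_i + E_i a z^{\minu 1}$, $\delta_i = Z_{i,i} - z^{\minu 1}E_i^2$, $g_{i,j} = Z_{i,j} - z^{\minu 1}E_iE_j$) agree exactly with the statement, so the $m=1$ specialisation of proposition~\ref{prop:symplectic_elimination} with $X=[z]\in\Sym[1]$, $\vec x=[a]$, $Y=Z$, $\vec y=\vec z$ does the job. The route differs from the paper's in an instructive way: the paper proves local complementation by a direct diagrammatic rewrite that does not pass through the scalable elimination proposition at all, whereas it proves the companion result, pivoting (proposition~\ref{prop:pivot}), precisely by your strategy --- applying proposition~\ref{prop:symplectic_elimination} to the invertible $2\times 2$ submatrix $\begin{bmatrix}0&\epsilon\\\epsilon&0\end{bmatrix}$. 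Your approach is therefore more uniform and is arguably what the surrounding prose intends when it says that symplectic elimination ``encompasses both local complementation and pivoting''; what the paper's direct proof buys instead is independence from the scalable machinery for this particular special case. The only points you should make explicit to close the argument are the ones you already flag as bookkeeping: that regrouping the $k$ boundary spiders into a single $[k]$-coloured spider and splitting them back out is licensed by definition~\ref{def:scalable_spider} together with lemma~\ref{lem:scalable_fusion} and the laxator equations, and that the orientation of $E\in\Matrices[k][1]$ between the internal block and the boundary block matches the convention used in proposition~\ref{prop:symplectic_elimination}. Neither is a gap, just a verification.
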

The new edge weights \(g_{i,j}\) in the right-hand side are obtained from
the edge weights \(E_{i_j}\) of the left-hand side by performing a local
complementation (in the sense of \textcite{kante_rank-width_2013}) around the
vertex \(1\), which is then deleted.

\begin{proposition}[Pivoting]
  \label{prop:pivot}
  For any \(\epsilon \in \K^*\), \(a,b \in \K\), \(\vec{z}
  \in \K^k\), \(E \in \Matrices[n][2]\), \(Z \in \Sym[k]\) where
  \begin{equation*}
    \tikzfig{figures/AffLagRel/pivot}
  \end{equation*}
  Where 
\[\gamma_i \coloneqq z_i + \epsilon^{\minu 1} (a E_{2,i} + b E_{1,i}),\  \delta_i \coloneqq Z_{i,i} - 2 \epsilon^{\minu 1} E_{1,i} E_{2,i},\  {\it and}\  g_{i,j} \coloneqq - \epsilon^{\minu 1} (E_{1,i} E_{2,j} + E_{1,j} E_{2,i}).\]
\end{proposition}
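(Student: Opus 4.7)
The plan is to derive the Pivoting rule as a direct consequence of the general symplectic elimination result in Proposition \ref{prop:symplectic_elimination}, using the scalable notation to bundle the two distinguished internal vertices together into a single $[2]$-coloured grey spider.

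First I would rewrite the left-hand side of Pivoting in scalable form. The two internal vertices are connected by a single $\epsilon$-weighted edge (and have no self-loops or other interactions), so they correspond to a single $[2]$-coloured grey spider whose phase vector is $\vec{x} = \begin{bmatrix} a \\ b \end{bmatrix} \in \K^2$ and whose internal symmetric matrix is
\[
X = \begin{bmatrix} 0 & \epsilon \\ \epsilon & 0 \end{bmatrix} \in \Sym[2].
\]
The edges from these two vertices to the $k$ boundary vertices are packaged into $E$, while the boundary data $(\vec{z}, Z)$ is already in the right form. Since $\epsilon \in \K^*$, we have $\det X = -\epsilon^2 \neq 0$, so $X$ is invertible with
\[
X^{\minu 1} = \epsilon^{\minu 1} \begin{bmatrix} 0 & 1 \\ 1 & 0 \end{bmatrix}.
\]

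Once the LHS is in this form, Proposition \ref{prop:symplectic_elimination} applies directly, eliminating the $[2]$-coloured internal spider and updating the boundary data by the terms $E^\trans X^{\minu 1} \vec{x}$ and $E^\trans X^{\minu 1} E$ (up to the sign conventions established by \ref{prop:symplectic_elimination}). A routine computation gives
\[
E^\trans X^{\minu 1} \vec{x} = \epsilon^{\minu 1}(a E_{2,\cdot} + b E_{1,\cdot}),
\qquad
(E^\trans X^{\minu 1} E)_{i,j} = \epsilon^{\minu 1}(E_{1,i} E_{2,j} + E_{1,j} E_{2,i}),
\]
which, once inserted into the update formulas, reproduces the stated expressions for $\gamma_i$, and (on the diagonal) $\delta_i = Z_{i,i} - 2 \epsilon^{\minu 1} E_{1,i} E_{2,i}$, and (off-diagonal) $g_{i,j} = -\epsilon^{\minu 1}(E_{1,i} E_{2,j} + E_{1,j} E_{2,i})$, exactly matching the claim.

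The main obstacle is not algebraic but bookkeeping: one has to justify that the two-vertex subgraph on the LHS (presented in unscaled form as two separate grey spiders with an $\epsilon$-edge between them and no self-loops) is equal to the scalable spider with data $(\vec{x}, X)$ as above. This requires checking that the scalable spider definition (Definition \ref{def:scalable_spider}) correctly encodes ``two grey spiders joined by a single box of weight $\epsilon$'' as a $[2]$-coloured spider whose internal symmetric matrix has zero diagonal and off-diagonal $\epsilon$. Once this unpacking of the scalable notation is checked (which can be done by unrolling equation \eqref{eq:thick_spider_def} one step), everything else is a direct instantiation of Proposition \ref{prop:symplectic_elimination} together with a $2\times 2$ matrix inversion.
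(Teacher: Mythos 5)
Your proposal matches the paper's proof: the paper likewise obtains Pivoting by applying Proposition~\ref{prop:symplectic_elimination} to the invertible submatrix $\begin{bmatrix} 0 & \epsilon \\ \epsilon & 0 \end{bmatrix}$ with inverse $\begin{bmatrix} 0 & \epsilon^{-1} \\ \epsilon^{-1} & 0 \end{bmatrix}$, and your computation of the resulting phases $\gamma_i$, $\delta_i$ and edge weights $g_{i,j}$ agrees with the paper's. The bookkeeping step you flag (packaging the two $\epsilon$-connected internal spiders as a single $[2]$-coloured spider with data $(\vec{x},X)$) is exactly what the paper's diagrammatic derivation does implicitly.
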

The new edge weights \(g_{i,j}\) in the right-hand side are obtained from
the edge weights \(E_{i_j}\) of the left-hand side by performing a pivot
(in the sense of \textcite{kante_rank-width_2013}) around the edge weighted
by \(\epsilon\). The two vertices connected by this edge are then deleted.

Using proposition~\ref{prop:local_complementation}, it is clear that we can
eliminate any internal spider with phase \((x,z)\) where \(z \neq 0\). Having
done this, we then use proposition~\ref{prop:pivot} to eliminate any
remaining internal spiders that are connected. Therefore, we can refine
 the notion of graph-like diagram:

\begin{definition}
A graph-like diagram is in \emph{Affine with Phases form}
  (AP-form) when:
  \begin{itemize}
    \item There are no inputs;
    \item The internal spiders have phases of the form \((x,0)\) for \(x \in \K\);
    \item Internal spiders are only connected to boundary spiders.
  \end{itemize}
\end{definition}

\begin{proposition}
  \label{prop:ap_form}
  Every graph-like diagram \(0 \to n\) can be put into AP-form.
\end{proposition}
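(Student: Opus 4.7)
The plan is to proceed by induction on the number $k$ of internal spiders in the graph-like diagram. Since the diagram has type $0 \to n$, the first clause of AP-form (no inputs) is automatic. For the base case $k = 0$ there are no internal spiders, so the remaining two clauses are vacuously satisfied and the diagram is already in AP-form.

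For the inductive step ($k > 0$), we distinguish two cases. First, if some internal spider carries a phase $(a,z)$ with $z \in \K^{*}$, we gather it at the top of the scalable presentation of equation~\eqref{eq:scalable_graph-like} and apply Proposition~\ref{prop:local_complementation} to eliminate it. The rewrite only introduces edges with boxes between the former neighbours of the deleted vertex, and absorbs any self-loops into updates of the remaining spiders' symplectic phases, so the result is again a graph-like diagram of type $0 \to n$; it has $k-1$ internal spiders, and the induction hypothesis applies.

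Second, suppose every internal spider already has symplectic phase zero, so that the second AP-form clause is met. If no edge connects two internal spiders then the third clause is also met and the diagram is in AP-form. Otherwise choose an edge of weight $\epsilon \in \K^{*}$ joining two internal spiders, whose phases are necessarily of the form $(a,0)$ and $(b,0)$, and apply Proposition~\ref{prop:pivot} around this edge. Both endpoints are deleted, and as before the rewrite preserves graph-likeness, yielding a diagram with $k-2$ internal spiders to which the induction hypothesis applies.

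The only subtle point is that each application of Proposition~\ref{prop:local_complementation} may re-introduce nonzero symplectic phases on neighbouring internal spiders (via the $\delta_i = Z_{i,i} - z^{-1} E_i^2$ correction), and similarly pivoting reshuffles the phases of other internal vertices. This is harmless because the induction is on the total number of internal spiders rather than on a finer invariant that tracks ``bad'' spiders: since each rewrite strictly decreases $k$, termination is guaranteed no matter how the phases evolve along the way. I expect the main bookkeeping obstacle to be verifying that both rewrites preserve the five graph-like conditions (in particular that boundary vertices remain connected to their distinguished input/output, and that no new spider–spider edges without an intervening box are created); this follows from the fact that the propositions are stated in scalable form with explicit boxes on every interior edge.
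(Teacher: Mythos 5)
Your proof is correct and follows essentially the same route as the paper's: repeatedly apply Propositions~\ref{prop:local_complementation} and~\ref{prop:pivot} to internal vertices, with termination guaranteed because each application strictly decreases the number of internal spiders. Your explicit observation that the rewrites may reintroduce nonzero symplectic phases on surviving internal spiders, and that the induction on the vertex count absorbs this, is a worthwhile clarification of a point the paper's proof leaves implicit.
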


\begin{remark}
In the scalable notation, the AP-form  can be seen to be the following refinement of the graph-like form which was presented in equation~\eqref{eq:scalable_graph-like}:
\begin{equation}
  \label{eq:scalable-AP-form}
  \tikzfig{figures/AffLagRel/scalable/AP-form}
\end{equation}
for \(m,n \in \N\), \(\vec{x} \in \K^m\), \(\vec{y} \in \K^n\),
\(E \in\Matrices[m][n][\K]\), and \(Y \in \Sym\). Here
there are \(m\) internal spiders with phases \((x_k,0)\), connected to \(n\)
boundary spiders via the matrix \(E\).
\end{remark}

The AP-form also appeared in the completeness result of odd-prime-dimensional stabiliser ZX-calculus of  \textcite{poor_qupit_2023}. Our scalable presentation of the AP-formal form can be easily rewritten to more closely resemble their presentation:
\begin{equation}
  \label{eq:scalable-AP-form-derivation}
  \tikzfig{figures/AffLagRel/scalable/AP-form-derivation}
\end{equation}
The history of this form is rather complicated, and we will discuss the attribution and history at the end of this section.

The AP-form, can be row-reduced, which leads to the following refinement:
\begin{definition}
  \label{def:reduced_ap_form}
  A diagram in AP-form defined by \((E,Y,\vec{x}, \vec{y})\) is in
  \emph{reduced AP-form} if its semantics are empty, or it satisfies the following:
  \begin{itemize}
    \item \(E\) is in reduced row echelon form;
    \item \(E\) is surjective (none of the rows of  \(E\) are zero);
    \item  \(\im (E^\trans) \subseteq \ker (Y) \) (if the \(j^\text{th}\)  column of \(E\) is a pivot, then the \(j^\text{th}\)  column of \(Y\) is \(0_n\));
    \item  \(\im (E^\trans) \subseteq \ker (\vec{y}^\trans)\) (if the \(j^\text{th}\)  column of \(E\) is a pivot, then the \(j^\text{th}\)  entry of \(\vec y\) is \(0\)).
  \end{itemize}
\end{definition}

\begin{remark}
  \label{rem:scalable_reduced_AP-form}
  In the scalable notation, the reduced AP-form can be stated graphically, for some \(m \leqslant n \in \N\), \(\vec{x} \in \K^m\), \(\vec{s} \in
  \K^{n-m}\), \(F \in \Matrices[m][n-m]\) and \(S \in \Sym[n-m]\), up to a permutation matrix \(\varsigma\in \Matrices[n][n]\):
  \begin{equation}
    \label{eq:scalable-reduced-AP-form}
    \tikzfig{figures/AffLagRel/scalable/reduced-AP-form}
  \end{equation}
Note that the permutation \(\varsigma\) allows for the pivots \(1_m\) to be separated into a different block than \(F\), so that \(E=\begin{bmatrix} 1_m & F \end{bmatrix}\varsigma\), and similarly for \(\vec s\) and \(0_m\), so that \(y=\varsigma^\trans \begin{bmatrix} 0_m^\trans & \vec s^\trans \end{bmatrix}^\trans \). Therefore, pulling the permutation out to the right makes  the normal form easier to state diagrammatically.
\end{remark}

\begin{proposition}
  \label{prop:reduced_ap_form}
  Every \(\ZX\)-diagram can be put into reduced AP-form.
\end{proposition}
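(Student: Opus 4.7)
The plan is to start from the AP-form provided by Proposition~\ref{prop:ap_form}, handle the empty-semantics case via Proposition~\ref{prop:zero_normal_forms}, and otherwise manipulate the data $(E, Y, \vec{x}, \vec{y})$ of equation~\eqref{eq:scalable-AP-form} through a sequence of scalable rewrites until it satisfies the four clauses of Definition~\ref{def:reduced_ap_form}.

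First, I would put $E$ into reduced row echelon form by exploiting the gauge freedom on the internal spiders. In the scalable calculus, multiplying $E$ on the left by an invertible $M \in \Matrices[m][m]$ is implemented by inserting $M M^{-1}$ between the internal spider and the connectivity block: $M$ is absorbed into the connectivity via the matrix rules~\eqref{matrules}, while $M^{-1}$ is absorbed into the internal spider, with $\vec{x}$ updated accordingly. A suitable $M$ brings $E$ to RREF. If any rows become zero in the process, the corresponding internal grey spider is disconnected from the boundary; inspecting its semantics shows that a nonzero $X$-phase on such a disconnected spider forces the whole diagram to have empty semantics (so Proposition~\ref{prop:zero_normal_forms} finishes the argument), while a zero $X$-phase allows the spider to be deleted as a trivial scalar. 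This establishes conditions (1) and (2) of Definition~\ref{def:reduced_ap_form}.

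Second, I would eliminate the pivot entries of $Y$ and $\vec{y}$. After a permutation $\varsigma$ of the boundary wires, absorbed into the diagram as in Remark~\ref{rem:scalable_reduced_AP-form}, we may write $E = [\,1_m \mid F\,]$, so that each of the first $m$ boundary spiders is joined to exactly one internal $X$-phase spider through a pivot edge of weight $1$. An affine phase $y_j$ for $j \leqslant m$ transfers through the pivot edge onto the corresponding internal $X$-phase via scalable spider fusion. The symplectic entries $Y_{j,k}$ with $j \leqslant m$ are slid across the pivot using a gadget locally analogous to Proposition~\ref{prop:pivot}: the diagonal $Y_{j,j}$ is absorbed into internal structure, while each off-diagonal $Y_{j,k}$ with $k > m$ is transferred into the block $Y_{22}$ and into the non-pivot connectivity $F$ via the standard pivot corrections. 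Iterating across all $m$ pivots zeros the first $m$ rows and columns of $Y$, as well as the first $m$ entries of $\vec{y}$, yielding $\im (E^\trans) \subseteq \ker(Y) \cap \ker(\vec{y}^\trans)$, which are clauses (3) and (4).

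The main obstacle is executing the second step carefully in scalable notation: each phase slid across a pivot induces bookkeeping corrections on the non-pivot block, and one must verify that after accumulating all corrections the updated $Y$ remains symmetric and the final data genuinely satisfies all four clauses of Definition~\ref{def:reduced_ap_form}. This is a finite linear-algebraic calculation governed by the scalable axioms summarised in appendix~\ref{ssec:axioms_scalable}; I expect it to be routine but tedious, with the flexsymmetry of the generators and the scalable calculus keeping every intermediate rewrite purely graphical.
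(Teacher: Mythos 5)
Your overall strategy coincides with the paper's: start from AP-form, exploit the gauge freedom on internal spiders to row-reduce \(E\), handle zero rows by splitting on whether the disconnected internal spider's affine phase vanishes (empty semantics via proposition~\ref{prop:zero_normal_forms}, or deletion otherwise), and then clear the pivot columns of \(Y\) and \(\vec y\). Your first step is exactly the paper's lemma~\ref{lem:gaussianeliminationi} re-derived inline, and the zero-row case analysis matches verbatim. Where you diverge is in the mechanics of the second step: the paper does not iterate a local pivot-by-pivot gadget but performs a single global block computation following equation~\eqref{eq:scalable-AP-form-derivation}, writing \(Y\) in blocks over the pivot/non-pivot split and reading off closed-form expressions \(\vec{s} = F^\trans\vec{a} + \vec{b} - (A+B)\vec{x}\) and \(S = F^\trans A F + B - BF - F^\trans B^\trans\), which manifestly satisfy the clauses of definition~\ref{def:reduced_ap_form}. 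Your iterative version should be repairable, but as stated it has two inaccuracies you would hit when executing it. First, the diagonal entries \(Y_{j,j}\) on pivot outputs cannot be ``absorbed into internal structure'': internal spiders in AP-form must keep symplectic phase \(0\), so these entries (like the whole pivot block of \(Y\), including the pivot-pivot edges \(Y_{j,k}\) with \(j,k\le m\), which your description omits) must instead be pushed through the constraint onto the non-pivot outputs, where they appear conjugated by \(F\) --- this is the \(F^\trans A F\) term. Second, none of the corrections land in \(F\) itself: the connectivity \(\begin{bmatrix}1_m & F\end{bmatrix}\varsigma\) is already fixed by the row reduction and must stay in reduced row echelon form, so all bookkeeping accumulates in \(S\) and \(\vec s\) only. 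With those two points corrected, your local procedure computes the same data as the paper's block formula, at the cost of having to check symmetry of the accumulated \(S\) by hand, which the global computation gives for free.
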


The reduced AP-form is unique:
\begin{proposition}
  \label{prop:ap-unique}
  For any nonempty affine Lagrangian state there is exactly one equivalent diagram in reduced-AP form.
\end{proposition}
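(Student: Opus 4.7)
The plan is to show that the parametric data $(E, Y, \vec x, \vec y, \varsigma)$ of a reduced AP-form is uniquely recoverable from the affine Lagrangian subspace $L \subseteq \K^{2n}$ that the diagram represents; combined with Proposition~\ref{prop:reduced_ap_form} and the fullness of $\interp{-}_\abbralagr^\zx$ (already noted at the start of section~\ref{ssec:completeness}), this yields the claim. The first step is to unpack the scalable diagram in equation~\eqref{eq:scalable-reduced-AP-form} into an explicit parametric description of $L$: the $m$ internal grey spiders fix their $Z$-phase to $\vec x$ while leaving their $X$-value $\vec u \in \K^m$ free, and the action of the scalable box $E = [1_m \mid F]\varsigma$ together with the boundary phase $(\vec y, Y)$ converts this into an affine constraint $E\vec w = \vec x$ on the boundary $X$-coordinate $\vec w \in \K^n$, together with the affine relation $\vec z \in \vec y + Y \vec w + \im(E^\top)$ on the boundary $Z$-coordinate (up to the sign conventions of the symplectic form).

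From this description I recover each component of the data as a canonical invariant of $L$. The integer $m$ equals $n - \dim \pi_X(L_0)$, where $L_0$ is the linear part of $L$ and $\pi_X$ denotes projection onto the $X$-factor; equivalently, $m$ is the codimension of the affine subspace $\pi_X(L) \subseteq \K^n$. The row span of $E$ coincides with the annihilator of $\pi_X(L_0)$ inside $\K^n$; since $E$ is required to be in reduced row echelon form, $E$ (and hence the pivot permutation $\varsigma$ and the non-pivot block $F$) is uniquely determined by this row span. The vector $\vec x \in \K^m$ is then the unique element for which $E\vec w = \vec x$ holds for every $\vec w \in \pi_X(L)$. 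Once $E$ and $\vec x$ are fixed, the pair $(Y, \vec y)$ parametrises the $\vec z$-dependence on $\vec w$ along $L$ modulo a gauge of the form $(\vec y, Y) \mapsto (\vec y + E^\top \vec m - M^\top \vec x,\ Y + E^\top M + M^\top E)$ for $M \in \Matrices[n][m]$ and $\vec m \in \K^m$; the pivot-zero conditions $\im(E^\top) \subseteq \ker(Y)$ and $\im(E^\top) \subseteq \ker(\vec y^\top)$ select the unique representative with vanishing pivot columns, fixing $S$ and $\vec s$.

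The main obstacle is verifying this last step: one must show that the pivot-zero constraints genuinely cut out a unique symmetric $Y$ from its gauge orbit under $Y \mapsto Y + E^\top M + M^\top E$. Because $E = [1_m \mid F]\varsigma$ has its $m$ pivots in fixed coordinates, demanding the corresponding $m$ columns of $Y$ to vanish, together with symmetry forcing the corresponding rows to vanish, fixes $M$ uniquely and leaves precisely the $\Sym[n-m]$-worth of freedom recorded by $S$; an analogous and easier argument handles $\vec y$ versus $\vec s$. Combined with the canonical recovery of $m$, $E$, $\vec x$, $\varsigma$, and $F$ above, this yields injectivity of the semantic map on reduced-AP forms, which is the desired uniqueness.
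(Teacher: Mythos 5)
Your proof is correct and reaches the same conclusion as the paper's, but by a genuinely different route. The paper argues internally to the calculus: it exhibits diagrammatic compositions of the reduced AP-form that extract, in turn, the number \(m\) of internal vertices, the permutation \(\varsigma\), the affine pair \((F,\vec x)\) and the affine pair \((S,\vec s)\), and concludes that each is an invariant of the equivalence class. You instead work on the semantic side, recovering the same data as canonical invariants of the affine Lagrangian subspace \(L\): \(m\) as the codimension of \(\pi_X(L)\), the row span of \(E\) as the annihilator of the linear part of \(\pi_X(L)\) (with reduced row echelon form pinning down \(E\), hence \(\varsigma\) and \(F\)), \(\vec x\) from the affine shift, and \((Y,\vec y)\) as the unique pivot-zero representative of the \(Z\)-dependence modulo \(\im(E^\trans)\). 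Your version makes explicit \emph{why} the form is unique --- it is a canonical parametrisation of nonempty affine Lagrangian subspaces, in the spirit of the stabiliser-tableau standard form --- at the cost of leaving the calculus; the paper's version stays diagrammatic and so slots directly into the completeness argument without re-invoking the concrete semantics.

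One step to tighten: you normalise \((Y,\vec y)\) within the gauge orbit \(Y\mapsto Y+E^\trans M+M^\trans E\) and assert that the pivot-zero conditions fix \(M\) uniquely. In fact \(M\) is only determined up to its contribution to the pivot block (e.g.\ the antisymmetric part of its pivot-column block when \(E=[\,1_m\ \ F\,]\varsigma\)), and in characteristic \(2\) the set of symmetric \(D\) for which \(D\vec w+\vec d\in\im(E^\trans)\) on the support can be strictly larger than \(\{E^\trans M+M^\trans E\}\), since \(M_1+M_1^\trans\) always has zero diagonal there. Neither issue breaks the argument, but the cleaner statement is the direct one: if two pivot-zero pairs \((Y,\vec y)\) and \((Y',\vec y')\) define the same \(L\), then \(D=Y-Y'\) maps \(\ker E\) into \(\im(E^\trans)\) while having vanishing pivot rows and columns, and a short block computation forces \(D=0\) (and likewise \(\vec y=\vec y'\)). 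Phrased this way the normalisation is immediate and characteristic-independent, and the rest of your argument stands.
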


Recall that \(\interp{-}^\zx_\abbralagr:\ZX \to \ALR\) is full.  Moreover, it faithful on diagrams with empty semantics by proposition~\ref{prop:zero_normal_forms}.  Because, in addition,  every nonempty \(\ZX\) diagram can be rewritten to the reduced-AP form by proposition~\ref{prop:reduced_ap_form}, which is unique by proposition~\ref{prop:ap-unique}, we have that:

\begin{theorem}
  \label{thm:equivalence}
  The  interpretation \(\interp{-}^\zx_\abbralagr:\ZX \cong \ALR\) is a \dag-compact prop isomorphism.
\end{theorem}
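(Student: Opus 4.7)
The plan is to leverage the scaffolding already assembled in this section: Proposition~\ref{thm:soundness} establishes that $\interp{-}^\zx_\abbralagr$ is a $\dag$-symmetric monoidal functor, and the passage immediately preceding the theorem flags that fullness follows from Lemma~4.6 of Comfort--Kissinger. Since the functor is identity on objects, the only substantive ingredient left to verify is faithfulness, which I would handle through a case split driven by the two normal-form results.

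For the degenerate case, if $\interp{D}^\zx_\abbralagr$ is the empty affine Lagrangian relation, then by Proposition~\ref{prop:zero_normal_forms} every such $D \in \ZX(m,n)$ is provably equal to the designated zero diagram, so any two diagrams with empty semantics coincide in $\ZX$. This handles one side of the dichotomy with no further work.

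The main case concerns two diagrams $D_1, D_2 \in \ZX(m,n)$ whose interpretations agree and are nonempty. Here I would first use the compact structure of $\ZX$ to bend all inputs into outputs, reducing to the case of states $0 \to m+n$: because $\interp{-}^\zx_\abbralagr$ preserves the $\dag$-compact structure, this bending is reversible both syntactically and semantically without loss of information. By Proposition~\ref{prop:reduced_ap_form}, each bent diagram is provably equal in $\ZX$ to a diagram in reduced AP-form, and by the soundness established in Proposition~\ref{thm:soundness} these two reduced AP-forms have equal nonempty affine Lagrangian semantics. Proposition~\ref{prop:ap-unique} then forces the two reduced AP-forms to be literally the same $\ZX$-diagram, whence the originals are equal in $\ZX$.

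The hard part of the overall argument has really been discharged before the theorem is stated, namely the uniqueness of the reduced AP-form in Proposition~\ref{prop:ap-unique}; once this and the reduction procedure of Proposition~\ref{prop:reduced_ap_form} are available, the theorem is essentially a bookkeeping synthesis. Combining identity-on-objects, fullness, the faithfulness just outlined, and the preservation of $\dag$-compact structure from Proposition~\ref{thm:soundness}, we conclude that $\interp{-}^\zx_\abbralagr$ is a $\dag$-compact prop isomorphism.
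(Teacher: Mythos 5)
Your proposal is correct and follows essentially the same route as the paper: fullness from Comfort--Kissinger, faithfulness on empty-semantics diagrams via Proposition~\ref{prop:zero_normal_forms}, and faithfulness on nonempty diagrams by bending inputs into outputs and invoking Propositions~\ref{prop:reduced_ap_form} and~\ref{prop:ap-unique}, together with the $\dag$-compact functoriality from Proposition~\ref{thm:soundness}. The paper's own argument is exactly this synthesis, stated in the paragraph immediately preceding the theorem.
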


This also gives a complete representation for Lagrangian relations as an immediate corollary:

\begin{corollary}
There is an isomorphism of \dag-compact props \(\interp{-}^{\zx^0}_\abbralagr:\ZX^0 \cong \LR\), where \(\ZX^0\) obtained by restricting the affine phases of the white and grey spiders in \(\ZX\) to be 0 and dropping the axioms involving nonzero affine phase.
\end{corollary}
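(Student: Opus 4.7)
The plan is to argue this as a direct restriction of Theorem~\ref{thm:equivalence}, showing that stripping affine phases on both sides gives a tight correspondence. Since $\ZX^0$ is by construction a subprop of $\ZX$ on the same generators but with the first component of each spider phase constrained to $0$, I would first verify that the interpretation $\interp{-}^\zx_\abbralagr$ restricts to land in $\LR$. Inspecting Definition~\ref{def:gsa_to_alr}, setting $a = 0$ in the semantics of the grey and white spiders yields exactly the homogeneous linear equations $\sum z_j - \sum z_k' + bx = 0$ and $\sum x_j + \sum x_k' - bz = 0$ respectively, which cut out linear subspaces; the box semantics are already linear; and linear subspaces are stable under relational composition and direct sum. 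Functoriality and preservation of the $\dag$-compact structure are inherited from the ambient $\ZX$.

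For fullness, given any linear Lagrangian relation $L \in \LR(m,n)$, its inclusion into $\ALR(m,n)$ is reached by some $\ZX$-diagram via Theorem~\ref{thm:equivalence}. The key observation is that in the scalable reduced AP-form~\eqref{eq:scalable-reduced-AP-form}, linearity of the underlying subspace forces $\vec{x} = 0$ and $\vec{s} = 0$: otherwise the encoded affine shift would be nonzero, contradicting $0 \in L$. Hence $L$ admits a reduced-AP-form representative lying in $\ZX^0$, establishing fullness.

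For faithfulness, suppose $D_1, D_2 \in \ZX^0(m,n)$ satisfy $\interp{D_1}^{\zx^0}_\abbralagr = \interp{D_2}^{\zx^0}_\abbralagr$. By faithfulness of $\interp{-}^\zx_\abbralagr$ (Theorem~\ref{thm:equivalence}), they are equal as arrows in $\ZX$. It remains to show that this equality can be witnessed by a rewrite sequence using only the axioms retained in $\ZX^0$. The cleanest way is to observe that the rewriting procedure of Proposition~\ref{prop:reduced_ap_form} preserves the property ``all affine phases are zero'' when applied to a $\ZX^0$-diagram: each step (graph-like normalisation from Proposition~\ref{prop:graph_form}, symplectic elimination from Proposition~\ref{prop:symplectic_elimination}, and the local-complementation/pivoting steps of Propositions~\ref{prop:local_complementation} and \ref{prop:pivot}) only introduces new affine phases by combining pre-existing ones, so starting from zero affine phases yields zero affine phases. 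Since uniqueness of the reduced AP-form (Proposition~\ref{prop:ap-unique}) then forces $D_1$ and $D_2$ to share a common reduced-AP normal form inside $\ZX^0$, faithfulness follows.

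The main obstacle is the last paragraph: one must check that the entire normalisation pipeline, and in particular the axioms used in each rewrite step, lies in the restricted equational theory of $\ZX^0$. Concretely, one verifies that the discarded equations of $\ZX$ (those involving nonzero affine phases) are never invoked when the starting diagram has all affine phases zero, by tracking the phase updates $\gamma_i$, $\delta_i$ in Propositions~\ref{prop:local_complementation} and~\ref{prop:pivot}: all additive shifts introduced are themselves zero when the input affine phases vanish. This bookkeeping, while essentially routine, is the only nontrivial point, and once discharged the corollary follows immediately.
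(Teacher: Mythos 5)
Your argument is correct and takes essentially the same route the paper intends: the authors state this as an immediate corollary of Theorem~\ref{thm:equivalence}, and your three steps (the interpretation restricts to \(\LR\) since setting \(a=0\) yields homogeneous equations; fullness via the reduced AP-form having zero affine shifts for a linear relation; faithfulness by checking that the normalisation pipeline of Propositions~\ref{prop:graph_form}--\ref{prop:reduced_ap_form} preserves vanishing affine phases and never invokes the dropped axioms) are precisely the fleshing-out of that claim. One small simplification you could note is that the empty-semantics case of Proposition~\ref{prop:zero_normal_forms} never arises here, since linear subspaces are nonempty by definition and every \(\ZX^0\)-diagram denotes a subspace containing the origin.
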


See appendix~\ref{ssec:zx-axioms-linear} for an explicit presentation of \(\ZX^0\).

\paragraph{A note on the attribution of the (reduced) AP-form}\ \newline
What we have called the ``reduced AP-form'' has a complicated history, originating from within the field of quantum information theory, which we review here in order to give correct attribution. A closely related form, called the ``standard form'' for qubit stabiliser codes seems to first appear in the work of \textcite{Cleve1997}.  When the stabiliser code is taken to have maximal dimension, the standard form for the stabiliser tableau is essentially the reduced AP-form for a linear Lagrangian subspace over the field \(\mathbb{F}_2\). The connection of this form for qubit stabiliser states to graphs was developed by \textcite{Elliott2008}; they decorate the graphs to recover the ``phase-information'' which is lost in the representation of a stabiliser state via its standard form.  \textcite{hu_improved_2022} refined the results of \textcite{Elliott2008}, providing a unique graph-like representation of qubit stabiliser states which also accounted for phase information, in terms of its affine support and a phase-polynomial.  Note that the work of \textcite{Elliott2008} was itself is directly inspired by that of \textcite{van_den_nest_graphical_2004}, where they showed that, modulo scalars, qubit stabiliser states can be represented, non-uniquely by ``quantum graph states'' up to the action of the local Clifford operators. 

In the ZX-calculus community, \textcite{backens_zx-calculus_2014} used Van den Nest et al.'s pseudonormal form of quantum graph states in order to give the first complete presentation of the qubit stabiliser ZX-calculus.
Later on, McElvaney and Backens reformulated  the unique normal form of  \textcite{hu_improved_2022} in the context of the qubit ZX-calculus \cite{mcelvanney_complete_2022}. Because of the connection which they draw to phase-polynomials, they named this incarnation of the ``canonical form'' with phase information the  ``phase-polynomial form'' for a qubit stabiliser ZX-diagram, analagous to our AP-from. They further refine this form into the unique  ``canonical phase-polynomial form'' for qubit stabiliser ZX-diagrams.  This canonical phase-polynomial form is analagous to our reduced AP-form. 

To make the connection between the ``canonical phase-polynomial form'' of  \textcite{mcelvanney_complete_2022} and the AP-form as described in our paper, we must take a detour through odd-prime-dimensional quantum circuits. In the setting of odd-prime-dimensional quantum systems, Bahramgiri and Beigi \cite{bahramgiri_graph_2007, bahramgiri_efficient_2007} had already noted that local complementation of \(\mathbb{F}_p\)-coloured graphs, as defined in the work of \textcite{kante_rank-width_2013}, provides a suitable generalisation of the results of \textcite{van_den_nest_graphical_2004}.  \textcite{booth_complete_2022} used this to prove completeness of a ZX-calculus for odd-prime-dimensional stabiliser quantum systems, using a proof method strongly inspired by \cite{backens_zx-calculus_2014}. On the other hand, inspired by the phase-polynomial form and the unique canonical  phase-polynomial form for qubit stabiliser states of \textcite{mcelvanney_complete_2022}, \textcite{poor_qupit_2023} adapted these forms to  the odd-prime-dimensional qudit stabiliser fragment of the ZX-calculus.  They dubbed the qudit analogue of the canonical phase-polynomial form the ``affine with phases form'' (AP-form), and the analogue of the canonical phase-polynomial form the ``reduced affine with phases form'' (reduced AP-form). By contrast to the quantum graph-state approach, the uniqueness of the reduced AP-form provided a cleaner completeness proof for the odd-prime-dimensional qudit stabiliser ZX-calculus.
 Since \textcite{comfort_graphical_2021} proved that \(\ALR[\Zp]\)  is a projective representation for the odd-prime-dimensional qudit stabiliser ZX-calculus the year before (see section~\ref{sssec:stabiliser}), our generalization of this normal form to \(\ALR\) was a natural next-step. Although, annoyingly. instead of representing qubit stabiliser circuits, \(\ALR[\mathbb{F}_2]\) is isomorphic to the prop of  ``pure circuits in Spekken's toy model'' which was axiomatized separately by \textcite{spekzx}; hence the difficulty in giving correct attribution.

In fact, \textcite{cockett_categories_2022} had already remarked that any affine Lagrangian relation can be represented by a generator matrix in a way which is  analogous to the canonical form for a stabiliser state. When translated into string diagrams this coincides with the reduced AP-form we have given here;
however their statement is purely semantic, and does not concern the completeness of a graphical language.

\section{Graphical linear and affine coisotropic algebra}
\label{sec:AffCoisoRel}
In this section, we extend the presentation \(\ZX\) of \(\ALR\) to one for  \(\ACR\).  We also obtain a presentation of \(\CR\cong\IR\) as an immediate corollary. 

First, recall from subsection~\ref{ssec:scalable_gaa} that we can not only take the images of matrices, but we can also take images of affine (resp. linear) relations.  Because affine (resp. linear) Lagrangian relations are structure preserving affine (resp. linear) relations, we can also ask for their images:

\begin{definition}
  Given an affine Lagrangian relation \(L:k\to n\), the {\em image of \(L\)} is the image of the underlying affine relation \(2k\to 2n\):

  \[\tikzfig{figures/AffCoisoRel/image}\]

  And likewise, for the image of linear Lagrangian relations.
\end{definition}

Recall the following result of Comfort \cite[Theorem 4.2]{comfort_algebra_2023}:

\begin{lemma}
  \label{lem:coisotropicimage}
  The set of nonempty affine (resp. linear) coisotropic subspaces of a symplectic vector space \((\K^{2n},\omega_n)\) of dimension \(n+k\) are in bijection with the set of the  images of isometries of type \(k\to n\) in \(\ALR\) (resp. \(\LR\)).
\end{lemma}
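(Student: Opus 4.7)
The plan is to recognise both sides of the claimed bijection as subsets of the powerset of $\K^{2n}$, so the stated bijection is literally the identity map. The task therefore reduces to proving two inclusions: that the image of every isometry $k \to n$ is a nonempty coisotropic subspace of dimension $n+k$, and conversely that every such subspace arises as the image of some isometry.

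For the forward direction, fix an isometry $f : k \to n$. As a Lagrangian sub-affine space of $\K^{2k} \oplus \K^{2n}$ (equipped with the twisted form $\omega_k \ominus \omega_n$), $f$ has dimension $n+k$. I would first argue that the projection $\pi_2 : f \to \K^{2n}$ sending $(v,w) \mapsto w$ is injective on the underlying linear part: if both $(v,0)$ and $(v',0)$ were to lie in $f$ with $v \neq v'$, then computing the relational composition would place $(v,v')$ inside $f^\dag \circ f$, contradicting $f^\dag f = 1_k$. Hence $\dim\im(f) = n+k$. For coisotropy, pick $w \in \im(f)^{\omega_n}$ and consider the element $(0,w) \in \K^{2k} \oplus \K^{2n}$: for any $(v',w') \in f$ we have $\omega_k(0,v') - \omega_n(w,w') = 0$, so $(0,w)$ lies in the symplectic complement of $f$ in the twisted form, which by the Lagrangian property equals $f$ itself. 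Thus $w = \pi_2(0,w) \in \im(f)$. The affine case adds only an overall translation, handled by passing to the underlying linear subspace and shifting back.

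For the reverse direction I invoke coisotropic reduction. Given a nonempty coisotropic $C \subseteq \K^{2n}$ of dimension $n+k$, its symplectic complement $C^{\omega_n}$ is isotropic of dimension $n-k$ and contained in $C$, so the quotient $C / C^{\omega_n}$ inherits a well-defined symplectic form of rank $2k$. By the linear Darboux theorem (Theorem~\ref{thm:darboux}), fix a symplectomorphism $\psi : C/C^{\omega_n} \cong \K^{2k}$ and a basepoint $c_0 \in C$. Define
\[
f \coloneqq \{\,(\psi([c - c_0]),\, c) \mid c \in C\,\} \subseteq \K^{2k} \oplus \K^{2n}.
\]
A dimension count gives $\dim f = n+k$. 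Isotropy in the twisted form follows from the identity $\omega_k(\psi[a],\psi[b]) = \omega_n(a,b)$, which is well-posed precisely because translating a representative by $C^{\omega_n}$ does not change its symplectic pairing with other elements of $C$. So $f$ is Lagrangian, its image is visibly $C$, and the isometry condition $f^\dag f = 1_k$ reduces to the fact that $\psi \circ [\cdot]$ is a bijection onto $\K^{2k}$ with trivial affine fibres.

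The main technical obstacle is careful bookkeeping of the signs introduced by the $\dagger$-functor (which negates the $X$-component) and the twisted symplectic form in both directions of the argument. A secondary subtlety is the affine case: one must confirm that the choice of basepoint $c_0$ does not affect the resulting subspace and that the constructed $f$ genuinely lies in $\ALR$ rather than merely $\LR$. If these details prove fussy, an alternative approach is to first dispatch the linear case using a basis adapted to the coisotropic decomposition $C^{\omega_n} \subseteq C \subseteq \K^{2n}$, and then derive the affine case by translation, which is the strategy I expect will be cleanest to write up.
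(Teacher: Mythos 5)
The paper does not actually prove this lemma: it is imported from \cite[Theorem 4.2]{comfort_algebra_2023}, so there is no internal proof to compare against. Your overall strategy --- read both sides of the ``bijection'' as subsets of the powerset of \(\K^{2n}\) and prove two inclusions, with the forward direction by projecting a Lagrangian relation and the reverse direction by coisotropic reduction plus the linear Darboux theorem --- is the standard argument and surely the intended one. Your coisotropy computation with \((0,w)\) is exactly right, and it is a good observation that it uses only the Lagrangian property of \(f\), the isometry hypothesis entering solely through the dimension count.

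The genuine gap is the step you defer as ``sign bookkeeping'': verifying \(f^\dagger f = 1_k\) for \(f = \{(\psi([c-c_0]),c) \mid c \in C\}\). This is not bookkeeping --- it is the only part of the lemma that is not textbook symplectic linear algebra, since it is what ties the order-theoretic notion ``coisotropic subspace'' to the categorical notion ``isometry.'' Writing \(N\) for the \(X\)-negation, the paper's dagger is literally \(R^\dagger = \{(Nw,Nv) \mid (v,w) \in R\}\), and a direct computation gives \(f^\dagger f = \{(\psi[c],\, N\psi[Nc]) \mid c \in C,\ Nc \in C\}\); this equals \(1_k\) only if \(NC = C\) and \(\psi\) intertwines the negations, which a generic Darboux chart does not satisfy. (The same literal formula fails to make shears unitary, so part of the blame lies with the paper's stated convention --- but you cannot both quote the formula and skip the computation.) The clean repair, which also tightens your forward direction, is to first prove that for a nonempty Lagrangian relation \(f : k \to n\) the condition \(f^\dagger f = 1_k\) is equivalent to \(f\) being total with trivial kernel \(\{v \mid (v,\vec 0) \in f_0\}\) (for \(f_0\) the linear part), i.e.\ to the relational-converse identity \(f^{\mathsf{T}} f = 1_k\), and only then chase whatever sign twist the dagger carries. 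With that equivalence in hand, your \(f\) is an isometry because \(\psi\circ[\cdot] : C \to \K^{2k}\) is surjective and \((u,\vec 0) \in f\) forces \(u = \psi([\vec 0]) = \vec 0\), and your forward injectivity argument becomes a one-line consequence rather than an ad hoc manipulation of \(f^\dagger \circ f\). Your worry about basepoint-independence in the affine case is unnecessary: surjectivity of the correspondence only requires \emph{some} isometry with image \(C\), not a canonical one.
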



%

Therefore, we only need to add a single generator of type \(0\to 1\) to  \(\ZX\)  (resp. \(\ZX^0\)) to get a presentation for \(\ACR\) (resp. \(\CR\)).
Geometrically, this corresponds to the linear relation which relates the point to affine plane.  In fact, it will be slightly more convenient, but equivalent, to instead add its converse: the relation which discards the entire plane.

We  now must find which additional equations this generator satisfies.  To this end, we investigate more of the the structure of \(\ALR\) and \(\LR\).
First, we observe that there is an alternative presentation of affine/linear coisotropic subspaces due to \textcite[Theorem 4.5]{comfort_algebra_2023}:

\begin{lemma}
\label{lem:sameimagesameidempotent}
Given  \(L:k\to n\) and \(K:m\to n\) in \(\ALR\) (or in \(\LR\)), then:

\[ \im(L)= \im(K) \iff L L^\dag     = K K^\dag  \]
\end{lemma}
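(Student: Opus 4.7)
The plan is to prove the two directions of the equivalence separately. The backward direction $(\Leftarrow)$ follows from the observation that the second projection of $LL^\dag$, viewed as a subspace of $\K^{2n} \oplus \K^{2n}$, recovers $\im(L)$. Unfolding the relational composition, a pair $(y, y') \in LL^\dag$ iff there exists $x \in \K^{2k}$ with $(y, x) \in L^\dag$ and $(x, y') \in L$; projecting to $y'$ gives exactly $\im(L)$, using that the symplectic-Lagrangian structure guarantees a compatible witness $(y, x) \in L^\dag$ for every $x$ for which $(x, y') \in L$. So $LL^\dag = KK^\dag$ immediately forces $\im(L) = \pi_2(LL^\dag) = \pi_2(KK^\dag) = \im(K)$.

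The forward direction $(\Rightarrow)$ is the substantive one. If $\im(L) = \im(K) = \emptyset$ then both relations are empty and the claim is trivial; otherwise set $C := \im(L) = \im(K)$, a nonempty affine coisotropic subspace of $(\K^{2n}, \omega_n)$. By Lemma~\ref{lem:coisotropicimage}, there exists an isometry $L_0 : r \to n$ with $\im(L_0) = C$, where $r + n = \dim(C)$. The plan is to factor $L = L_0 \circ P$ and $K = L_0 \circ Q$ through $L_0$, with $P : k \to r$ and $Q : m \to r$ both coisometries. Given such a factorisation, the coisometry identities $PP^\dag = 1_r = QQ^\dag$ yield
\[
  LL^\dag = L_0\, P P^\dag\, L_0^\dag = L_0 L_0^\dag = L_0\, Q Q^\dag\, L_0^\dag = KK^\dag,
\]
which is exactly what we want. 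I would define $P := L_0^\dag \circ L$ and verify (a) that $L_0 \circ P = L$, i.e., the purification identity $L_0 L_0^\dag L = L$ holds whenever $\im(L) \subseteq \im(L_0)$, and (b) that $PP^\dag = L_0^\dag L L^\dag L_0 = 1_r$, which then follows from (a) by daggering and using the isometry relation $L_0^\dag L_0 = 1_r$.

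The main obstacle will be the purification identity $L_0 L_0^\dag L = L$, which is the symplectic analogue of the essential uniqueness of Stinespring's dilation from categorical quantum mechanics. I would prove it by unfolding the relational composition and using that $L_0$ is an isometry to show that $L_0 L_0^\dag$ acts as the identity on any Lagrangian relation whose image lies in $C = \im(L_0)$. Concretely, the isometry relation $L_0^\dag L_0 = 1_r$ implies that the isotropic subspace $L_0 \cap (\K^{2r} \times \{0\})$ controls the fibres of $L_0$ over $C$ uniquely, so that the double composition $L_0 L_0^\dag$ projects faithfully onto $C$ and is inert against anything already supported there. The affine case introduces affine-shift data which must be tracked alongside the linear parts, but the structural argument is otherwise identical to the linear case.
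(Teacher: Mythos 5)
The paper never proves this lemma itself: it is imported from \textcite[Theorem 4.5]{comfort_algebra_2023}, so there is no in-paper argument to compare yours against, and your proposal must stand on its own. It has the right architecture, but two steps are genuinely gapped. The backward direction rests entirely on the identity \(\im(L)=\pi_2(LL^\dag)\), which you justify by asserting that the Lagrangian structure ``guarantees a compatible witness \((y,x)\in L^\dag\) for every \(x\) for which \((x,y')\in L\).'' Unwinding the paper's definition of the dagger (relational converse conjugated by negation of the \(x\)-components, \(N:(z,x)\mapsto(z,-x)\)), such a witness exists iff \(x\in N(\mathrm{dom}\,L)\), and this is \emph{not} automatic in the affine case: for the affine Lagrangian effect \(L=\{((z,1),\bullet)\}:1\to 0\) one computes \(L^\dag=\{(\bullet,(z,-1))\}\), hence \(LL^\dag=\emptyset\) in characteristic \(\neq 2\) while \(\im(L)=\K^0\neq\emptyset\). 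So this step cannot be waved through; it forces you to engage with the precise dagger convention (the claim is immediate for the plain relational converse, and the apparent tension with the paper's stated dagger is worth raising with the authors rather than hiding).

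In the forward direction, the derivation of \(PP^\dag=1_r\) from the purification identity is circular: from \(L_0L_0^\dag L=L\) and its dagger you only recover \(PP^\dag=L_0^\dag LL^\dag L_0\), and to conclude this equals \(1_r\) you would already need \(LL^\dag=L_0L_0^\dag\), which is exactly the instance of the lemma you are trying to prove. The clean repair is to drop the factorisation through \(L_0\) and compute \(LL^\dag\) directly: for nonempty Lagrangian \(L\) with \(\im(L)=C\) and linear part \(C_0\), show that \(LL^\dag\) is the (suitably twisted and shifted) relation \(\left\{(y,y')\in C\times C \mid y-y'\in C_0^{\omega}\right\}\), which manifestly depends only on \(C\) and from which the injectivity of \(C\mapsto LL^\dag\) also falls out, giving both directions at once. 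Your sketch of the purification identity gestures at the right underlying fact --- that \(\{0\}\oplus C_0^{\omega}\subseteq L^{\omega}=L\) whenever \(\im(L)\subseteq C\) --- but as written it restates the goal rather than proving it, and the affine shifts are never actually tracked.
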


Second, observe that linear and affine Lagrangian relations have the following bipartite decomposition:

\begin{lemma}
\label{lem:decomposition}
  Any affine (resp. linear) Lagrangian relation \(L:n\to m\) can be decomposed in the form:
  \[\tikzfig{figures/AffCoisoRel/stdform}\]
  where \(S:n\to n \) and \(T: m \to m\) are affine (resp. linear) symplectomorphisms.
\end{lemma}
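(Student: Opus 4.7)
The strategy is to invoke the completeness result for $\ZX$ (Theorem~\ref{thm:equivalence}) and reduce the statement to a purely syntactic claim in $\ZX$, which we then establish using the reduced AP-form. It suffices to treat the affine case; the linear case then follows immediately by restricting all affine phases to zero and invoking the corollary to Theorem~\ref{thm:equivalence}.

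First, use the compact structure of $\ALR$ to bend the $n$ input wires of $L$ into outputs, obtaining an affine Lagrangian state $\hat L : 0 \to n+m$. By Proposition~\ref{prop:reduced_ap_form} and Remark~\ref{rem:scalable_reduced_AP-form}, $\hat L$ admits a reduced AP-form: internal spiders with phases of the form $(x_k, 0)$ connected to the $n+m$ boundary wires via an incidence matrix of shape $[1\ F]\varsigma$, together with boundary decorations given by a symmetric matrix $Y$ (encoding symplectic shears on boundary wires) and an affine-shift vector $\vec y$ (encoding symplectic translations) on the non-pivot wires.

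Next, note that each boundary decoration localises to one or two boundary wires and is itself an affine symplectic operation. After un-bending, the boundary wires split cleanly into the first $n$ (originally inputs of $L$) and the last $m$ (original outputs). Gathering those decorations that act purely on the input side into a single affine symplectomorphism $S : n \to n$, and those acting purely on the output side into $T : m \to m$, we obtain the desired bipartite factorisation. The middle piece consists of the internal-spider core together with the block-identity part of the incidence matrix and the cups introduced when un-bending -- which collectively form an affine Lagrangian relation.

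The key obstacle is that both the permutation $\varsigma$ in the AP-form and the off-diagonal entries of $Y$ that couple input-side wires with output-side wires can mix the two halves, preventing an immediate split into $S$ and $T$. Since these mixing pieces are themselves affine symplectic operations (permutations of symplectic pairs and two-wire shears are symplectic), they can be absorbed into the middle piece without disturbing its Lagrangian character. What survives after this absorption, sandwiched between $S$ acting on the inputs and $T$ acting on the outputs, realises $L$ in the form claimed. A secondary subtlety is that the affine shifts $\vec y$ must be checked not to spoil the splitting: the block structure of the reduced AP-form guarantees that shifts on pivot wires are zero, so each nonzero shift lives on a single boundary wire and is absorbed unambiguously into either $S$ or $T$.
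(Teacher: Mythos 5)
Your overall strategy (pass to a normal form and then split the boundary data across the input/output cut) is not unreasonable, but the argument as written has a genuine gap at exactly the point where the lemma has content. The middle piece in the claimed decomposition is not an arbitrary affine Lagrangian relation: it is the \emph{canonical} relation \(n \to k \to m\) (identity on \(k\) wires, canonical projection and injection on the remaining ones) determined solely by \(k = |\im(L)|\). If the middle were allowed to be any Lagrangian relation, the statement would be vacuous (take \(S = T = \mathrm{id}\) and the middle equal to \(L\)). Your final step --- ``since these mixing pieces are themselves affine symplectic operations \ldots they can be absorbed into the middle piece without disturbing its Lagrangian character'' --- therefore gives away the theorem: a two-wire shear coupling an input wire to an output wire is not of the form (map on inputs) \(\otimes\) (map on outputs), and absorbing it into the middle destroys the canonical form that the lemma asserts. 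The same objection applies to the permutation \(\varsigma\) and, more seriously, to the internal spiders of the reduced AP-form: after un-bending they are connected, via \([1\;F]\varsigma\), to boundary wires on \emph{both} sides of the cut, so they too straddle it, and you do not explain how they end up inside a middle piece of the required shape.

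The paper avoids the bending detour entirely. It puts \(L : n \to m\) itself into graph-like form and uses Propositions~\ref{prop:local_complementation} and~\ref{prop:pivot} to eliminate the internal structure, leaving a bipartite picture on the \(n+m\) boundary vertices: phases and mutual edges \((a,A)\) on the input side, \((b,B)\) on the output side, and a cross-edge matrix \(E \in \Matrices\). The one-sided data are genuinely local and go into \(S\) and \(T\); the entire difficulty is concentrated in \(E\), which is handled by factoring it as a surjection followed by an injection, and further as an isomorphism, then the canonical projection \(n \to k\) and injection \(k \to m\) with \(k = |\im(E)| = |\im(L)|\), then another isomorphism. The two isomorphisms are pulled out into \(S\) and \(T\), and only the canonical \(k\)-wire middle survives. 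This epi--mono factorisation of the cross-coupling is precisely the step your ``absorption'' skips; to repair your argument you would have to carry out the analogous factorisation of the off-diagonal block of \(Y\), together with the internal-spider connections, across the input/output cut.
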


Note that affine symplectomorphisms are precisely the isomorphisms which are precisely the unitaries in \(\ALR\), and similarly, for the linear symplectomorphisms in \(\LR\).
It seems that this decomposition is well-known, where it is related to ``coisotropic reduction'' as in \textcite[section~3.1]{weinstein_symplectic_2009}.
However, we arrived at this decomposition from a slightly different angle than it usually stated in the literature. 
For this reason, it is evident that this decomposition is an instance of the ``generalized compact singular value decomposition'' in the context of dagger categories of \textcite[definition~3.5]{MPI}.  So in some sense, this a symplectic version of the singular value decomposition of complex matrices.  Because of this similarity to the singular value decomposition, we have a symplectic version of the ``essential uniqueness of purification'' of mixed quantum circuits:

\begin{lemma}\label{lem:enough_iso}
	Given (affine) Lagrangian relations \(L:n \to a\) and \(K:n \to b\) such that  \(L^\dagger  L = K^\dagger  K\), and \(b\leq a\) then there exist an (affine) symplectomorphism \(H: b+ (a-b)\to a\), such that:
	\[\tikzfig{figures/AffCoisoRel/enoughiso}\]
\end{lemma}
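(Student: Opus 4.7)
The plan is to invoke the bipartite decomposition of Lemma~\ref{lem:decomposition} for both $L$ and $K$, and to use the hypothesis $L^\dagger L = K^\dagger K$ to match their input-side structure, so that the extra $a-b$ output dimensions of $L$ factor off as a Lagrangian tensor component.

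First I would apply Lemma~\ref{lem:decomposition} to $K$, writing $K = T_K \circ E_K \circ S_K$ with $S_K$ and $T_K$ (affine) symplectomorphisms on $n$ and $b$ respectively, and $E_K$ a canonical middle part. The next, and most delicate, step is to argue that $L$ admits a compatible decomposition $L = T_L \circ E_L \circ S_K$ sharing the same input-side symplectomorphism $S_K$, in which $E_L = E_K \otimes M_0$ for some standard Lagrangian state $M_0 : 0 \to a-b$ padding out the excess $a-b$ output dimensions. This should be forced by the fact that the idempotent $L^\dagger L = S_K^{-1} \circ (E_K^\dagger E_K) \circ S_K$ controls the input-side structure of any decomposition of $L$, while any remaining freedom on the input side (a stabiliser of $E_K^\dagger E_K$) can be absorbed into the output-side symplectomorphism $T_L$.

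Once the compatible decomposition is in place, the desired symplectomorphism is simply $H := T_L \circ (T_K^{-1} \otimes 1_{a-b}) : b + (a-b) \to a$, a composition of symplectomorphisms and hence itself a symplectomorphism. The equation displayed in the lemma then follows by a short calculation: using $T_K^{-1} \circ K = E_K \circ S_K$,
\[ H \circ (K \otimes M_0) = T_L \circ ((T_K^{-1} \circ K) \otimes M_0) = T_L \circ (E_K \otimes M_0) \circ S_K = T_L \circ E_L \circ S_K = L. \]

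The main obstacle will be the rigidity claim in the first step: that two Lagrangian relations with a common input-side idempotent admit decompositions à la Lemma~\ref{lem:decomposition} sharing a common input-side symplectomorphism $S$. This is the symplectic analogue of the essential uniqueness of the Schmidt decomposition in quantum theory. I would attempt to establish it either by a direct diagrammatic argument in the scalable notation of Section~\ref{ssec:scalable_symplectic}, exploiting the explicit standard form of $E_K$ and the ability to push stabilisers across it, or more abstractly by bending inputs into outputs and appealing to the uniqueness of the reduced AP-form (Proposition~\ref{prop:ap-unique}) applied to the resulting Lagrangian states.
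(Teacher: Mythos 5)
Your proposal follows essentially the same route as the paper: both proofs rest on the bipartite decomposition of Lemma~\ref{lem:decomposition} applied to $L$ and $K$, use $L^\dagger L = K^\dagger K$ (via Lemma~\ref{lem:coisotropicimage}) to equate the middle ranks $\ell = k$, and assemble $H$ from the output-side symplectomorphisms. The one organisational difference is that you try to first \emph{normalise} the two decompositions to share a common input-side symplectomorphism $S_K$, whereas the paper keeps the two mismatched input symplectomorphisms $S$ and $X$ and instead shows directly that the composite obtained by sandwiching $S X^{-1}$ between the canonical projection and injection is unitary; that unitary is then padded by an identity and absorbed into $H$. These are the same idea viewed from two sides: the ``rigidity claim'' you defer (that the stabiliser of $E_K^\dagger E_K$ on the input side can be pushed across the canonical middle part and absorbed into $T_L$) is precisely the content of the paper's unitarity computation, and it is the technical heart of the proof rather than a routine verification. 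Of the two strategies you suggest for establishing it, the direct diagrammatic argument is the one the paper carries out; the detour through the uniqueness of the reduced AP-form (Proposition~\ref{prop:ap-unique}) would also work but is heavier than needed. So the skeleton is right, but be aware that as written your proposal postpones exactly the step where all the work lives.
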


Combining lemmas~\ref{lem:sameimagesameidempotent} and~\ref{lem:enough_iso}, given two (affine) Lagrangian relations \(L\) and \(K\) with \(\im(L^\dag)=\im(K^\dag)\), then it follows that \(L\) and \(K\) differ only up to postcomposition by an isometry!
Recall the following definition from Carette et al. which quotients precisely by this congurence \cite[definition~7]{discard}:

\begin{definition}
Given a \dag-compact prop, \(\mathsf C\), the {\em discard construction} of \(\mathsf{C}^\disc\) is presented by freely adding a generator \(\tikzfig{figures/AffCoisoRel/discard}: 1\to 0\) to \(\mathsf C\) modulo the equations such that for all isometries  \(H:n\to m\) in \(\mathsf C\):

\[\tikzfig{figures/AffCoisoRel/discardconstruction}\]

\end{definition}

In other words, in the language of  \textcite[definition~10]{discard}, it follows immediately from lemma~\ref{lem:enough_iso} that \(\ALR\) and \(\LR\) both have ``enough isometries,'' and thus:

\begin{theorem}
\(\ZX^\disc\cong \ACR\) and \((\ZX^0)^\disc\cong\CR\).
\end{theorem}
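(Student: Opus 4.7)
Since the discard construction $(-)^\disc$ is functorial on \dag-compact props by construction, Theorem~\ref{thm:equivalence} reduces the claim to showing the semantic equivalences $\ALR^\disc\cong\ACR$ and $\LR^\disc\cong\CR$. My strategy is to apply the main theorem of Carette et al.\ \cite{discard}: for any \dag-compact prop $\mathsf{C}$ with ``enough isometries'' in the sense of \cite[Definition~10]{discard}, the morphisms of $\mathsf{C}^\disc$ are exactly the equivalence classes of morphisms of $\mathsf{C}$ under the relation $L\sim K$ iff $L L^\dag = K K^\dag$.

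The hypothesis of this theorem for $\ALR$ and $\LR$ is exactly the content of Lemma~\ref{lem:enough_iso}, which says that any two (affine) Lagrangian relations with $L^\dag L=K^\dag K$ are related by post-composition with an (affine) symplectomorphism after possibly padding with ancillary wires. Granting this, I would identify the resulting quotient with $\ACR$ (resp.\ $\CR$) by combining two further facts established earlier: by Lemma~\ref{lem:sameimagesameidempotent}, the congruence $L L^\dag = K K^\dag$ coincides with $\im(L)=\im(K)$; and by Lemma~\ref{lem:coisotropicimage}, the images of (affine) Lagrangian relations are in bijection with the nonempty (affine) coisotropic subspaces of the appropriate type. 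The empty coisotropic subspace is recovered from the unique empty-semantics diagram pinned down by Proposition~\ref{prop:zero_normal_forms}.

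The main obstacle is checking that this bijection at the level of morphism classes is compositional, i.e.\ that composing chosen representatives and then taking images matches the relational composition of coisotropic relations inherited from $\AR$. This holds because the image of a composite Lagrangian relation is the coisotropic composite of the images up to post-composition with an isometry, and such post-isometries are precisely what discarding quotients out — this is the whole content of the ``enough isometries'' formalism of \cite{discard}. Once this bookkeeping is carried out — essentially a re-run of the argument of Carette et al.\ specialised to our setting — the two claimed isomorphisms follow immediately, with the linear statement proved by the same argument applied to $\LR$ in place of $\ALR$.
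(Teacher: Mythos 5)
Your proposal follows essentially the same route as the paper: the paper likewise deduces the theorem by observing that Lemma~\ref{lem:enough_iso} shows $\ALR$ and $\LR$ have ``enough isometries'' in the sense of Carette et al., so that the discard construction quotients exactly by the congruence $LL^\dag = KK^\dag$, which Lemmas~\ref{lem:sameimagesameidempotent} and~\ref{lem:coisotropicimage} identify with equality of images and hence with (affine) coisotropic subspaces. Your additional remarks on the empty coisotropic subspace and on compositionality of the identification are sensible bookkeeping that the paper leaves implicit, but they do not change the argument.
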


Luckily, the generators for (affine) symplectomorphisms, and by immediate consequence, the  isometries are well-known; thus:

\begin{corollary}
\label{cor:disc}
\(\ZX^\disc\cong\ACR\) is presented by the adding the generator \(\tikzfig{figures/AffCoisoRel/discard}: 1\to 0\)  to \(\ZX\) in addition to the equations discarding the generating isometries for all \(a,b\in\K\):

\[\tikzfig{figures/AffCoisoRel/eqdisc}\]

where \((\ZX^0)^\disc\cong\CR\) is presented by adding the same generator to  \(\ZX^0\)  and restricting the grey and white phase groups to \(\K\) along the embeddings \(b \mapsto (0,b)\).
\end{corollary}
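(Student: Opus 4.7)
The plan is to leverage the preceding theorem, which already establishes $\ZX^\disc \cong \ACR$ via the discard construction, and then to show that the infinite family of discard equations required by that construction (one for every isometry $H : n \to m$) is derivable from the short finite list displayed in the corollary. In one direction there is nothing to do: each displayed equation is clearly sound in $\ACR$, since the diagrams being discarded are built from affine symplectomorphisms, which are in particular isometries.

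The main reduction is that the property ``discarding after $H$ equals discarding before $H$'' propagates through composition and tensor of isometries. If $H_1 : n \to m$ and $H_2 : m \to k$ are isometries each satisfying this equation, then so does $H_2 \circ H_1$, simply by inserting $H_1$ into the equation for $H_2$ and then applying the equation for $H_1$; an analogous splitting handles $H_1 \otimes H_2$, using that discarding a joint wire factors as the tensor of two discards. Hence it suffices to verify the discard equation on a generating set of isometries under composition and tensor. Using Lemma~\ref{lem:decomposition}, every affine Lagrangian relation factors as an affine symplectomorphism composed with a cup-shaped piece composed with another symplectomorphism; for the subclass of isometries the middle piece degenerates to an embedding built from $\eta$, which is itself derivable from the spider generators via the compact structure and which satisfies the discard equation already as a consequence of the spider rules in the displayed list. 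Consequently the problem collapses to checking the discard equation on a generating set for the group of affine symplectomorphisms.

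The heart of the argument, and the main obstacle, is therefore purely algebraic: exhibiting the specific diagrams in the displayed equations as a generating set for the affine symplectic group appearing in $\ASymp$ under composition and tensor. Here I would invoke the normal-form result Theorem~\ref{thm:equivalence} to rewrite any affine symplectomorphism as a diagram in the grey and white spiders (with their two-component phases) and the boxes, and then check that at each step of the resulting decomposition the intermediate piece is itself invertible, hence an isometry to which the two closure principles above apply. The linear case $(\ZX^0)^\disc \cong \CR$ is recovered by the same argument, restricting all phases to the distinguished subgroup and dropping equations that invoke a nonzero affine shift.
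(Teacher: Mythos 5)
Your proposal is correct and takes essentially the same route as the paper: the paper likewise reduces the infinite family of equations demanded by the discard construction to the displayed finite list by noting that the discard equation is closed under composition and tensor of isometries, that every isometry factors through affine symplectomorphisms via Lemma~\ref{lem:decomposition}, and that the affine symplectic group (hence the class of isometries) is generated by the displayed diagrams. The only difference is one of exposition --- the paper compresses this reduction into a single sentence, whereas you spell out the closure and factorisation steps explicitly.
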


\section{Applications}
\label{sec:Applications}
In mechanics, the state of a system is often given by the configurations of positions, and the evolution of the system is described by how the positions change over time.  However, it is often useful to take a different perspective where states are determined by their abstract positions and momenta, ie. where they are elements of the {\em phase space}.  From this perspective, the evolution of the system is described by how the positions and momenta change over time; in other words, the evolution is determined by the {\em Hamiltonian flow} of particles through the system.  

\subsection{Classical mechanics}
\label{subsec:classical}
In classical mechanics, the only valid reversible transformations are those which preserve the oriented local volume between position and momentum: ie the (affine) symplectormorphisms.
In other words, symplectic geometry is the geometry of the phase space picture of classical mechanics.
Recall from theorem~\ref{thm:darboux} that a symplectic vector space can be decomposed into two parts: a \(Z\) and \(X\) grading. The elements of these gradings are interpreted respectively as abstract  configurations of {\em positions} and {\em momenta}. There are different notions of positions and momenta  in different classical mechanical settings (adapted from \textcite[page~23, table~2.1]{bondgraph} and \textcite{thetable}):\\

\hfil
\begin{tabular}{||l|llll||}
\hline
{\it Classical}     		      & 	  &			       &                                         &\\
{\it mechanics}			     & Z        & \({d Z}/{d t}\)   & \(X\)                                  & \({d X}/{d t}\)\\
 \hline
{\bf Translation}                     & position & velocity         & momentum                      & force\\
{\bf Electronic}                      & charge   & current          & flux linkage                      & voltage\\
{\bf Hydraulic}                       & volume  & flow               &  pressure momentum      & pressure\\
{\bf Thermal}                         & entropy & entropy flow  & temperature momentum & temperature\\
\hline
\end{tabular}\\~

The connection between affine Lagrangian relations and electrical circuits is well documented, therefore we focus on this example. Baez et al. use  linear Lagrangian relations to give a semantics for passive linear circuits: an idealized class of electrical circuits with linear behaviour  \cite{passive}.  We will review this work from a string diagrammatic perspective.
To this end, recall the physical laws respectively due to Ohm \cite{ohm} and  Kirchhoff \cite{Kirchhoff1845}:
\begin{description}
\item[Ohm's law:]  The voltage around the node in a circuit is equal to the current multiplied by the resistance.
\item[Kirchhoff's current law:] The sum of currents flowing into a node is equal to the sum of currents flowing out of the node.
\end{description}

Given a resistor with resistance \(r\in \R^{+}\), by Kirchhoff's current  law the incoming current is equal to the outgoing current; and by By Ohm's law, the outgoing potential is equal to the current multiplied by \(r\) plus the incoming potential.  Therefore we can interpret electrical circuits composed of resistors in \(\zx_\R\):
\[
\interp{
\begin{circuitikz}
	\node [style=none] (3) at (0, 1) {};
	\node [style=none] (4) at (0, -1) {};
	\node [style=none] (5) at (2, -1) {};
	\node [style=none] (6) at (2, 1) {};
	\draw [style=background] (4.center)
		 to (3.center)
		 to (6.center)
		 to (5.center)
		 to cycle;
	\draw (0,0) to [resistor, R=$r$] (2,0);
\end{circuitikz}
}_{\zx_\R}^\ECirc
=
\tikzfig{figures/Applications/electrical/resistor_interpretation}
\]

White spider fusion entails that composing resistors in sequence adds resistance:
\begin{equation}
  \tikzfig{figures/Applications/electrical/serie}
\end{equation}

Consider an ideal junction of \(n\) incoming wires connected to \(m\) outgoing wires with no resitance.  By Kirchhoff's current law, the sum of the currents of the incoming wires is equal to the sum of the currents of the outgoing wires.  Moreover, by Ohm's law, because resistance is zero, the   incoming and outgoing potentials are all made to be equal.  Therefore an ideal junction of \(n\) incoming wires and \(m\) outgoing wires is interpreted as the following Lagrangian relation:
\[
\interp{
\tikzfig{figures/Applications/electrical/junction_symbol}
}_{\zx_\R}^\ECirc
=
\tikzfig{figures/Applications/electrical/junction_interpretation}
\]

A circuit of resistors with resistances \(r_0\) and \(r_1\) composed in parallel can therefore be simplified using pivoting:\\

\begin{align}
\tikzfig{figures/Applications/electrical/resistors_in_parallel/0}
&\stackeqmid{\Fusion}
  \tikzfig{figures/Applications/electrical/resistors_in_parallel/0-1}
\stackeqmid{\minilemref{lem:antipode_spider}}
  \tikzfig{figures/Applications/electrical/resistors_in_parallel/1}
\stackeqmid{\Bigebra}
  \tikzfig{figures/Applications/electrical/resistors_in_parallel/2}\\
\stackeq{\minilemref{lem:symplectic_states}}
  \tikzfig{figures/Applications/electrical/resistors_in_parallel/3}
\stackeqmid{\Fusion}
  \tikzfig{figures/Applications/electrical/resistors_in_parallel/4}
\stackeqmid{\minilemref{lem:symplectic_states}}
  \tikzfig{figures/Applications/electrical/resistors_in_parallel/5}
\stackeqmid{\Fusion}
  \tikzfig{figures/Applications/electrical/resistors_in_parallel/6}
\end{align}

This fragment of eletrical circuits composed of circuits of resistors with junctions of wires has recently been given a complete equational theory by  \textcite{amolakresistors}; their completeness result relies on the ``star-mesh'' transformations, which are linear versions of the pivoting rule which we discussed in proposition~\ref{prop:pivot}.
However, we are working in a larger semantic domain,  therefore more generators can be expressed using our string diagrams.
For example, by working in affine Lagrangian relations, now we can nudge the position of momentum by a constant factor. Using results from  the subsequent work of Baez et al., voltage and current sources with voltage \(v \in \R^{+}\) and current \(\iota \in \R^{+}\) have the following interpretations \cite{network}:
\[
\interp{
\begin{circuitikz}[/tikz/circuitikz/bipoles/length=1cm]
	\node [style=none] (3) at (0, 1) {};
	\node [style=none] (4) at (0, -1) {};
	\node [style=none] (5) at (2, -1) {};
	\node [style=none] (6) at (2, 1) {};
	\draw [style=background] (4.center)
		 to (3.center)
		 to (6.center)
		 to (5.center)
		 to cycle;
	\draw (2,0) to  (0,0);
	\node [style=vsourceAMshape, fill=white, xscale=-1] at (1,0) {};
	\node [style=none] (0) at (1, 1) {$v$};
\end{circuitikz}
}_{\zx_\R}^\ECirc
=
\tikzfig{figures/Applications/electrical/voltage_source_interpretation}
\qquad\qquad
\interp{
\begin{circuitikz}
	\node [style=none] (3) at (0, 1) {};
	\node [style=none] (4) at (0, -1) {};
	\node [style=none] (5) at (2, -1) {};
	\node [style=none] (6) at (2, 1) {};
	\draw [style=background] (4.center)
		 to (3.center)
		 to (6.center)
		 to (5.center)
		 to cycle;
	\draw (2,0) to  (0,0);
	\node [style=isourceAMshape, fill=white, xscale=-1] at (1,0) {};
	\node [style=none] (0) at (1, 1) {$\iota$};
\end{circuitikz}
}_{\zx_\R}^\ECirc
=
\tikzfig{figures/Applications/electrical/current_source_interpretation}
\]

We obtain an elegant proof that composing voltage sources in parallel with different voltages \(v_0\neq v_1\) leads to contradiction:

\begin{align}
\tikzfig{figures/Applications/electrical/voltage_sources_in_parallel/0}
&\stackeqmid{\Fusion}
  \tikzfig{figures/Applications/electrical/voltage_sources_in_parallel/0-1}
\stackeqmid{\minilemref{lem:antipode_spider}}
  \tikzfig{figures/Applications/electrical/voltage_sources_in_parallel/1}
\stackeqmid{\Bigebra}
  \tikzfig{figures/Applications/electrical/voltage_sources_in_parallel/2}\\
\stackeq{\Copy}
  \tikzfig{figures/Applications/electrical/voltage_sources_in_parallel/3}
\stackeqmid{\Fusion}
  \tikzfig{figures/Applications/electrical/voltage_sources_in_parallel/4}
\stackeqmid{\minipropref{prop:zero_normal_forms}}
  \tikzfig{figures/Applications/electrical/voltage_sources_in_parallel/6}
\end{align}

Our graphical simplifications of the parallel composition of resistors and voltage sources should be compared with the ``doubled'' proofs of Bonchi et al., where they represent the voltage and current gradings as separate wires  in \(\GAA[\R]\) \cite{gaa}.

Baez et al. showed how discretized time-dependent electical circuits can be interepreted in affine Lagrangian relations over the real rational functions \(\R(s)\) \cite{network}.  In this setting, the \(j^\text{th}\) element of the stream is encoded as the coefficient of the \(j^\text{th}\) power of the polynomial indeterminate \(s\).  Taking the derivative and integrating with respect to time are therefore represented by multiplication and division by the polynomial indeterminate \(s\).
Applying  Kirchhoff's current law and Ohm's law,  (linear) inductors with inductance  \(\ell \in \R^+\)  and  (linear) capacitors with capacitance \(c \in \R^+\)  can therefore be interpreted as follows:
\[
\interp{
\begin{circuitikz}
	\node [style=none] (3) at (0, 1) {};
	\node [style=none] (4) at (0, -1) {};
	\node [style=none] (5) at (2, -1) {};
	\node [style=none] (6) at (2, 1) {};
	\draw [style=background] (4.center)
		 to (3.center)
		 to (6.center)
		 to (5.center)
		 to cycle;
	\draw (0,0) to [inductor, L=$\ell$] (2,0);
\end{circuitikz}
}_{\zx_{\R(s)}}^\ECirc
=
\tikzfig{figures/Applications/electrical/inductor_interpretation}
\qquad
\interp{
\begin{circuitikz}
	\node [style=none] (3) at (0, 1) {};
	\node [style=none] (4) at (0, -1) {};
	\node [style=none] (5) at (2, -1) {};
	\node [style=none] (6) at (2, 1) {};
	\draw [style=background] (4.center)
		 to (3.center)
		 to (6.center)
		 to (5.center)
		 to cycle;
	\draw (0,0) to [capacitor, C=$c$] (2,0);
\end{circuitikz}
}_{\zx_{\R(s)}}^\ECirc
=
\tikzfig{figures/Applications/electrical/capacitor_interpretation}
\]

Unlike the previous approaches, the scalable notation for \(\ZX\) allows us to compactly represent electrical circuits.  For example, recall that an impedance matrix is an \(n\times n\) real matrix which describes the impedance between ports in an electrical network (see the textbook of \textcite[section~4.2]{Pozar} for reference).  An electrical network is said to be reciprocal, when the relation of impedance between nodes is symmetric: that is to say, when the  corresponding impedance matrix is an element of \(\Sym[n][\R]\).  In the scalable notation for \(\ZX[\R]\), an impedance matrix  is represented by a thickened, scalable version of the resistor, which carries a symmetric matrix \(R\in \Sym[n][\R]\) as a phase, rather than a scalar \(r\in \R\):

\begin{equation}
  \tikzfig{figures/Applications/electrical/impendance_interpretation}
\end{equation}

This raises the question if the constructions from Boisseau et al.'s  impedance calculus \cite{Boisseau2022} such as ``impedance boxes'' can be represented succinctly in the scalable notation for \(\ZX\).

It is  important to reiterate that the electrical circuits semantics is only one example.  We could have just as easily chosen any row from the table as our semantics.  For example, in fluid mechanics we would replace idealized wires with idealized pipes through which fluid flows, resistors with narrowing pipes and so on.  Or more conceptually, some linear/affine network through which tokens are flowing throughout.  For example, consider an idealized network of cars driving along roads, where the flow of cars is assumed to be linear/affine.  Introducing a bottleneck to the flow of traffic, such as reducing the number of lanes, would be analagous to a resistor.

\subsection{Quantum mechanics}
Classical mechanics can be quantized in two different, but connected ways.
\subsubsection{Stabiliser quantum mechanics in odd-prime dimensions}
\label{sssec:stabiliser}
Given an odd prime \(p\),  Comfort and Kissinger showed that the \dag-compact prop of affine Lagrangian relations over \(\Zp\) is isomorphic to the \dag-compact prop \(\Stab\) of odd-prime-dimensional pure qudit stabiliser circuits, modulo scalars \cite{comfort_graphical_2021}.  If one doesn't allow symplectic-phases, then \(\ZX[\mathbb{F}_2]\) restricts to a semantics for qubit ``CSS-circuits.''  
Here, the \(Z\) and \(X\) gradings correspond respectively to the Pauli \(Z\) and Pauli \(X\) observables, and the symplectic form captures the commutation of Pauli operators. Therefore, we can add another row to our table:\\

\hfil
\begin{tabular}{||l|llll||}
\hline
 {\it Quantum mechanics}                     & $Z$          & ${d Z}/{d t}$               & $X$                              & ${d X}/{d t}$\\ \hline
{\bf Stabiliser circuits}                             & Pauli $Z$ & Pauli $Z$ flow         & Pauli $X$                      & Pauli $X$ flow\\
\hline
\end{tabular}\\~

Denote the induced prop isomorphism by \(\interp{-}_{\zx_\Zp}^{\Stab}:\Stab\to\zx_\Zp\).
Following the convention of Gottesman \cite{gottesman_fault-tolerant_1999}, the  \emph{controlled-\(\mathcal X\)}, \emph{Fourier},
 \emph{phase}, and \emph{Pauli \(\mathcal X\)} gates correspond to the generators of the prop for the affine symplectic groupoid, or ``general affine symplectic group'':
\begin{align}
  \interp{\tikzfig{figures/Applications/stabiliser/cnot_circuit}}_{\zx_\Zp}^{\Stab} = \tikzfig{figures/Applications/stabiliser/cnot_ZX}
    \qquad\qquad
  &\interp{\tikzfig{figures/Applications/stabiliser/hadamard_circuit}}_{\zx_\Zp}^{\Stab} = \tikzfig{figures/Applications/stabiliser/hadamard_ZX}\\
  \interp{\tikzfig{figures/Applications/stabiliser/phase_circuit}}_{\zx_\Zp}^{\Stab} = \tikzfig{figures/Applications/stabiliser/phase_ZX}
    \qquad\qquad
  &\interp{\tikzfig{figures/Applications/stabiliser/X_circuit}}_{\zx_\Zp}^{\Stab} = \tikzfig{figures/Applications/stabiliser/X_ZX}
\end{align}
Moreover, the ground state and effect have the following interpretations:
\begin{equation}
\interp{\tikzfig{figures/Applications/stabiliser/ketzero_circuit}}_{\zx_\Zp}^{\Stab} = \tikzfig{figures/Applications/stabiliser/ketzero_ZX}
    \qquad\qquad
\interp{\tikzfig{figures/Applications/stabiliser/brazero_circuit}}_{\zx_\Zp}^{\Stab} = \tikzfig{figures/Applications/stabiliser/brazero_ZX}
\end{equation}

The dagger of \(\ZX[\Zp]\) corresponds to the Hermitian adjoint in \(\Stab\).
Therefore, we find that \(\ZX[\Zp]\) is essentially the same presentation given by Po\'or et al. for odd-prime-dimensional pure qudit stabiliser circuits, where we instead quotient by scalars \cite{poor_qupit_2023}.  Note that our white spiders correspond to the red spiders of Po\'or et al. and our grey spiders correspond to their green spiders.

However, the symplectic perspective on stabiliser circuits which we take here gives a unique notational advantage over the traditional Hilbert spaces semantics which is used by \textcite{poor_qupit_2023} .
 Given
the adjacency matrix \(G\) of an \(\Zp\)-edge-weighted
graph on \(n\) vertices, the associated \emph{quantum graph state}
\cite{zhou_quantum_2003, hall_cluster_2007, brell_generalized_2015} is
precisely
 \(\interp{\tikzfig{figures/Applications/stabiliser/graph_state}}\!{\substack{\Stab\\ \zx_\Zp}}\)
(up to a normalisation scalar). Therefore, it is very natural to work with graph states in \(\ZX\).  In our presentation one can simultanously exploit  the geometric nature of string diagrams as well as the algebraic representation of  adjacency matrices: here both the string diagrams are graphs, and graphs are string diagrams! Such a natural and integrated string diagrammatic representation of graph states is very important because of their widespread usage in the quantum computing literature \cite{keet_quantum_2010, marin_access_2013,
proctor_ancilla-driven_2017, reimer_high-dimensional_2019, joo_logical_2019,
booth_outcome_2023}.

Comfort showed that, for odd prime \(p\),  the prop of affine coisotropic relations over  \(\Zp\) is isomorphic to odd-prime-dimensional qudit {\em mixed} stabiliser circuits modulo scalars, \(\stab_p^\disc\) \cite{comfort_algebra_2023}.  Here, the discard relation which we added to \(\ZX\) to obtain \(\ZXdisc\) corresponds to the quantum discarding map:
\begin{equation}
\interp{\tikzfig{figures/Applications/stabiliser/discard}}_{\zx_\Zp^\disc}^{\stab_p^\disc} = \tikzfig{figures/Applications/stabiliser/discard}
\end{equation}

The states in mixed stabiliser circuits correspond to {\em stabiliser codes}.  Stabiliser codes are used extensively in quantum error correction, as witnessed by the success of Gottesman's PhD thesis on the subject \cite{gottesman}.  Stabiliser codes are essentially stabiliser states which have been perturbed by measurement, which gives them extra redundancy to protect information against errors.

Combining our work with that of \textcite{comfort_algebra_2023},  one obtains a presentation of the 2-coloured prop of stabiliser circuits with both quantum and classical types mediated by preparation and measurement maps. In fact, this has nothing to do with prime fields, and can be constructed by the coproduct of \(\ZX\) and \(\GAA\) along an isometry witnessing the embedding of \dag-compact props \(\GAA\hookrightarrow\ZX\) exhibited in definition~\ref{def:GSA}.  In the stabiliser circuit semantics, this isometry corresponds to the ``state preparation map,'' and its adjoint to the ``measurement map.'' But more generally, this provides a syntax for mechanical systems with affine control.

\subsubsection{Picturing quantum optics}

The \(\LOv\)-calculus is a complete diagrammatic language for reasoning about
\emph{passive linear} quantum optics \cite{clement_lov-calculus_2022}. It is
a prop defined on the following set of generators, for all \(\theta, \varphi\in\R\):  
\begin{itemize}
  \item \textbf{phase shifter},  \textbf{wave plate}, \textbf{beamsplitter}, and \textbf{polarising beamsplitter}:
  \begin{equation*}
    \tikzfig{figures/Applications/lov/convtp-phase-shift}:1 \to 1
    \qquad \tikzfig{figures/Applications/lov/pol-rot}:1 \to 1
    \qquad \tikzfig{figures/Applications/lov/beamsplitter}:2 \to 2
    \qquad \tikzfig{figures/Applications/lov/bs}:2 \to 2
  \end{equation*}
  \item \textbf{vacuum state} and \textbf{effect}:
  \begin{equation*}
    \tikzfig{figures/Applications/lov/gene-0}:0 \to 1 \qquad \tikzfig{figures/Applications/lov/detector-0}:1 \to 0
  \end{equation*}
\end{itemize}

We  can interpret a subset of the generators of
\(\LOv\) in \(\GAA[\R]\cong\AR[\R]\).  Let \(\LOv'\) denote the fragment \(\LOv\) generated only by the phase shifters, wave plates, beamsplitters and polarizing beamsplitters. 
Identifing the generators
of \(\LOv'\) with elements of the affine orthogonal group
we obtain a symmetric monoidal functor \(\interp{-}_{\GAA[\R]}^{\LOv'}:{\LOv'}\to\GAA[\R] \).
On objects it is given by doubling \(\interp{n}_{\GAA[\R]}^{\LOv'}=2n\), and on the
generators by:

\[
\begin{array}{rlcrl}
  \label{eq:optics}
  \interp{\tikzfig{figures/Applications/lov/pol-rot}}_{\GAA[\R]}^{\LOv'}
    =&\ \tikzfig{figures/Applications/lov/rotation_ZX} &\quad&
    \interp{\tikzfig{figures/Applications/lov/convtp-phase-shift}}_{\GAA[\R]}^{\LOv'}
     =&\ \tikzfig{figures/Applications/lov/phase_ZX} \\ \\
    \interp{\tikzfig{figures/Applications/lov/bs}}_{\GAA[\R]}^{\LOv'}
     =&\ \tikzfig{figures/Applications/lov/polarising_bs}&\quad&
    \interp{\tikzfig{figures/Applications/lov/beamsplitter}}_{\GAA[\R]}^{\LOv'}
     =&\ \tikzfig{figures/Applications/lov/beamsplitter_ZX}
\end{array}
\]

Here, there are some terminological clashes.
Passive optical components correspond to \emph{affine} relations whereas the passive
electric circuits discussed in subsection~\ref{subsec:classical} are \emph{linear} relations. Furthermore, the phase-shifter in this context of quantum optics should therefore not be confused with the phase-shift gate in the context of stabiliser circuits which we discussed in subsubsection~\ref{sssec:stabiliser}.

By working in the image of the embedding \(\interp{n}_{\GAA[\R]}^{\LOv'}\) we already have the ability to model 
additional quantum-optical components that are not present in
\(\mathsf{LOv}\). For instance, active squeezing of the vertical polarisation
mode is given, for \(\eta \in \R\), by:
\begin{equation}
  \tikzfig{figures/Applications/lov/squeezing}
\end{equation}
Furthermore \(\GAA[\R]\) is compact closed, as opposed to the usual infinite dimensional Hilbert space semantics,  so by contrast we have much more freedom for rewriting in this setting.

Extending along the embedding \(\GAA[\R] \to\ZX[\R]\),
we can now also describe more components of Gaussian quantum optics,
since these correspond to elements of the affine \emph{symplectic} group
\cite{weedbrook_gaussian_2012}. 
%
In this setting, shearing between the position and momentum
operators takes the form:
\begin{equation}
  \tikzfig{figures/Applications/lov/shear}
\end{equation}

However, one can not interpret the vacuum state in
\(\ALR[\R]\), because this only admits ``infinitely-squeezed'' states. In our forthcoming paper
\cite{booth_complete_2023}, we will also show that there is a symmetric monoidal embedding \(\interp{-}^\LOv_{\ZX[\C]} : \LOv
\to \ZX[\C]\). This interpretation extends the one above and is given
on the vacuum state and effect by:
\begin{align}
  \interp{\tikzfig{figures/Applications/lov/gene-0}}^\LOv_{\ZX[\C]}
  = \tikzfig{figures/Applications/lov/vacuum_state}
  \qquad\qquad
 \interp{\tikzfig{figures/Applications/lov/detector-0}}^\LOv_{\ZX[\C]}
 = \tikzfig{figures/Applications/lov/vacuum_effect}
\end{align}
lifting the diagrammatic characterisation of quopit stabiliser quantum
mechanics given in \cite{poor_qupit_2023} to the setting of Gaussian quantum
mechanics \cite{weedbrook_gaussian_2012}.

\section*{Acknowledgement}
CC is supported by the project NEASQC funded from the
European Union’s Horizon 2020 research and innovation programme (grant
agreement No 951821).

\addcontentsline{toc}{section}{References}
{\raggedright\printbibliography}

@article{robin,
	title = {Weakly distributive categories},
	volume = {114},
	issn = {0022-4049},
	url = {https://www.math.mcgill.ca/rags/linear/Cockett-Seely-WDC-Durham1991-corrected.pdf},
	doi = {10.1016/0022-4049(95)00160-3},
	pages = {133--173},
	number = {2},
	journaltitle = {Journal of Pure and Applied Algebra},
	shortjournal = {Journal of Pure and Applied Algebra},
	author = {Cockett, J. R. B. and Seely, R. A. G.},
	urldate = {2023-10-03},
	date = {1997-01-13},
}

@article{duncan_graph-theoretic_2020,
	title = {Graph-theoretic Simplification of Quantum Circuits with the {ZX}-calculus},
	volume = {4},
	issn = {2521-327X},
	doi = {10.22331/q-2020-06-04-279},
	pages = {279},
	journaltitle = {Quantum},
	shortjournal = {Quantum},
	author = {Duncan, Ross and Kissinger, Aleks and Perdrix, Simon and van de Wetering, John},
	date = {2020-06-04},
}

@article{bouchet_circle_1994,
	title = {Circle Graph Obstructions},
	volume = {60},
	issn = {0095-8956},
	doi = {10.1006/jctb.1994.1008},
	pages = {107--144},
	number = {1},
	journaltitle = {Journal of Combinatorial Theory, Series B},
	shortjournal = {Journal of Combinatorial Theory, Series B},
	author = {Bouchet, A.},
	date = {1994-01-01},
	langid = {english},
}

@article{kante_rank-width_2013,
	title = {The Rank-Width of Edge-Colored Graphs},
	volume = {52},
	issn = {1432-4350, 1433-0490},
	url = {http://arxiv.org/abs/0709.1433},
	doi = {10.1007/s00224-012-9399-y},
	pages = {599--644},
	number = {4},
	journaltitle = {Theory of Computing Systems},
	shortjournal = {Theory Comput Syst},
	author = {Kante, Mamadou Moustapha and Rao, Michael},
	urldate = {2021-03-25},
	date = {2013-05},
	langid = {english},
	eprinttype = {arxiv},
	eprint = {0709.1433},
}

@inproceedings{poor_qupit_2023,
	location = {Paris, France},
	title = {The Qupit Stabiliser {ZX}-travaganza: Simplified Axioms, Normal Forms and Graph-Theoretic Simplification},
	volume = {384},
	doi = {10.4204/EPTCS.384.13},
	shorttitle = {The Qupit Stabiliser {ZX}-travaganza},
	eventtitle = {Twentieth International Conference on Quantum Physics and Logic},
	pages = {220--264},
	booktitle = {Electronic Proceedings in Theoretical Computer Science},
	author = {Poór, Boldizsár and Booth, Robert I. and Carette, Titouan and Van De Wetering, John and Yeh, Lia},
	date = {2023-08-29},
	langid = {english},
}

@inproceedings{comfort_graphical_2021,
	title = {A Graphical Calculus for {L}agrangian Relations},
	volume = {372},
	doi = {10.4204/EPTCS.372.24},
	eventtitle = {Applied Category Theory 2021},
	pages = {338--351},
	booktitle = {{arXiv}.org},
	publisher = {{EPTCS}},
	author = {Comfort, Cole and Kissinger, Aleks},
	date = {2021-05-13},
	langid = {english},
	doi = {10.4204/EPTCS.372.24},
}

@misc{comfort_algebra_2023,
	title = {The Algebra for Stabilizer Codes},
	doi = {10.48550/arXiv.2304.10584},
	number = {{arXiv}:2304.10584},
	publisher = {{arXiv}},
	author = {Comfort, Cole},
	date = {2023-10-11},
	eprinttype = {arxiv},
	eprint = {2304.10584 [quant-ph]},
}

@inproceedings{carette_szx-calculus_2019,
	location = {Dagstuhl, Germany},
	title = {{SZX}-calculus: Scalable Graphical Quantum Reasoning},
	volume = {138},
	isbn = {978-3-95977-117-7},
	doi = {10.4230/LIPIcs.MFCS.2019.55},
	series = {Leibniz International Proceedings in Informatics ({LIPIcs})},
	shorttitle = {{SZX}-calculus},
	pages = {55:1--55:15},
	booktitle = {44th International Symposium on Mathematical Foundations of Computer Science ({MFCS} 2019)},
	publisher = {Schloss Dagstuhl–Leibniz-Zentrum für Informatik},
	author = {Carette, Titouan and Horsman, Dominic and Perdrix, Simon},
	date = {2019},
	keywords = {Quantum Physics},
}

@unpublished{carette_note_2020,
	title = {A note on diagonal gates in {SZX}-calculus},
	url = {http://arxiv.org/abs/2012.09540},
	author = {Carette, Titouan},
	urldate = {2021-03-31},
	date = {2020-12-17},
	langid = {english},
	eprinttype = {arxiv},
	eprint = {2012.09540},
	keywords = {Quantum Physics},
}

@article{carette2022large-scale_2022,
  title={Large-scale quantum diagrammatic reasoning tools,!-boxes vs. scalable notations},
  author={Carette, Titouan and Lemonnier, Louis},
  journal={arXiv preprint arXiv:2204.11702},
  year={2022}
}

@inproceedings{clement_lov-calculus_2022,
	location = {Dagstuhl, Germany},
	title = {{LOv}-Calculus: A Graphical Language for Linear Optical Quantum Circuits},
	volume = {241},
	isbn = {978-3-95977-256-3},
	url = {https://drops.dagstuhl.de/opus/volltexte/2022/16833},
	doi = {10.4230/LIPIcs.MFCS.2022.35},
	series = {Leibniz International Proceedings in Informatics ({LIPIcs})},
	shorttitle = {{LO}{\textbackslash}underlinev-Calculus},
	pages = {35:1--35:16},
	booktitle = {47th International Symposium on Mathematical Foundations of Computer Science ({MFCS} 2022)},
	publisher = {Schloss Dagstuhl – Leibniz-Zentrum für Informatik},
	author = {Clément, Alexandre and Heurtel, Nicolas and Mansfield, Shane and Perdrix, Simon and Valiron, Benoît},
	editor = {Szeider, Stefan and Ganian, Robert and Silva, Alexandra},
	urldate = {2023-07-26},
	date = {2022},
	note = {{ISSN}: 1868-8969},
	keywords = {Completeness, Quantum Computing, Graphical Language, Linear Optical Circuits, Linear Optical Quantum Computing},
}

@article{weedbrook_gaussian_2012,
	title = {{G}aussian Quantum Information},
	volume = {84},
	issn = {0034-6861, 1539-0756},
	url = {http://arxiv.org/abs/1110.3234},
	doi = {10.1103/RevModPhys.84.621},
	pages = {621--669},
	number = {2},
	journaltitle = {Reviews of Modern Physics},
	author = {Weedbrook, Christian and Pirandola, Stefano and Garcia-Patron, Raul and Cerf, Nicolas J. and Ralph, Timothy C. and Shapiro, Jeffrey H. and Lloyd, Seth},
	urldate = {2018-04-14},
	date = {2012-05-01},
	eprinttype = {arxiv},
	eprint = {1110.3234},
	keywords = {Quantum Physics, cv, qpd, review},
}

@inproceedings{booth_complete_2022,
	location = {Dagstuhl, Germany},
	title = {Complete {ZX}-Calculi for the Stabiliser Fragment in Odd Prime Dimensions},
	volume = {241},
	isbn = {978-3-95977-256-3},
	url = {https://drops.dagstuhl.de/opus/volltexte/2022/16822},
	doi = {10.4230/LIPIcs.MFCS.2022.24},
	series = {Leibniz International Proceedings in Informatics ({LIPIcs})},
	pages = {24:1--24:15},
	booktitle = {47th International Symposium on Mathematical Foundations of Computer Science ({MFCS} 2022)},
	publisher = {Schloss Dagstuhl – Leibniz-Zentrum für Informatik},
	author = {Booth, Robert I. and Carette, Titouan},
	editor = {Szeider, Stefan and Ganian, Robert and Silva, Alexandra},
	urldate = {2022-08-22},
	date = {2022},
	note = {{ISSN}: 1868-8969},
	keywords = {qudits, {ZX}-calculus, completeness, quantum, stabiliser},
}

@unpublished{booth_complete_2023,
	title = {A complete equational theory for {G}aussian quantum mechanics with infinite squeezing},
	author = {Booth, Robert I. and Carette, Titouan and Comfort, Cole},
	date = {2024},
}

@article{gottesman_fault-tolerant_1999,
	title = {Fault-Tolerant Quantum Computation with Higher-Dimensional Systems},
	volume = {10},
	issn = {09600779},
	url = {http://arxiv.org/abs/quant-ph/9802007},
	doi = {10.1016/S0960-0779(98)00218-5},
	pages = {1749--1758},
	number = {10},
	journaltitle = {Chaos, Solitons \& Fractals},
	shortjournal = {Chaos, Solitons \& Fractals},
	author = {Gottesman, Daniel},
	urldate = {2020-05-20},
	date = {1999-09},
	eprinttype = {arxiv},
	eprint = {quant-ph/9802007},
	keywords = {Quantum Physics},
}

@article{zhou_quantum_2003,
	title = {Quantum computation based on d-level cluster states},
	volume = {68},
	issn = {1050-2947, 1094-1622},
	url = {http://arxiv.org/abs/quant-ph/0304054},
	doi = {10.1103/PhysRevA.68.062303},
	pages = {062303},
	number = {6},
	journaltitle = {Physical Review A},
	shortjournal = {Phys. Rev. A},
	author = {Zhou, D. L. and Zeng, B. and Xu, Z. and Sun, C. P.},
	urldate = {2020-05-19},
	date = {2003-12-02},
	langid = {english},
	eprinttype = {arxiv},
	eprint = {quant-ph/0304054},
	keywords = {Quantum Physics},
}

@article{brell_generalized_2015,
	title = {Generalized Cluster States Based on Finite Groups},
	volume = {17},
	issn = {1367-2630},
	url = {http://arxiv.org/abs/1408.6237},
	doi = {10.1088/1367-2630/17/2/023029},
	pages = {023029},
	number = {2},
	journaltitle = {New Journal of Physics},
	shortjournal = {New J. Phys.},
	author = {Brell, Courtney G.},
	urldate = {2020-05-19},
	date = {2015-02-10},
	langid = {english},
	eprinttype = {arxiv},
	eprint = {1408.6237},
	keywords = {Quantum Physics},
}

@article{reimer_high-dimensional_2019,
	title = {High-dimensional one-way quantum processing implemented on d-level cluster states},
	volume = {15},
	rights = {2018 The Author(s), under exclusive licence to Springer Nature Limited},
	issn = {1745-2481},
	url = {https://www.nature.com/articles/s41567-018-0347-x},
	doi = {10.1038/s41567-018-0347-x},
	pages = {148--153},
	number = {2},
	journaltitle = {Nature Physics},
	shortjournal = {Nature Phys},
	author = {Reimer, Christian and Sciara, Stefania and Roztocki, Piotr and Islam, Mehedi and Cortés, Luis Romero and Zhang, Yanbing and Fischer, Bennet and Loranger, Sébastien and Kashyap, Raman and Cino, Alfonso and Chu, Sai T. and Little, Brent E. and Moss, David J. and Caspani, Lucia and Munro, William J. and Azaña, José and Kues, Michael and Morandotti, Roberto},
	urldate = {2020-05-19},
	date = {2019-02},
	langid = {english},
	note = {Number: 2
Publisher: Nature Publishing Group},
}

@article{keet_quantum_2010,
	title = {Quantum secret sharing with qudit graph states},
	volume = {82},
	issn = {1050-2947, 1094-1622},
	url = {http://arxiv.org/abs/1004.4619},
	doi = {10.1103/PhysRevA.82.062315},
	pages = {062315},
	number = {6},
	journaltitle = {Physical Review A},
	shortjournal = {Phys. Rev. A},
	author = {Keet, Adrian and Fortescue, Ben and Markham, Damian and Sanders, Barry C.},
	urldate = {2021-03-25},
	date = {2010-12-14},
	langid = {english},
	eprinttype = {arxiv},
	eprint = {1004.4619},
	keywords = {Quantum Physics},
}

@article{joo_logical_2019,
	title = {Logical measurement-based quantum computation in circuit-{QED}},
	volume = {9},
	rights = {2019 The Author(s)},
	issn = {2045-2322},
	url = {https://www.nature.com/articles/s41598-019-52866-3},
	doi = {10.1038/s41598-019-52866-3},
	pages = {16592},
	number = {1},
	journaltitle = {Scientific Reports},
	shortjournal = {Sci Rep},
	author = {Joo, Jaewoo and Lee, Chang-Woo and Kono, Shingo and Kim, Jaewan},
	urldate = {2022-06-13},
	date = {2019-11-12},
	langid = {english},
	note = {Number: 1
Publisher: Nature Publishing Group},
	keywords = {Quantum information, Qubits},
}

@article{proctor_ancilla-driven_2017,
	title = {Ancilla-driven quantum computation for qudits and continuous variables},
	volume = {95},
	issn = {2469-9926, 2469-9934},
	url = {http://link.aps.org/doi/10.1103/PhysRevA.95.052317},
	doi = {10.1103/PhysRevA.95.052317},
	pages = {052317},
	number = {5},
	journaltitle = {Physical Review A},
	shortjournal = {Phys. Rev. A},
	author = {Proctor, Timothy and Giulian, Melissa and Korolkova, Natalia and Andersson, Erika and Kendon, Viv},
	urldate = {2022-06-13},
	date = {2017-05-10},
	langid = {english},
}

@article{hall_cluster_2007,
	title = {Cluster state quantum computation for many-level systems},
	doi = {10.26421/QIC7.3-2},
	journaltitle = {Quantum Inf. Comput.},
	author = {Hall, William},
	date = {2007},
}

@article{booth_outcome_2023,
	title = {Outcome determinism in measurement-based quantum computation with qudits},
	volume = {56},
	issn = {1751-8121},
	url = {https://dx.doi.org/10.1088/1751-8121/acbace},
	doi = {10.1088/1751-8121/acbace},
	pages = {115303},
	number = {11},
	journaltitle = {Journal of Physics A: Mathematical and Theoretical},
	shortjournal = {J. Phys. A: Math. Theor.},
	author = {Booth, Robert I. and Kissinger, Aleks and Markham, Damian and Meignant, Clément and Perdrix, Simon},
	urldate = {2023-04-06},
	date = {2023-02},
	langid = {english},
	note = {Publisher: {IOP} Publishing},
}

@article{marin_access_2013,
	title = {Access Structure in Graphs in High Dimension and Application to Secret Sharing},
	url = {http://drops.dagstuhl.de/opus/volltexte/2013/4330/},
	doi = {10.4230/LIPICS.TQC.2013.308},
	pages = {17 pages},
	author = {Marin, Anne and Markham, Damian and Perdrix, Simon},
	editora = {Herbstritt, Marc},
	editoratype = {collaborator},
	urldate = {2020-06-18},
	date = {2013},
	langid = {english},
	note = {Artwork Size: 17 pages
Medium: application/pdf
Publisher: Schloss Dagstuhl - Leibniz-Zentrum fuer Informatik {GmbH}, Wadern/Saarbruecken, Germany},
	keywords = {000 Computer science, knowledge, general works, Computer Science},
}

@article{backens_zx-calculus_2014,
	title = {The {ZX}-calculus is complete for stabilizer quantum mechanics},
	volume = {16},
	issn = {1367-2630},
	url = {http://arxiv.org/abs/1307.7025},
	doi = {10.1088/1367-2630/16/9/093021},
	pages = {093021},
	number = {9},
	journaltitle = {New Journal of Physics},
	shortjournal = {New J. Phys.},
	author = {Backens, Miriam},
	urldate = {2021-08-06},
	date = {2014-09-17},
	langid = {english},
	eprinttype = {arxiv},
	eprint = {1307.7025},
	keywords = {Quantum Physics},
}

@article{mcelvanney_complete_2022,
  title = {Complete Flow-Preserving Rewrite Rules for {MBQC} Patterns with {P}auli Measurements},
  volume = {394},
  ISSN = {2075-2180},
  DOI = {10.4204/eptcs.394.5},
  journal = {Electronic Proceedings in Theoretical Computer Science},
  publisher = {Open Publishing Association},
  author = {McElvanney,  Tommy and Backens,  Miriam},
  year = {2023},
  month = nov,
  pages = {66–82}
}

@article{hu_improved_2022,
	title = {Improved Graph Formalism for Quantum Circuit Simulation},
	volume = {105},
	issn = {2469-9926, 2469-9934},
	url = {http://arxiv.org/abs/2109.10210},
	doi = {10.1103/PhysRevA.105.022432},
	pages = {022432},
	number = {2},
	journaltitle = {Physical Review A},
	shortjournal = {Phys. Rev. A},
	author = {Hu, Alexander Tianlin and Khesin, Andrey Boris},
	urldate = {2023-12-06},
	date = {2022-02-22},
	eprinttype = {arxiv},
	eprint = {2109.10210 [quant-ph]},
	keywords = {Quantum Physics},
}

@misc{cockett_categories_2022,
	title = {Categories of {K}irchhoff relations},
	doi = {10.48550/arXiv.2205.05870},
	number = {{arXiv}:2205.05870},
	publisher = {{arXiv}},
	author = {Cockett, J. R. B. and Kalra, Amolak Ratan and Prakash, Shiroman},
	urldate = {2022-09-12},
	date = {2022-07-07},
	eprinttype = {arxiv},
	eprint = {2205.05870 [quant-ph]},
	note = {Number: {arXiv}:2205.05870},
	keywords = {Quantum Physics},
}

@article{discard,
  title = {Completeness of Graphical Languages for Mixed State Quantum Mechanics},
  volume = {2},
  ISSN = {2643-6817},
  url = {https://arxiv.org/abs/1902.07143},
  DOI = {10.1145/3464693},
  number = {4},
  journal = {ACM Transactions on Quantum Computing},
  publisher = {Association for Computing Machinery (ACM)},
  author = {Carette,  Titouan and Jeandel,  Emmanuel and Perdrix,  Simon and Vilmart,  Renaud},
  year = {2021},
  month = dec,
  pages = {1–28}
}

@article{network,
  title={Props in network theory},
  author={Baez, John C. and Coya, Brandon and Rebro, Franciscus},
  journal={Theory and Applications of Categories},
  volume={33},
  number={25},
  pages={727--783},
  url={http://www.tac.mta.ca/tac/volumes/33/25/33-25.pdf},
  year={2018}
}

@article{control,
  title={Categories in control},
  author={Baez, John C. and Erbele, Jason},
  journal={Theory and Applications of Categories},
  volume={30},
  number={24},
  year={2015},
  url={http://www.tac.mta.ca/tac/volumes/30/24/30-24.pdf},
  pages={836--881}
}

@article{passive,
  title={A compositional framework for passive linear networks},
  author={Baez, John C. and Fong, Brendan},
  journal={Theory and Applications of Categories},
  pages={pp 1158-1222},
  url={http://www.tac.mta.ca/tac/volumes/33/38/33-38.pdf},
  volume={33},
  number={38},
  year={2018}
}

@inproceedings{amolakresistors,
Author = {J.R.B. Cockett and Amolak Ratan Kalra and Priyaa Varshinee Srinivasan},
  Title = {Normalizing Resistor Networks},
  dot={10.4204/EPTCS.397.5},
  eventtitle = {Applied Category Theory 2023},
  publisher = {{EPTCS}},
  Year = {2023},
  pages={70--83},
  volume={397},
  url  =  {https://arxiv.org/pdf/2303.11839.pdf}
}

@article{gla,
  title={Interacting {H}opf algebras},
  author={Bonchi, Filippo and Soboci{\'n}ski, Pawe{\l} and Zanasi, Fabio},
  journal={Journal of Pure and Applied Algebra},
  volume={221},
  number={1},
  doi={10.1016/j.jpaa.2016.06.002},
  url={https://arxiv.org/pdf/1403.7048.pdf},
  pages={144--184},
  year={2017},
  publisher={Elsevier}
}

@inproceedings{gaa,
  title={Graphical affine algebra},
  author={Bonchi, Filippo and Piedeleu, Robin and Soboci{\'n}ski, Pawe{\l} and Zanasi, Fabio},
  booktitle={2019 34th Annual ACM/IEEE Symposium on Logic in Computer Science (LICS)},
  pages={1--12},
  year={2019},
  doi={10.1109/LICS.2019.8785877},
  url={http://www.zanasi.com/fabio/files/paperLICS19.pdf},
  organization={IEEE}
}

@misc{wilson,
  title={String diagrams for non-strict monoidal categories},
  author={Wilson, Paul and Ghica, Dan and Zanasi, Fabio},
  year={2022},
  url={https://arxiv.org/pdf/2201.11738.pdf}
}

@misc{wetering,
Author = {John van de Wetering},
Title = {{ZX}-calculus for the working quantum computer scientist},
Year = {2020},
url = {https://arxiv.org/pdf/2012.13966.pdf},
}

@incollection{dpla,
  doi = {10.1007/978-3-030-99253-8_6},
  url = {https://arxiv.org/pdf/2111.03956.pdf},
  year = {2022},
  publisher = {Springer International Publishing},
  pages = {101--119},
  author = {Guillaume Boisseau and Robin Piedeleu},
  title = {Graphical Piecewise-Linear Algebra},
  booktitle = {Lecture Notes in Computer Science}
}

@inproceedings{dpa,
  author    = {Filippo Bonchi and
               Alessandro Di Giorgio and
               Pawe{\l} Soboci{\'n}ski},
  editor    = {Mikolaj Bojanczyk and
               Chandra Chekuri},
  title     = {Diagrammatic Polyhedral Algebra},
  booktitle = {41st {IARCS} Annual Conference on Foundations of Software Technology
               and Theoretical Computer Science, {FSTTCS} 2021, December 15-17, 2021,
               Virtual Conference},
  series    = {LIPIcs},
  volume    = {213},
  pages     = {40:1--40:18},
  publisher = {Schloss Dagstuhl - Leibniz-Zentrum f{\"{u}}r Informatik},
  year      = {2021},
  doi       = {10.4230/LIPIcs.FSTTCS.2021.40},
}

@inproceedings{zxa,
 author    = {Comfort, Cole},
  year      = {2021},
  title     = {The {ZX\&}-calculus: A complete graphical calculus for classical circuits using spiders},
  editor    = {Valiron, Beno\^it and Mansfield, Shane and Arrighi, Pablo and Panangaden, Prakash},
  booktitle = {{\rm Proceedings 17th International Conference on}
               Quantum Physics and Logic,
               {\rm Paris, France, June 2 - 6, 2020}},
  series    = {Electronic Proceedings in Theoretical Computer Science},
  volume    = {340},
  publisher = {Open Publishing Association},
  pages     = {60-90},
  doi       = {10.4204/EPTCS.340.4},
}

@book{ohm,
  doi = {10.5479/sil.354716.39088005838644},
  year = {1827},
  publisher = {Bei T.H. Riemann},
  author = {Georg Simon Ohm},
  title = {Die galvanische Kette}
}

@article{Kirchhoff1845,
  title = {Ueber den Durchgang eines elektrischen Stromes durch eine Ebene,  insbesondere durch eine kreisf\"{o}rmige},
  volume = {140},
  ISSN = {1521-3889},
  DOI = {10.1002/andp.18451400402},
  number = {4},
  journal = {Annalen der Physik},
  publisher = {Wiley},
  author = {Kirchhoff,  {Student of}},
  year = {1845},
  month = jan,
  pages = {497–514}
}

@phdthesis{bondgraph,
  title={Bond Graph Modelling of Physical Systems},
  author={Smith, Lorcan Stuart Peter Stillwell },
  year={1993},
  school={University of Glasgow},
  url={https://theses.gla.ac.uk/74811/1/11007730.pdf}
}

@misc{thetable,
 title={Network Theory (Part 29)},
 author={Baez, John C.},
 year={2013},
 url={https://johncarlosbaez.wordpress.com/2013/04/23/network-theory-part-29/},
}

@article{guillemin_problems_1979,
	title = {Some Problems in Integral Geometry and Some Related Problems in Micro-Local Analysis},
	volume = {101},
	issn = {00029327},
	doi = {10.2307/2373923},
	pages = {915},
	number = {4},
	journaltitle = {American Journal of Mathematics},
	shortjournal = {American Journal of Mathematics},
	author = {Guillemin, Victor and Sternberg, Shlomo},
	urldate = {2023-05-03},
	date = {1979-08},
}

@misc{weinstein_symplectic_2009,
	title = {Symplectic Categories},
	doi = {10.48550/arXiv.0911.4133},
	number = {{arXiv}:0911.4133},
	publisher = {{arXiv}},
	author = {Weinstein, Alan},
	urldate = {2023-05-10},
	date = {2009-11-20},
	eprinttype = {arxiv},
	eprint = {0911.4133 [math]},
	keywords = {Mathematics - Category Theory, Mathematics - Symplectic Geometry, 53D12, 81S10},
}

@phdthesis{gottesman,
Author = {Daniel Gottesman},
school = {California Institute of Technology},
Title = {Stabilizer Codes and Quantum Error Correction},
Year = {1997},
url={https://arxiv.org/pdf/quant-ph/9705052.pdf}
}

@book{maclane,
  title = {Categories for the Working Mathematician},
  ISBN = {9781461298397},
  ISSN = {0072-5285},
  DOI = {10.1007/978-1-4612-9839-7},
  journal = {Graduate Texts in Mathematics},
  publisher = {Springer New York},
  author = {Lane,  Saunders Mac},
  url={http://www.mtm.ufsc.br/~ebatista/2016-2/maclanecat.pdf},
  year = {1971}
}

@article{Boisseau2022,
  title = {String Diagrammatic Electrical Circuit Theory},
  volume = {372},
  ISSN = {2075-2180},
  DOI = {10.4204/eptcs.372.13},
  journal = {Electronic Proceedings in Theoretical Computer Science},
  publisher = {Open Publishing Association},
  author = {Boisseau,  Guillaume and Soboci{\'n}ski, Pawe{\l}},
  year = {2022},
  month = nov,
  pages = {178–191}
}

@phdthesis{carettethesis,
  TITLE = {{Wielding the {ZX}-calculus, Flexsymmetry, Mixed States, and Scalable Notations}},
  AUTHOR = {Carette, Titouan},
  URL = {https://hal.science/tel-03468027},
  NUMBER = {2021LORR0200},
  SCHOOL = {{Universit{\'e} de Lorraine}},
  YEAR = {2021},
  MONTH = Nov,
  HAL_ID = {tel-03468027},
  HAL_VERSION = {v1},
}

@article{zh,
  doi = {10.4204/eptcs.287.2},
  year = {2019},
  month = jan,
  publisher = {Open Publishing Association},
  volume = {287},
  pages = {23--42},
  author = {Miriam Backens and Aleks Kissinger},
  title = {{ZH}: A Complete Graphical Calculus for Quantum Computations Involving Classical Non-linearity},
  journal = {Electronic Proceedings in Theoretical Computer Science}
}

@article{zx,
doi = {10.1088/1367-2630/13/4/043016},
  year = {2011},
  month = apr,
  publisher = {{IOP} Publishing},
  volume = {13},
  number = {4},
  pages = {043016},
  author = {Bob Coecke and Ross Duncan},
  title = {Interacting quantum observables: categorical algebra and diagrammatics},
  journal = {New Journal of Physics}
}

@Inproceedings{qutrit,
  author    = {Wang, Quanlong},
  year      = {2018},
  title     = {Qutrit {ZX}-calculus is Complete for Stabilizer Quantum Mechanics},
  editor    = {Coecke, Bob and Kissinger, Aleks},
  booktitle = {{\rm Proceedings 14th International Conference on}
               Quantum Physics and Logic,
               {\rm Nijmegen, The Netherlands, 3-7 July 2017}},
  series    = {Electronic Proceedings in Theoretical Computer Science},
  volume    = {266},
  publisher = {Open Publishing Association},
  pages     = {58-70},
  doi       = {10.4204/EPTCS.266.3},
}

@article{zxcompletea,
  title = {{Completeness of the {ZX}-Calculus}},
  author = {Emmanuel Jeandel and Simon Perdrix and Renaud Vilmart},
  doi = {10.23638/LMCS-16(2:11)2020},
  journal = {{Logical Methods in Computer Science}},
  volume = {{Volume 16, Issue 2}},
  year = {2020},
  month = Jun,
  keywords = {Quantum Physics ; Computer Science - Logic in Computer Science},
}

@misc{zxcompleteb,
  url = {https://arxiv.org/pdf/1706.09877.pdf},
  author = {Ng,  Kang Feng and Wang,  Quanlong},
  title = {A universal completion of the {ZX}-calculus},
  publisher = {arXiv},
  year = {2017},
  copyright = {arXiv.org perpetual,  non-exclusive license}
}

@INPROCEEDINGS {zw,
  author = {Amar Hadzihasanovic},
  booktitle = {2015 30th Annual ACM/IEEE Symposium on Logic in Computer Science (LICS)},
  title = {A Diagrammatic Axiomatisation for Qubit Entanglement},
  year = {2015},
  volume = {},
  issn = {1043-6871},
  pages = {573-584},
  doi = {10.1109/LICS.2015.59},
  url = {https://arxiv.org/pdf/1501.07082.pdf},
  publisher = {IEEE Computer Society},
  address = {Los Alamitos, CA, USA},
  month = {jul}
}

@article{spekzx,
  title = {A Complete Graphical Calculus for {S}pekkens’ Toy Bit Theory},
  volume = {46},
  ISSN = {1572-9516},
  url = {https://arxiv.org/abs/1411.1618},
  DOI = {10.1007/s10701-015-9957-7},
  number = {1},
  journal = {Foundations of Physics},
  publisher = {Springer Science and Business Media LLC},
  author = {Backens,  Miriam and Duman,  Ali Nabi},
  year = {2015},
  month = oct,
  pages = {70–103}
}

@article{MPI,
  title = {{M}oore-{P}enrose Dagger Categories},
  volume = {384},
  ISSN = {2075-2180},
  url = {http://dx.doi.org/10.4204/EPTCS.384.10},
  DOI = {10.4204/eptcs.384.10},
  journal = {Electronic Proceedings in Theoretical Computer Science},
  publisher = {Open Publishing Association},
  author = {Cockett,  J.R.B and Lemay,  Jean-Simon Pacaud},
  year = {2023},
  month = aug,
  pages = {171–186}
}

@BOOK{Pozar,
  title     = "Microwave Engineering",
  author    = "Pozar, David M",
  publisher = "John Wiley \& Sons",
  edition   =  4,
  month     =  nov,
  year      =  2011,
  address   = "Chichester, England",
  language  = "en"
}

@article{bahramgiri_graph_2007,
	title = {Graph States Under the Action of Local {C}lifford Group in Non-Binary Case},
	url = {http://arxiv.org/abs/quant-ph/0610267},
	journaltitle = {{arXiv}:quant-ph/0610267},
	author = {Bahramgiri, Mohsen and Beigi, Salman},
	urldate = {2020-12-13},
	date = {2007-06-29},
	langid = {english},
	eprinttype = {arxiv},
	eprint = {quant-ph/0610267},
	keywords = {Quantum Physics},
}

@article{bahramgiri_efficient_2007,
	title = {An Efficient Algorithm to Recognize Locally Equivalent Graphs in Non-Binary Case},
	url = {http://arxiv.org/abs/cs/0702057},
	journaltitle = {{arXiv}:cs/0702057},
	author = {Bahramgiri, Mohsen and Beigi, Salman},
	urldate = {2021-05-06},
	date = {2007-07-01},
	langid = {english},
	eprinttype = {arxiv},
	eprint = {cs/0702057},
	keywords = {Computer Science - Data Structures and Algorithms},
}

@article{van_den_nest_graphical_2004,
	title = {Graphical description of the action of local {C}lifford transformations on graph states},
	volume = {69},
	url = {https://arxiv.org/pdf/quant-ph/0308151.pdf},
	doi = {10.1103/PhysRevA.69.022316},
	pages = {022316},
	number = {2},
	journaltitle = {Physical Review A},
	shortjournal = {Phys. Rev. A},
	author = {Van den Nest, Maarten and Dehaene, Jeroen and De Moor, Bart},
	urldate = {2024-01-10},
	date = {2004-02-24},
	note = {Publisher: American Physical Society},
}

@incollection{benenti_linear_1983,
	location = {Boston},
	title = {Linear Symplectic Relations},
	booktitle = {Symplectic Geometry},
	publisher = {Pitman},
	author = {Benenti, Sergio},
	editor = {Crumeyrolle, Albert and Griffone, J.},
	date = {1983},
}

@article{lawruk_special_1975,
	title = {Special symplectic spaces},
	volume = {17},
	issn = {00220396},
	url = {https://linkinghub.elsevier.com/retrieve/pii/0022039675900571},
	doi = {10.1016/0022-0396(75)90057-1},
	pages = {477--497},
	number = {2},
	journaltitle = {Journal of Differential Equations},
	shortjournal = {Journal of Differential Equations},
	author = {Lawruk, B and Śniatycki, J and Tulczyjew, W.M},
	urldate = {2023-05-10},
	date = {1975-03},
	langid = {english},
}

@article{tulczyjew_category_1984,
	title = {The category of {F}resnel kernels},
	volume = {1},
	issn = {0393-0440},
	url = {https://www.sciencedirect.com/science/article/pii/0393044084900214},
	doi = {10.1016/0393-0440(84)90021-4},
	pages = {79--120},
	number = {3},
	journaltitle = {Journal of Geometry and Physics},
	shortjournal = {Journal of Geometry and Physics},
	author = {Tulczyjew, W. M. and Zakrzewski, S.},
	urldate = {2023-05-10},
	date = {1984-01-01},
	langid = {english},
	keywords = {Fresnel Kernels, Morse Families, Symplectic Relations},
}

@inproceedings{dold_symplectic_1982,
	location = {Berlin, Heidelberg},
	title = {The symplectic ``category''},
	volume = {905},
	isbn = {978-3-540-11197-9 978-3-540-39002-2},
	url = {http://link.springer.com/10.1007/BFb0092426},
	doi = {10.1007/BFb0092426},
	pages = {45--51},
	publisher = {Springer Berlin Heidelberg},
	author = {Weinstein, Alan},
	editor = {Doebner, Heinz-Dietrich and Andersson, Stig I. and Petry, Herbert Rainer},
	editorb = {Dold, A. and Eckmann, B.},
	editorbtype = {redactor},
	urldate = {2023-05-10},
	date = {1982},
	doi = {10.1007/BFb0092426},
	note = {Book Title: Differential Geometric Methods in Mathematical Physics
Series Title: Lecture Notes in Mathematics},
}

@inproceedings{benenti_category_1982,
	location = {Florence, Italy},
	title = {The category of symplectic reductions},
	eventtitle = {Geometry and Phyiscs},
	booktitle = {Proceedings of the meeting "Geometry and Physics"},
	author = {Benenti, Sergio},
	date = {1982-10},
}

@article{urbanski_structure_1985,
	title = {The structure of positive linear symplectic relations},
	url={https://www.fuw.edu.pl/~urbanski/Skany/StrPosRel.pdf},
	volume = {119},
	journaltitle = {Atti della Accademia delle Scienze di Torino. Classe di Scienze Fisiche, Matematiche e Naturali},
	shortjournal = {Atti della Accademia delle Scienze di Torino. Classe di Scienze Fisiche, Matematiche e Naturali},
	author = {Urbanski, Pawel},
	date = {1985-01-01},
}

@collection{kijowski_symplectic_1979,
	location = {Berlin, Heidelberg},
	title = {A Symplectic Framework for Field Theories},
	volume = {107},
	isbn = {978-3-540-09538-5},
	url = {http://link.springer.com/10.1007/3-540-09538-1},
	series = {Lecture Notes in Physics},
	publisher = {Springer},
	editor = {Kijowski, Jerzy and Tulczyjew, Wlodzimierz M.},
	urldate = {2023-07-04},
	date = {1979},
	langid = {english},
	doi = {10.1007/3-540-09538-1},
	keywords = {dynamics, Feldtheorie, field theory, geometry, mechanics, Mechanik, Symplektische Geometrie},
}

@article{Elliott2008,
  title = {Graphical description of the action of {C}lifford operators on stabilizer states},
  volume = {77},
  ISSN = {1094-1622},
  url = {https://arxiv.org/pdf/quant-ph/0703278.pdf},
  DOI = {10.1103/physreva.77.042307},
  number = {4},
  journal = {Physical Review A},
  publisher = {American Physical Society (APS)},
  author = {Elliott,  Matthew B. and Eastin,  Bryan and Caves,  Carlton M.},
  year = {2008},
  month = apr 
}

@article{Cleve1997,
  title = {Efficient computations of encodings for quantum error correction},
  volume = {56},
  ISSN = {1094-1622},
  url = {https://arxiv.org/pdf/quant-ph/9607030.pdf},
  DOI = {10.1103/physreva.56.76},
  number = {1},
  journal = {Physical Review A},
  publisher = {American Physical Society (APS)},
  author = {Cleve,  Richard and Gottesman,  Daniel},
  year = {1997},
  month = jul,
  pages = {76–82}
}

@article{Kissinger2020,
  title = {Reducing the number of non-Clifford gates in quantum circuits},
  volume = {102},
  ISSN = {2469-9934},
  url = {https://arxiv.org/pdf/1903.10477.pdf},
  DOI = {10.1103/physreva.102.022406},
  number = {2},
  journal = {Physical Review A},
  publisher = {American Physical Society (APS)},
  author = {Kissinger,  Aleks and van de Wetering,  John},
  year = {2020},
  month = aug 
}

@article{deBeaudrap2020,
  title = {Techniques to Reduce $\pi/4$-Parity-Phase Circuits,  Motivated by the {ZX} Calculus},
  volume = {318},
  ISSN = {2075-2180},
  DOI = {10.4204/eptcs.318.9},
  journal = {Electronic Proceedings in Theoretical Computer Science},
  publisher = {Open Publishing Association},
  author = {de Beaudrap,  Niel and Bian,  Xiaoning and Wang,  Quanlong},
  year = {2020},
  month = may,
  pages = {131–149}
}

@article{Lemonnier2021,
  title = {Hypergraph Simplification: Linking the Path-sum Approach to the {ZH}-calculus},
  volume = {340},
  ISSN = {2075-2180},
  url = {http://dx.doi.org/10.4204/EPTCS.340.10},
  DOI = {10.4204/eptcs.340.10},
  journal = {Electronic Proceedings in Theoretical Computer Science},
  publisher = {Open Publishing Association},
  author = {Lemonnier,  Louis and van de Wetering,  John and Kissinger,  Aleks},
  year = {2021},
  month = sep,
  pages = {188–212}
}

@article{Cowtan2020,
  title = {Phase Gadget Synthesis for Shallow Circuits},
  volume = {318},
  ISSN = {2075-2180},
  DOI = {10.4204/eptcs.318.13},
  journal = {Electronic Proceedings in Theoretical Computer Science},
  publisher = {Open Publishing Association},
  author = {Cowtan,  Alexander and Dilkes,  Silas and Duncan,  Ross and Simmons,  Will and Sivarajah,  Seyon},
  year = {2020},
  month = may,
  pages = {213–228}
}

@article{Bonchi2019,
  title = {Diagrammatic algebra: from linear to concurrent systems},
  volume = {3},
  ISSN = {2475-1421},
  url = {https://www.ioc.ee/~pawel/papers/popl19.pdf},
  DOI = {10.1145/3290338},
  number = {POPL},
  journal = {Proceedings of the ACM on Programming Languages},
  publisher = {Association for Computing Machinery (ACM)},
  author = {Bonchi,  Filippo and Holland,  Joshua and Piedeleu,  Robin and Sobociński,  Paweł and Zanasi,  Fabio},
  year = {2019},
  month = jan,
  pages = {1–28}
}

@article{Bonchi2017,
  title = {The Calculus of Signal Flow Diagrams I: Linear relations on streams},
  volume = {252},
  ISSN = {0890-5401},
  url = {https://www.ioc.ee/~pawel/papers/sfg1.pdf},
  DOI = {10.1016/j.ic.2016.03.002},
  journal = {Information and Computation},
  publisher = {Elsevier BV},
  author = {Bonchi,  Filippo and Sobociński,  Paweł and Zanasi,  Fabio},
  year = {2017},
  month = feb,
  pages = {2–29}
}

@inbook{Bonchi2014,
  title = {A Categorical Semantics of Signal Flow Graphs},
  ISBN = {9783662445846},
  ISSN = {1611-3349},
  url = {https://hal.science/hal-02134182/file/sfg.pdf},
  DOI = {10.1007/978-3-662-44584-6_30},
  booktitle = {Lecture Notes in Computer Science},
  publisher = {Springer Berlin Heidelberg},
  author = {Bonchi,  Filippo and Sobociński,  Paweł and Zanasi,  Fabio},
  year = {2014},
  pages = {435–450}
}

@phdthesis{ih,
  title={Interacting {H}opf Algebras: the theory of linear systems},
  author={Zanasi, Fabio},
  school={Universit\'e de Lyon},
  url={https://discovery.ucl.ac.uk/id/eprint/10062538/1/thesis2015%20%281%29.pdf},
  year={2018}
}

@phdthesis{coya,
  title={Circuits, bond graphs, and signal-flow diagrams: A categorical perspective},
  author={Coya, Brandon},
  url={https://arxiv.org/pdf/1805.08290.pdf},
  school={University of California Riverside},
  year={2018}
}

@phdthesis{erbele,
  title={Categories in control: applied {PROP}s},
  author={Jason Michael Erbele},
  url={https://arxiv.org/pdf/1611.07591.pdf},
  school={University of California Riverside},
  year={2016}
}

\appendix

\section{Axiom tables}
\label{app:axioms}
%
%
\subsection{Axioms for \texorpdfstring{\(\ZX\)}{GSA}}
\label{ssec:zx-axioms}

\begin{definition}
  \label{def:zx}
  Let \(\ZX\) denote the \dag-compact prop with:
  \begin{itemize}
  \item \textbf{Objects} natural numbers.
  \item \textbf{Morphisms} are presented by the following generators indexed by \(n,m \in \N\) and \(a,b\in \K\): 
  \begin{equation*}
    \tikzfig{figures/AffLagRel/b-spider} : m \to n  \quad\quad\quad
    \tikzfig{figures/AffLagRel/w-spider} : m \to n  \quad\quad\quad
    \tikzfig{figures/AffLagRel/box} : 1 \to 1 
  \end{equation*}
  Modulo the equations imposing flexsymmetry, in addition to the following equations for all \(a,b,c,d \in \mathbb{K}\) and \(z
  \in \mathbb{K}^*\):
  \begin{figure}[H]
    \makebox[\textwidth][c]{\tikzfig{figures/axioms/axioms_zx}}
  \end{figure}

  \item \textbf{Composition} is given by horizontally connecting output wires
  to input wires.
  \item \textbf{The monoidal product} is given on objects by addition,
  and on morphisms by vertical juxtaposition.
  \item \textbf{The compact structure} is obtained from the grey Frobenius algebra:
  \begin{equation*}
   \tikzfig{figures/AffLagRel/cup} =  \tikzfig{figures/AffLagRel/cup_explicit}
   \qand
   \tikzfig{figures/AffLagRel/cap} =  \tikzfig{figures/AffLagRel/cap_explicit}
  \end{equation*}
  \item \textbf{The dagger} is given by:
  \begin{equation*}
    \tikzfig{figures/AffLagRel/b-spider} \longmapsto \tikzfig{figures/AffLagRel/b-spider-daggered}  \quad\quad
    \tikzfig{figures/AffLagRel/w-spider} \longmapsto \tikzfig{figures/AffLagRel/w-spider-daggered}  \quad\quad
    \tikzfig{figures/AffLagRel/box} \longmapsto
\tikzfig{figures/AffLagRel/minusbox} 
  \end{equation*}
  \end{itemize}
\end{definition}

\subsection{Axioms for \texorpdfstring{\(\ZX^0\) (linear \(\ZX\))}{linear GSA}}
\label{ssec:zx-axioms-linear}

\begin{definition}
  Let \(\ZX^0\) denote the \dag-compact prop with:
  \begin{itemize}
  \item \textbf{Objects} natural numbers.
  \item \textbf{Morphisms} are presented by the following generators indexed by \(n,m \in \N\) and \(a\in \K\):
  \begin{equation*}
    \tikzfig{figures/AffLagRel/linear/b-spider} : m \to n  \quad\quad\quad
    \tikzfig{figures/AffLagRel/linear/w-spider} : m \to n  \quad\quad\quad
    \tikzfig{figures/AffLagRel/box} : 1 \to 1 
  \end{equation*}
  Modulo the equations imposing flexsymmetry, in addition to the following equations for all \(a,b \in \mathbb{K}\) and \(z
  \in \mathbb{K}^*\):
  \begin{figure}[H]
    \makebox[\textwidth][c]{\tikzfig{figures/axioms/axioms_zx_linear}}
  \end{figure}
  \item \textbf{Composition} is given by horizontally connecting output wires
  to input wires.
  \item \textbf{The monoidal product} is given on objects by addition,
  and on morphisms by vertical juxtaposition.
  \item \textbf{The compact structure} is obtained from the grey Frobenius algebra:
  \begin{equation*}
   \tikzfig{figures/AffLagRel/cup} =  \tikzfig{figures/AffLagRel/cup_explicit}
   \qand
   \tikzfig{figures/AffLagRel/cap} =  \tikzfig{figures/AffLagRel/cap_explicit}
  \end{equation*}
  \item \textbf{The dagger} is given by:
  \begin{equation*}
    \tikzfig{figures/AffLagRel/linear/b-spider} \longmapsto \tikzfig{figures/AffLagRel/linear/b-spider-daggered}  \quad\quad
    \tikzfig{figures/AffLagRel/linear/w-spider} \longmapsto \tikzfig{figures/AffLagRel/linear/w-spider-daggered}  \quad\quad
    \tikzfig{figures/AffLagRel/box} \longmapsto
\tikzfig{figures/AffLagRel/minusbox} 
  \end{equation*}
  \end{itemize}
\end{definition}

\subsection{Scalable  \texorpdfstring{\(\ZX\)}{GSA}}
\label{ssec:axioms_scalable}
\begin{definition}
  The strictification \([\ZX]\) can be explicitly presented by generators
  and relations:
  \begin{itemize}
  \item \textbf{Objects} are elements of  \(\operatorname{List}(\N)\).
  \item \textbf{Morphisms} are presented by the following generators indexed
  by \(n,m,k, \ell \in \N\), \(\vec v \in \N^n\) and \(x,y\in \K\):
  \begin{align*}
    \tikzfig{figures/AffLagRel/scalable/b-spider} &: [k, \overset{m}{\cdots}, k] \to [k, \overset{n}{\cdots}, k]&
    \quad\quad
    \tikzfig{figures/AffLagRel/scalable/w-spider}  &: [k, \overset{m}{\cdots}, k] \to [k, \overset{n}{\cdots}, k] &
    \\
    \tikzfig{figures/AffLagRel/scalable/box} &: [k] \to [k] &
    \quad\quad
    \tikzfig{figures/AffLagRel/scalable/arrow} &: [k] \to [\ell] &
    \\
    \tikzfig{figures/AffLagRel/scalable/dividermonoid} &: \left[\sum_{i=0}^{n-1}v_i\right] \to [v_0,\cdots, v_{n-1}] &
    \quad\quad
    \tikzfig{figures/AffLagRel/scalable/gatherermonoid} &: [v_0,\cdots, v_{n-1}]   \to \left[\sum_{i=0}^{n-1}v_i \right] &
  \end{align*}
  modulo the equations that the phased spiders and boxes are flexsymmetric in addition to the following equations for all \(n,m,k \in \N\),  \(\vec x,\vec y \in \K^n\), \(\vec w \in \K^m\), \(X,Y\in\Sym[n][\K]\), \(W \in \Sym[m][\K]\),  \(A,B,Z\in \Matrices[m][n][\K]\), \(C\in \Matrices[k][n][\K]\), \(D\in \Matrices[m][\ell][\K]\), \(E\in \Matrices[k][\ell][\K]\),  with \(Z\) surjective:\\

  \begin{figure}[H]

    \makebox[\textwidth][c]{\tikzfig{figures/axioms/axioms_scalable}}

  \end{figure}
  \item \textbf{Composition} is given by horizontally connecting output wires
  to input wires.
  \item \textbf{The monoidal product} is given on objects by concatenation of lists,
  and on morphisms by vertical juxtaposition.
  \item \textbf{The compact structure} is obtained from the grey Frobenius algebra:
  \begin{equation*}
   \tikzfig{figures/AffLagRel/thick/cup} =  \tikzfig{figures/AffLagRel/thick/cup_explicit}
   \qand
   \tikzfig{figures/AffLagRel/thick/cap} =  \tikzfig{figures/AffLagRel/thick/cap_explicit}
  \end{equation*}
  \item \textbf{The dagger} is given by:
  \begin{equation*}
    \tikzfig{figures/AffLagRel/thick/b-spider} \mapsto \tikzfig{figures/AffLagRel/thick/b-spider-daggered}  \quad
    \tikzfig{figures/AffLagRel/thick/w-spider} \mapsto \tikzfig{figures/AffLagRel/thick/w-spider-daggered}  \quad
    \tikzfig{figures/AffLagRel/thick/box} \mapsto \tikzfig{figures/AffLagRel/thick/box-daggered} \quad
    \tikzfig{figures/AffLagRel/thick/matrix} \mapsto \tikzfig{figures/AffLagRel/thick/matrix-daggered} 
\end{equation*}
\begin{equation*}
    \tikzfig{figures/AffLagRel/scalable/gatherermonoid} \longmapsto \tikzfig{figures/AffLagRel/scalable/dividermonoid} \quad\quad \tikzfig{figures/AffLagRel/scalable/dividermonoid}\longmapsto  \tikzfig{figures/AffLagRel/scalable/gatherermonoid}
  \end{equation*}
  \end{itemize}
\end{definition}

We prove \TTextFusion in lemma~\ref{lem:scalable_fusion} and \TTextColour in lemma~\ref{lem:scalable_colour_actual}.  All of the other axioms have already been proven, or follow by simple induction arguments.  In the thickened setting, remark by lemma~\ref{lem:box_scalable}, we don't need a thickened version of \TextHBox.

Note that we have not given a flexsymmetric presentation of the strictification of \(\ZX\) because the dividers and gathers can not be made to be symmetric!

\section{Proofs}
\label{app:proofs}
\subsection{Proofs of section~\ref{sec:GAA}}

\begin{proof}[Proof of proposition \ref{prop:matprop}]
  
  We start with the first set of proposition:
  
  \begin{description}
    \item[$(1)\Rightarrow (2)$] Taking the relational semantics, $(2)$ can be reformulated as: for all $x,y,z\in \K$, $Ax=Ay=z$ is equivalent to $x=y$ and $z=Ax=Ay$. Which is true if $A$ is injective.
    \item[$(2)\Rightarrow (3)$] \tikzfig{figures/AffRel/inj2to3}
    \item[$(3)\Rightarrow (4)$]\tikzfig{figures/AffRel/inj3to4}
    \item[$(4)\Rightarrow (1)$] Taking the relational semantics, $(4)$ amounts to $\ker(A) = \{0\}$ which implies that $A$ is injective.
  \end{description}
  
  Then the second set of propositions:
  
  \begin{description}
    \item[$(1)\Rightarrow (2)$] Taking the relational semantics, $(2)$ can be reformulated as: for all $x,y,z\in \K$, $-Ax=y+z$ is equivalent to there exists $a,b\in\K$, such that $-x=a+b$, $y=Aa$ and $z=Ab$. Which is true if $A$ is surjective by taking premimages of $y$ and $z$.
    \item[$(2)\Rightarrow (3)$]\tikzfig{figures/AffRel/surj2to3}
    \item[$(3)\Rightarrow (4)$]\tikzfig{figures/AffRel/surj3to4}
    \item[$(4)\Rightarrow (1)$] Taking the relational semantics, $(4)$ amounts to $\Im(A)$ being equal to the full space, which implies that $A$ is surjective.
  \end{description}
  
\end{proof}

\subsection{Proofs of section~\ref{ssec:symplectic_ZX}}

\begin{lemma}
  \label{lem:antipode}
  \[\tikzfig{figures/soundness/lemmas/antipode}\]
\end{lemma}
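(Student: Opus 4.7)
Without loss of generality I interpret the antipode lemma as the standard Hopf-algebra-style equation stating that composing a grey (co)multiplication with a white (co)multiplication through a ``$-1$''-multiplier on one leg collapses to the composite of a discard and a unit; this is the shape one expects for an ``antipode'' in the setting of two interacting Frobenius structures forming a bialgebra, and it is exactly the identity one needs to kill parallel opposite scalars in the resistor-in-parallel and voltage-source-in-parallel derivations of section~\ref{sec:Applications}. The plan is to prove it purely by rewriting in the presentation of definition~\ref{def:GSA}, treating the $-1$-box as the primary generator and expanding it only when needed.

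First I would unfold the ``$-1$''-box into a scalar multiplier through equation~\eqref{eq:box-def} and the derived multiplier-from-box construction, so that the diagram on the left-hand side becomes a grey spider feeding a $-1$-scaled leg into a white spider (or the symmetric configuration). Next I would expand the multiplier by $-1$ using the \Plus and \TextZero axioms, writing $-1$ as the additive inverse so that it can be fused into an adjacent white phase: concretely I use that a white spider carrying a sum of phases factors along the white comonoid, with one summand being $a$ and the other being $-a$. Then, applying \TextBigebra to the resulting pair of interacting white and grey spiders, the diagram reshapes into two disjoint pieces separated by a copy/cocopy pattern. At that point \TextCopy and \FusionProof let one slide the $a$ and $-a$ phases together, and the \TextPlus axiom combines them into a single phase $0$, which is then eliminated via \TextZero. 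The surviving diagram reduces by \TextFusion to a discard followed by a unit (or equivalently, the trivial ``empty'' linear relation in semantics), giving the right-hand side.

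The main obstacle I anticipate is bookkeeping rather than any deep step: the symplectic phase group of the white and grey spiders is $\mathbb{K}^2$ rather than $\mathbb{K}$, so when unfolding the $-1$-box one has to be careful that the cancellation happens in the correct component (affine versus symplectic), and one must ensure the scalar $-1$ is realised through the multiplier construction in a way compatible with the flexsymmetric presentation rather than, say, through a phase shift which would live in a different summand. Once the correct decomposition is fixed, the rest is a straightforward chain of \TextBigebra, \TextCopy, \TextPlus, and \TextFusion applications, with \TextZero providing the final collapse. If the lemma statement is instead the simpler ``antipode commutation'' $\tikzfig{figures/AffLagRel/box} \circ \tikzfig{figures/AffLagRel/box}$ giving the identity on a line, the same strategy specialises: expand both boxes to multipliers, use \TextTimes to combine $(-1)\cdot(-1) = 1$, and eliminate the resulting trivial multiplier via the \TextOne axiom.
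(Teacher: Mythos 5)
There is a genuine problem with the setting of your proof, not just the bookkeeping. Lemma~\ref{lem:antipode} is one of the auxiliary lemmas in the block supporting the proof of proposition~\ref{thm:soundness}: it is a statement \emph{inside} \(\GAA\) (note the neighbouring lemmas~\ref{lem:affinediscard}--\ref{lem:cnotswap}, all proved from the \textsc{A.*} axioms of definition~\ref{def:gaa}), and it is subsequently cited in the derivations showing that the \(\ZX\) axioms \TextBigebra, \TextHBox, \TextTimes and \TextColour are sound under \(\interp{-}_{\gaa}^{\zx}\). Your plan rewrites using \TextBigebra, \TextCopy, \TextPlus, \TextZero, \TextTimes and \TextOne from definition~\ref{def:GSA} --- precisely the equations whose validity in \(\GAA\) is being established with the help of this lemma. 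As written, the argument is circular (or at best lives in the wrong calculus); the proof has to be carried out using only the \(\GAA\) presentation, i.e.\ the \textsc{A.F}, \textsc{A.C}, \textsc{A.Bg}, \textsc{A.I1}, \textsc{A.Mlt}, \ldots\ family. Relatedly, your worry about the phase group being \(\K^2\) and about affine-versus-symplectic components is moot here: the antipode lemma concerns the plain affine/linear spiders and the scalar multiplier of \(\GAA\), before any symplectic structure enters.

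The paper's own proof is also of a different character: it is not a rewrite chain but a one-line appeal to \TextGAAidoneii together with uniqueness of inverses --- the point being that the two sides of the equation are both inverses of the same element (in the relevant convolution monoid of endomorphisms built from the grey comultiplication and white multiplication), so they must coincide. This is the abstract form of the cancellation you are trying to perform by hand with \TextPlus and \TextZero; recognising it collapses the whole derivation. If you want to salvage a rewriting proof, you would need to redo your chain entirely inside definition~\ref{def:gaa}, using \textsc{A.Bg}, \textsc{A.Ad} and the \textsc{A.I1} axioms in place of their \(\ZX\) counterparts; but the uniqueness-of-inverses argument is both shorter and avoids committing to a particular decomposition of the \(-1\) multiplier.
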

\begin{proof}
  This follows immediately from \TextGAAidoneii and the uniqueness of inverses.
\end{proof}

\begin{lemma}
  \label{lem:affinediscard}
  \[\tikzfig{figures/soundness/lemmas/affinediscard}\]
\end{lemma}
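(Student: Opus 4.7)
The plan is to prove this as a short equational derivation in \(\GAA\), analogous to the antipode lemma just above: combine the axioms governing affine (white) phases with the previously established antipode identity, so that an affine phase attached to a spider is shown to be absorbable/cancelable into the surrounding structure when composed with the appropriate counit or discard-like morphism.

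First I would expand any derived generators (such as the \(\dag\)-scaled or twisted phase) in terms of the basic generators given in Definition~\ref{def:gaa}, so that both sides of the target equation are presented purely in terms of grey spiders, white spiders, and scalar multiplications. Then I would apply \GAAFusionW to merge adjacent white spiders, pushing the affine phase onto a single spider, and use \GAACopyi--\GAACopyiv together with \GAABigebra to rearrange grey/white interactions so the affine phase lands next to a counit (a \(0\)-arity spider) where the axiom \GAAAdd forces it to collapse. The antipode Lemma~\ref{lem:antipode} will then be invoked to cancel the induced \(-a\) against the \(a\)-phase, closing the computation.

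The main obstacle I expect is making sure the intermediate manipulations respect the flexsymmetric presentation: specifically, keeping track of whether the phase scalar is being applied on the "input" or "output" side of the white spider (since in the non-flexsymmetric presentation of \textcite{gaa} these differ by a sign coming from \GAAidzeroii and the antipode). The cleanest way to handle this is to carry out the rewrite in the doubled form first, verify it semantically via \(\interp{-}_{\abbrar}^{\gaa}\) where the equation reduces to a trivial set-theoretic identity about affine cosets, and only then package the calculation back into a clean diagrammatic chain.

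If the usual pattern for such auxiliary lemmas holds, the proof will be a short chain of equalities of the form
\[
  \text{LHS} \stackeqmid{\GAAFusionW} \cdots \stackeqmid{\GAAAdd} \cdots \stackeqmid{\text{Lemma~\ref{lem:antipode}}} \text{RHS},
\]
with at most three or four rewriting steps, and no induction required since the statement is about a fixed small diagram rather than a family indexed by \(n,m\).
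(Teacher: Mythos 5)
Your prediction about the \emph{shape} of the argument is right: the paper's proof of Lemma~\ref{lem:affinediscard} is a single short chain of diagrammatic equalities in \(\GAA\), with no induction. But your proposal has a genuine gap: it never engages with what the equation actually says. You describe a generic rewriting strategy --- fuse white spiders with \TextGAAFusionW, use \TextGAACopyi through \TextGAACopyiv together with \TextGAABigebra to route the affine phase next to a counit, collapse it with \TextGAAAdd, then cancel an induced \(-a\) against \(a\) via Lemma~\ref{lem:antipode} --- without ever writing down the two sides of the equation or exhibiting a single concrete intermediate diagram. As its name and its later uses alongside \TextGAAdisciii suggest, the lemma is about how a discarding leg absorbs an affine phase; the natural ingredients are therefore the phase/spider laws and the discard axioms \TextGAAdisci through \TextGAAdisciv, which your plan never mentions, while the bialgebra law and the copy axioms that you do invoke have no evident role in a statement of this kind. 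A plan that lists most of the axiom table as ``possibly relevant'' is not checkable, and the one mechanism you commit to most firmly (generating a \(-a\) that must be cancelled against \(a\)) is a guess rather than a step you have verified.

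Your fallback --- check the equation semantically under \(\interp{-}^{\gaa}_{\abbrar}\) and only afterwards ``package the calculation back into a clean diagrammatic chain'' --- deserves separate comment. Since \(\GAA\cong\AR\) is already invoked as a theorem, a purely semantic verification would be logically sufficient to establish any equation between \(\GAA\)-diagrams, so that route is legitimate; but it is a genuinely different proof from the paper's explicit syntactic derivation, and the ``packaging back'' is exactly the part you have not done. Either commit to the semantic argument and say the diagrammatic derivation is unnecessary, or produce the actual three-or-four-step chain of rewrites; promising both while delivering neither leaves the lemma unproved.
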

\begin{proof}
  \begin{equation}
    \tikzfig{figures/soundness/lemmas/affinediscardproof}
  \end{equation}
\end{proof}

\begin{lemma}
  \label{lem:inversetranspose}
  Given \(a \in \K^*\):
  \[\tikzfig{figures/soundness/lemmas/inversetranspose}\]
\end{lemma}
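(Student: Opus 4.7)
The plan is to establish this identity by reducing the converse (transpose) of the $a$-multiplier to the straight $a^{-1}$-multiplier using purely the axioms already listed for $\GAA$, together with the defining equation for the daggered/twisted scalar in Definition~\ref{def:gaa}. Since the converse is defined via the compact structure (a cup on the input side and a cap on the output side of the $a$-multiplier), the argument is essentially a bending-and-straightening computation.

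First, I would unfold the right-hand side of the definition
\[
\tikzfig{figures/AffRel/scalar-daggered}\;\coloneqq\;\tikzfig{figures/AffRel/scalar-twisted}
\]
so that the problem becomes: bend the $a$-multiplier around using the Frobenius cup/cap built from the white spider. Then I would insert a trivial identity wire and decompose it as the composite of the $a$-multiplier with the $a^{-1}$-multiplier, which is permitted by \GAAMult together with the axiom stating that $a$ and $a^{-1}$ compose to the identity on $\K^*$. This introduces a ``cancelling pair'' $a,a^{-1}$ that I can then slide along the wire.

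Next, I would propagate the $a$-multiplier past the cup (or cap) using the copy-style interaction between scalar multipliers and white spiders (\GAACopyi--\GAACopyiv, together with flexsymmetry), so that the $a$-multiplier on the bent wire meets the other $a$-multiplier. The two copies of $a$ then fuse via \GAAMult and cancel against the remaining factors, leaving only the $a^{-1}$-multiplier on a straight wire. Unbending the remaining cup/cap via the snake equations then yields the claimed identity. A shorter alternative, which I would try first, is to compose both sides with an $a$-multiplier and reduce to $1 = 1$ using uniqueness of inverses, in the same spirit as Lemma~\ref{lem:antipode}.

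The main obstacle, and the only genuinely delicate point, is keeping track of which white-spider interaction rule justifies sliding the scalar past the cup: in the non-flexsymmetric reading this is an asymmetric statement, and one must pick the correct orientation so that the $a$ which is moved past the bend matches (rather than duplicates) the original $a$-multiplier. Beyond that, the derivation is routine and should fit in a handful of rewrite steps.
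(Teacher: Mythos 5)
Your overall strategy---unfold the twisted definition of the daggered multiplier, insert a cancelling pair of straight $a^{-1}$- and $a$-multipliers via \TextGAAMult, eliminate the matched pair of $a$'s around the bend, and straighten with the snake equations---is the right shape of argument, and the paper's own proof is a short two-part rewrite in essentially this spirit. But the step where you say the two copies of $a$ ``fuse via \TextGAAMult and cancel against the remaining factors'' does not work as described, and it is exactly the point where the hypothesis $a\in\K^*$ must enter. Sliding the inserted $a$-multiplier around the bend leaves two like-oriented $a$-multipliers in series on the bent leg; fusing those with \TextGAAMult yields an $a^{2}$-multiplier, not a cancellation, and $a^{-1}\cdot a^{2}\neq a^{-1}$. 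The cancellation instead comes from the copy-type axioms that are stated only for invertible scalars: a matched pair $a\otimes a$ feeding into the grey multiplication (equivalently into the cap, which is built from the \emph{grey} Frobenius algebra---not the white one, as your sketch assumes) is absorbed as $(a\otimes a);\mu=\mu;a$, after which the residual $a$ vanishes into the counit and the snake equation returns a bare wire carrying only the inserted $a^{-1}$. Without isolating this absorption step the argument is circular: ``sliding past the cup'' is just transposition and cannot by itself convert an $a$ into an $a^{-1}$.

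The same issue undermines your preferred shortcut. Uniqueness of inverses reduces the claim to showing that the twisted $a$ is a two-sided inverse of the straight $a$-multiplier, but proving either composite equal to the identity graphically requires precisely the absorption of the pair $a\otimes a$ by the cap described above, so nothing is saved relative to the direct computation; this is unlike lemma~\ref{lem:antipode}, where a single axiom hands you the inverse outright. Note also that the paper's proof consists of two separate derivations (``firstly''/``secondly''), one for each orientation of the transposed multiplier, whereas your sketch addresses only one; the second does follow by a symmetric argument once the first is in place.
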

\begin{proof}
  Firstly:
  \begin{equation}
    \tikzfig{figures/soundness/lemmas/inversetransposeproof}
  \end{equation}

  Secondly:

  \begin{equation}
    \tikzfig{figures/soundness/lemmas/inversetransposeproofii}
  \end{equation}
\end{proof}

\begin{lemma}
  \label{lem:cnotswap}
  Given \(a \in \K^*\):
  \[\tikzfig{figures/soundness/lemmas/cnotswap}\]
\end{lemma}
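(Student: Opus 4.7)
The strategy is to unfold the boxes surrounding the CNOT-like structure into their defining form from equation~\eqref{eq:box-def}, reducing the statement to an identity that involves only grey and white spiders together with multipliers. In the symplectic setting, a box labeled \(a\) conjugated around a connection between a grey and a white spider essentially swaps the colours while introducing an \(a^{-1}\) multiplier on the opposite leg, via the interplay of \TextFusion, \TextBigebra and \TextColour. So morally the lemma is saying that a \(1\)-CNOT scaled by \(a\) on one side is the same as a ``reversed-direction'' CNOT scaled by some function of \(a\) on the other side, once Hadamard-type boxes are pushed through.

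First I would expand each box on either side of the central CNOT connection into the grey-multiplier-white chain given by~\eqref{eq:box-def}. Next, using \TextFusion, I would fuse the boundary grey spider of each box into the grey spider of the CNOT so that the only remaining internal structure is a single grey spider, a multiplier \(a\), and a white spider on each input/output leg, connected across the central bigebra. Then I would apply \TextBigebra to interchange the grey/white colouring at the centre, which swaps the roles of the control and target wires. At this stage, lemma~\ref{lem:inversetranspose} is the key tool: it lets me transport the multiplier \(a\) across a colour change at the cost of inversion, so the \(a\) on the original side reappears as \(a^{-1}\) after the swap (or is absorbed, depending on the precise form of the figure). A final round of \TextFusion and \TextColour folds the result back into the box form on the opposite pair of wires, completing the identity.

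The main obstacle I anticipate is the scalar bookkeeping: making sure that the multipliers \(a\) are placed on the correct sides of the bigebra swap so that lemma~\ref{lem:inversetranspose} and \TextGAAMult apply in the right direction, and that no spurious factors remain. In particular, when applying \TextBigebra one has to be careful that the resulting diagram is the transpose of a multiplier rather than its converse, which is exactly the content of \ref{lem:inversetranspose}. Once that bookkeeping is done cleanly, the remaining rewrites are purely mechanical applications of \TextFusion and lemma~\ref{lem:antipode} to clean up any spare grey or white identity spiders introduced by the flexsymmetric presentation.
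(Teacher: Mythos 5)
There is a genuine gap here, and it is conceptual rather than computational: you have misidentified the calculus in which lemma~\ref{lem:cnotswap} lives. This lemma is one of the auxiliary identities used inside the proof of proposition~\ref{thm:soundness}, where one verifies that the axioms of \(\ZX\) hold in \(\GAA\) under the interpretation \(\interp{-}_{\gaa}^{\zx}\); concretely, it is invoked in the soundness derivation of \TextHBox, between two \(\GAA\) diagrams. Its statement therefore involves only the primitive generators of \(\GAA\) --- grey spiders, white spiders, and the scalar multiplier \(a\) --- and its proof must be carried out using only the \(\GAA\) axioms and the previously established \(\GAA\)-level lemmas. Your first step, unfolding ``boxes'' via equation~\eqref{eq:box-def}, is not available: that equation defines the box as a \emph{derived} generator of \(\ZX\), whereas the \(a\)-labelled generator appearing in the lemma is the \(\GAA\) multiplier, a primitive generator with nothing to unfold. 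Worse, your subsequent appeals to \TextFusion, \TextBigebra and \TextColour are appeals to \(\ZX\) axioms --- precisely the equations whose soundness this lemma is being used to establish --- so the argument as written is circular.

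What you do get right is the role of lemma~\ref{lem:inversetranspose} as the device that exchanges the multiplier \(a\) with its converse (whence the hypothesis \(a\in\K^*\)) as it crosses the copy/add structure, and the observation that the bigebra and copy laws are what exchange the two colours; your worry about scalar bookkeeping is also the right one. The paper's actual proof is a single short diagrammatic chain carried out entirely inside \(\GAA\), of exactly this shape: the multiplier is pushed through the spiders using the \(\GAA\) copy and multiplication axioms (\TextGAACopyiii, \TextGAACopyiv, \TextGAAMult) together with \TextGAABigebra, and lemmas~\ref{lem:antipode} and~\ref{lem:inversetranspose} handle the sign and the inversion. If you recast your outline at the \(\GAA\) level --- no boxes, no \(\ZX\) axioms --- it becomes essentially that argument; as it stands, the proposal does not prove the lemma.
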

\begin{proof}
  \[\tikzfig{figures/soundness/lemmas/cnotswapproof}\]
\end{proof}

\begin{proof}[Proof of proposition~\ref{thm:soundness}]
The only nontrivial thing to check is the functoriality of \(\interp{-}_{\gaa}^\zx\).  To this end, we show that al of the axioms of \(\ZX\) are sound:

  (\TextFusion)
  \begin{align}\label{lem:fusion}
    \interp{\tikzfig{figures/soundness/fusion/fusion_LHS}}_\gaa^\zx
    &=
      \tikzfig{figures/soundness/fusion/fusion_1} 
    \stackeqmid{\GAAFusionB \\ \GAAFusionW}
     \tikzfig{figures/soundness/fusion/fusion_2}\\
    \stackeq{\GAAAdd}
    \tikzfig{figures/soundness/fusion/fusion_3}
    = \interp{\tikzfig{figures/soundness/fusion/fusion_RHS}}_\gaa^\zx
  \end{align}

  The white version of \TextFusion follows from an analagous argument.

  (\TextBigebra)
  \begin{align}\label{lem:bigebra}
    \interp{\tikzfig{figures/soundness/bigebra/bigebra_LHS}}_\gaa^\zx
    &= \tikzfig{figures/soundness/bigebra/bigebra_1} 
    \stackeqmid{\GAAidoneii\\ \minilemref{lem:antipode} } \tikzfig{figures/soundness/bigebra/bigebra_2} \\
    \stackeq{\GAABigebra\\ \GAAidoneii\\ \minilemref{lem:antipode}} \tikzfig{figures/soundness/bigebra/bigebra_3} 
    =\interp{\tikzfig{figures/soundness/bigebra/bigebra_RHS}}_\gaa^\zx
  \end{align}

  (\TextCopy)
  \begin{align}\label{lem:copy}
    \interp{\tikzfig{figures/soundness/copy/copy_LHS}}_\gaa^\zx
    = \tikzfig{figures/soundness/copy/copy_1} 
    \stackeqmid{\GAACopyi \\ \GAACopyii } 
      \tikzfig{figures/soundness/copy/copy_4} 
    = \interp{\tikzfig{figures/soundness/copy/copy_RHS}}_\gaa^\zx
  \end{align}

  (\TextOne)
  \begin{align}\label{lem:one}
    \interp{\tikzfig{figures/soundness/one/one_LHS}}_\gaa^\zx
    &= \tikzfig{figures/soundness/one/one_1} 
    \stackeqmid{\minilemref{lem:affinediscard}\\ \GAAdisciii \\ \GAACap}
      \tikzfig{figures/soundness/one/one_2} 
    \stackeqmid{\GAAidzeroi}
      \tikzfig{figures/soundness/one/one_MHS} 
    = \interp{\tikzfig{figures/soundness/one/one_MHS}}_\gaa^\zx\\
    &= \interp{\tikzfig{figures/soundness/one/one_MHS}}_\gaa^\zx
    = \tikzfig{figures/soundness/one/one_MHS} 
    \stackeqmid{\GAAidzeroii \\ \GAAidzeroiii}
      \tikzfig{figures/soundness/one/one_3} 
    =
      \interp{\tikzfig{figures/soundness/one/one_RHS}}_\gaa^\zx
  \end{align}

  (\TextZero)
  \begin{align}\label{lem:zero}
    \interp{\tikzfig{figures/soundness/zero/zero_LHS}}_\gaa^\zx
    = \tikzfig{figures/soundness/zero/zero_1} 
    \stackeqmid{\GAAEmpty}
      \tikzfig{figures/soundness/zero/zero_2} 
    = \interp{\tikzfig{figures/soundness/zero/zero_RHS}}_\gaa^\zx
  \end{align}

  (\TextId)
  \begin{align}\label{lem:id}
    \interp{\tikzfig{figures/soundness/id/id_LHS}}_\gaa^\zx
    = \tikzfig{figures/soundness/id/id_1} 
    \stackeqmid{\GAAidoneii \\ \GAAidoneiii}
      \tikzfig{figures/soundness/id/id_2} 
    = \interp{\tikzfig{figures/soundness/id/id_RHS}}_\gaa^\zx
  \end{align}

  (\TextHBox)
  If \(a \neq 0\),
  \begin{align}\label{lem:box}
    \interp{\tikzfig{figures/soundness/box/box_RHS}}_\gaa^\zx
    &= \tikzfig{figures/soundness/box/box_1} 
    \stackeqmid{\minilemref{lem:affinediscard}\\ \GAAidoneii} \tikzfig{figures/soundness/box/box_2} \\
    \stackeq{\GAACopyiii \\ \GAACopyiv}
      \tikzfig{figures/soundness/box/box_3} \\ 
    \stackeq{\minilemref{lem:inversetranspose}} \tikzfig{figures/soundness/box/box_4} \\
    \stackeq{\minilemref{lem:cnotswap}} \tikzfig{figures/soundness/box/box_5} 
    = \tikzfig{figures/soundness/box/box_6} 
    = \interp{\tikzfig{figures/soundness/box/box_LHS}}_\gaa^\zx
  \end{align}

  (\TextTimes)
  If \(a,b \neq 0\),
  \begin{align}\label{lem:times}
    \interp{\tikzfig{figures/soundness/times/times_LHS}}_\gaa^\zx
    &= \tikzfig{figures/soundness/times/times_1} \\ 
    &= \tikzfig{figures/soundness/times/times_2} \\
    \stackeq{\minilemref{lem:antipode}\\ \GAAMult} \tikzfig{figures/soundness/times/times_3} 
    = \interp{\tikzfig{figures/soundness/times/times_RHS}}_\gaa^\zx
  \end{align}

  (\TextPlus)
  \begin{align}\label{lem:plus}
    \interp{\tikzfig{figures/soundness/plus/plus_LHS}}_\gaa^\zx
    &= \tikzfig{figures/soundness/plus/plus_1}\\
    \stackeq{\GAAFusionW \\ \GAAidoneii} \tikzfig{figures/soundness/plus/plus_2} \\ 
    \stackeq{\GAAAdd} \tikzfig{figures/soundness/plus/plus_3} 
    = \interp{\tikzfig{figures/soundness/plus/plus_RHS}}_\gaa^\zx
  \end{align}

  \begingroup\allowdisplaybreaks
  (\TextColour)
  \begin{align}\label{lem:colour}
    &\interp{\tikzfig{figures/soundness/colour/colour_LHS}}_\gaa^\zx
    \stackeqmid{} \tikzfig{figures/soundness/colour/colour_1} \\
    \stackeq{} \tikzfig{figures/soundness/colour/colour_1_1}
    \stackeqmid{\GAAFusionW} \tikzfig{figures/soundness/colour/colour_1_2}\\
    \stackeq{\minilemref{lem:inversetranspose} \\ \GAACopyiii \\ \GAACopyiv} \tikzfig{figures/soundness/colour/colour_1_3}
    \stackeqmid{\minilemref{lem:antipode}} \tikzfig{figures/soundness/colour/colour_1_4}\\
    \stackeq{\minilemref{lem:inversetranspose}} \tikzfig{figures/soundness/colour/colour_1_5}
    \stackeqmid{\GAAMult} \tikzfig{figures/soundness/colour/colour_1_6}\\
    \stackeq{\GAAMult} \tikzfig{figures/soundness/colour/colour_1_7}
    \stackeqmid{\minilemref{lem:antipode}} \tikzfig{figures/soundness/colour/colour_1_8}\\
    \stackeq{\GAAidonei} \tikzfig{figures/soundness/colour/colour_1_9}
    \stackeqmid{\GAAdiscii} \tikzfig{figures/soundness/colour/colour_1_10} \\
    \stackeq{\GAAFusionW} \tikzfig{figures/soundness/colour/colour_3}
    \stackeqmid{ } \interp{\tikzfig{figures/soundness/colour/colour_RHS}}_\gaa^\zx
  \end{align}
  \endgroup
\end{proof}

%
%

\subsection{Proofs of section~\ref{ssec:completeness}}
\label{app:completeness}

\begin{lemma}
  \label{lem:box_product}
  Products of boxes are antipodes: for any \(z \in \K^*\)
  \begin{equation*}
    \tikzfig{figures/completeness/box_product}
  \end{equation*}
\end{lemma}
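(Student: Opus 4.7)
The plan is to reduce the left-hand side by unfolding each factor using the derived definition of the box given in equation~\eqref{eq:box-def}. Once expanded, each box appears as a gadget consisting of a multiplier labelled $z$ together with two symplectic Fourier boxes $\text{box}[\pm 1]$ and some surrounding spiders; concatenating two such gadgets places two inverse symplectic Fourier transforms back-to-back and two copies of the multiplier $z$ adjacent to one another.

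The first step would be to cancel the two inner symplectic Fourier boxes using \TextHBox (together with \TextTimes specialised to $a = b = 1$), yielding a diagram in which the two multipliers labelled $z$ become directly composable. Next I would fuse these multipliers using \GAAMult, lifted to $\ZX$ via the embedding $\GAA \hookrightarrow \ZX$ of definition~\ref{def:GSA}. Applying \TextFusion to the surrounding same-coloured spiders then collapses the intermediate structure, and \TextCopy and \TextColour clean up the remaining inverse-transpose interactions (as in the proof of \TextTimes). The resulting diagram should be recognisable, via \minilemref{lem:antipode} and the uniqueness of additive inverses in the white phase group, as the canonical antipode gadget appearing on the right-hand side.

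The main obstacle will be the sign bookkeeping: in the symplectic setting the multiplier and the Fourier transform act asymmetrically on the $Z$- and $X$-gradings, so the two central copies of the multiplier do not simply collapse to $z \cdot z^{-1} = 1$ on the nose; they leave behind precisely the twist that produces the antipode rather than the identity. This is the step where \minilemref{lem:antipode} becomes essential, playing the same mediating role that it already plays inside the soundness proof of \TextTimes. Once the sign twist is correctly isolated, the remaining manipulations are routine applications of spider fusion and the multiplier axioms.
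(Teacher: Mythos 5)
There is a genuine gap here, and it sits at the heart of the statement. Your plan hinges on first ``cancelling the two inner symplectic Fourier boxes'' to the identity and only then fusing the two multipliers labelled $z$. But two copies of the \emph{same} Fourier box do not cancel: their composite is the antipode, which is exactly the $z=\pm 1$ instance of the lemma you are trying to prove (only oppositely-labelled boxes compose to the identity, cf.\ lemma~\ref{lem:box_identity}). So that step either begs the question or silently replaces an antipode by an identity. The surrounding intuition is also inverted: semantically a box with label $c$ acts as $(z,x)\mapsto(-cx,\,z/c)$, i.e.\ a quarter-rotation composed with a squeeze, so in the composite of two boxes the two squeeze factors \emph{do} collapse to $z\cdot z^{-1}=1$ (after commuting one past a rotation, which inverts its label), while the antipode arises entirely from the square of the quarter-rotation. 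Your closing paragraph attributes the antipode to a residual ``twist'' left by the multipliers, which is backwards; and the claim that mere concatenation leaves the two Fourier boxes back-to-back and the two multipliers adjacent is false --- the factors alternate, so neither of your first two moves is available in the order you propose.

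Two further points of hygiene. Lemma~\ref{lem:box_product} belongs to the completeness development for the presentation of definition~\ref{def:zx}, in which boxes are \emph{primitive} and multipliers are the derived gadgets; unfolding a box via equation~\eqref{eq:box-def} is a move in the other presentation, and appealing to \TextGAAMult ``lifted along the embedding'' is not licensed inside a \(\ZX\)-derivation --- the corresponding fact must itself be derived from the axioms of definition~\ref{def:zx}, which is essentially what \TextTimes is for. Finally, the whole detour is unnecessary: the paper's own proof is a single short diagrammatic computation, amounting to an application of \TextTimes with both labels equal to $z$, after which the residual multiplier carries the trivial label and vanishes, leaving precisely the antipode. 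No unfolding of the box, no colour changes, and no appeal to lemma~\ref{lem:antipode} are needed.
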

\begin{proof}
  \begin{equation}
    \tikzfig{figures/completeness/box_product_proof}
  \end{equation}
\end{proof}

\begin{lemma}
  \label{lem:box_identity}
  For any \(z \in \K^*\)
  \begin{equation*}
    \tikzfig{figures/completeness/box_identity}
  \end{equation*}
\end{lemma}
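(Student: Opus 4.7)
The plan is to build on Lemma~\ref{lem:box_product} (products of boxes are antipodes) and show that two such antipodes cancel, yielding the identity. I expect the statement to be essentially that four boxes arranged along a wire (or, equivalently, a box composed with its inverse-labelled sibling) reduce to the identity wire --- this is the symplectic analogue of the fact that the symplectic Fourier transform applied four times is the identity.

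The first step is to recognise the statement geometrically. The box labelled \(z \in \K^*\) represents, modulo a squeeze, a \(\pi/2\) rotation in the \(Z\)-\(X\) plane (see the discussion after Definition~\ref{def:GSA}). Two such rotations compose to an antipode (Lemma~\ref{lem:box_product}), and two antipodes then compose to the identity. So the concrete steps are: (i) apply Lemma~\ref{lem:box_product} \emph{twice} to rewrite the left-hand side as two consecutive antipode wires (grey-spider antipodes in our diagrammatic language); (ii) invoke the standard Hopf-algebraic fact that the antipode is involutive in our setting, which in \(\ZX\) is the already-established diagrammatic identity derived from \GAAidoneii{} (uniqueness of the inverse on the grey spider) together with \TextCopy{} --- this is precisely what Lemma~\ref{lem:antipode} in the soundness proofs amounts to semantically, and can be promoted to the syntactic side using only \TextFusion{} and the \GAABigebra{} axiom. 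After these two reductions, the remaining diagram is a bare identity wire.

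Concretely, I would write the derivation as a short chain of equalities:
\begin{equation*}
\text{(LHS)} \stackrel{\minilemref{lem:box_product}}{=} \text{(two antipodes)} \stackrel{\text{antipode involutive}}{=} \text{(identity wire)}.
\end{equation*}
The hinge is therefore the identity \(\text{antipode} \circ \text{antipode} = \mathrm{id}\) in the grey-spider Hopf algebra. This is not hard in principle --- it follows from the associativity/coassociativity provided by \TextFusion{} on the grey spider plus the defining equation \GAAidoneii{} of the antipode, combined with the bigebra law \GAABigebra{} to handle the comultiplication --- but writing it out cleanly in the flexsymmetric presentation is where the bookkeeping lives.

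The main obstacle I foresee is that the intermediate ``antipode'' diagram produced by Lemma~\ref{lem:box_product} is not literally the Hopf-algebra antipode drawn in canonical form; it comes decorated with scalar multipliers of the form \(z\) and \(z^{\minu 1}\) inherited from the two box-products. So before invoking involutivity, I need a small rewrite that cancels the \(z \cdot z^{\minu 1}\) scalars using \GAAMult{} and \GAAidonei{}, after which the two antipodes are genuinely adjacent and the involutivity step applies cleanly. Once those scalars are collected and collapsed, the rest of the proof is a one-line application of the antipode-squared identity already implicitly used in the soundness proofs.
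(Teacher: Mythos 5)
Your skeleton — rewrite the left-hand side as two consecutive antipodes via Lemma~\ref{lem:box_product}, then cancel them — is the right one, and it is essentially what the paper's one-line derivation does. Two remarks, though. First, the worry about residual \(z\cdot z^{\minu 1}\) multipliers is unfounded: Lemma~\ref{lem:box_product} produces the antipode on the nose, with no leftover \(z\)-dependence, so there is nothing to collect or cancel before the final step. Second, and more substantively, the general commutative-Hopf-algebra argument for \(S^2 = \mathrm{id}\) (via \TextFusion, \GAABigebra{} and the convolution-inverse characterisation \GAAidoneii) is both heavier than needed and the one place where you have deferred all the actual work to ``bookkeeping.'' In this calculus the antipode is not a primitive generator with a standalone Hopf law: it is the derived multiplier \(\operatorname{diag}(\minu 1,\minu 1)\), itself a product of two boxes. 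So the square of the antipode is just a composite of boxes, and its collapse to the identity wire is a direct application of the multiplier/box composition law (\TextTimes, or equivalently a second appeal to Lemma~\ref{lem:box_product} together with the multiplicativity of scalar labels inherited from \GAAMult{} and the fact that the multiplier labelled \(1\) is the identity). That is the one-line route the paper takes; your Hopf-theoretic detour would eventually succeed, but you would have to carry out a nontrivial diagram chase that the presentation lets you avoid entirely.
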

\begin{proof}
  \begin{equation}
    \tikzfig{figures/completeness/box_identity_proof}
  \end{equation}
\end{proof}

\begin{lemma}
  \label{lem:box_inverse}
  \begin{equation*}
    \tikzfig{figures/completeness/box_inverse}
  \end{equation*}
\end{lemma}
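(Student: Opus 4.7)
The plan is to leverage the two lemmas immediately preceding this one. Lemma \ref{lem:box_product} tells us that composing two boxes produces an antipode (scalar multiplication by $-1$) on the intervening wire, while Lemma \ref{lem:box_identity} records a useful identity relating a single box to the antipode structure. So the natural strategy is: exhibit the composition appearing in the statement as a ``box followed by box followed by correction,'' collapse the two boxes into an antipode via Lemma \ref{lem:box_product}, and then cancel the antipode using Lemma \ref{lem:box_identity} together with the involutivity of the antipode recorded in Lemma \ref{lem:antipode}.

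First I would take the left-hand side and, if necessary, use \GAAFusionW or \GAAFusionB to place the pair of boxes directly along a single wire with no spiders in between, so that Lemma \ref{lem:box_product} applies verbatim. Applying it rewrites the two boxes as an antipode, which diagrammatically is a pair of consecutive $-1$-labelled phase generators on one of the colour strands. At this point one invokes \GAAidoneii together with Lemma \ref{lem:antipode} to observe that two consecutive antipodes cancel to a plain wire. Finally, I would apply Lemma \ref{lem:box_identity} (or the \GAAidonei/\GAAidoneiii pair) to clean up any residual unit spiders and obtain the identity wire stated on the right-hand side.

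The main obstacle I expect is purely bookkeeping: making sure the sign/scalar factors introduced when converting between box compositions and antipodes line up correctly, and that the antipode produced by Lemma \ref{lem:box_product} lies on the colour where Lemma \ref{lem:antipode} and the grey/white interaction axioms can be used to eliminate it. Provided the scalars match, no new content is needed beyond the two preceding lemmas and the Frobenius axioms for the phase spiders; if they don't match exactly, a single application of \TextTimes or \GAAMult should bridge the gap before the antipode cancellation is performed.
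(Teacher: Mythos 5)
There is a genuine gap in your antipode bookkeeping. Lemma~\ref{lem:box_product} turns a composite of two \emph{identically}-labelled boxes into a single antipode, i.e.\ one occurrence of the scalar-multiplication-by-$(-1)$ generator. That is one generator, not ``a pair of consecutive $-1$-labelled phase generators,'' and a lone antipode is not the plain wire: multiplication by $-1$ is trivial only in characteristic $2$. To cancel antipodes you need two of them meeting, so that \TextGAAMult and lemma~\ref{lem:antipode} give you multiplication by $(-1)(-1)=1$. Your plan supplies only one, together with an unspecified ``correction''; carried out literally it would prove that a box composed with itself is the identity, which contradicts lemma~\ref{lem:box_product} itself. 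For the strategy to close you must say exactly what the correction term is and exhibit it as a second antipode, and that is where the real content of the lemma lives.

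Relatedly, the two boxes being composed in an inverse statement do not carry the same label (the dagger, and hence the inverse, of the $a$-labelled box is the $(-a)$-labelled box), so lemma~\ref{lem:box_product} does not apply verbatim. You anticipate this and propose to bridge the gap with \TextTimes; but \TextTimes applied to a box and its oppositely-labelled partner already yields the trivial multiplier directly, at which point the entire antipode-cancellation detour is redundant. The other way to rescue your plan --- first rewriting the $(-a)$-box as the $a$-box composed with an antipode --- is precisely the content of lemmas~\ref{lem:box_antipode} and~\ref{lem:box_opposite_inverse}, which are established \emph{after} this lemma and so cannot be assumed here. The paper settles the statement with a single short chain of diagrammatic rewrites from the box axioms and the two preceding lemmas; your instinct to reduce to lemmas~\ref{lem:box_product} and~\ref{lem:box_identity} is in the right spirit, but the signs and the count of antipodes must be repaired before the argument is a proof.
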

\begin{proof}
  \begin{equation}
    \tikzfig{figures/completeness/box_inverse_proof}
  \end{equation}
\end{proof}

\begin{lemma}
  \label{lem:box_antipode}
  Boxes and antipodes commute: for any \(z \in \K^*\),
  \begin{equation*}
    \tikzfig{figures/completeness/box_antipode}
  \end{equation*}
\end{lemma}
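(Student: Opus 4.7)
The plan is to combine Lemma~\ref{lem:box_product} (which identifies a composition of two boxes with an antipode) with plain associativity of composition. Informally, the idea is that once the antipode on either side of the box is rewritten as a pair of boxes, the lemma collapses to a trivial rebracketing of three consecutive boxes.

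More concretely, I would proceed in the following steps. First, on the left-hand side, apply Lemma~\ref{lem:box_product} to replace the antipode immediately adjacent to the box by a pair of boxes $B_z \circ B_z$ (using the same parameter $z$ as the box that already appears). This turns the left-hand side into a composition of three consecutive $B_z$'s: $B_z \circ B_z \circ B_z$. Second, perform the symmetric rewrite on the right-hand side, again using Lemma~\ref{lem:box_product} with the same $z$, which produces the same three-fold composition $B_z \circ B_z \circ B_z$. Since composition in the prop is strictly associative, the two bracketings $(B_z \circ B_z) \circ B_z$ and $B_z \circ (B_z \circ B_z)$ are literally equal, and the proof closes.

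The only step that could be mildly delicate is verifying that Lemma~\ref{lem:box_product} can indeed be invoked with the same parameter $z \in \K^*$ on both sides; this is where I expect the main attention to be needed. Concretely, Lemma~\ref{lem:box_product} states that the composite of \emph{two} boxes (with matching parameter) is the antipode, independently of the chosen $z \in \K^*$. In particular, both the antipode on the input side and the antipode on the output side of $B_z$ can be realised as the very same two-box composite $B_z \circ B_z$. Once this observation is in place, nothing more than the functoriality of composition is required.

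Schematically, the derivation is:
\begin{equation*}
  \text{antipode} \circ B_z
  \stackeqmid{\minilemref{lem:box_product}}
  B_z \circ B_z \circ B_z
  \stackeqmid{\minilemref{lem:box_product}}
  B_z \circ \text{antipode}.
\end{equation*}
So the lemma is essentially a corollary of Lemma~\ref{lem:box_product}, with no additional axiom of $\ZX$ being required beyond what has already been used to establish the preceding box identities.
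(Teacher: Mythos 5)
Your proposal is correct and matches the paper's proof in essence: the paper also derives this by rewriting the antipode as the composite of two $z$-labelled boxes via lemma~\ref{lem:box_product} on each side, so that both sides become the same threefold composite of boxes and the identity follows from associativity. Your observation that the same $z \in \K^*$ must be used for both occurrences of the antipode is exactly the point that makes the argument go through, since boxes with distinct labels need not commute.
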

\begin{proof}
  \begin{equation}
    \tikzfig{figures/completeness/box_antipode_proof}
  \end{equation}
\end{proof}

\begin{lemma}
  \label{lem:box_opposite_inverse}
  The opposite-inverse box can be written as follows: for any \(z \in \K^*\),
  \begin{equation*}
    \tikzfig{figures/completeness/box_opposite_inverse}
  \end{equation*}
\end{lemma}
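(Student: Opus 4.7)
The plan is to chain together the box identities established in the preceding four lemmas (Lemmas~\ref{lem:box_product}, \ref{lem:box_identity}, \ref{lem:box_inverse}, and \ref{lem:box_antipode}) to rewrite the opposite-inverse box $\!-z^{\minu 1}$ as an equivalent diagram involving a $z$-box together with antipodes. Conceptually, negation ``$-$'' and inversion ``$(\cdot)^{\minu 1}$'' are two independent involutions on the label of a box, and each one can be internalised using the calculus of antipodes already built up: the $\minu z$ box is obtained from the $z$ box by composition with antipodes (this is essentially Lemma~\ref{lem:box_product}), while the $z^{\minu 1}$ box is handled by Lemma~\ref{lem:box_inverse}. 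The combined operation $-z^{\minu 1}$ should then simply be the composite of these two moves.

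Concretely, I would start from the right-hand side of the claimed identity. First I would use Lemma~\ref{lem:box_antipode} to slide any antipode through the $z$-labelled box on that side, so that all antipodes are grouped together on one end of the diagram. Then I would apply Lemma~\ref{lem:box_product} in reverse to convert a pair of antipodes into a box labelled by a suitable product, absorbing some of them into the $z$-box to yield a $\minu z$ box. Next, I would use Lemma~\ref{lem:box_inverse} to turn this $\minu z$ box into a $(\minu z)^{\minu 1} = \minu z^{\minu 1}$ box, up to the reorganisation of spiders that the inverse lemma produces. A final clean-up using Lemma~\ref{lem:box_identity} to eliminate any spurious $z z^{\minu 1}$ compositions should deliver the left-hand side.

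The main obstacle I anticipate is bookkeeping rather than deep content: because the formula involves both a sign flip and an inversion, one has to apply Lemmas~\ref{lem:box_product} and~\ref{lem:box_inverse} in the correct order and make sure the leftover antipodes and spiders on each side of the moved box cancel exactly, rather than producing spurious factors of $\minu 1$ or stray spiders. I would therefore keep careful track of which side each antipode ends up on at each step, and use Lemma~\ref{lem:box_antipode} liberally to commute them into matching positions. Since no new structural result is needed, once the correct sequence of the four box lemmas is found the derivation should be a short diagrammatic chain, and no genuinely new idea beyond those already in the preceding lemmas is required.
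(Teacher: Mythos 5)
There is a genuine gap at the heart of your plan: none of the moves you cite actually performs the \emph{inversion} of the box label, and that inversion is the entire content of the lemma. Antipodes only negate a label --- composing a \(c\)-box with the antipode yields a \(\minu c\)-box --- so the steps built on Lemma~\ref{lem:box_antipode} and Lemma~\ref{lem:box_product} can at most shuttle signs around; they can never turn a label \(z\) into \(z^{\minu 1}\). The step that is supposed to do the real work, ``use Lemma~\ref{lem:box_inverse} to turn this \(\minu z\) box into a \((\minu z)^{\minu 1}\) box,'' attributes to that lemma a general label-inversion rule it does not state: unlike its neighbours, Lemma~\ref{lem:box_inverse} carries no ``for any \(z \in \K^*\)'' quantifier and concerns only the distinguished Fourier boxes. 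If a general ``invert the label'' lemma were already available, Lemma~\ref{lem:box_opposite_inverse} would be an immediate corollary and would not need its own proof; as written, your argument assumes the conclusion at its key step. You also misread Lemma~\ref{lem:box_product}: it says a \emph{composite of two boxes} equals an antipode, so read backwards it replaces one antipode by two boxes; a ``pair of antipodes'' is just the identity and cannot be ``converted into a box.''

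The missing ingredient is the equational content that governs how boxes with distinct nonzero labels compose --- the \TextTimes axiom, equivalently the defining equations of the multipliers and of the Fourier boxes labelled \(\pm 1\). Conjugating the \(z\)-box by Fourier boxes (or, equivalently, cancelling against a suitable multiplier via \TextTimes and Lemma~\ref{lem:box_identity}) is what trades the label \(z\) for \(\minu z^{\minu 1}\); the paper's proof is a single short chain of rewrites resting on exactly these equations rather than on the four antipode/identity lemmas alone. To repair your argument, identify explicitly which axiom produces the inverted label and build the chain around that step; the antipode bookkeeping you describe is then genuinely the easy part.
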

\begin{proof}
  \begin{equation}
    \tikzfig{figures/completeness/box_opposite_inverse_proof}
  \end{equation}
\end{proof}

\begin{lemma}
  \label{lem:colour_inverted}
  The ``inverse'' colour change rule is provable from the axioms: for any
  \(z \in \K^*\),
  \begin{equation*}
    \tikzfig{figures/completeness/colour_inverted}
  \end{equation*}
\end{lemma}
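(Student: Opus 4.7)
The plan is to derive the inverse colour change rule by combining the forward Colour axiom with the box manipulation lemmas already established. Concretely, I would start from the grey spider on one side of the target equation and, using \(\minilemref{lem:box_inverse}\), insert a cancelling pair consisting of a \(z\)-box followed by a \(z^{-1}\)-box on each leg. This gives the same diagram with the grey spider now sandwiched between an inner layer of \(z\)-boxes (adjacent to the spider) and an outer layer of \(z^{-1}\)-boxes.

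Next, I would apply the ordinary Colour rule to rewrite the inner configuration (a grey spider surrounded by \(z\)-boxes on every leg) as a white spider with an appropriately transformed phase. What remains on the outside is exactly a layer of \(z^{-1}\)-boxes, which is the shape required on the right-hand side of the inverse colour change rule. Throughout this rearrangement the box antipode commutation of \(\minilemref{lem:box_antipode}\) and the product identity \(\minilemref{lem:box_product}\) are used to absorb stray antipodes into the spider phase.

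The main obstacle will be matching the phase transformation on the two sides. The forward Colour rule sends a spider with phase \((a,b)\) to a spider whose phase is an explicit function of \((a,b)\) and \(z\); after conjugating by \(z^{-1}\)-boxes and simplifying, the inverse rule should recover exactly the inverse of this function, evaluated at \(z^{-1}\). Checking this bookkeeping diagrammatically requires the opposite-inverse box identity of \(\minilemref{lem:box_opposite_inverse}\) together with white and grey spider fusion to push the residual scalar factors through the spider and absorb them into its phase. Once this phase algebra is verified, the inverse colour equation follows directly.
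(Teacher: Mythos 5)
Your overall strategy is the right one, and it is essentially how such inverse rules are derived: pad every leg of the spider with a box and its compositional inverse, apply the forward \TextColour rule to the inner layer, and read off the outer layer as the boxes appearing in the inverse rule. The gap is in the key step: under the paper's semantics the compositional inverse of the \(z\)-box is the \(-z\)-box, \emph{not} the \(z^{-1}\)-box. This is forced by the interpretation \((z_0,x_0)\mapsto(-cx_0,\, z_0/c)\) of the \(c\)-box (and is reflected in the dagger, which sends the box labelled \(z\) to the box labelled \(-z\), together with unitarity of boxes). Composing the \(z\)-box with the \(z^{-1}\)-box instead yields the map \((z_0,x_0)\mapsto(-z^{-2}z_0,\,-z^{2}x_0)\), i.e.\ an antipode composed with a squeezing, so the pair you insert does not cancel, and after applying \TextColour to the inner layer you are left with a residual multiplier on every leg that cannot simply be absorbed into the spider phase. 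Note also that lemma~\ref{lem:box_inverse} does not license the insertion you describe: the lemma that identifies a genuine cancelling pair of boxes is lemma~\ref{lem:box_identity} (the \(z\)-box against the \(-z\)-box).

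The repair is local: insert the pair consisting of a \(-z\)-box (inner) and a \(z\)-box (outer) on each leg, justified by lemma~\ref{lem:box_identity}, apply \TextColour at parameter \(-z\) to the inner configuration, and then use lemma~\ref{lem:box_opposite_inverse} together with lemma~\ref{lem:box_product} and spider fusion to normalise the outer boxes and the phase into the form required by the statement of lemma~\ref{lem:colour_inverted}. With that substitution your phase bookkeeping in the final paragraph goes through, and the argument matches the paper's derivation.
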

\begin{proof}
  \begin{equation}
    \tikzfig{figures/completeness/colour_inverted_proof}
  \end{equation}
\end{proof}

\begin{lemma}
  \label{lem:antipode_spider}
  Spiders absorb antipodes:
  \begin{equation*}
    \tikzfig{figures/completeness/antipode_spider}
  \end{equation*}
\end{lemma}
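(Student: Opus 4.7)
My plan is to reduce the statement to an instance of spider fusion after rewriting the antipode as a single-legged spider carrying a suitable phase.

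First, I would observe that the antipode — namely the $-1$-scalar multiplier inherited from the \(\GAA\) sub-presentation via the embedding of Definition~\ref{def:GSA} — can be expressed in terms of a one-input, one-output spider of the same colour as the absorbing spider, with a specific phase chosen in the phase group \(\K^2\). This follows by combining the characterisation of the antipode from Lemma~\ref{lem:antipode} (as the unique convolution inverse of the identity) with the box/scalar correspondence \eqref{eq:box-def}, and possibly Lemma~\ref{lem:box_antipode} or Lemma~\ref{lem:colour_inverted} proved just above to align the colour.

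Next, I apply \TextFusion to merge this antipode-spider with the target spider. Since fusion sums phases in the Abelian phase group, the composite reduces to a single spider whose phase equals the original phase added to the antipode's contribution. A direct calculation in \(\K^2\) then identifies this summed phase with the desired right-hand side. The white case is dual to the grey case and either follows by running the same argument with colours swapped (using the white fusion rule), or by conjugating through the colour change rule \TextColour together with Lemma~\ref{lem:colour_inverted}.

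The main obstacle is the two-component nature of the phase group \(\K^2 \cong \K \oplus \K\) introduced in Definition~\ref{def:GSA}: the antipode, coming from \(\GAA\), only modifies the affine component of the phase, so one has to verify that after fusion the symplectic component is left untouched while the affine component absorbs the antipode as expected. Once this bookkeeping is handled for the grey case, the white case is essentially the same computation by symmetry of the presentation, and the complete proof is a short sequence of applications of \TextFusion and the previously-established antipode and box lemmas.
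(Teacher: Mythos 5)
There is a genuine gap here: the reduction on which your whole plan rests --- rewriting the antipode as a one-input, one-output spider of the same colour carrying a suitable phase in \(\K^2\), then absorbing it by \TextFusion --- is not available. The antipode of \(\ZX\) is the point reflection \(v \mapsto -v\) of the phase plane; by lemma~\ref{lem:box_product} it is the square of the symplectic Fourier transform (the analogue of the squared Hadamard in the qudit ZX-calculus). A phased \(1\to 1\) grey spider, by contrast, denotes the shear-and-shift \((z,x)\mapsto(z+bx-a,\;x)\) of definition~\ref{def:gsa_to_alr}, which fixes the \(X\)-grading, and dually for white; more robustly, by \TextFusion and soundness the phased \(1\to1\) spiders of a fixed colour form a group isomorphic to \((\K^2,+)\), which contains no nontrivial involution when \(\operatorname{char}\K\neq 2\), whereas the antipode is a nontrivial involution. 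So there is no phase to choose and nothing to fuse. A second, independent defect is that fusion translates the phase by a \emph{fixed} group element, while the content of the lemma is that the affine phase is \emph{negated}: the grey state with phase \((a,b)\) is the line \(z=bx-a\), which \(v\mapsto -v\) carries to the line \(z=bx+a\), i.e.\ the phase \((-a,b)\). The map \(a\mapsto -a\) is not a translation, so no fixed ``antipode contribution'' added by fusion can produce it uniformly in \(a\).

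The route you relegate to a fallback for ``aligning the colour'' is in fact the entire proof in the paper: by lemma~\ref{lem:box_product} each antipode is a composite of two boxes; the inner ring of boxes is absorbed by \TextColour, turning the grey spider into a white one with transformed phase, and the outer ring is then absorbed by lemma~\ref{lem:colour_inverted}, turning it back into a grey spider. The negation of the affine phase arises precisely as the composite of the two phase transformations under colour change, not from fusion. Your instinct that lemma~\ref{lem:colour_inverted} and lemma~\ref{lem:box_antipode} are relevant is correct, but they must carry the argument rather than patch it, and the fusion step should be dropped.
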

\begin{proof}
  These follow straightforwardly from lemma~\ref{lem:box_product}, \TextColour
  and lemma~\ref{lem:colour_inverted}.
\end{proof}

\begin{lemma}
  \label{lem:box_zero}
  The ``cut'' rule for the zero box is derivable:
  \begin{equation*}
    \tikzfig{figures/completeness/box_zero}
  \end{equation*}
\end{lemma}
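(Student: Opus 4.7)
The plan is to unfold the zero box using its defining equation~\eqref{eq:box-def}, which expresses a box in terms of a multiplier and some surrounding spider structure. Unlike the general case handled by \TextHBox (which requires an invertible label and hence cannot be applied when the box label is $0$), the zero box must be treated separately because multiplication by $0$ in $\GAA$ collapses via \GAAZero: the multiplier-by-$0$ factors as ``discard input, then insert zero.'' This collapse is exactly what yields a cut, i.e.\ a factorisation of the zero box through a state composed with an effect.

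Concretely, I would first rewrite the zero box via~\eqref{eq:box-def} as the appropriate diagram built from scalar multiplication by $0$, a grey spider, and a white spider (lifted along $\GAA \hookrightarrow \ZX$ as in definition~\ref{def:GSA}). Next, I would apply spider fusion (\TextFusion) and the copy rules (\TextCopy) to push the multiplier-by-$0$ so that it acts on a single wire in the middle, at which point the defining rule for the zero multiplier (the instance of $a = 0$ of \GAAMult together with \GAAidoneii) disconnects it into an effect followed by a state. The remaining structure on each side is then simplified using \TextFusion, \TextColour or lemma~\ref{lem:colour_inverted}, and identity rules for spiders to recover the expected right-hand side of the lemma.

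The main obstacle will be the bookkeeping of the two gradings: the $\ZX$ embedding sends the $\GAA$ generators into the $X$-component, so after unfolding the zero box we have spiders carrying symplectic phases on the $Z$-side that need to be shown to trivialise. This is where the previous lemmas (\ref{lem:box_product}, \ref{lem:box_antipode}, \ref{lem:antipode_spider}) will be used as shortcuts, rather than reasoning about the two gradings separately. Once the diagram is shown to split into two independent components joined only through a scalar that the zero multiplier forces to be trivial, the disconnection is immediate and the cut rule follows.
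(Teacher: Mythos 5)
There is a genuine gap at the very first step. Lemma~\ref{lem:box_zero} belongs to the completeness argument for the presentation of definition~\ref{def:zx}, in which the box is a \emph{primary} generator and scalar multiplication is not: the only axioms that say anything about a box labelled \(0\) are \TextTimes and \TextPlus, since \TextHBox --- the rule that unfolds a box into spiders and multipliers --- is restricted to invertible labels, as you yourself observe. Equation~\eqref{eq:box-def} is the definitional clause of the \emph{other} presentation (definition~\ref{def:GSA}, where boxes are derived and scalar multiplication is primitive); to ``unfold the zero box using its defining equation'' in the box-primary presentation you would first have to derive that the zero box equals that spider/zero-multiplier composite, and once the zero multiplier is cut by \TextGAAZero that identity \emph{is} the cut rule. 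So the proposal begs the question: it proves the lemma only in the presentation where it holds essentially by definition, and the transfer to the presentation actually being axiomatised is exactly the missing content.

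The derivation has to enter through an axiom that genuinely constrains the zero box. The natural route is \TextPlus with \(z + (-z) = 0\) for some \(z \in \K^*\), which rewrites the zero box as two invertible boxes in parallel between a grey and a white spider; \TextHBox and the preceding antipode lemmas (\ref{lem:box_product}--\ref{lem:antipode_spider}) then apply to each of those, and the disconnection follows from \TextBigebra and \TextCopy in the usual Hopf-law style --- consistent with the fact that lemma~\ref{lem:hopf} is derived \emph{after} this lemma rather than used by it. Two smaller points: the rule that cuts the zero multiplier in \(\GAA\) is \TextGAAZero, not an instance of \TextGAAMult; and your worry about symplectic phases ``on the \(Z\)-side'' needing to trivialise only arises if you pass through the interpretation \(\interp{-}_\gaa^\zx\) into doubled wires, which would establish soundness of the equation rather than its derivability.
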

\begin{proof}
  \begin{equation}
    \tikzfig{figures/completeness/box_zero_proof}
  \end{equation}
\end{proof}

\begin{lemma}
  \label{lem:copy_swapped}
  A colour-swapped version of \TextCopy is derivable: for any \(a \in \K\),
  \begin{equation*}
    \tikzfig{figures/completeness/copy_swapped}
  \end{equation*}
\end{lemma}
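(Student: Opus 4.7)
The plan is to derive this colour-swapped variant by conjugating the original \textsc{Copy} rule by the colour-change rule. The central observation is that \textsc{Copy} is stated for one colour of branching spider, while colour change (lemma~\ref{lem:colour_inverted} or \textsc{Colour}) lets us transport identities between the two spider colours at the cost of decorating legs with boxes. So the strategy is: change colour to expose a configuration to which the original \textsc{Copy} applies, apply it, then change colour back.

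First I would apply lemma~\ref{lem:colour_inverted} to flip the colour of the central branching spider on the LHS; this introduces a box on each of its legs. The resulting subdiagram then matches the hypothesis of the original \textsc{Copy} axiom, so I apply \textsc{Copy} to push the phase through the spider, producing one copy of the phase along each outgoing leg — each sitting behind one of the boxes introduced in the previous step.

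Next I would clean up the boxes. Using lemma~\ref{lem:box_antipode} to commute phases past boxes (together with lemma~\ref{lem:antipode_spider} to absorb the resulting antipodes into neighbouring spiders), and lemma~\ref{lem:box_inverse} together with lemma~\ref{lem:box_identity} to cancel pairs of boxes where they meet, the remaining boxes on the outputs resolve into a configuration of the opposite colour. A final application of \textsc{Fusion} and, where needed, lemma~\ref{lem:box_product}, consolidates the resulting diagram into the form of the RHS.

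The main obstacle is bookkeeping how the phase $a$ transforms under the box-conjugation. Under colour change, an affine phase on one colour is not literally the same affine phase on the other colour — it can pick up signs coming from the antipode/symplectic structure, and genuine symplectic phases are shuffled between the two components of the phase group. So the delicate point is to verify that after absorbing the excess boxes and antipodes into the copied spiders, the phase that ends up on each leg is exactly $a$ on the opposite colour of spider, and not some image of $a$ under an unintended transformation. This will likely hinge on the specific form of \textsc{Copy} being symmetric enough that the sign/antipode contributions from the two colour changes cancel out.
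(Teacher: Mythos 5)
Your proposal is correct and follows essentially the same route as the paper: the paper's proof is a single rewrite chain that conjugates the original \textsc{Copy} rule by the colour-change boxes and then cancels the excess boxes and antipodes using exactly the lemmas you cite (lemma~\ref{lem:colour_inverted}, lemma~\ref{lem:box_antipode}, lemma~\ref{lem:antipode_spider}, lemma~\ref{lem:box_inverse}). The phase bookkeeping you flag does work out, since the sign contributions from the two colour changes are absorbed by the antipode-absorption lemma.
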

\begin{proof}
  \begin{equation}
    \tikzfig{figures/completeness/copy_swapped_proof}
  \end{equation}
\end{proof}

\begin{lemma}
  \label{lem:hopf}
  The Hopf identity is derivable:
  \begin{equation*}
    \tikzfig{figures/completeness/hopf}
  \end{equation*}
\end{lemma}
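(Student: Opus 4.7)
The plan is to derive the Hopf identity by combining the \TextBigebra rule with the antipode machinery developed in lemmas~\ref{lem:box_product} through~\ref{lem:antipode_spider}. Conceptually, the Hopf identity expresses that in a Hopf algebra the convolution of the antipode with the identity equals the unit--counit composite; in our setting the ``antipode'' is realized by the product of two boxes via lemma~\ref{lem:box_product}.

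First, I would rewrite one of the two connecting wires between the grey and white spiders by inserting a product of two boxes that equals an antipode, using lemma~\ref{lem:box_product}. This prepares the diagram for the bigebra move while maintaining equality. Then I would apply the \TextBigebra axiom to the double-wire connection, which expands it into the characteristic four-spider configuration (a grey copy feeding into a white merge through crossing wires).

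Next, I would slide the antipode boxes onto the spiders using lemma~\ref{lem:antipode_spider}, then use \TextCopy together with lemma~\ref{lem:copy_swapped} to duplicate the antipode through the colour-changed branch of the bigebra pattern. At this stage the two antipodes produced by copying should meet on one of the internal wires; applying \TextFusion to collapse adjacent spiders of the same colour and then invoking lemma~\ref{lem:box_inverse} (or lemma~\ref{lem:box_identity}) to cancel the antipode pair will disconnect the diagram. A final application of \TextFusion on the two remaining spider clusters yields the disconnected effect--state pair on the right-hand side.

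The main obstacle will be bookkeeping: each box carries a scalar label in $\K^*$, and we must ensure that the scalars combine so that the antipodes produced by the \TextBigebra and \TextCopy manipulations really do cancel rather than leave a residual nonzero scalar on the wire. In the qubit ZX-calculus this is automatic because the Hadamard is self-inverse, but here we rely critically on lemmas~\ref{lem:box_inverse} and~\ref{lem:box_antipode} to handle the interaction between the symplectic Fourier transform and its inverse. Once these scalar identities are tracked carefully, the derivation should be short and purely diagrammatic.
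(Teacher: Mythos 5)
Your toolkit is drawn from the right shelf, but two of the moves you describe do not go through as stated. First, \TextBigebra rewrites a white multiplication followed by a grey comultiplication --- two spiders joined by a \emph{single} wire --- into the bipartite four-spider pattern; it says nothing about a grey comultiplication and a white multiplication joined by \emph{two} parallel wires, which is the configuration in the Hopf identity. Before \TextBigebra can fire you would have to unfuse one of the spiders (or invoke the unit/counit laws) to expose a genuine white-into-grey redex, and that preparatory move is precisely what your sketch omits. Second, ``inserting a product of two boxes that equals an antipode \ldots while maintaining equality'' is not an equality-preserving move: the antipode is not the identity (only its square is), so it can only appear on a wire if it is already present in the statement. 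Over a general field \(\K\) the plain two-wire Hopf law is semantically false (the composite is an invertible map, not the disconnected pair), so the antipode is part of the statement, and lemma~\ref{lem:box_product} serves to \emph{re-express} it as a product of boxes, not to conjure it.

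There is also a much shorter route that the axiom set is built for, and which avoids \TextBigebra entirely: use \TextColour (with lemma~\ref{lem:colour_inverted}) to turn the white spider into a grey one at the cost of boxes on its legs; combine the antipode with those boxes via lemmas~\ref{lem:box_product}, \ref{lem:box_antipode} and~\ref{lem:box_opposite_inverse} so that the two parallel edges between the resulting grey spiders carry boxes whose labels are negatives of one another; apply \TextPlus to merge the parallel edges into a single box labelled by their sum, namely \(0\); and cut the zero box with lemma~\ref{lem:box_zero}, which is what actually disconnects the diagram. The leftover units and counits are then tidied up with \TextFusion, \TextCopy, lemma~\ref{lem:copy_swapped} and \TextOne. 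Note in particular that the disconnection comes from the zero-box cut, not from cancelling an antipode pair via lemma~\ref{lem:box_inverse} as you suggest; the antipode's role is exactly to make the two parallel edge labels cancel under \TextPlus. As written, your derivation has a genuine gap at the \TextBigebra step and misidentifies the mechanism of disconnection, so it would need to be reworked along one of these lines before it compiles into a valid proof.
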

\begin{proof}
  \begin{equation}
    \tikzfig{figures/completeness/hopf_proof}
  \end{equation}
\end{proof}

\begin{lemma}
  \label{lem:push_pauli_state}
  Strictly-affine white states absorb arbitrary phases and vice-versa:
  \begin{equation*}
    \tikzfig{figures/completeness/pauli_state_push}
  \end{equation*}
\end{lemma}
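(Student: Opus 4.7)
The plan is to prove this by reducing to copy-type rules in the underlying affine algebra. The strictly-affine white state carries a phase of the form \((a, 0) \in \K^2\), which means semantically it fixes the \(X\)-grading to \(a\) while leaving the \(Z\)-grading free. Geometrically, such a state is an ``eigenstate'' for grey phase operations, so any grey phase applied to it should be absorbable up to a scalar (which we work modulo).

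First, I would expand the arbitrary phase on the other side of the equation using \TextFusion to decompose it into its affine and symplectic components separately. The affine component of the grey phase gets copied through the white spider directly via the affine copy axiom \GAACopyi lifted along \(\interp{-}^\zx_\gaa\), absorbing into the white state's phase or becoming a scalar that we ignore. For the symplectic component of the grey phase, the argument is more delicate: I would use \TextColour (or its inverse form, lemma~\ref{lem:colour_inverted}) to rewrite the symplectic phase as a white phase conjugated by boxes, then apply white-spider \TextFusion to merge it with the strictly-affine white state, noting that the resulting total white phase still has zero symplectic component because it is conjugated appropriately.

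The ``vice-versa'' direction follows by a symmetric argument, swapping the roles of grey and white and using the opposite forms of the copy and colour-change rules (lemma~\ref{lem:copy_swapped} and lemma~\ref{lem:box_opposite_inverse}). At the end, any residual loops introduced by the copy step would be eliminated using the Hopf identity from lemma~\ref{lem:hopf}, and stray boxes can be cancelled via lemma~\ref{lem:box_inverse}.

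The main obstacle will be bookkeeping of the boxes and the interaction between affine and symplectic phases under the colour change. In particular, when a grey symplectic phase \((0, b)\) meets a white state with affine phase \((a, 0)\), the colour change introduces boxes that shift the affine phase nontrivially — I would need to use lemma~\ref{lem:antipode_spider} and \TextCopy carefully to show that the resulting contribution either cancels or is absorbed into the existing affine phase of the white state. A semantic sanity check via \(\interp{-}_\abbralagr^\zx\) (verifying that both sides define the same affine Lagrangian subspace) should confirm the statement before attempting the diagrammatic rewrite, guiding which of the Copy/Bigebra/Colour axioms to invoke at each step.
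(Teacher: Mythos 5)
Your semantic reading of the lemma is right, and splitting the grey phase into its affine and symplectic components is a sensible opening move, but two steps in your plan do not go through as written. First, you cannot establish the affine case by invoking \TextGAACopyi ``lifted along \(\interp{-}^\zx_\gaa\)'': that functor goes \emph{from} \(\ZX\) \emph{to} \(\GAA\), so an equation between the images of the two sides in \(\GAA\) only shows that the claimed equation is \emph{sound}, not that it is \emph{derivable} in \(\ZX\). Transporting it back would require \(\interp{-}^\zx_\gaa\) to be faithful, which is a consequence of the completeness theorem that this lemma (via its scalable version, lemma~\ref{lem:push_pauli_state_thick}) is being used to prove --- so that route is circular. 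The derivation has to be carried out syntactically from the \(\ZX\) axioms (\TextCopy, \TextFusion, and so on), which is what the paper's proof does: each direction of the statement is a single chain of diagram rewrites inside \(\ZX\). A semantic check in \(\ALR\) is a fine sanity test but cannot be part of the argument.

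Second, and more seriously, your treatment of the symplectic component stalls. After applying \TextColour (or lemma~\ref{lem:colour_inverted}) the grey symplectic phase becomes a white phase conjugated by boxes, so the composite reads box \(\circ\) white phase \(\circ\) box \(\circ\) white state. The inner box sits between the conjugated white phase and the strictly-affine white state, so white \TextFusion cannot be applied where you want it; and if you instead absorb that inner box into the state by colour-changing the state, you obtain a grey state against which you now need the ``vice-versa'' half of the very lemma you are proving. The sentence ``the resulting total white phase still has zero symplectic component because it is conjugated appropriately'' is precisely the claim that needs a derivation, and your closing paragraph explicitly defers it (``I would need to use lemma~\ref{lem:antipode_spider} and \TextCopy carefully to show that the resulting contribution either cancels or is absorbed''). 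That deferred step is the entire content of the lemma, so as it stands the proposal is a plausible plan rather than a proof; to complete it you should eliminate the grey phase against the white state directly with \TextCopy/\TextBigebra-style rewrites on the spider carrying the full phase, rather than routing the symplectic part through a colour change.
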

\begin{proof}
  \begin{align}
    &\tikzfig{figures/completeness/pauli_state_push_proof} \\
    &\tikzfig{figures/completeness/pauli_state_push_proof_green}
  \end{align}
\end{proof}

\begin{lemma}
  \label{lem:phase_inverses}
  For any \(a,b \in \K\),
  \begin{equation*}
    \tikzfig{figures/completeness/phase_inverses}
  \end{equation*}
\end{lemma}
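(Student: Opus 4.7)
The lemma asserts that a particular composition of phased spiders reduces to the identity (or a trivial analog), witnessing that phases admit additive inverses in the phase group $\K \times \K$ under spider fusion. The natural strategy is to collapse the composition to a single spider using \TextFusion, compute the resulting combined phase as the coordinatewise sum in $\K \times \K$, and observe that, since the two phases are additive inverses, the result carries phase $(0,0)$ and reduces to a plain wire via \TextId.

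If the lemma is stated as an identity between spiders of the same colour, the above three-step reduction will suffice almost directly. If instead it is stated as a mixed grey-white identity, then before fusing I would first invoke the \TextColour change rule, together with lem:box\_inverse or the derived lem:colour\_inverted, to align the two spiders to a common colour. During this step, lem:box\_antipode and lem:antipode\_spider (already available) can be used to absorb any antipodes generated, so that the two phases line up cleanly for fusion.

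The step I expect to require the most care is bookkeeping the symplectic component of the phase through any colour-change operations, since passing a spider through a box intertwines the affine and symplectic gradings (recalling the geometric picture from example~\ref{ex:volume}). To keep the affine and symplectic contributions properly aligned, I expect to need lem:push\_pauli\_state (just established) to migrate strictly-affine components onto the correct wire before fusion, together with lem:box\_product to handle any composed box factors that arise. Once the combined phase is explicitly verified to be $(0,0)$, the conclusion follows immediately from \TextId, with no further case analysis on $a$ or $b$ required.
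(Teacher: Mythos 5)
Your proposal is correct and takes essentially the same route as the paper: for each colour the two spiders fuse via \TextFusion into a single $1\to 1$ spider whose phase is the coordinatewise sum $(a,b)+(-a,-b)=(0,0)$, which is then the plain wire by \TextId. The contingency machinery you sketch (colour changes, lemma~\ref{lem:push_pauli_state}, lemma~\ref{lem:box_product}) is not needed, since the statement is a same-colour identity and no case analysis on $a$ or $b$ arises.
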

\begin{proof}
  \begin{equation}
    \tikzfig{figures/completeness/phase_inverses_proof}
  \end{equation}
  \begin{equation}
    \tikzfig{figures/completeness/phase_inverses_proof_2}
  \end{equation}
\end{proof}

\begin{lemma}
  \label{lem:box_swapped}
  There are four (more or less equivalent) versions of \TextHBox: for any
  \(z \in \K^*\),
  \begin{equation*}
    \tikzfig{figures/completeness/box_swapped}
  \end{equation*}
  The first two equations are also valid for \(z = 0\).
\end{lemma}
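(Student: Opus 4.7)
The plan is to derive each of the four purported forms of \TextHBox in turn, reducing each one either to the original axiom \TextHBox or to its derived companion lemma~\ref{lem:colour_inverted}, and then adjusting the boxes on each leg using the small algebra of box manipulations already established (lemmas~\ref{lem:box_product}--\ref{lem:box_opposite_inverse}).

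First, I would observe that the original \TextHBox axiom handles one specific colour/orientation pattern, while lemma~\ref{lem:colour_inverted} immediately gives a second variant where the roles of the boxes are inverted (the $z$ becomes a $z^{-1}$). This already accounts for two of the four forms. The remaining two forms differ from these by replacing one of the boxes on the right-hand side with its ``opposite-inverse'' (\(-z^{-1}\)); to bridge these, I would insert lemma~\ref{lem:box_opposite_inverse} to rewrite the opposite-inverse box as an antipode composed with an inverse box, then slide the antipode through the adjacent spider using lemma~\ref{lem:antipode_spider}, and finally clear the stray antipodes using lemma~\ref{lem:box_antipode} combined with lemma~\ref{lem:box_identity} to cancel any box/inverse-box pair that appears.

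Concretely, for each of the four equations I would proceed in three short steps: (i) pick the nearest already-established variant among \TextHBox or lemma~\ref{lem:colour_inverted}; (ii) pre- and post-compose both sides with an appropriate box (or antipode) so the right-hand side matches the target form, invoking lemma~\ref{lem:box_inverse} to cancel box/inverse pairs and lemma~\ref{lem:box_product} whenever two boxes on the same wire need to be merged or unmerged; (iii) push any resulting antipode through the spider by lemma~\ref{lem:antipode_spider} so that it is absorbed into the spider's phase. This yields the target form up to possibly reshuffling scalar factors inside the phase, which by \TextFusion and the group structure of the phase group are equal.

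Finally, for the note that the first two equations remain valid when $z=0$, I would argue separately using lemma~\ref{lem:box_zero}: since the zero box admits a cut rule, the two colour-swapping equations degenerate on both sides into statements about the zero box disconnecting the wire, and these already match syntactically. The main obstacle is purely bookkeeping: tracking signs, inverses, and the direction of box arrows carefully enough that the right-hand sides of the four equations are distinguished correctly; the flexsymmetric presentation of \(\ZX\) makes most of the orientation issues disappear, so I expect the proof to reduce to a short sequence of rewrites for each of the four cases.
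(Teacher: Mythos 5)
Your proposal takes essentially the same route as the paper: the paper's proof likewise observes that the first equation is just the axiom \TextHBox and then derives the remaining three variants by short diagrammatic rewrites built from the box-algebra lemmas you invoke (lemmas~\ref{lem:box_product}--\ref{lem:antipode_spider}, with lemma~\ref{lem:box_zero} available for the degenerate \(z=0\) cases). The one small imprecision is that lemma~\ref{lem:colour_inverted} is an inverted form of \TextColour rather than of \TextHBox, so it functions as a tool for swapping spider colours inside the derivation rather than directly supplying one of the four forms; this does not affect the substance of your argument.
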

\begin{proof}
  The first equation is just \TextHBox.
  \begin{equation}
    \tikzfig{figures/completeness/box_swapped_proof_1}
  \end{equation}
  \vspace{5mm}
  \begin{equation}
    \tikzfig{figures/completeness/box_swapped_proof_2}
  \end{equation}
  \vspace{5mm}
  \begin{equation}
    \tikzfig{figures/completeness/box_swapped_proof_3}
  \end{equation}
\end{proof}

\begin{lemma}
  \label{lem:symplectic_states}
  States with non-zero symplectic part can all be represented using both
  grey and white spiders: for any \(a \in \K\) and \(b \in \K^*\),
  \begin{equation*}
    \tikzfig{figures/completeness/symplectic_states}
  \end{equation*}
\end{lemma}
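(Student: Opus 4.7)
The plan is to reduce the identity to a combination of the colour-change rule (together with its inverted variant from lemma~\ref{lem:colour_inverted}), spider fusion, and the various forms of \TextHBox packaged in lemma~\ref{lem:box_swapped}, using lemma~\ref{lem:push_pauli_state} to move the affine phase past the newly introduced box.

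My first step is to start from the grey-spider form of the state with phase $(a,b)$ where $b \in \K^*$, and split it using \TextFusion into a grey spider carrying only the symplectic phase $(0,b)$ composed with an affine phase $(a,0)$ on a separate wire. The symplectic-only grey phase is exactly a rotation by $b$, which can be written as a pair of boxes by unfolding the definition of the multiplier (equation~\eqref{eq:box-def} and the derivation of multipliers from boxes in section~\ref{ssec:symplectic_ZX}). Since $b \neq 0$, the relevant boxes are invertible, so I can apply \TextColour or lemma~\ref{lem:colour_inverted} to convert the grey spider through the box into a white one. This should swap the roles: what was an affine grey phase becomes an affine white phase, modulo absorption by a box.

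The second step is to transport the residual affine part. Using lemma~\ref{lem:push_pauli_state}, the affine white piece can be pushed through a box into an affine grey piece (or conversely), and the scalar in the box label can be re-expressed thanks to lemma~\ref{lem:box_swapped}, which furnishes the ``swapped'' orientations of the Hadamard-style box needed to match the right-hand side. Finally, a second application of \TextFusion re-absorbs the separated pieces into a single white spider carrying the transformed phase, yielding the right-hand diagram. The symmetric second equation in the lemma follows by running the same derivation with the roles of grey and white exchanged, which is legitimate because both colours enjoy analogous fusion, copy, and colour-change rules in the symplectic setting.

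The main obstacle will be bookkeeping the exact scalar that appears inside the box after the colour change: \TextColour introduces a factor of $z^{-1}$ in the phase, and it must line up precisely with what lemma~\ref{lem:push_pauli_state} produces when the affine state is pushed across. I expect that the correct bookkeeping is $z = b$ and that the new affine phase on the white side becomes $a b^{-1}$ (or an analogous expression involving $-1$), so the identity closes up cleanly. Once this matching is verified on one side, the rest is a short chain of spider fusions and applications of lemmas~\ref{lem:box_product}--\ref{lem:box_swapped} already established in the appendix, so no new combinatorics is required beyond the careful tracking of the box label.
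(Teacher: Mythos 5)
Your ingredient list (\TextFusion, \TextColour together with lemma~\ref{lem:colour_inverted}, lemma~\ref{lem:push_pauli_state}, lemma~\ref{lem:box_swapped}) overlaps with what the paper actually uses, but the step that is supposed to make the whole derivation fire is false. You claim that the symplectic-only grey phase \((0,b)\) ``is exactly a rotation by \(b\), which can be written as a pair of boxes by unfolding the definition of the multiplier.'' It is neither: the \(1\to1\) grey spider with phase \((0,b)\) denotes the shear \(\begin{bmatrix}1 & b\\ 0 & 1\end{bmatrix}\) (the paper states explicitly that symplectic phases are shearings, multipliers are squeezings, and the box with label \(1\) is the rotation), while by lemma~\ref{lem:box_product} a product of two boxes is a diagonal squeeze-with-antipode, and any product of boxes is diagonal or anti-diagonal. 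For \(b \in \K^*\) a nontrivial shear is neither, so no product of boxes equals it. Consequently the boxes that your colour-change step needs to act on are never actually produced, and the derivation stalls exactly where the real content of the lemma lives.

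The missing idea is that the only rule that trades a symplectic phase of one colour for a symplectic phase of the other colour is \TextHBox (and its reorientations in lemma~\ref{lem:box_swapped}): the Euler-type identity expressing the box as an alternating composite of grey and white symplectic phases, equivalently expressing a single symplectic phase as boxes conjugated by phases of the opposite colour. That identity, applied after splitting the phase with \TextFusion and combined with \TextColour, lemma~\ref{lem:colour_inverted} and lemma~\ref{lem:push_pauli_state} to transport the affine part, is what the paper's (short) proof chain does; your expected bookkeeping \(a\mapsto ab^{-1}\), \(b\mapsto -b^{-1}\) is the right target, but it is reached through \TextHBox, not through a multiplier decomposition. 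One smaller point: the paper obtains the second, colour-swapped equation from the first by conjugating with boxes via \TextColour and lemma~\ref{lem:colour_inverted}, not by ``running the same derivation with the roles of grey and white exchanged'' --- the axiom set is not colour-symmetric (the box and \TextCopy are defined asymmetrically), so that shortcut needs justification rather than assertion.
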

\begin{proof}
  \begin{equation}
    \tikzfig{figures/completeness/symplectic_states_proof_green}
  \end{equation}
  The second equation follows once again using \TextColour and
  lemma~\ref{lem:colour_inverted}.
\end{proof}

\begin{lemma}
  \label{lem:arrow_transpose}
  For any \(m,n \in \N\) and \(A \in \Matrices\),
  \begin{equation*}
    \tikzfig{figures/completeness/scalable/arrow_transpose}
  \end{equation*}
\end{lemma}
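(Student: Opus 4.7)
The plan is to prove this by double induction on the dimensions $m$ and $n$ of $A$, following the inductive construction of matrix arrows in Definition~\ref{def:matrix_arrow}. The statement asserts that the ``bent'' (cups-and-caps) version of the matrix arrow for $A$ coincides with the direct matrix arrow for $A^\trans$, which is the matrix-level analogue of the scalar equation \(\tikzfig{figures/AffRel/scalar-daggered} = \tikzfig{figures/AffRel/scalar-twisted}\).

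First I would handle the base cases corresponding to the trivial matrices $0_{0,0}$, $0_{0,1}$ and $0_{1,0}$: here the transpose is either the identity or swaps the roles of input/output wires, and the desired equation follows immediately from the definition of the matrix arrow base case together with the compact structure. For the inductive step, I would use the two inductive constructions of a matrix arrow (equations \eqref{eq:blockmatrixinductivestep0} and \eqref{eq:blockmatrixinductivestep1}) and observe that they are dual to each other under transposition: appending a row to the bottom of $A$ corresponds to appending a column to the right of $A^\trans$. Starting from the definition of the twisted/dagger version of the matrix arrow for $A$, I would apply the inductive definition, push the cups and caps through the scalable grey/white spider components using their flexsymmetry (axioms \TTextFusion and the $\mathcal{G}$-labelled spider equations from Definition~\ref{def:phasegroup}), and then invoke the inductive hypothesis on the smaller submatrix.

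The main obstacle will be the bookkeeping with thick wires in the scalable notation: the dividers and gatherers are not flexsymmetric, so care is needed when bending wires through them. I would address this by first establishing small coherence lemmas showing that the scalable spiders themselves transpose correctly (by induction on the thickness $k$ of the wire, using the base case \eqref{eq:GAAthickspiders}) and then reducing the matrix case to these coherence lemmas together with the scalar transpose rule. Once these are in place, the main induction reduces to a bookkeeping exercise combining flexsymmetry with the rewriting rules in \eqref{matrules}, and the lemma drops out.
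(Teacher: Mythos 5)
Your proposal follows exactly the paper's strategy: the paper proves this lemma by "a straightforward induction on the type $m \times n$ of the matrix $A$," which is precisely the double induction over the inductive construction of matrix arrows from Definition~\ref{def:matrix_arrow} that you describe. Your write-up simply fills in more of the bookkeeping details (base cases, duality of the two inductive steps under transposition, coherence of the scalable spiders) than the paper records.
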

\begin{proof}
  This follows from a straightforward induction on the type \(m \times n\)
  of the matrix \(A\).
\end{proof}


\begin{lemma}
  \label{lem:scalable_colour}
  For any \(\vec{x} \in \K^n\) and \(X \in \Sym\),
  \begin{equation*}
    \tikzfig{figures/completeness/scalable/colour}
  \end{equation*}
\end{lemma}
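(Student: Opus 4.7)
The plan is to proceed by induction on the wire thickness $n$. The base case $n = 0$ is trivial, since in this case both sides collapse to the identity on the empty wire (the tensor unit $[\,]$): the scalable spiders with no wires are empty diagrams, the phase data $\vec{x} \in \K^0$ and $X \in \Sym[0]$ are vacuous, and the scalable box on zero wires is also empty.

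For the inductive step $n \to n+1$, I would unfold the inductive definition of the $[n+1]$-coloured grey and white spiders given in equation~\eqref{eq:thick_spider_def}, together with the inductive definition~\eqref{eq:scalable_box} of the scalable box. Splitting $X \in \Sym[n+1]$ into its principal $n \times n$ sub-block, its corner entry $b \in \K$, and the connecting column $\vec{w} \in \K^n$ (and likewise $\vec{x} \in \K^{n+1}$ into $\vec{v} \in \K^n$ and $a \in \K$), each side of the equation decomposes into an $[n]$-coloured scalable spider, an ordinary single-wire spider with phase $(a, b)$, and a connecting structure realising the off-diagonal entries $\vec{w}$ as boxes between the two blocks.

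Then I would apply the inductive hypothesis to the $[n]$-coloured sub-spider, and the ordinary colour-change rule \TextColour (together with lemma~\ref{lem:colour_inverted} and lemma~\ref{lem:antipode_spider}) to the single-wire sub-spider. The connecting structure carrying the entries of $\vec{w}$ then needs to be pushed through the colour change: this is done using \TextFusion to re-gather the newly coloured legs with the inductive block, \TextBigebra to commute boxes past spiders where necessary, and lemma~\ref{lem:copy_swapped} to absorb residual phases. The scalable box factor is handled separately using equation~\eqref{eq:scalable_box} and the fact that boxes commute with the divider/gatherer by lemma~\ref{lem:box_scalable}.

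The hard part will be bookkeeping the block-matrix structure of $X$ under the colour change: verifying that the symmetric phase matrix on the grey side corresponds, after inverting the symplectic form leg-by-leg, to exactly the transformed phase on the white side, with the off-diagonal coupling entries $\vec{w}$ landing in the right places. Once the inductive decomposition is made explicit and the connecting boxes are moved using \TextBigebra and \TextFusion, this reduces to a direct identification of phases using lemma~\ref{lem:phase_inverses} and the arithmetic of the $\K$-bilinear symplectic form.
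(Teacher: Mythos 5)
Your overall strategy — induction on the wire thickness $n$, unfolding the inductive definitions \eqref{eq:thick_spider_def} and \eqref{eq:scalable_box}, applying the inductive hypothesis to the $[n]$-coloured sub-spider and the single-wire rule \TextColour to the split-off wire — is exactly what the paper does; its proof is literally the one-liner ``straightforward induction on the size $n$ of the thick wires, using lemma~\ref{lem:arrow_transpose}.'' The one ingredient the paper singles out, and which your plan dances around without naming, is lemma~\ref{lem:arrow_transpose}: in the inductive decomposition of a thick spider the off-diagonal coupling $\vec w$ is realised by a matrix arrow, and pushing the Fourier box through that arrow (flipping it to its transpose) is precisely what that lemma provides. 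Your proposed substitute — a combination of \TextFusion, \TextBigebra and lemma~\ref{lem:copy_swapped} — is re-deriving that interaction by hand; it should work, but it is doing the job of a lemma that is already available and is the intended tool.

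One citation in your plan would be circular as written: you invoke lemma~\ref{lem:box_scalable} to commute the scalable box past dividers and gatherers, but that lemma is proved \emph{after} this one and its proof explicitly uses lemma~\ref{lem:scalable_colour}. Fortunately you do not need it: the compatibility of the scalable box with the divider/gatherer is just the defining equation~\eqref{eq:scalable_box}, so replacing that citation with the definition (plus lemma~\ref{lem:arrow_transpose} for the connecting arrows) closes the loop without circularity.
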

\begin{proof}
  This follows straightforwardly by induction on the size \(n\) of the thick
  wires, using lemma~\ref{lem:arrow_transpose}.
\end{proof}

\begin{lemma}
  \label{lem:scalable_fusion}
  For any \(\vec{x},\vec{y} \in \K^n\) and \(X,Y \in \Sym\),
  \begin{equation*}
    \tikzfig{figures/completeness/scalable/fusion}
  \end{equation*}
\end{lemma}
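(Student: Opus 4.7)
The plan is to prove the scalable fusion rule by induction on $n$, the size of the thick wire, leveraging the inductive definition of scalable spiders given in equation~\eqref{eq:thick_spider_def}.

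The base case $n = 0$ is immediate, since a $[0]$-coloured spider is trivially an identity on the empty tensor. The case $n = 1$ is essentially the ordinary \TextFusion axiom: phases $\vec{x}, \vec{y} \in \K^1$ and symmetric $1 \times 1$ matrices $X, Y \in \Sym[1][\K]$ are just elements of $\K$ and collapse to the standard affine and symplectic phases of an unscaled spider, so \TextFusion applies directly.

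For the inductive step, I would unfold each $[n+1]$-coloured spider using definition~\ref{def:scalable_spider}, writing it in terms of a $[1]$-coloured spider and an $[n]$-coloured spider connected via dividers/gatherers, with the off-diagonal block of the symmetric phase matrix realised as a matrix arrow (a scaled box) bridging the two pieces. After placing the two spiders to be fused side by side, I would slide the internal dividers/gatherers past each other using \TLaxii and \TLaxiii, then use ordinary \TextFusion on the $[1]$-coloured parts, and invoke the induction hypothesis on the $[n]$-coloured parts. The matrix arrows encoding the off-diagonal blocks can be merged using the rules in equation~\eqref{matrules} together with lemma~\ref{lem:scalable_colour} to push phases through arrows.

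The main obstacle will be bookkeeping the cross-terms. Recall that adding two symmetric matrices of size $n+1$ decomposes blockwise as adding the top-left scalar, adding the off-diagonal vector, and adding the bottom-right $n \times n$ symmetric part; the inductive unfolding packages these three components separately, so I need to verify that fusing the scalar (via ordinary \TextFusion), fusing the matrix arrows (via the bilinearity rules in \eqref{matrules}), and fusing the $n$-coloured pieces (by induction) really do reassemble into an $[n+1]$-coloured spider with phase $(\vec{x} + \vec{y}, X + Y)$. In particular, when the matrix arrows encoding the off-diagonal blocks are combined, one must use lemma~\ref{lem:arrow_transpose} and the matrix-arrow addition rule carefully, since one of the arrows is attached ``upside-down'' (as a box on the opposite side of the spider); this compatibility between the transpose on arrows and the symmetric structure of the phase matrix is the subtle point of the proof.
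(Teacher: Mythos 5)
Your proposal follows essentially the same route as the paper: induction on the thickness of the wire, unfolding each $(k+1)$-coloured spider via the inductive definition in equation~\eqref{eq:thick_spider_def}, decomposing $X$ and $Y$ blockwise into a top-left scalar, an off-diagonal vector (carried by a matrix arrow), and a $k\times k$ symmetric block, then applying ordinary \TextFusion to the thin parts and the induction hypothesis to the thick parts. The cross-term bookkeeping you flag is exactly what the paper's diagrammatic derivation handles, so the argument is sound and matches the paper's proof.
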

\begin{proof}
  We prove thickened grey fusion by induction on the type \(n\in\N\) of the thick wires. The base case is trivial. Assume that the inductive claim holds for \(n=k\), and denote this assumption by  \(\textsc{Fusion}_k\) .
  Suppose we have two black spiders as above, of type \(k+1\). Then:
  \begin{equation}
    \tikzfig{figures/completeness/scalable/fusion_proof}
  \end{equation}
  where we assume without loss of generality that
  \begin{equation}
    X = \begin{bmatrix} X_{1,1} & X_{1,\bullet}^\trans \\ X_{1,\bullet} & X' \end{bmatrix}
    \qand
    Y = \begin{bmatrix} Y_{1,1} & Y_{1,\bullet}^\trans \\ Y_{1,\bullet} & Y' \end{bmatrix}
  \end{equation}

The white spider fusion follows from essentially the same argument.
\end{proof}

\begin{lemma}
  \label{lem:scalable_box_loop}
  For any \(n \in \N\) and \(A \in \Matrices[n][n]\),
  \begin{equation*}
    \tikzfig{figures/completeness/scalable/box_loop}
  \end{equation*}
\end{lemma}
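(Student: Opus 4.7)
The plan is to proceed by induction on the dimension $n\in\N$ of the thick wire. The base case $n=0$ is immediate: both sides of the equation are trivial by the base-case conventions for scalable boxes (equation~\eqref{eq:scalable_box}) and for matrix arrows (definition~\ref{def:matrix_arrow}).

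For the inductive step, I would decompose the matrix $A\in\Matrices[n+1][n+1]$ in block form, separating one row and one column so as to match the inductive definitions:
\[
A = \begin{bmatrix} A_{1,1} & A_{1,\bullet}^\trans \\ A_{\bullet,1} & A' \end{bmatrix},
\]
with $A' \in \Matrices[n][n]$, $A_{\bullet,1}, A_{1,\bullet}\in\K^n$, and $A_{1,1}\in\K$. Unfolding the matrix arrow for $A$ using equations~\eqref{eq:blockmatrixinductivestep0} and \eqref{eq:blockmatrixinductivestep1} together with the inductive definition of the scalable box~\eqref{eq:scalable_box} isolates a single-wire box sitting on the ``top'' strand of the loop and an $n$-sized sub-loop on the remaining strands. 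I would then slide the single-wire box along the loop using lemmas~\ref{lem:box_identity}, \ref{lem:box_inverse} and~\ref{lem:box_antipode}, absorb the row/column contributions of $A_{1,\bullet}$ and $A_{\bullet,1}$ into the surrounding spiders using the matrix-spider rules~\eqref{matrules}, \TTextMat and \TTextFBox, and finally apply the inductive hypothesis to rewrite the remaining $n$-sized loop in terms of $A'$. Reassembling using~\eqref{eq:blockmatrixinductivestep1} should yield the right-hand side for the full matrix $A$.

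The main obstacle will be the bookkeeping in the induction step: ensuring that the scalars, affine shifts, and off-diagonal symmetric corrections produced by pushing the single-wire box through the loop recombine exactly into the block structure of $A$ at size $n+1$. In particular, the interaction between the $A_{1,1}$ contribution (a self-loop on a single wire) and the cross-terms $A_{1,\bullet}$, $A_{\bullet,1}$ must be checked carefully, using lemma~\ref{lem:arrow_transpose} to swap transposes of matrix arrows as needed, and using the Hopf identity (lemma~\ref{lem:hopf}) to clean up intermediate spider configurations. Because all of the required identities are already available in the excerpt, I expect the remaining work to be routine diagrammatic manipulation rather than any conceptually new ingredient.
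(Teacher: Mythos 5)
Your proposal follows essentially the same route as the paper: induction on the wire thickness, a block decomposition of $A$ peeling off one row and one column, reduction to a single-wire self-loop plus an $n$-sized sub-loop handled by the inductive hypothesis, and recombination of the cross-terms into the symmetrised block structure. The only ingredient the paper makes explicit that you leave implicit is the thin-wire identity (a single-wire self-loop carrying a box $x$ yields the phase $2x$), which is exactly the ``$A_{1,1}$ interaction'' you flag as bookkeeping, so the plans coincide.
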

\begin{proof}
  We proceed again by induction on the type \(n\) of the thick wires,
  equivalently, the dimension of the matrix \(A\).  The base case is trivial.  Assume that the inductive claim holds for \(n=k\), and denote this by \((\textsc{Loop}_k)\),
  Note that for any \(x \in \K\):
  \begin{equation}
    \label{eq:box_loop_thin}
    \tikzfig{figures/completeness/scalable/box_loop_proof_1}
  \end{equation}
  Now, let \(A \in\Matrices[k+1][k+1]\), and assume without loss of
  generality that it takes the following form for some \(a \in \K, B \in \Matrices[k][1], C \in
     \Matrices[1][k] \qand D \in  \Matrices[k][k]\):
  \begin{equation}
    A =
    \begin{bmatrix}
      a & C \\ B & D 
    \end{bmatrix}
  \end{equation}
Then
  \begin{equation}
    \tikzfig{figures/completeness/scalable/box_loop_proof_2}
  \end{equation}
  where in the last step we have identified
  \begin{equation}
    A + A^\trans = \begin{bmatrix}
      2a & C + B^\trans
      \\ B + C^\trans & D + D^\trans
    \end{bmatrix}
  \end{equation}
\end{proof}



\begin{lemma}
  \label{lem:scalable_arrow_phase}
  For any \(m,n \in \N\), any \(A \in \Matrices\), any \(\vec x \in \K^m\)
  and any \(X \in \Sym[m]\),
  \begin{equation*}
    \tikzfig{figures/completeness/scalable/arrow_phase}
  \end{equation*}
\end{lemma}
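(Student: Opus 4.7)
The plan is to prove this by induction on the dimensions of the matrix $A \in \Matrices$. The statement should say that pushing a matrix arrow $A$ through a scalable grey spider with affine phase $\vec{x}$ and symplectic phase $X$ produces a grey spider on the other side with transformed phases $A\vec{x}$ and $AXA^\trans$ (the new symplectic phase must be the symmetrised conjugate, since the symplectic phase lives in $\Sym$). First I would dispatch the base cases: when $m = 0$ or $n = 0$, the matrix arrow degenerates and the equation reduces to trivial applications of the scalable spider definitions.

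For the inductive step, I would decompose $A$ using the inductive definition in Definition~\ref{def:matrix_arrow}, splitting off either a row or column to write $A$ in block form. Correspondingly, I would unfold one layer of the scalable spider using equation~\eqref{eq:thick_spider_def} and the inductive definition of the scaled box in equation~\eqref{eq:scalable_box}. Once the decomposition is in place, the strategy is: (i) apply scalable fusion (Lemma~\ref{lem:scalable_fusion}) to recombine the pieces of the spider with the pieces of $A$ on the appropriate side; (ii) invoke Lemma~\ref{lem:scalable_box_loop} to collapse self-loops that arise when columns of $A$ are forced to pass through the symplectic phase $X$ on both sides, producing the quadratic factor $AXA^\trans$; and (iii) use scalable colour change (Lemma~\ref{lem:scalable_colour}) together with Lemma~\ref{lem:arrow_transpose} to manage the transposes of $A$ that appear on the side where the matrix arrow is ``bent'' through a spider.

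The affine-phase part of the argument is the easy side: the vector $\vec{x}$ transforms linearly to $A\vec{x}$ via a single application of fusion plus the matrix rules from equation~\eqref{matrules}. The hard part will be the symplectic-phase transformation $X \mapsto AXA^\trans$. Diagrammatically $X$ is represented as a weighted self-loop attached to the spider via the scaled box construction, so when $A$ is absorbed one must pass it through both endpoints of the loop. This is where Lemma~\ref{lem:scalable_box_loop}, and the fact that $A + A^\trans$-type expressions end up symmetrised to produce symmetric matrices in $\Sym[n]$, becomes essential: care is needed so that at each inductive step we remain within the presentation of $\Sym$, rather than producing a generic matrix that cannot be interpreted as a symplectic phase. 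Once this bookkeeping is handled cleanly, refolding the scalable notation on the right-hand side yields the claimed identity.
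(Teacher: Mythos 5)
Your proposal follows essentially the same route as the paper: a (double) induction on the shape of \(A\), peeling off one row/column at a time via the block decomposition of the matrix arrow together with one layer of the scalable spider, with the quadratic transformation of the symplectic phase arising from the self-loop symmetrisation of Lemma~\ref{lem:scalable_box_loop} and the transposes managed via Lemmas~\ref{lem:arrow_transpose} and~\ref{lem:scalable_colour}. The only detail you gloss over that the paper treats explicitly is the case split in the row leg of the induction between a zero row (handled by a scalable version of \textsc{Id}) and a nonzero row (where a permutation brings an invertible entry to the front so the box manipulations apply); this is bookkeeping rather than a missing idea.
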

\begin{proof}
  The proof is by (double) induction on the shape \(m,n \in \N\) of the
  matrix \(A\). The base case is trivial.


  Now, fix \(m,n\) and assume the lemma is true for any matrix in
  \(\Matrices[j][k]\) where either \(j \leqslant m\) or \(k
  \leqslant n\), and let \(A \in  \Matrices[m+1][n]\). Then,
  denoting the inductive step by \(\textsc{Ind}_{m,n}\),
  \begin{equation}
    \tikzfig{figures/completeness/scalable/arrow_phase_proof_2}
  \end{equation}
  where we have identified
  \begin{equation}
    \vec x = \begin{bmatrix} x_1 \\ \vec  x_\bullet \end{bmatrix}
    \quad\quad\quad
    X = \begin{bmatrix} X_{1,1} & X_{1,\bullet}^\trans \\ X_{1,\bullet} & X_{\bullet,\bullet} \end{bmatrix}
    \quad\quad\quad
    A = \begin{bmatrix} a \\ A_\bullet \end{bmatrix}
  \end{equation}
  so that \(Ax = ax_1 + A_\bullet \vec x_\bullet\) and \(A^\trans XA =
  a^2 X_{1,1} + A_\bullet^\trans X_{\bullet,\bullet}A_\bullet +
  aA_\bullet^\trans X_{1,\bullet} + aX_{1,\bullet}^\trans A_\bullet\).

  Secondly, fix \(m \in \N\) and assume the lemma is true for all matrices
  in \( \Matrices[j][k]\) such that \(j \leqslant m\). Pick
  some \(A \in  \Matrices[1][n+1]\), then if \(A\) is zero, then
  \begin{equation}
    \tikzfig{figures/completeness/scalable/arrow_phase_proof_3}
  \end{equation}
  where the scalable version of \TextId is straightforwardly derivable by
  induction.

  Otherwise, some element of \(A\) is non-zero. For the sake of clarity,
  we assume that this is the first element of \(A\), but since we can
  write any permutation of the elements using the swap, there is no loss of
  generality. Then,
  \begin{equation}
    \tikzfig{figures/completeness/scalable/arrow_phase_proof_4}
  \end{equation}
  where we have identified
  \begin{equation}
    \begin{bmatrix}
      ya^2 & ya A_\bullet^\trans \\
      ya A_\bullet & y A_\bullet^\trans A_\bullet
    \end{bmatrix} = yA^\trans A
  \end{equation}
  This completes the induction.
\end{proof}

\begin{lemma}
  \label{lem:scalable_arrow_label_black}
  For any \(m,n \in \N\), any injective \(A \in  \Matrices\), any \(\vec x \in \K^m\)
  and any \(X \in \Sym\),
  \begin{equation*}
    \tikzfig{figures/completeness/scalable/arrow_label_black}
  \end{equation*}
\end{lemma}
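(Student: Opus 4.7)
The plan is to reduce the statement to its white-spider counterpart, Lemma~\ref{lem:scalable_arrow_phase}, using the scalable colour change rule of Lemma~\ref{lem:scalable_colour}. First I would apply Lemma~\ref{lem:scalable_colour} to rewrite the scalable black spider on the left-hand side as a scalable white spider conjugated by symplectic Fourier boxes on each of its legs, with the phase $(\vec x, X)$ transformed accordingly. Using \TextHBox together with Lemma~\ref{lem:arrow_transpose}, the Fourier boxes sitting between the spider and the matrix arrow $A$ can be commuted past $A$; by Lemma~\ref{lem:arrow_transpose}, commuting a Fourier box past an arrow $A$ flips it to the transposed arrow $A^\trans$ on the other side, giving a diagram consisting of a white spider with a transformed phase, a matrix arrow $A^\trans$, and Fourier boxes on the remaining legs.

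Next I would apply Lemma~\ref{lem:scalable_arrow_phase} directly to absorb $A^\trans$ into the white spider, producing the expected transformation of the phase $(\vec x, X)$ together with a residual matrix arrow appearing on the opposite side of the spider. A second invocation of the colour change rule (Lemma~\ref{lem:scalable_colour}), now in the reverse direction, converts the white spider back into a black spider and introduces another layer of Fourier boxes. Most of these cancel against those introduced in the first step, via Lemma~\ref{lem:box_product} and Lemma~\ref{lem:box_inverse}, leaving a diagram that agrees with the right-hand side up to a composite of $A$ with its transpose on the domain side.

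The injectivity hypothesis enters precisely at this final cancellation: by the equivalence of conditions in Proposition~\ref{prop:matprop}, injectivity of $A$ can be expressed as a diagrammatic identity which states that $A$ followed by the transpose cup involving $A^\trans$ collapses to an identity (equivalently, $\ker(A) = \{0\}$ in the relational semantics). This is exactly what is needed to eliminate the residual composite and recover the right-hand side. Without injectivity a non-trivial kernel would obstruct this step.

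The main obstacle is the bookkeeping around the two applications of the colour change rule: tracking which Fourier boxes cancel, which matrix arrows occur as $A$ versus $A^\trans$, and verifying that the symplectic-phase component of the transformed phase is correctly $A^\trans X A$ at each intermediate step. Once this is handled cleanly, the use of injectivity via Proposition~\ref{prop:matprop} is a single concluding rewrite.
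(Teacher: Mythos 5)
Your outline assembles the right ingredients --- Lemma~\ref{lem:scalable_colour} for the colour change, Lemma~\ref{lem:arrow_transpose} for commuting Fourier boxes past matrix arrows, Lemma~\ref{lem:scalable_arrow_phase} for the phase transformation, and Proposition~\ref{prop:matprop} for injectivity --- but it routes the argument through a colour-change round trip that the paper does not take. The paper's proof of this lemma is a single direct chain of rewrites, and the paper then obtains the white/surjective companion statement, Lemma~\ref{lem:scalable_arrow_label_white}, \emph{from} this lemma by exactly the colour-change-plus-transposition reduction you describe, run in the opposite direction. Inverting that reduction is not wrong in principle, but it means your argument stands or falls entirely on the final cancellation step, which is where the genuine gap lies.

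Everything before that step holds for an \emph{arbitrary} matrix \(A\) (Lemma~\ref{lem:scalable_arrow_phase} carries no injectivity hypothesis), so the whole content of the lemma is concentrated in eliminating the residual composite, and this is precisely what you defer to ``bookkeeping.'' As stated, the step would fail: if the residual really is the composite of the matrix arrow for \(A\) with the matrix arrow for \(A^\trans\), injectivity does not collapse it to an identity. Already \(A = (1,1)^\trans\) gives \(A^\trans A = 2 \neq 1\) over any field, and the injective all-ones column in \(\Matrices[p][1][\Zp]\) even has \(A^\trans A = p = 0\). What Proposition~\ref{prop:matprop} actually supplies for injective \(A\) is that \(A\) followed by its \emph{relational converse} (the arrow bent around by the compact structure) is the identity; in this calculus the converse and the matrix arrow of \(A^\trans\) are different morphisms, related by a colour change and an antipode. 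So you must verify that the Fourier boxes introduced by the first application of Lemma~\ref{lem:scalable_colour} and removed by the second conspire to leave the converse of \(A\) composed with \(A\), rather than the matrix product \(A^\trans A\), and you must also track the affine component of the phase (you only discuss the symplectic component \(A^\trans X A\)). Until those points are nailed down, the proposal is a plausible plan rather than a proof.
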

\begin{proof}
  \begin{equation}
    \tikzfig{figures/completeness/scalable/arrow_label_black_proof}
  \end{equation}
\end{proof}

\begin{lemma}
  \label{lem:scalable_arrow_label_white}
  For any \(m,n \in \N\), any surjective \(A \in \Matrices\), any \(\vec x \in \K^n\)
  and any \(X \in \Sym\),
  \begin{equation*}
    \tikzfig{figures/completeness/scalable/arrow_label_white}
  \end{equation*}
\end{lemma}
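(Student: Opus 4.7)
The plan is to reduce the white-spider statement to the already-established black-spider version in lemma~\ref{lem:scalable_arrow_label_black} by means of the colour-change rule. The key observation is that a linear map $A$ is surjective precisely when its transpose $A^\trans$ is injective, which is exactly the hypothesis needed on the black-spider side.

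Concretely, I would first apply the scalable colour-change rule (lemma~\ref{lem:scalable_colour}) to both sides of the putative equation. Each white spider with phase $(\vec{x}, X)$ is replaced, up to inverse-Fourier boxes on its wires, by a grey spider carrying the same phase data. This turns the left-hand side into a grey spider connected through a matrix arrow whose type, after the colour change, is controlled by $A^\trans$ in view of the scalable transpose identity of lemma~\ref{lem:arrow_transpose}: the box-and-arrow sandwich $\text{box}\circ A \circ \text{box}$ rewrites to $A^\trans$ (up to boxes that cancel against the ones produced by recolouring the spiders on the other side).

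Once everything is in grey form with the arrow labelled by $A^\trans$, surjectivity of $A$ upgrades to injectivity of $A^\trans$, so the hypothesis of lemma~\ref{lem:scalable_arrow_label_black} is satisfied. Applying that lemma then matches the phase $(\vec x, X)$ against the appropriate transformed phase, after which colour-changing back to white by another application of lemma~\ref{lem:scalable_colour} yields the right-hand side. The remaining bookkeeping is to verify that the phases produced on both sides of lemma~\ref{lem:scalable_arrow_label_black} correspond correctly under the colour-change transport, which amounts to tracking how the pair $(\vec x, X)$ is conjugated by the boxes; this is the same pattern used throughout the preceding lemmas.

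The main obstacle is the scalar and orientation bookkeeping introduced by the colour-change boxes on all $m$ wires of the thick spider: one needs the boxes on the outer wires of each side of the equation to line up so that they cancel, rather than leaving residual boxes that would spoil the statement. The cleanest way to manage this is to absorb the boxes into the matrix arrow using lemma~\ref{lem:arrow_transpose} and the loop identity of lemma~\ref{lem:scalable_box_loop}, and then to invoke the scalable colour-change rule in its global form on the transformed diagram, so that the whole calculation reduces to a single rewrite step plus the black-spider lemma.
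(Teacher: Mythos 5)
Your proposal matches the paper's proof exactly: the paper derives this lemma from lemmas~\ref{lem:scalable_arrow_label_black}, \ref{lem:arrow_transpose} and \ref{lem:scalable_colour} together with the fact that the transpose of a surjective matrix is injective, which is precisely the reduction you describe. Your account just spells out the bookkeeping the paper leaves implicit.
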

\begin{proof}
  This follows from lemmas~\ref{lem:scalable_arrow_label_black},
  \ref{lem:arrow_transpose} and \ref{lem:scalable_colour}, and the fact that
  the transpose of a surjective matrix is injective.
\end{proof}

\begin{lemma}
\label{lem:scalable_colour_actual}

\[\tikzfig{figures/completeness/scalable/colour}\]

\end{lemma}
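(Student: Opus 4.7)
The plan is to prove this statement, which is the scalable generalization of the thin-wire \TextColour axiom of $\ZX$, by induction on the thickness $n \in \N$ of the wire. The base case $n = 0$ is trivial and $n = 1$ reduces to the single-wire \TextColour axiom itself. For the inductive step, I would use the inductive definitions of the scalable grey/white spiders given in definition~\ref{def:scalable_spider} and the scalable box in equation~\eqref{eq:scalable_box} to peel off one wire from the thick wire, apply the single-wire colour change rule to the peeled-off part, and then re-absorb the remaining $(n-1)$-thick sub-wire using the inductive hypothesis.

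More concretely, I would expand both sides via equation~\eqref{eq:thick_spider_def}, writing the thickness-$n$ spider as a thickness-$(n-1)$ spider and a thin-wire spider joined through a divider, with the symplectic phase decomposed as $X = \begin{bmatrix} X_{1,1} & X_{1,\bullet}^\trans \\ X_{1,\bullet} & X' \end{bmatrix}$ and the affine phase as $\vec x = \begin{bmatrix} x_1 \\ \vec x_\bullet \end{bmatrix}$, mirroring the block decomposition used in the proof of lemma~\ref{lem:scalable_fusion}. On the thin wire, \TextColour and the basic colour change rule lemma~\ref{lem:scalable_colour} apply directly; on the thick sub-wire, the inductive hypothesis applies. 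The resulting diagram then has to be reassembled into the scalable form on the right-hand side.

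The reassembly step is where the bulk of the technical work lies: I expect to need lemma~\ref{lem:scalable_fusion} to merge the two pieces back into a single scalable spider, along with lemma~\ref{lem:scalable_arrow_phase} and lemmas~\ref{lem:scalable_arrow_label_black}--\ref{lem:scalable_arrow_label_white} to commute the matrix-labelled arrows that are produced by the inductive step past the spiders on the appropriate side. The scalable box-loop identity lemma~\ref{lem:scalable_box_loop} will likely be needed to recombine the boxes coming from applying colour change wire-by-wire into a single scaled box labelled by the transformed symmetric matrix.

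The main obstacle will be handling the off-diagonal block $X_{1,\bullet}$: the colour change swaps the $Z$- and $X$-gradings, which on the level of phases amounts to a nontrivial conjugation of the symmetric matrix by the Fourier transform, and the cross-wire couplings encoded by $X_{1,\bullet}$ must be shown to transform consistently with the block carried by the thinner sub-wire. Once this compatibility is verified, the rest of the proof is a bookkeeping exercise analogous to the induction performed for lemma~\ref{lem:scalable_fusion}.
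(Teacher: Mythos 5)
Your plan is workable in outline but it is a genuinely different, and considerably heavier, route than the one the paper takes. The paper does not prove this lemma by a fresh induction at all: it derives it in one line by combining lemma~\ref{lem:scalable_colour} (a basic scalable colour-change statement, itself proved by a simple induction on wire thickness using only lemma~\ref{lem:arrow_transpose}) with lemma~\ref{lem:scalable_arrow_label_white} (absorption of a surjective matrix arrow into a white spider, which transforms the phase by the appropriate conjugation). In other words, the paper factors the work: the ``nontrivial conjugation of the symmetric matrix'' that you correctly identify as the main obstacle is established once and for all in lemmas~\ref{lem:scalable_arrow_phase} and \ref{lem:scalable_arrow_label_black}--\ref{lem:scalable_arrow_label_white}, and the colour change proper never has to touch the blocks \(X_{1,\bullet}\) wire by wire. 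Your monolithic induction would force you to re-derive exactly that compatibility inside the inductive step, and your proposal leaves this step unresolved (``once this compatibility is verified\ldots''), which is precisely the hard part. Since you do cite the right supporting lemmas, your argument could be completed, but you would essentially be re-proving lemma~\ref{lem:scalable_arrow_label_white} in a less reusable form; the paper's decomposition buys a two-line proof here and a cleaner dependency structure. I would recommend restructuring your argument to first establish the unlabelled (Fourier-box) scalable colour change, and then obtain the general statement by pushing the surjective matrix label through the white spider.
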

\begin{proof} 
This follows immediately by combining lemmas~\ref{lem:scalable_colour} and~\ref{lem:scalable_arrow_label_white}.
\end{proof}

\begin{lemma}
  \label{lem:box_scalable}
  For any \(X \in \Sym\),
  \begin{equation*}
    \tikzfig{figures/completeness/scalable/box}
  \end{equation*}
  If furthermore, \(X\) is invertible,
  \begin{equation*}
    \tikzfig{figures/completeness/scalable/box_invertible}
  \end{equation*}
\end{lemma}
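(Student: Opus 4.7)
The plan is to prove both claims by induction on the thick-wire dimension $n$. Scalable boxes were defined inductively in equation~\eqref{eq:scalable_box}, so an inductive argument along the same parameter is the natural choice. The base case $n=1$ reduces to the ordinary box introduced in~\eqref{eq:box-def} together with lemma~\ref{lem:box_swapped}, and both statements collapse to the scalar form of the box that we have already manipulated freely in the preceding lemmas.

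For the inductive step of the first equation, block-decompose the symmetric matrix as
$$X = \begin{bmatrix} x_{1,1} & \vec v^{\,\trans} \\ \vec v & X' \end{bmatrix}$$
with $X' \in \Sym[n]$ and $\vec v \in \K^n$, and unfold the scaled box of size $n+1$ via~\eqref{eq:scalable_box}. The diagonal contribution $x_{1,1}$ and the off-diagonal vector $\vec v$ appear as phases on spiders connected through matrix arrows, which are absorbed using lemma~\ref{lem:scalable_arrow_phase}. The residual $X'$ block is rewritten via the inductive hypothesis. Collecting the pieces with scalable fusion (lemma~\ref{lem:scalable_fusion}), scalable colour (lemma~\ref{lem:scalable_colour_actual}) and the scalable box loop (lemma~\ref{lem:scalable_box_loop}) yields the claimed right-hand side.

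For the invertible case, the strategy is again induction on $n$, but now additionally relies on a Schur-complement decomposition of $X^{-1}$: when a diagonal entry $x_{i,i}$ is nonzero we may bring it into position $1$ by conjugating with a permutation of thick wires (legitimate because scalable spiders and boxes are equivariant under such permutations), and then express $X^{-1}$ in terms of $x_{1,1}^{-1}$ and $(X' - x_{1,1}^{-1}\vec v\vec v^{\,\trans})^{-1}$. Diagrammatically this mirrors an application of the first claim of the present lemma to the Schur complement, combined with lemmas~\ref{lem:box_inverse} and~\ref{lem:box_opposite_inverse} on the leading scalar and lemmas~\ref{lem:box_antipode} and~\ref{lem:colour_inverted} to realign colours.

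The main obstacle is the degenerate case in which every diagonal entry of $X$ vanishes: this cannot happen when $\operatorname{char}(\K)\neq 2$ because any symmetric invertible $X$ then has a nonzero diagonal entry, but in characteristic $2$ it is possible for an invertible symmetric $X$ to have zero diagonal (for instance $\bigl[\begin{smallmatrix} 0 & 1\\ 1 & 0\end{smallmatrix}\bigr]$). In that regime no $1\times 1$ pivot is available and one must instead perform a $2\times 2$ pivot using an off-diagonal nonzero entry, eliminating two thick-wire coordinates simultaneously. This is exactly the scalable analogue of the pivoting identity of proposition~\ref{prop:pivot}, so the sub-induction can be discharged by lifting that rewrite to the scalable setting before re-entering the main induction.
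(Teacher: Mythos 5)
Your treatment of the first equation is essentially the paper's: an induction on the thick-wire dimension, block-decomposing \(X\) into a scalar, an off-diagonal column and an \(n\times n\) symmetric block, unfolding the scalable box via equation~\eqref{eq:scalable_box} and closing with the inductive hypothesis plus the scalable fusion and colour lemmas. The only substantive difference is that the paper's derivation splits on whether the off-diagonal block \(X_{1,\bullet}\) is zero or (being a nonzero column) injective, handling the cross terms through the bigebra law and proposition~\ref{prop:matprop} rather than through lemma~\ref{lem:scalable_arrow_phase}; your sketch elides this case distinction but the overall architecture is the same.

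Where you genuinely diverge is the invertible case, and there the divergence costs you a great deal. The paper does not run a second induction at all: once the first scalable equation is in hand, the invertible version follows by the same purely diagrammatic manipulations used to derive the four variants of \TextHBox in lemma~\ref{lem:box_swapped} --- bending wires with the compact structure, changing colours, and inserting antipodes --- with \(X^{-1}\) playing the role of \(z^{-1}\). None of this touches the entries of \(X\), so no pivot is ever selected and the characteristic-\(2\) pathology you worry about (invertible symmetric matrices with zero diagonal) simply never arises; it is an artifact of your decision to decompose \(X^{-1}\) by Schur complements. Your route is probably salvageable --- the \(2\times 2\) pivot on an off-diagonal entry is the standard fix, and proposition~\ref{prop:pivot} is the right diagrammatic shadow of it --- but as written it is a plan rather than a proof: lifting the pivoting identity to the scalable setting and threading the Schur-complement identity for \(X^{-1}\) through the string diagrams would each be a lemma-sized piece of work that you have not carried out. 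The moral is that the invertible equation is a formal consequence of the non-invertible one plus compactness, not a statement that needs its own matrix-theoretic induction.
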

\begin{proof}
  We start by proving the first equation. The proof is by induction on the
  type \(n \in \N\) of the \(n \times n\) symmetric matrix \(X\). 

The base case is trivial.

 Assume the lemma is true for all \(j < n\) for some \(n
  \in \N\), and pick \(X \in \Sym[n+1]\), which we assume takes the form for some \(X_{1,1} \in \K, X_{1,\bullet} \in \Matrices[n][1], \qand X_{\bullet,\bullet} \in \Matrices[n][n]\):
  \begin{equation}
    X = \begin{bmatrix} X_{1,1} & X_{1,\bullet}^\trans \\ X_{1,\bullet} & X_{\bullet,\bullet} \end{bmatrix}
  \end{equation}
  Now, if \(X_{1,\bullet} = 0\), we have
  \begin{equation}
    \tikzfig{figures/completeness/scalable/box_proof_1}
  \end{equation}
  Otherwise, \(X_{1,\bullet}\) is injective, so that, denoting the inductive
  step by \(\textsc{Ind}_{m}\), 
  \begingroup 
  \allowdisplaybreaks
  \begin{align}
    \tikzfig{figures/completeness/scalable/box_proof_2/0}
      &\stackeqmid{\TMat}
    \tikzfig{figures/completeness/scalable/box_proof_2/1}
      \stackeqmid{\minieqref{eq:scalable_box}}
    \tikzfig{figures/completeness/scalable/box_proof_2/2}\\
      \stackeq{\HBox \\ \text{\tiny{$(\textsc{Ind}_n)$} }}
    \tikzfig{figures/completeness/scalable/box_proof_2/3}
      \stackeqmid{\minilemref{lem:scalable_fusion}}
    \tikzfig{figures/completeness/scalable/box_proof_2/4}\\
      \stackeq{\mini{\Bigebra}}
    \tikzfig{figures/completeness/scalable/box_proof_2/5}
      \stackeqmid{\minilemref{lem:scalable_colour}\\ \minilemref{lem:arrow_transpose}\\ \Fusion \\ \minilemref{lem:scalable_fusion}}
    \tikzfig{figures/completeness/scalable/box_proof_2/6}\\
      \stackeq{\minieqref{eq:thick_spider_def}}
    \tikzfig{figures/completeness/scalable/box_proof_2/7}
      \stackeqmid{\Bigebra}
    \tikzfig{figures/completeness/scalable/box_proof_2/8}\\
      \stackeq{\Bigebra}
    \tikzfig{figures/completeness/scalable/box_proof_2/9}
      \stackeqmid{\minilemref{lem:scalable_fusion}}
		\tikzfig{figures/completeness/scalable/box_proof_2/10}\\
      \stackeq{\tikzeqref{prop:matprop}}
		\tikzfig{figures/completeness/scalable/box_proof_2/11}
      \stackeqmid{\tikzeqref{prop:matprop}}
		\tikzfig{figures/completeness/scalable/box_proof_2/12}\\
     \stackeq{\GAAFusionB\\\GAAFusionW\\ \GAAAdd \\ \GAAZero\\ \GAAidonei\\ \GAAidoneii}
    \tikzfig{figures/completeness/scalable/box_proof_2/13}
        \stackeqmid{\minieqref{eq:thick_spider_def}}
    \tikzfig{figures/completeness/scalable/box_proof_2/14}
  \end{align}
  \endgroup
  This completes the induction.

  The remaining equations follow from manipulations analogous to the proof
  of lemma~\ref{lem:box_swapped}.
\end{proof}

\begin{lemma}
  \label{lem:scalable_symplectic_states}
  For any \(n \in \N\), \(\vec x \in \K^n\) and invertible \(X \in \Sym\),
  \begin{equation*}
    \tikzfig{figures/completeness/scalable/symplectic_states}
  \end{equation*}
\end{lemma}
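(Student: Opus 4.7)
The plan is to prove this statement by induction on the dimension $n$ of the thick wires, following the same pattern used to establish lemma~\ref{lem:box_scalable} and the other scalable identities earlier in this appendix. The base case $n=1$ is immediate from lemma~\ref{lem:symplectic_states} applied to the single invertible scalar entry of $X$.

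For the inductive step, I would unfold the $[n+1]$-coloured scalable grey spider using its inductive definition~\eqref{eq:thick_spider_def}, splitting $\vec{x}$ and $X$ into block form
\[ \vec{x} = \begin{bmatrix} x_1 \\ \vec{x}_\bullet \end{bmatrix}, \qquad X = \begin{bmatrix} X_{1,1} & X_{1,\bullet}^\trans \\ X_{1,\bullet} & X_{\bullet,\bullet} \end{bmatrix}. \]
This produces a diagram consisting of one thin wire carrying an $(x_1, X_{1,1})$-phase, an $X_{1,\bullet}$-labelled scalable arrow, and an $[n]$-coloured scalable grey spider carrying the residual phase built from $X_{\bullet,\bullet}$ and $\vec{x}_\bullet$. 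Provided $X_{1,1} \neq 0$, the thin state can be converted to a white state plus an $X_{1,1}$-box using lemma~\ref{lem:symplectic_states}. The resulting surrounding structure can then be absorbed and rearranged via lemma~\ref{lem:box_scalable}, lemma~\ref{lem:scalable_fusion}, and the scalable colour change (lemma~\ref{lem:scalable_colour_actual}). The residual thick part ends up carrying the Schur complement $X_{\bullet,\bullet} - X_{1,1}^{-1} X_{1,\bullet} X_{1,\bullet}^\trans$ as its symplectic phase, which is invertible precisely because $X$ is, so the inductive hypothesis applies and closes the induction.

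The main obstacle is the degenerate case where $X_{1,1} = 0$: invertibility of $X$ does not force its top-left diagonal entry to be nonzero, and in characteristic $2$ the entire diagonal can vanish for an invertible symmetric matrix. To handle this, I would first permute the thick wire to move a nonzero diagonal entry to the top whenever one exists. When no diagonal entry is nonzero, I would pre-compose with a carefully chosen invertible box $B$ so that $B^\trans X B$ has a nonzero diagonal entry, a diagrammatic analogue of symmetric congruence implementable through lemma~\ref{lem:box_scalable} together with bigebra-style manipulations. Once the top-left entry is nonzero, the Schur complement reduction proceeds as above, and the final matching of the phases on both sides reduces to a routine block-matrix computation.
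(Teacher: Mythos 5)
Your overall strategy diverges from the paper's: the paper proves this lemma by simply replaying the one-wire derivation of lemma~\ref{lem:symplectic_states} at the thick level, since every rule that derivation uses already has an established scalable counterpart (lemma~\ref{lem:box_scalable} for the invertible box, lemma~\ref{lem:scalable_colour_actual} for colour change, lemma~\ref{lem:scalable_fusion} for fusion). No fresh induction on \(n\) is needed, and no case analysis on the entries of \(X\) ever arises; the only inductions are the ones already buried inside those auxiliary lemmas, and notably the induction in lemma~\ref{lem:box_scalable} splits on whether the off-diagonal block \(X_{1,\bullet}\) vanishes, not on the diagonal entry \(X_{1,1}\).

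Your route, by contrast, has a genuine gap at exactly the degenerate case you flag. If \(\K\) has characteristic \(2\) and \(X \in \Sym\) is invertible with all diagonal entries zero, then \(X\) is an alternating form: for every \(\vec v\) one has \(\vec v^\trans X \vec v = \sum_i X_{i,i} v_i^2 = 0\). Consequently \((B^\trans X B)_{i,i} = (B\vec{e}_i)^\trans X (B \vec{e}_i) = 0\) for \emph{every} matrix \(B\), so no congruence can produce a nonzero diagonal entry --- the proposed repair cannot work, and neither can any permutation. The standard fix is a rank-two pivot: peel off two thin wires at once around an invertible block \(\begin{bmatrix} 0 & \epsilon \\ \epsilon & 0\end{bmatrix}\), which must exist whenever the diagonal vanishes and \(X \neq 0\), exactly mirroring the dichotomy between propositions~\ref{prop:local_complementation} and~\ref{prop:pivot}. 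With that extra case your induction would close, but as written the argument already fails over \(\mathbb{F}_2\) for \(X = \begin{bmatrix} 0 & 1 \\ 1 & 0 \end{bmatrix}\).
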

\begin{proof}
  The proof is analogous to the proof of lemma~\ref{lem:symplectic_states}.
\end{proof}

\begin{proof}[Proof of proposition~\ref{prop:graph_form}]
  Transform the diagram as follows:
  \begin{enumerate}
  \item Use \TextColour to change every white spider into a grey one,
    surrounded by boxes. The resulting diagram consists of only grey
    spiders and boxes.
  \item Use \TextId to add a grey spider between any two subsequent
    boxes. The diagram now consists of only grey spiders, connected either
    by plain edges or boxes.
  \item Use \TextFusion to fuse any two spiders connected by plain edges,
    to eliminate any loops, and use lemma~\ref{lem:scalable_box_loop} to eliminate
    any ``loops'' carrying boxes. The resulting diagram contains no loops, and
    furthermore spiders are connected only by edges carring boxes.
  \item Use \TextPlus to fuse all edges between two spiders into
    a single weighted box or no edge (after a simple application of
    lemma~\ref{lem:box_zero} and \TextFusion).
  \item Use \TextId and lemma~\ref{lem:box_inverse} to obtain
    \begin{equation}
      \tikzfig{figures/AffLagRel/spider_graph_split}\quad,
    \end{equation}
    which allows one to split any spider connected to more than one output into
    several grey spiders, each connected to exactly one output.
  \end{enumerate}
  The resulting diagram is in graph-like form.
\end{proof}

\begin{proof}[Proof of proposition~\ref{prop:symplectic_elimination}]
  \begin{equation}
    \tikzfig{figures/completeness/scalable/symplectic_elimination_proof}
  \end{equation}
\end{proof}

\begin{proof}[Proof of proposition~\ref{prop:local_complementation}]
  \begin{equation}
    \tikzfig{figures/completeness/local_complementation_old_proof}
  \end{equation}
  where \(E \in \Matrices[k][1]\) has components \(E_j\) and \(Z \in \Sym[k]\)
  has components \(Z_{j,k}\). In the last diagram, we have
  \(\gamma_i = z_i + E_i x z^{\minu 1}\),
  \(\delta_i = Z_{i,i} - z^{\minu 1} E_i^2\),
  and \(g_{i,j} = Z_{i j} - z^{\minu 1} E_i E_j\) as required.
\end{proof}

\begin{proof}[Proof of proposition~\ref{prop:pivot}]
  \begin{equation}
    \tikzfig{figures/completeness/pivot_proof}
  \end{equation}
  where we have applied proposiiton~\ref{prop:symplectic_elimination} to the invertible submatrix
  \begin{equation}
    \begin{bmatrix} 0 & \epsilon \\ \epsilon & 0 \end{bmatrix}
    \qq{whose inverse is}
    \begin{bmatrix} 0 & \epsilon^{-1} \\ \epsilon^{-1} & 0 \end{bmatrix}.
  \end{equation}
  The matrix \(E \in \Matrices[k][2]\) has components \(E_{j,k}\) and \(Z
  \in \Sym[k]\) has components \(Z_{j,k}\).  Here, we have
  \(\gamma_i = z_i + \epsilon^{\minu 1} (a E_{2,i} + b E_{1,i})\),
  \(\delta_i = Z_{i,i} - 2 \epsilon^{\minu 1} E_{1,i} E_{2,i}\),
  and \(g_{i,j} = - \epsilon^{\minu 1} (E_{1,i} E_{2,j} + E_{1,j} E_{2,i})\) as requied.
\end{proof}

\begin{proof}[Proof of proposition~\ref{prop:ap_form}]
  Given a graph-like diagram, keep appling
  propositions~\ref{prop:local_complementation} and \ref{prop:pivot} to
  the internal vertices of the diagram as long as it is possible. Every
  application strictly reduces the number of internal vertices hence
  the algorithm must halt. The result is a diagram in AP-form since
  proposition~\ref{prop:local_complementation} eliminates all internal
  vertices with non-zero symplectic phase, and proposition~\ref{prop:pivot}
  eliminates all pairs of connected internal vertices.
\end{proof}

\begin{lemma}
  \label{lem:pauli_push}
  White spiders with arbitrary phases copy affine grey spiders and vice-versa:
  \begin{equation*}
    \tikzfig{figures/completeness/extended-copy}
    \qquad \qquad
    \tikzfig{figures/completeness/extended-copy-compl}
  \end{equation*}
\end{lemma}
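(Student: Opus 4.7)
The plan is to reduce the extended copy rule to the basic \TextCopy axiom by first separating the phase of the white spider into its affine and symplectic components, handling each separately. Specifically, I would use \TextFusion (in reverse) on the central white spider with phase $(a,b)$ to split it into a composition of a purely affine white state/phase $(a,0)$, a phase-free white spider with the same connectivity, and a purely symplectic white phase $(0,b)$. This decomposition separates the two independent obstacles.

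First, I would apply the basic \TextCopy axiom to push the affine grey spider (with phase $(c,0)$) through the phase-free white spider, producing $n$ copies of the grey state at each leg. Second, on each copy, the residual symplectic white phase $(0,b)$ needs to interact with the affine grey state. This is exactly the content of lemma~\ref{lem:push_pauli_state}, which establishes that strictly-affine white and grey states absorb arbitrary phases of the opposite colour and vice-versa. Applying it on each of the $n$ legs transports the affine information outward. Finally, the affine white phase $(a,0)$ can be absorbed into the output structure or combined with any residual phases using \TextFusion and the bialgebra-style identities, matching the desired right-hand side.

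The colour-swapped version follows by dualising: apply \TextColour together with lemma~\ref{lem:colour_inverted} to both sides of the first identity, using lemma~\ref{lem:antipode_spider} to absorb any antipodes that appear, which converts the statement into the grey-through-white form.

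The main obstacle I expect is careful bookkeeping of the scalar and phase contributions when pushing the symplectic phase $(0,b)$ past each of the $n$ copies of the affine grey state: each such pass introduces a phase shift (by $bc$ or $-bc$ depending on orientation) that must aggregate correctly to reproduce the stated right-hand side. This may require auxiliary applications of \TextPlus and lemma~\ref{lem:phase_inverses} to normalise accumulated phases, together with lemma~\ref{lem:symplectic_states} if an intermediate state needs to be re-expressed in a mixed colour form.
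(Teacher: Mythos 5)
There is a genuine gap, and it stems from taking the informal wording of the lemma (``affine grey spiders'') at face value. As the paper's own proof makes explicit, the grey phase occurring in the equation is a general pair $(c,d)$ with $d$ \emph{not} necessarily zero, and the phase produced on the right-hand side is $(ad-c,\,d)$: it retains the symplectic component $d$ and contains the \emph{product} $ad$. Your toolkit cannot produce either feature. The basic \TextCopy axiom and lemma~\ref{lem:push_pauli_state} both presuppose that the state being copied or doing the absorbing is strictly affine, and when they do apply they act purely additively on affine phases --- indeed, the whole content of lemma~\ref{lem:push_pauli_state} is that the opposite-colour phase is \emph{erased}, so a chain of \TextFusion, \TextCopy, \TextPlus{} and absorption steps can only ever output a strictly affine phase built from sums of the inputs. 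Your argument therefore establishes only the degenerate instance $d=0$, where $ad-c=-c$ and everything really is affine and additive; it has no mechanism for the case $d\neq 0$, which is the actual substance of the lemma.

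The missing ingredient is the state-change lemma~\ref{lem:symplectic_states}, which converts a state with \emph{invertible} symplectic part between its grey and white presentations and is what introduces the inverse $d^{\minu 1}$ and hence the multiplicative interaction of phases. The paper applies it twice, with a white fusion in between contributing the $a$, and then verifies $-(a - cd^{\minu 1})(\minu d^{\minu 1})^{\minu 1} = ad-c$. Since that lemma requires $d\in\K^*$, the proof necessarily splits into the cases $d=0$ (handled directly, essentially as you propose) and $d\neq 0$; your proposal contains no such case split and relegates lemma~\ref{lem:symplectic_states} to an optional ``normalisation'' role when it is in fact the crux. Your handling of the colour-swapped second equation via \TextColour{} does, however, match the paper's use of Hadamard boxes there.
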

\begin{proof}
  First of all,
  \begin{equation}
    \tikzfig{figures/completeness/extended-copy-proof-1}
  \end{equation}
  Then, we separate the equation into two cases based on whether the grey spider
  is affine or not.
  In case $d = 0$, the grey spider is affine and therefore:
  \begin{equation}
    \tikzfig{figures/completeness/extended-copy-proof-2}
  \end{equation}
  Note that if \(d = 0\), then \(ad - c = -c\) and so the lemma holds.
  Otherwise, \(d \neq 0\) and therefore \(d^{\minu 1}\) exists, so we can apply
  the state-change lemma:
  \begin{equation}
    \tikzfig{figures/completeness/extended-copy-proof-3}
  \end{equation}
  Note that the phases after the application of the second state-change
  follow from:
  \begin{equation}
    -(a - c d^{\minu 1})(\minu d^{\minu 1})^{\minu 1},
    \,  \minu(\minu d^{\minu 1})^{\minu 1}
    = -(a - c d^{\minu 1}) (\minu d), \,  d
    = ad - c, \, d
  \end{equation}
  We can prove the second equation of the lemma using Hadamard-boxes as
  follows:
  \begin{equation}
    \tikzfig{figures/completeness/extended-copy-compl-proof}
  \end{equation}
\end{proof}

\begin{lemma}
  \label{lem:affine_symplectomorphisms}
  For any \(x,y \in \K^n\) and \(X \in \Sym[m]\),
  \begin{equation*}
    \tikzfig{figures/completeness/scalable/affine_symplectomorphisms}
  \end{equation*}
\end{lemma}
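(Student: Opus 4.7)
The plan is to prove this lemma by induction on \(n\), closely mirroring the inductive strategy used in lemma~\ref{lem:scalable_arrow_phase} and lemma~\ref{lem:scalable_symplectic_states}. Because both sides are diagrams built from scalable grey/white spiders carrying both affine phases \(\vec x, \vec y\) and the symplectic phase \(X\), the natural move is to peel off one ``layer'' of the thick wire at a time and reduce the claim to a thin-wire identity that has already been proved.

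For the base case \(n = 0\), both sides should collapse to the scalable box carrying \(X\), which is handled by the trivial base case of definition~\ref{def:scalable_spider}. For \(n = 1\) (if needed as a warm-up), the equation reduces to a configuration of thin-wire spiders and boxes for which lemma~\ref{lem:pauli_push} and lemma~\ref{lem:symplectic_states} together give the required rewrite: \texttt{lem:pauli_push} lets one push an affine phase through a spider of the opposite colour, while \texttt{lem:symplectic_states} converts between the all-white and all-grey representations of symplectic-data states.

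For the inductive step, I would decompose
\[
  \vec x = \begin{bmatrix} x_1 \\ \vec x_\bullet \end{bmatrix},
  \quad
  \vec y = \begin{bmatrix} y_1 \\ \vec y_\bullet \end{bmatrix},
\]
unfold the scalable spider once using equation~\eqref{eq:thick_spider_def}, and split the scalable box \(X\) into a single layer plus an \((n-1)\)-layer remainder via equation~\eqref{eq:scalable_box}. Applying scalable fusion (lemma~\ref{lem:scalable_fusion}) and scalable colour change (lemma~\ref{lem:scalable_colour_actual}) lets me isolate the first layer, on which the thin-wire case applies, while the remaining \((n-1)\) layers are handled by the inductive hypothesis. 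Finally I recombine the two pieces using scalable fusion in the reverse direction to land on the RHS.

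The main obstacle will be the bookkeeping of cross-terms. When the box carrying the symmetric matrix \(X\) is split into a top row/column and an \((n-1)\times(n-1)\) block, the decomposition produces off-diagonal ``mixing'' generators (of the same flavour as the \(A^\trans X A\)-style computation done inside the proof of lemma~\ref{lem:scalable_arrow_phase} and the block calculation in the proof of lemma~\ref{lem:box_scalable}). Ensuring that these cross-terms assemble correctly into both the affine part (transformed phases like \(x_1 + X_{1,\bullet}^\trans \vec y_\bullet\)) and the symplectic part on the RHS is the delicate step; once that is verified by direct matrix bookkeeping, the rewrite chain closes by another application of lemma~\ref{lem:scalable_fusion} and the scalable colour lemma, completing the induction.
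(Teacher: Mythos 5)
Your plan follows essentially the same route as the paper: the paper's proof is an induction on the thickness of the scalable spiders, with a trivial base case and an inductive step that peels off one layer of the thick wire and closes using the thin-wire phase identities together with scalable fusion. The extra bookkeeping of cross-terms you anticipate is exactly what the paper's single rewrite chain for the inductive step carries out, so your proposal is a faithful (if more verbose) version of the published argument.
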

\begin{proof}
  We prove the claim by induction on the thickness of scalable spiders.  The base case is trivial.  For the inductive step:
  \begin{equation}
    \tikzfig{figures/completeness/scalable/affine_symplectomorphisms_proof}
  \end{equation}
\end{proof}

We can perform row operations on the matrix \(E\) of a
\diagram in AP-form without changing the semantics.  However, applying these row operations can change the affine phases
\(\vec{x}\) of the internal spiders:

\begin{lemma}
  \label{lem:gaussianeliminationi}
  Given a diagram in AP-form and an \emph{invertible} matrix \(A \in
  \Matrices[m][m][\K]\),
  \[
    \tikzfig{figures/AffLagRel/scalable/AP-form_GaussianElimination}
  \]
\end{lemma}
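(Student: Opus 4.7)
The plan is to exhibit the desired equality by a short scalable-notation calculation: first insert an identity in the form of two composed matrix arrows between the internal phased spider and the matrix arrow $E$, then absorb each copy into the adjacent generator.

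More concretely, since $A$ is invertible, I would use equations \eqref{matrules} to write the identity matrix arrow as the composite of an $A^{-1}$ matrix arrow followed by an $A$ matrix arrow, inserted between the thick internal grey/white spider carrying the affine phase $\vec{x}$ (with zero symplectic part, as in AP-form) and the matrix arrow $E$ that connects to the boundary block. The right-hand $A$ then merges with $E$ via the composition rule in \eqref{matrules}, producing the replacement matrix $AE$ on that edge. The remaining $A^{-1}$ sits directly against the phased spider; because $A^{-1}$ is invertible it is in particular surjective, so lemma~\ref{lem:scalable_arrow_label_white} (applied in the white case, or its colour-conjugate lemma~\ref{lem:scalable_arrow_label_black} applied via \TextColour in the grey case, with lemma~\ref{lem:arrow_transpose} to handle direction) lets us absorb that arrow into the spider while transforming the affine phase $\vec{x}$ by the appropriate transpose, so that $\vec x$ becomes $A^{-\top}\vec x$ (or, after reorienting, $A\vec x$, according to the convention used in the statement). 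The boundary data $\vec{y}$ and $Y$ are untouched because the inserted arrows live entirely on the internal side of $E$.

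The only step that requires any care is bookkeeping the transpose convention in step three: the ``arrow label'' lemmas move a surjective matrix arrow past a white/grey spider at the cost of applying its transpose to the phase, and one must check that the direction in which the matrix arrow is inserted matches the convention used when identifying $E$ with a matrix in equation~\eqref{eq:arrow_decomposition}. Everything else is one application of the matrix composition rule together with flexsymmetry; there is no combinatorial or inductive content beyond what is already packaged into the supporting lemmas.

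I expect the main obstacle to be purely notational: lining up the orientations of the matrix arrow on either side of the spider so that the image of $\vec x$ under the affine transformation matches the stated form on the right-hand side of the lemma. Once that convention is fixed, the proof is a three-line diagrammatic manipulation, and invertibility of $A$ is used only to guarantee that the inserted $A^{-1}$ is surjective (so that the absorption lemma applies without requiring additional hypotheses).
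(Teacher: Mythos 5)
Your proposal matches the paper's proof in approach: the paper's argument is precisely the short diagrammatic calculation you describe --- insert \(A^{-1}A\) between the internal phased spider block and the \(E\) arrow, merge the \(A\) into \(E\) via the matrix composition rule of equation~\eqref{matrules}, and absorb the (injective, hence admissible) \(A^{-1}\) into the phased spider using the scalable arrow-label lemmas, which rewrites the affine phase \(\vec x\) accordingly while leaving \(\vec y\) and \(Y\) untouched. The transpose/orientation bookkeeping you flag is the only thing to reconcile with the figure's convention, and it is not a gap in the argument.
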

\begin{proof}
  \begin{equation}
    \tikzfig{figures/completeness/scalable/AP-form_GaussianElimination_proof}
  \end{equation}
\end{proof}

\begin{proof}[Proof of proposition~\ref{prop:reduced_ap_form}]
  Consider a diagram in AP-form described by the quadruplet
  \((E,Y,\vec{x},\vec{y})\). Then, by lemma~\ref{lem:gaussianeliminationi}
  we can act on \(E\) on the left by an invertible matrix \(A\). Thus,
  we can run a Gaussian elimination on \(E\). During this procedure,
  if at any point the resulting matrix \(AE\) has a zero row,
  this means that one of the internal vertices has no neighbours in
  the graph. This disconnected internal vertex then takes the form
  \tikzfig{figures/completeness/scalable/AP-form_reduction_scalar}. If
  \(x \neq 0\)  then the diagram has empty semantics and we can use
  lemma~\ref{prop:zero_normal_forms} to rewrite the diagram to normal
  form. Otherwise, \(x=0\) and we can eliminate the disconnected vertex
  using \TextOne.

 The resulting row reduced echelon form of \(E\), with zero rows removed,
 is of the form \(\begin{bmatrix} 1_m & F \end{bmatrix}\), up to a permutation
 of its columns. Denoting this permutation by \(\varsigma\), we have
 \begin{equation}
   \tikzfig{figures/completeness/scalable/AP-form_reduction_proof_1}
 \end{equation}
 for matrices \(A,B,C,F\) and vectors \(\vec{a},\vec{b}\) of suitable
 dimensions. Let's ignore the permutation for now, then following
 equation~\eqref{eq:scalable-AP-form-derivation} we have
 \begin{equation}
   \tikzfig{figures/completeness/scalable/AP-form_reduction_proof_2}
 \end{equation}
 where we have set \(\vec{s} \coloneqq F^\mathsf{T}\vec{a} +
 \vec{b} -  (A+B) \vec{x}\) and \(S \coloneqq F^\mathsf{T}AF + B -
 BF - F^\mathsf{T}B^\mathsf{T}\).  Reintroducing the permutation,
 we see that this final diagram is exactly the form described in
 remark~\ref{rem:scalable_reduced_AP-form}, which is easily seen to verify the
 conditions of definition~\ref{def:reduced_ap_form}. The diagram is therefore
 in reduced AP-form.
\end{proof}

\begin{lemma}
  \label{lem:push_pauli_state_thick}
  The thick scalable version of lemma~\ref{lem:push_pauli_state} holds:
  \begin{equation*}
    \tikzfig{figures/completeness/scalable/pauli_state_push}
  \end{equation*}
\end{lemma}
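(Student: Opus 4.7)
The plan is to mimic the inductive pattern used for the other scalable lemmas in this section (e.g. \ref{lem:scalable_fusion}, \ref{lem:scalable_box_loop}, \ref{lem:box_scalable}), by inducting on the thickness $k \in \N$ of the wires carrying the spiders. The base case $k=0$ reduces to an equality between empty diagrams and is immediate.

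For the inductive step, I would unfold the scalable spiders using the defining equation~\eqref{eq:thick_spider_def}, which peels off one thin wire at a time, decomposing the phase vector $\vec{x} \in \K^{k+1}$ into its first component and remaining $k$ components, and similarly decomposing $X \in \Sym[k+1]$ as $\begin{bmatrix} X_{1,1} & X_{1,\bullet}^\trans \\ X_{1,\bullet} & X' \end{bmatrix}$. This converts the scalable Pauli state into a thin spider connected to a $k$-thick spider through a dividing wire. Then I would apply the original Lemma~\ref{lem:push_pauli_state} to the freshly exposed thin wire (absorbing/swapping the phases there), followed by the inductive hypothesis on the remaining $k$-thick part.

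The main obstacle will be bookkeeping: the off-diagonal block $X_{1,\bullet}$ couples the thin wire to the rest, so after invoking the thin-wire version of the push one has to commute a copy/multiplier through the resulting diagram before the inductive hypothesis applies. This is where I expect to need several supporting rewrites, likely combinations of \TextCopy, \textsc{Colour}, lemma~\ref{lem:antipode_spider}, and the scalable bigebra/fusion rules \TTextFusion, together with lemma~\ref{lem:scalable_fusion} to refuse everything back into a single $(k+1)$-thick spider on the right-hand side with the correctly updated phase data. In essence, the argument is ``split one wire, apply the thin lemma there, use induction on the rest, then refuse,'' and the non-trivial content is showing that the phase updates prescribed by the thin-wire rule assemble into the correct vector/matrix update on the scalable side.

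Once this coupling is handled, the result follows in the same way that lemma~\ref{lem:scalable_symplectic_states} extends lemma~\ref{lem:symplectic_states}: the scalable axioms (\TTextFusion, \TTextBigebra, \TTextCopy) are designed precisely so that the block decompositions of $\vec{x}$ and $X$ match the componentwise application of their thin analogues, so after the bookkeeping the equation closes up diagrammatically.
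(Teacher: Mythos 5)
Your plan is viable but it is not the route the paper takes, and the difference matters. The paper's proof of lemma~\ref{lem:push_pauli_state_thick} is a one-liner: it simply replays the diagrammatic derivation of the thin lemma~\ref{lem:push_pauli_state} at the thick level, which works because every rewrite used there already has an established scalable analogue carrying vector- and matrix-valued phases (\TTextFusion, \TTextCopy, \TTextColour via lemma~\ref{lem:scalable_colour_actual}, lemma~\ref{lem:box_scalable}, and so on). In that approach the thick spiders are never decomposed into thin ones, so the off-diagonal coupling block $X_{1,\bullet}$ that you correctly identify as the crux of your argument simply never appears. Your induction on the wire thickness is the strategy the paper reserves for the \emph{foundational} scalable lemmas (\ref{lem:scalable_fusion}, \ref{lem:scalable_box_loop}, \ref{lem:box_scalable}); once those are in place, derived statements like this one are most cheaply obtained by transporting the thin proof wholesale rather than re-running the peel-one-wire induction. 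Your sketch is honest that the coupling step is where the real work lies, but it leaves that step undischarged — you would still need to show explicitly that pushing the affine state through the exposed thin spider commutes correctly past the $X_{1,\bullet}$-weighted edges before the inductive hypothesis applies (in effect re-deriving a special case of lemma~\ref{lem:scalable_fusion}). So: not wrong, but you pay for bookkeeping that the paper's choice of proof order avoids entirely.
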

\begin{proof}
The proof is essentially the same as that of lemma~\ref{lem:push_pauli_state}
\end{proof}

\begin{proof}[Proof of proposition~\ref{prop:ap-unique}]
  Up to permutation the proof of proposition~\ref{prop:reduced_ap_form}, entails that the the AP-form can be rewritten to:
  \begin{equation}
    \tikzfig{figures/completeness/scalable/reduced_ap_form_uniqueness_1}
  \end{equation}
  Note that the number \(m\) of internal vertices is an invariant of the reduced AP-form, as is the permutation of the output wires \(\varsigma\).
  The affine subspace \((F,\vec x)\) is also invariant as:
  \begin{equation}
    \tikzfig{figures/completeness/scalable/reduced_ap_form_uniqueness_2}
  \end{equation}
   As is the affine subspace \((S,\vec s)\):
  \begin{equation}
    \tikzfig{figures/completeness/scalable/reduced_ap_form_uniqueness_3}
  \end{equation}
\end{proof}

\subsection{Proofs of section~\ref{sec:AffCoisoRel}}
\begin{proof}[Proof of Lemma \ref{lem:decomposition}]
  Given a diagram \(L:n\to m\) in \(\ZX\), by proposition~\ref{prop:graph_form}, it can be rewritten into graph-like form.  Moreover, by propositions~\ref{prop:local_complementation} and~\ref{prop:pivot}, we can remove all of the internal edges to obtain a diagram of the following form, for some  \(a\in\K^n\), \(A \in \Sym[n][n]\), \(b\in\K^m\), \(B \in \Sym[m][m]\), and \(E \in \Matrices[m][n]\):
  \begin{equation}
    \tikzfig{figures/AffCoisoRel/bipartite1}
  \end{equation}
  Moreover, any such matrix \(E\) can be factorized into a surjection \(E_s\in \Matrices[k][n]\) followed by an injection \(E_i\in\Matrices[m][k]\) where \(k=|\im (E) |\):
  \begin{equation}
    \tikzfig{figures/AffCoisoRel/epimono}
  \end{equation}
  Where the surjection \(E_s\) can moreover decomposed into an isomorphism \(S\in\Matrices[n][n]\) followed by a canonical projection onto \(k\); and similarly, the injection \(E_i\) can be decomposed into a canonical injection into   \(k\) followed by an isomorphism  \(T\in\Matrices[m][m]\):
  \begin{equation}
    \tikzfig{figures/AffCoisoRel/epimonoi}
  \end{equation}
  Therefore:
  \begin{equation}
    \tikzfig{figures/AffCoisoRel/bipartite2}
  \end{equation}
  Where we remark that \(\im(L)\cong \im(E)\), thus \(|\im(L)|= |\im(E)|=k\), as desired.
\end{proof}

\begin{proof}[Proof of lemma~\ref{lem:enough_iso}]
Take two arrows \(L:n\to a\) and \(K:n\to b\) in  \(\ZX\) such that \(L^\dag L = K^\dag K\) and \(a\leq b\).
By lemma~\ref{lem:decomposition}, there are \(\ell,k \in \N \) and affine symplectomorphisms \(S,X :n\to n\), \(T:a\to a\), and \(T:b\to b\) such that:

  \begin{equation}
    \label{eq:lhepimono}
    \tikzfig{figures/AffCoisoRel/supposition}
  \end{equation}

  Because \(L^\dag L = K^\dag K\), by lemma~\ref{lem:coisotropicimage}:

  \begin{equation}
    \ell=\left|\im(L^\dag)\right|=\left|\im( K^\dag)\right|=k
  \end{equation}

  Moreover: because $L^\dagger L = K^\dagger K$, we have:
  \begin{equation}
  \label{eq:positivemaps}
    \tikzfig{figures/AffCoisoRel/enoughisoproof3}
  \end{equation}

  Furthermore, remark that:
  
  \begin{equation}
  \label{eq:changeofbasis}
    \tikzfig{figures/AffCoisoRel/changeofbasisproof}
  \end{equation}
  
  Therefore, the following map is unitary (equivalently, it is an affine symplectomorphism):

  \begin{equation}
    \tikzfig{figures/AffCoisoRel/changeofbasis}
  \end{equation}

  And thus, the following map is unitary as well:

  \begin{equation}
    \label{eq:h}
    \tikzfig{figures/AffCoisoRel/h}
  \end{equation}

   Therefore:

  \begin{equation}
    \tikzfig{figures/AffCoisoRel/proof}
  \end{equation}
\end{proof}

\end{document}